\renewenvironment{proof}[1][Proof]{\noindent\textit{#1. } }{\hfill$\square$}
\newtheoremstyle{theorem}{6pt}{6pt}{\rm}{}{\sffamily}{ }{ }{}
\theoremstyle{theorem}
\newtheorem{theorem}{\sc Theorem}[section]
\newtheoremstyle{lemma}{6pt}{6pt}{\rm}{}{\sffamily}{ }{ }{}
\theoremstyle{lemma}
\newtheorem{lemma}{\sc Lemma}[section]
\newtheorem{condition}[theorem]{Condition}
\newtheoremstyle{example}{6pt}{6pt}{\rm}{}{\sffamily}{ }{ }{}
\theoremstyle{example}
\newtheoremstyle{corollary}{6pt}{6pt}{\rm}{}{\sffamily}{ }{ }{}
\theoremstyle{corollary}
\newtheorem{corollary}{\sc Corollary}[section]
\newtheoremstyle{definition}{6pt}{6pt}{\rm}{}{\sffamily}{ }{ }{}
\theoremstyle{definition}
\newtheorem{definition}[theorem]{\sc Definition}
\newtheorem{proposition}[theorem]{\sc Proposition}
\newtheoremstyle{remark}{6pt}{6pt}{\rm}{}{\sffamily}{ }{ }{}
\theoremstyle{remark}
\newtheorem{remark}{\sc Remark}[section]
\newtheoremstyle{approximation}{6pt}{6pt}{\rm}{}{\sffamily}{ }{ }{}
\theoremstyle{approximation}
\newtheoremstyle{scheme}{6pt}{6pt}{\rm}{}{\sffamily}{ }{ }{}
\theoremstyle{scheme}
\newtheorem{model}{Model}
\DeclareMathOperator*{\arginf}{arg\,inf}
\DeclarePairedDelimiterX{\norm}[1]{\lVert}{\rVert}{#1}
\newcommand{\E}{\mathbb{E}}
\newcommand{\Ex}{\mathbb{E}}
\newcommand{\Prob}{\mathbb{P}}
\newcommand{\R}{\mathbb{R}}
\newcommand{\Z}{\mathbb{Z}}
\newcommand{\ind}{\mathbf{1}}
\newcommand{\Cb}{\mathbf{C}}
\newcommand{\Lb}{\mathbf{L}}
\newcommand{\Bc}{\mathcal{B}}
\newcommand{\Sc}{S}
\newcommand{\edit}[1]{{\textcolor{black}{#1}}}
\newcommand{\Var}{\mathrm{Var}}
\newcommand{\transvar}{t}
\newcommand{\PsiConstant}{\mathbf{\Psi}}
\newcommand{\LambdaConstant}{\mathbf{\Lambda}}
\newcommand{\snr}{\text{SNR}}
\begin{document}

\title{\textbf{Wavelet invariants for statistically robust} \\ \textbf{multi-reference alignment}}


\author{{
		\sc Matthew Hirn}\\[2pt]
	\textit{Department of Computational Mathematics, Science, and Engineering,} \\ \textit{Department of Mathematics,} \\ \textit{Center for Quantum Computing, Science and Engineering,} \\ \textit{Michigan State Univeristy, East Lansing, MI}\\\texttt{mhirn@msu.edu}\\[6pt]
	{\sc and}\\[6pt]
	{\sc Anna Little}\\[2pt]
	\textit{Department of Computational Mathematics, Science, and Engineering,} \\ \textit{Michigan State Univeristy, East Lansing, MI}\\
	\texttt{littl119@msu.edu}}

\maketitle

\begin{abstract}
    {We propose a nonlinear, wavelet based signal representation that is translation invariant and robust to both additive noise and random dilations. Motivated by the multi-reference alignment problem and generalizations thereof, we analyze the statistical properties of this representation given a large number of independent corruptions of a target signal. We prove the nonlinear wavelet based representation uniquely defines the power spectrum but allows for an unbiasing procedure that cannot be directly applied to the power spectrum. After unbiasing the representation to remove the effects of the additive noise and random dilations, we recover an approximation of the power spectrum by solving a convex optimization problem, and thus \edit{reduce to a phase retrieval problem}. Extensive numerical experiments demonstrate the statistical robustness of this approximation procedure.}
    {Multi-reference alignment, method of invariants, wavelets, signal processing, wavelet scattering transform} 
\end{abstract}


\section{Introduction}


The goal in classic multi-reference alignment (MRA) is to recover a hidden signal $f: \R \rightarrow \R$ from a collection of noisy measurements. Specifically, the following data model is assumed.
\begin{model}[Classic MRA]
	\label{model:classicMRA}
		The \textit{classic MRA data model} consists of $M$ independent observations of a \edit{compactly supported, real-valued} signal $f \edit{\in \Lb^2(\R)}$:
		\begin{equation} 
		y_j(x) = f(x-t_j) + \varepsilon_j(x) \, , \quad 1 \leq j \leq M \, ,
		\end{equation}	
		where:
		\begin{itemize}
			\item[(i)] $\text{supp}(y_j)\subseteq [-\frac{1}{2},\frac{1}{2}]$ for $1 \leq j \leq M$.
			\item[(ii)] $\{t_j\}_{j=1}^M$ are independent samples of a random variable $t\in \R$.
			\item[(iii)] $\{\varepsilon_j(x)\}_{j=1}^M$ are independent white noise processes on $[-\frac{1}{2},\frac{1}{2}]$ with variance $\sigma^2$.
		\end{itemize}
\end{model}
The signal is thus subjected to both random translation and additive noise. 
The MRA problem arises in numerous applications, including structural biology \cite{theobald2012optimal, diamond1992multiple, scheres2005maximum, sadler1992shift, park2011stochastic, park2014assembly}, single cell genomic sequencing \cite{leggett2015nanook}, radar \cite{zwart2003fast, gil2005using}, crystalline simulations \cite{sonday2013noisy}, image registration \cite{foroosh2002extension, brown1992survey, robinson2007optimal}, and signal processing \cite{zwart2003fast}. It is a simplified model relevant for Cryo-Electron Microscopy (Cryo-EM), an imaging technique for molecules which achives near atomic resolution \cite{bartesaghi20152, sirohi20163, bendory2017bispectrum}. In this application one seeks to recover a three-dimensional reconstruction of the molecule from many noisy two-dimensional images/projections \cite{frank2006three}. Although MRA ignores the tomographic projection of Cryo-EM, investigation of the simplified model provides important insights. For example, \cite{OptConvRates_MRA, perry2017sample} investigate the optimal sample complexity for MRA and demonstrate that $M = \Theta(\sigma^6)$ is required to fully recover $f$ in the low signal-to-noise regime when the translation distribution is periodic; this optimal sample complexity is the same for Cryo-EM \cite{bandeira2017estimation, wein2018statistical}. Recent work has established an improved sample complexity of $M = \Theta(\sigma^4)$ for MRA when the translation distribution is aperiodic \cite{abbe2018multireference}, and this rate has been shown to also hold in the more complicated setting of Cryo-EM, if the viewing angles are nonuniformly distributed \cite{sharon2019method}. Problems closely related to Model \ref{model:classicMRA} include the heterogenous MRA problem, where the unknown signal $f$ is replaced with a template of $k$ unknown signals $f_1, \ldots, f_k$ \cite{sorzano2010clustering, ma2019heterogeneous, boumal2018heterogeneous, perry2017sample}, as well as multi-reference factor analysis, where the underlying (random) signal follows a low rank factor model and one seeks to recover its covariance matrix \cite{landa2019multi}.

Approaches for solving MRA generally fall into two categories: \textit{synchronization methods} and \edit{methods which estimate the signal directly, i.e. without estimating nuisance parameters}. Synchronization methods attempt to recover the signal by aligning the translations and then averaging. \edit{They include methods based on \textit{angular synchronization} \cite{singer2011angular, boumal2016nonconvex, perry2018message, chen2018projected, bandeira2017tightness, zhong2018near}, where for each pair of signals the best pairwise shift is computed and then the translations are estimated from this pairwise information \cite{bandeira201518}, and \textit{semi-definite programming} \cite{bandeira2020non, bandeira2014multireference, chen2014near, bandeira2016low}, which approximates the quasi-maximum likelihood estimator of the shifts by relaxing a nonconvex rank constraint. However} these methods fail in the low signal-to-noise regime. \edit{Methods which estimate the signal directly include both the \textit{method of moments} \cite{hansen1982large, kam1980reconstruction, sharon2019method} and \textit{expectation maximization, or EM-type, algorithms} \cite{dempster1977maximum, abbe2018multireference}; a number of EM-type algorithms have also been developed for the more complicated Cryo-EM problem \cite{dvornek2015subspaceem, punjani2017cryosparc}. An important special case of the method of moments is} the \textit{method of invariants}, \edit{which} seeks to recover $f$ by computing translation invariant features, and thus avoids aligning the translations. However the task is a difficult one, as a complete representation is needed to recover the signal, and yet the representation may be difficult to invert and corrupted by statistical bias. Generally the signal is recovered from translation invariant moments, which are estimated in the Fourier domain \cite{hansen1982large, collis1998higher}. Recent work \cite{bendory2017bispectrum, OptConvRates_MRA} utilizes such Fourier invariants (mean, power spectrum, and bispectrum), and recovers $\widehat{f}$ by solving a nonconvex optimization problem on the manifold of phases. 

\begin{figure}
	\centering
	\begin{subfigure}[b]{0.32\textwidth}
		\centering
		\includegraphics[width=.85\textwidth]{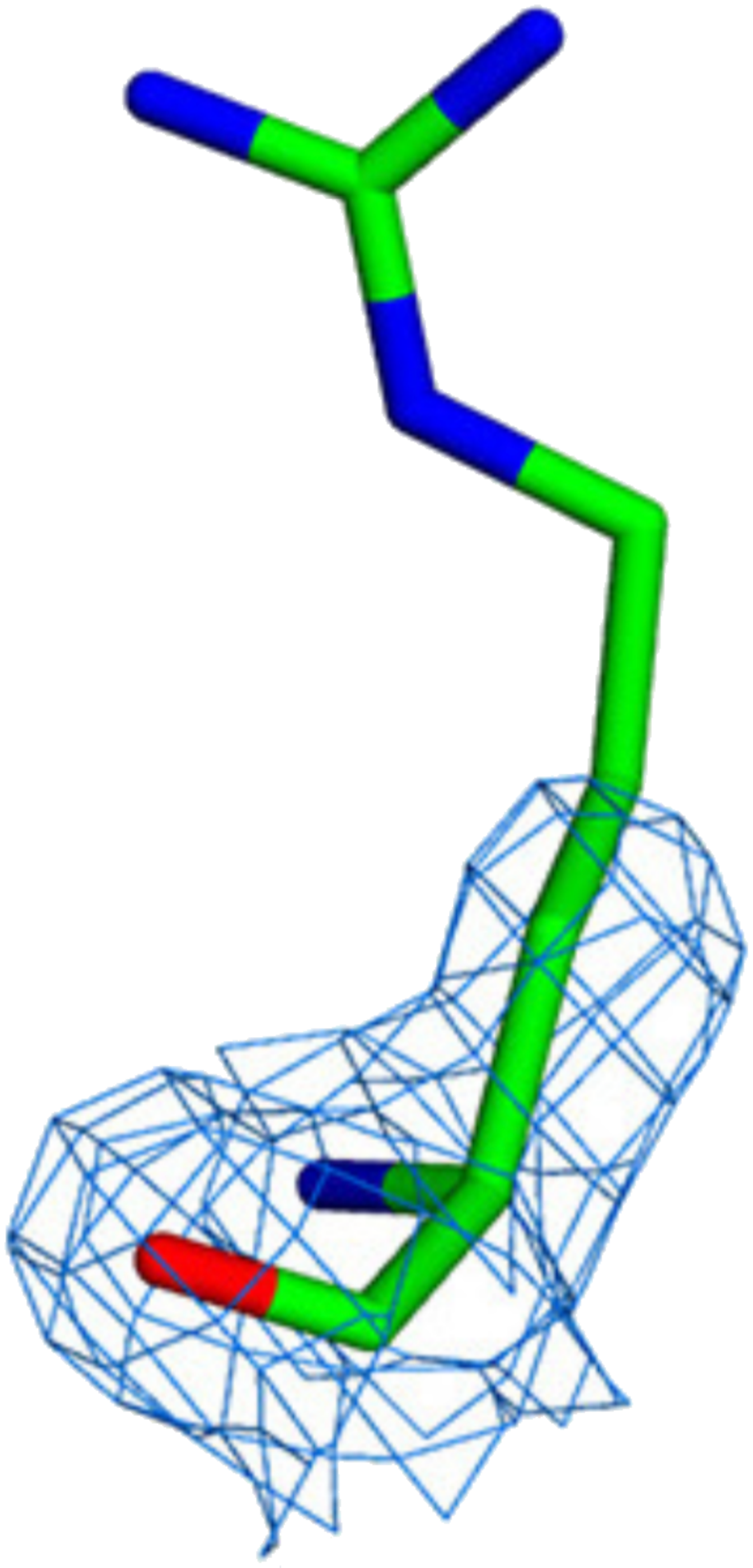}
		\caption{Molecule with flexible side chain.}
		\label{fig:molecule_original}
	\end{subfigure}
	\qquad
	\begin{subfigure}[b]{0.32\textwidth}
		\centering
		\includegraphics[width=.85\textwidth]{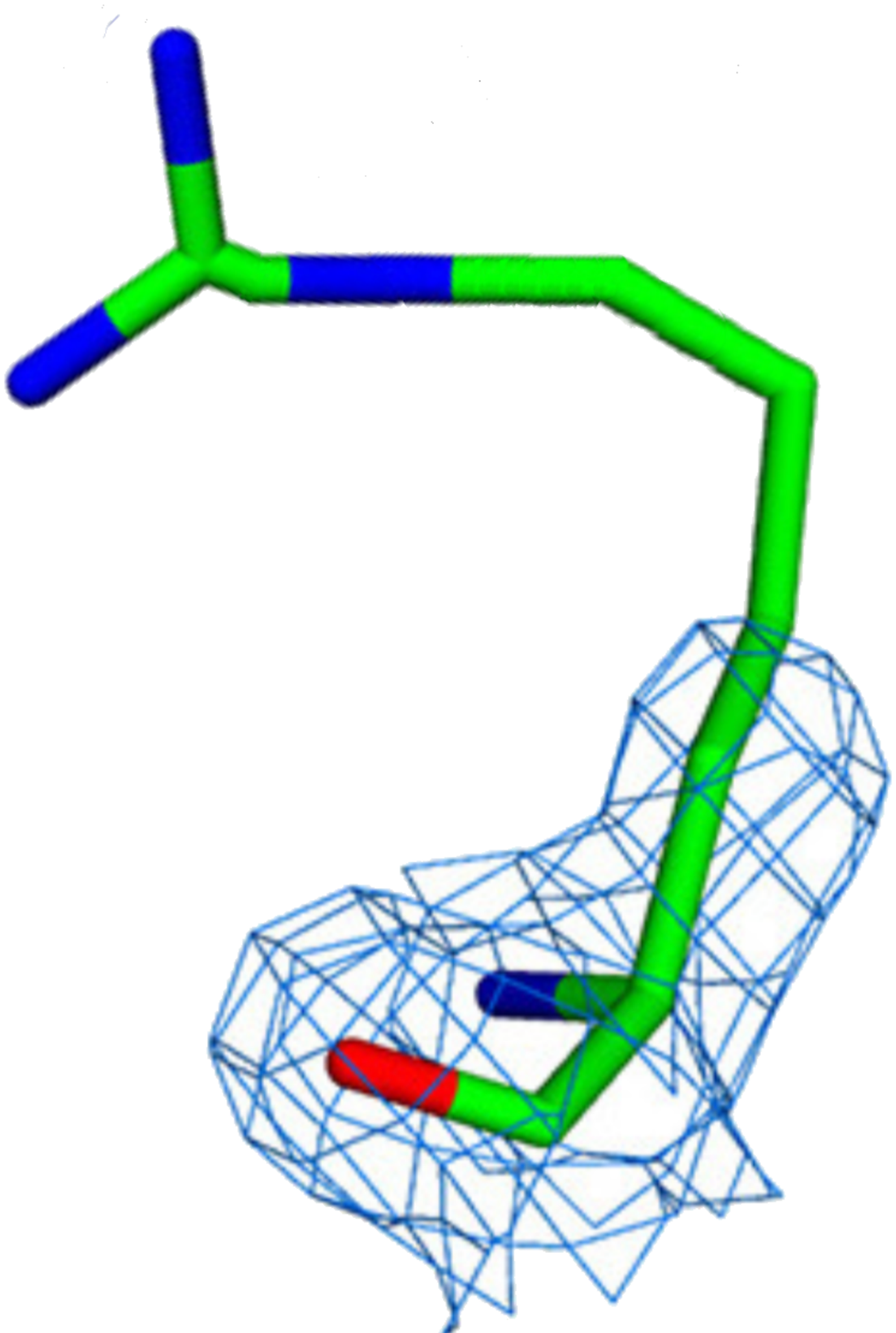}
		\caption{Diffeomorphism of Figure \ref{fig:molecule_original}.}
		\label{fig:molecule_jiggled}
	\end{subfigure}
	\caption{Dynamics arising from flexible regions in macromolecular structures \cite{palamini2016identifying}.}
	\label{fig:MoleculeFlexibleSidechain}
\end{figure}

\edit{Classic MRA however fails to capture many of the biological phenomena arising in molecular imaging, such as the random rotations of the molecules and the tomographic projection associated with the imaging of 3D objects. Another shortcoming is that the model fails to capture the dynamics which arise from flexible regions in macromolecular structures. These flexible regions are very important in structural biology, for example in understanding molecular interactions \cite{lim2002modular,ekman2005multi,levitt2009nature,forneris2012modular} and molecular recognition of epigenetic regulators of histone tails \cite{bowman2015post, desjarlais2016role, mcginty2016recognition}. The large scale dynamics of these regions makes imaging challenging \cite{villarreal2014cryoem}, and thus sample preparation in cryo-EM generally seeks to minimize these dynamics by focusing on well-folded macromolecules frozen in vitreous ice \cite{palamini2016identifying}. However this ``may severely impact ... the nature of the intrinsic dynamics and interactions displayed by macromolecules" \cite{palamini2016identifying}. Although modern cryo-EM is making great strides in understanding flexible systems \cite{fischer2015structure, bai2015sampling, fernandez2015cryo, merk2016breaking},} formulating \edit{models which are more capable of capturing the motions associated with the flexible regions of macromolecules could open the door to applying cryo-EM more broadly, i.e. to less well-folded macromolecules. Mathematically the motion of the flexible region can be modeled as a diffeomorphism. See Figure \ref{fig:MoleculeFlexibleSidechain}, which shows a molecule with a flexible side chain (\ref{fig:molecule_original}) and a diffeomorphism resulting from movement of the flexible region (\ref{fig:molecule_jiggled}). Figure \ref{fig:molecule_original} is taken from \cite{palamini2016identifying}, and Figure \ref{fig:molecule_jiggled} was obtained by deforming it.
}


This article \edit{thus} generalizes the classic MRA problem to include a random diffeomorphism. Specifically, we consider recovering a hidden signal $f: \R \rightarrow \R$ from
\begin{align*}
y_j(x) &= L_{\tau_j}f(x-t_j) + \varepsilon_j(x) \quad, \quad 1 \leq j \leq M,
\end{align*}
where $L_{\tau}$ is a dilation operator which dilates by a factor of $(1-\tau)$. The dilation operator $L_\tau$ is a simplified model for more general diffeomorphisms $L_{\zeta} f (x) = f(\zeta(x))$,  since in the simplest case when $\zeta (x)$ is affine, $L_{\zeta}$ simply translates and dilates $f$ (see Section \ref{sec: mra data models}). Dilations are also relevant for the analysis of time-warped audio signals, which can arise from the Doppler effect and in speech processing and bioacoustics. For example, \cite{omer2013estimation, omer2017time, meynard2018spectral} consider a stationary random signal $f(x)$ which is \textit{time-warped}, i.e. $D_\zeta f(x) = \sqrt{\zeta'(x)}f(\zeta(x))$, and use a maximum likelihood approach to estimate $\zeta$.  In \cite{clerc2002texture, clerc2003estimating},  a similar stochastic time warping model is analyzed using wavelet based techniques. The \edit{noisy dilation MRA} model considered here corresponds to the simplest case of time-warping, when $\zeta$ is an affine function. \edit{This special case is in fact very important in imaging applications  \cite{chandran1992position, capodiferro1987correlation, tsatsanis1990translation, hotta2001scale, martinec2007robust, robinson2007optimal}, where it is critical to compute features which are scale invariant, as objects are naturally dilated by the ``zoom" of an image. }


\edit{A new approach is needed to solve this more general MRA problem, as} Fourier invariants will fail, \edit{being} unstable to the action of diffeomorphisms, including dilations. The instability occurs in the high frequencies, where even a small diffeomorphism can significantly alter the Fourier modes. We instead propose $\Lb^2 (\R)$ wavelet coefficient norms as invariants, using a continuous wavelet transform. This approach is inspired by the invariant scattering representation of \cite{mallat:scattering2012}, which is provably stable to the actions of small diffeomorphisms. However here we replace local averages of the modulus of the wavelet coefficients with global averages (i.e. integrations) of the modulus squared, thus providing rigid invariants which can be statistically unbiased. Similar invariant coefficients have been utilized in a number of applications including predicting molecular properties \cite{eickenberg:3DSolidHarmonicScat2017, eickenberg:scatMoleculesJCP2018} and quantum chemical energies \cite{hirn:waveletScatQuantum2016}, and in microcanonical ensemble models for texture synthesis \cite{bruna:multiscaleMicrocanonical2018}. Recent work \cite{gao:graphScat2018} has also generalized such coefficients to graphs. 

\subsection{Notation}

The Fourier transform of a signal $f \in \Lb^1 (\R)$ is
\begin{equation*}
    \widehat{f} (\omega) = \int  f(x) e^{- i x \omega} \, dx \, .
\end{equation*}
\edit{We remind the reader that compactly supported $\Lb^2 (\R)$ functions are in $\Lb^1 (\R)$}. The power spectrum is the nonlinear transform $P : \Lb^2 (\R) \rightarrow \Lb^1 (\R)$ that maps $f$ to
\begin{equation*}
(Pf) (\omega) = |\widehat{f}(\omega)|^2, \quad \omega \in \R \, .
\end{equation*}
We denote $f(x) \leq C g(x)$ for some absolute constant $C$ by $f(x) \lesssim g(x)$. We also write $f(x)=O(g(x))$ if $|f(x)| \leq C g(x)$ 
for all $x\geq x_0$ for some constants $x_0, C>0$; \edit{$f(x)=o(g(x))$ denotes $f(x)/g(x) \rightarrow 0$ as $x\rightarrow \infty$}; $f(x)=\Theta(g(x))$ denotes $C_1 g(x)\leq |f(x)|\leq C_2 g(x)$ for all $x\geq x_0$ for some constants $x_0, C_1, C_2>0$. The minimum of $a$ and $b$ is denoted $a \wedge b$, \edit{and the maximum by $a\vee b$.}

\section{MRA models and the method of invariants}

Standard multi-reference alignment (MRA) models are generalized to models that include deformations of the underlying signal in Section \ref{sec: mra data models}. Section \ref{sec: invariant measurement models} reviews power spectrum invariants and introduces $\Lb^2 (\R)$ wavelet coefficient invariants. Theorem \ref{thm:WSC_PS_equivalence} proves wavelet coefficient invariants computed with a continuous wavelet transform and a suitable mother wavelet \edit{are equivalent to} the power spectrum, showing there is no information loss in the transition from one representation to the other. 

\subsection{MRA data models}
\label{sec: mra data models}

A standard multi-reference alignment (MRA) scenario considers the problem of recovering a signal $f \in \Lb^2 (\R)$ in which one observes random translations of the signal, each of which is corrupted by additive noise. The problem is particularly difficult when the signal to noise ratio is low, as registration methods become intractable. In  \cite{bendory2017bispectrum, OptConvRates_MRA, bendory2019multi, ma2019heterogeneous, singer2018mathematics, boumal2018heterogeneous}  the authors propose a method using Fourier based invariants, which are invariant to translations and thus eliminate the need to register signals. 

A more general MRA scenario incorporates random deformations of the signal $f$, which could be used to model underlying physical variability that is not captured by rigid transformations and additive noise models. For example \edit{\cite{bandeira2020non, bandeira2017estimation} consider a discrete signal $f$ corrupted by an arbitrary group action, \cite{zwart2003fast, hudson1993correlation} consider random deformations arising in RADAR, and} \cite{aizenbud2019rank} considers a generalization of MRA where signals are rescaled by random constants. Another natural mathematical model is small, random diffeomorphisms, which leads to observations of the form:
\begin{equation} \label{eqn: diffeomorphism model}
    y_j (x) = L_{\zeta_j} f (x - t_j) + \varepsilon_j (x), \quad 1 \leq j \leq M \, ,
\end{equation}
where $\zeta_j \in \Cb^1 (\R)$ is a random diffeomorphism, $t_j \in \R$ is a random translation, and the signals $\varepsilon_j(x)$ are independent white noise random processes. The transform $L_{\zeta}$ is the action of the diffeomorphism $\zeta$ on $f$,
\begin{equation*}
    L_{\zeta} f (x) = f(\zeta(x)) \, .
\end{equation*}
If $\| (\zeta^{-1})' \|_{\infty} < \infty$, then one can verify $L_{\zeta} : \Lb^2 (\R) \rightarrow \Lb^2 (\R)$. 

One of the keys to the Fourier invariant approach of \cite{bendory2017bispectrum, OptConvRates_MRA, bendory2019multi, ma2019heterogeneous, singer2018mathematics, boumal2018heterogeneous} is the authors can unbias the Fourier invariants of the noisy signals, thus allowing them to devise an unbiased estimator of the Fourier invariants of the signal $f$ (or a mixture of signals in the heterogeneous MRA case). For the diffeomorphism model \eqref{eqn: diffeomorphism model} this would require developing a procedure for unbiasing the (Fourier) invariants of $\{y_j\}_{j=1}^M$ against both additive noise and random diffeomorphisms. 

In order to get a handle on the difficulties associated with the proposed diffeomorphism model, in this paper we consider random dilations of the signal $f$, which corresponds to restricting the diffeomorphism to be of the form:
\begin{equation*}
\zeta (x) = \frac{x}{1-\tau}, \quad |\tau| \leq 1/2 \, .
\end{equation*}
Specifically, we assume the following \edit{noisy dilation MRA} model.
\begin{model}[\edit{Noisy dilation MRA} data model] 
\label{model:genMRA}
The \textit{\edit{noisy dilation MRA} data model} consists of $M$ independent observations of a \edit{compactly supported, real-valued} signal $f \edit{\in \Lb^2(\R)}$:
\begin{equation} \label{eqn: translation dilation noise model}
y_j (x) = L_{\tau_j}f(x - t_j) + \varepsilon_j(x) \, , \quad 1 \leq j \leq M \, ,
\end{equation}	
where $L_{\tau}$ is an $\Lb^1 (\R)$ normalized dilation operator, 
\begin{equation*}
L_{\tau}f(x) = (1 - \tau)^{-1} f \left( (1 - \tau)^{-1} x \right) \, .
\end{equation*}
In addition, we assume:
\begin{itemize}
	\item[(i)] $\text{supp}(y_j)\subseteq [-\frac{1}{2},\frac{1}{2}]$ for $1 \leq j \leq M$.
	\item[(ii)] $\{t_j\}_{j=1}^M$ are independent samples of a random variable $t\in \R$.
	\item[(iii)] $\{\tau_j\}_{j=1}^M$ are independent samples of a bounded, symmetric random variable $\tau$ satisfying:
	\begin{equation*}
	\tau \in \R\quad,\quad \Ex(\tau) = 0 \quad,\quad \Var(\tau) = \eta^2\quad,\quad |\tau| \leq 1/2.
	\end{equation*} 
	\item[(iv)] $\{\varepsilon_j(x)\}_{j=1}^M$ are independent white noise processes on $[-\frac{1}{2},\frac{1}{2}]$ with variance $\sigma^2$.
\end{itemize}
\end{model}	

\begin{remark}
	The interval $[-\frac{1}{2},\frac{1}{2}]$ is arbitrary and can be replaced with any interval of length 1. In addition, the spatial box size is arbitrary, i.e. $[-\frac{1}{2},\frac{1}{2}]$ can be replaced with $[-\frac{N}{2},\frac{N}{2}]$. All results still hold with $\sigma\sqrt{N}$ replacing $\sigma$ wherever it appears. 
\end{remark}		

Thus the hidden signal $f$ is supported on an interval of length $1$, and we observe $M$ independent instances of the signal that have been randomly translated, randomly dilated, and corrupted by additive white noise. 
We assume the hidden signal is real, but the proposed methods can also handle complex valued signals with minor modifications. Recall $\varepsilon(x)$ is a white noise process if $\varepsilon(x) = dB_x$, i.e. it is the derivative of a Brownian motion with variance $\sigma^2$.

While the \edit{noisy dilation MRA} model does not capture the full richness of the diffeomorphism model, it already presents significant mathematical difficulties. Indeed, as we show in Section \ref{sec: noisy dilation MRA model}, Fourier invariants, specifically the power spectrum, cannot be used to form accurate estimators under the action of dilations and random additive noise. The reason is that Fourier measurements are not stable to the action of small dilations (measured here by $|\tau|$), since the displacement of $\widehat{L_{\tau}f}(\omega)$ relative to $\widehat{f}(\omega)$ depends on $|\omega|$. Intuitively, high frequency modes are unstable, and yet high frequencies are often critical; for example removing high frequencies increases the sample complexity needed to distinguish between signals in a heterogeneous MRA model \cite{OptConvRates_MRA}. We thus replace Fourier based invariants with wavelet coefficient invariants, which are defined in Section \ref{sec: invariant measurement models}. As we show the wavelet invariants of the signal $f$ can be accurately estimated from wavelet invariants of the noisy signals $\{y_j\}_{j=1}^M$, with no information loss relative to the power spectrum of $f$.

For future reference we also define the following dilation MRA model, which includes random translations and random dilations but no additive noise. Thus Models \ref{model:classicMRA} and \ref{model:dilMRA} are both special cases of Model \ref{model:genMRA}.

\begin{model}[Dilation MRA data model] 
	\label{model:dilMRA}
	The \textit{dilation MRA data model} consists of $M$ independent observations of a \edit{compactly supported, real-valued} signal $f \edit{\in \Lb^2(\R)}$:
	\begin{equation} \label{eqn: translation dilation model}
	y_j (x) = L_{\tau_j}f(x - t_j)\, , \quad 1 \leq j \leq M \, ,
	\end{equation}	
	where $L_{\tau}$ is an $\Lb^1 (\R)$ normalized dilation operator, 
	\begin{equation*}
	L_{\tau}f(x) = (1 - \tau)^{-1} f \left( (1 - \tau)^{-1} x \right) \, .
	\end{equation*}
	In addition, we assume (i)-(iii) of Model \ref{model:genMRA}.
\end{model}	

\subsection{Method of invariants}
\label{sec: invariant measurement models}

We now discuss how invariant representations can be used to solve MRA data models, and introduce the wavelet invariants used in this article.

\subsubsection{Motivation and related work}
\label{sec: invariant mm motivation}

Let $T_t f(x) = f(x - t)$ denote the operator \edit{which translates by $t$} acting on a signal $f$. Invariant measurement models seek a representation $\Phi (f) \in \Bc$ in a Banach space $\Bc$ such that
\begin{equation} \label{eqn: translation invariance}
    \Phi (T_t f) = \Phi (f), \quad \forall \, t \in \R \, .
\end{equation}
In MRA problems, one additionally requires that
\begin{equation} \label{eqn: completeness}
    \Phi (f) = \Phi (g) \Longleftrightarrow g = T_t f \text{ for some } t \in \R \, .
\end{equation}
The first condition \eqref{eqn: translation invariance} removes the need to align random translations of the signal $f$, whereas the second condition \eqref{eqn: completeness} ensures that if one can estimate $\Phi (f)$ from the collection $\{\Phi (y_j)\}_{j=1}^M$, then one can recover an estimate of $f$ (up to translation) by solving
\begin{equation} \label{eqn: optimization}
    f^{\star} = \arginf_{g \in \Lb^1 \cap \Lb^2 (\R)} \| \Phi (g) - \Phi (f) \|_{\Bc}\, ,
\end{equation}
where $\| \cdot \|_{\Bc}$ is the Banach space norm.

When the observed signals $\{y_j\}_{j=1}^M$ are corrupted by more than just a random translation, though, as in Model \ref{model:genMRA}, estimating $\Phi (f)$ from $\{\Phi (y_j)\}_{j=1}^M$ is not always straightforward. Indeed, one would like to compute
\begin{equation} \label{eqn: biased rep}
    \overline{\Phi}_M(f) = \frac{1}{M} \sum_{j=1}^M \Phi (y_j) \, ,
\end{equation}
but the quantity $\overline{\Phi}_M(f)$ is not always an unbiased estimator of $\Phi (f)$, meaning that $\lim_{M \rightarrow \infty} \overline{\Phi}_M(f) \neq \Phi (f)$. In order to circumvent this issue, one must select a representation $\Phi$ such that
\begin{equation} \label{eqn: bias model}
    \E\, \Phi (y_j) = \Phi (f) + b_\Phi(f, \mathcal{M}) \, ,
\end{equation}
where $b_\Phi(f,\mathcal{M})$ is a bias term depending on the choice of $\Phi$, $f$, and the signal corruption model $\mathcal{M}$. If \eqref{eqn: bias model} holds and if we can compute a $\tilde{b}$ such that $\Ex\, \tilde{b}_\Phi(y_j, \mathcal{M}) = b_\Phi(f, \mathcal{M})+\delta$ for $|b_\Phi(f, \mathcal{M})| \gg |\delta|$, then one can amend \eqref{eqn: biased rep} to reduce the bias:
\begin{equation*}
\widetilde{\Phi}_M (f) = \frac{1}{M} \sum_{j = 1}^M ( \Phi (y_j) - \tilde{b}_\Phi(y_j, \mathcal{M})) \, ,
\end{equation*}	
in which case
\begin{equation*}
    \lim_{M \rightarrow \infty} \widetilde{\Phi}_M (f) = \Phi (f) + \delta
\end{equation*}
\edit{almost surely} by the law of large numbers. The main difficulty therefore is twofold. On the one hand, one must design a representation $\Phi$ that satisfies \eqref{eqn: translation invariance}, \eqref{eqn: completeness}, and \eqref{eqn: bias model} with a bias $b$ that can be estimated; on the other hand, the optimization \eqref{eqn: optimization} must be tractable. For random translation plus additive noise models (i.e., Model \ref{model:classicMRA}), the authors of \cite{bendory2017bispectrum, OptConvRates_MRA} describe a representation $\Phi$ based on Fourier invariants that satisfies the outlined requirements and for which one can solve \eqref{eqn: optimization} despite the optimization being non-convex. The Fourier invariants include $\widehat{f}(0)$ (i.e., the integral of $f$), the power spectrum of $f$, and the bispectrum of $f$. Each invariant captures successively more information in $f$. While $\widehat{f}(0)$ carries limited information, the power spectrum recovers \edit{the magnitude of the Fourier transform, namely it recovers the nonnegative, real-valued function $\rho(\omega)$ such that $\widehat{f}(\omega) = \rho(\omega)e^{i\theta(\omega)}$ but the phase information $\theta(\omega)$ is lost.} \edit{Since $\widehat{T_tf}(\omega)=e^{-i\omega t}\widehat{f}(\omega)$,} the power spectrum  is invariant to translations \edit{as} the Fourier modulus kills the phase factor induced by a translation $t$ of $f$.  \edit{However, it is in general not possible to recover a signal from its power spectrum, although in certain special cases the phase information} can be resolved; results along these lines are in the field of phase retrieval \cite{sun2017phaseless, cheng2017phaseless}. The bispectrum is also translation invariant and invertible so long as $\widehat{f}(\omega) \neq 0$ \cite{perry2017sample}. 

In Section \ref{sec: noisy dilation MRA model} we show that it is impossible to significantly reduce the power spectrum bias for Model \ref{model:genMRA}, which includes translations, dilations, and additive noise. We thus propose replacing the power spectrum with the $\Lb^2 (\R)$ norms of the wavelet coefficients of the signal $f$. These invariants satisfy \eqref{eqn: translation invariance} and \eqref{eqn: bias model} for Model \ref{model:genMRA}, and yield a convex formulation of \eqref{eqn: optimization}. They do not satisfy \eqref{eqn: completeness} for general $f \in \Lb^2 (\R)$, but Theorem \ref{thm:WSC_PS_equivalence} in Section \ref{sec: wavelet invariants} shows that knowing the wavelet invariants of $f$ is equivalent to knowing the power spectrum of $f$, which means that any phase retrieval setting in which recovery is possible will also be possible with the specified wavelet invariants. \edit{For example if the signal lives in a spline or shift invariant space in addition to being real-valued, then it can be recovered from its phaseless measurements \cite{sun2017phaseless, cheng2017phaseless}.}

\subsubsection{Wavelet invariants}
\label{sec: wavelet invariants}

We now define the wavelet invariants used in this article. A wavelet $\psi \in \Lb^2 (\R)$ is a waveform that is localized in both space and frequency and has zero average,
\begin{equation*}
    \int  \psi (x) \, dx = 0 \, .
\end{equation*}
\edit{Note throughout this article $\psi$ will always denote a wavelet in $\Lb^1\cap\Lb^2 (\R)$ with zero average, satisfying $\|\psi\|_2=1$ as well as the classic admissability condition $\int \frac{|\widehat{\psi}(\omega)|^2}{\omega}\ d\omega < \infty$.}
A dilation of the wavelet by a factor $\lambda \in (0, \infty)$ is denoted,
\begin{equation*}
\psi_{\lambda} (x) = \lambda^{1/2} \psi (\lambda x) \, ,
\end{equation*}
where the normalization guarantees that $\| \psi_{\lambda} \|_2 = \| \psi \|_2 = 1$. The continuous wavelet transform $W$ computes
\begin{equation*}
    Wf = \{ f \ast \psi_{\lambda} (x) : \lambda \in (0, \infty) \, , \, x \in \R \} \, .
\end{equation*}
The parameter $\lambda$ corresponds to a frequency variable. Indeed, if $\xi_0$ is the central frequency of $\psi$, the wavelet coefficients $f \ast \psi_{\lambda}$ recover the frequencies of $f$ in a band of size proportional to $\lambda$ centered at $\lambda \xi_0$. Thus high frequencies are grouped into larger packets, which we shall use to obtain a stable, invariant representation of $f$. 

The wavelet transform $Wf$ is equivariant to translations but not invariant. Integrating the wavelet coefficients over $x$ yields translation invariant coefficients, but they are trivial since $\int \psi_{\lambda} = 0$. We therefore compute $\Lb^2 (\R)$ norms in the $x$ variable, yielding the following nonlinear wavelet invariants:
\begin{definition}[Wavelet invariants]
	\label{def: wavelet invariants}
	The $\Lb^2$ wavelet invariants of a \edit{real-valued} signal $f\in \Lb^1\cap \Lb^2(\R)$ are given by
	\begin{equation} \label{eqn: first order scattering}
	(Sf) (\lambda) = \norm{ f \ast \psi_{\lambda} }_{\edit{2}}^2, \quad \lambda \in (0, \infty) \, ,
	\end{equation}
	where $\psi_{\lambda} (x) =\lambda^{1/2} \psi (\lambda x) $ are dilations of a mother wavelet $\psi$.
\end{definition}
\edit{Throughout this article $\psi$ can be taken as a Morlet wavelet, in which case $\psi$ is constructed to have frequency centered at $\xi$ by $\psi(x) = C_\xi \pi^{-1/4} e^{-x^2/2}(e^{i\xi x}-e^{-\xi^2/2})$ for $C_\xi =(1-e^{-\xi^2}-2e^{-3\xi^2/4})^{-1/2}$, but results hold more generally for what we refer to as $k$-admissible wavelets, where $k \geq 0$ is an even integer. See Appendix \ref{app:wavelet_admissability} for a precise description of this admissibility criteria.} The wavelet invariants can be expressed in the frequency domain as
\begin{align*}
(Sf) (\lambda) = \frac{1}{2\pi} \int |\widehat{f}(\omega)|^2 |\widehat{\psi}_\lambda(\omega)|^2\ d\omega\, ,
\end{align*}
which motivates the following definition of ``wavelet invariant derivatives."
\begin{definition}[Wavelet invariant derivatives]
	\label{def:WSCderiv}
	The $n$-th derivative of $(Sf)(\lambda)$ is defined as:
	\begin{align*}
	(Sf)^{(n)}(\lambda) &:= \frac{1}{2\pi} \int |\widehat{f}(\omega)|^2  \frac{d^{n}}{d\lambda^{n}} |\widehat{\psi}_\lambda(\omega)|^2 \ d\omega\, .
	\end{align*}
\end{definition}
\begin{remark}
Definition \ref{def: wavelet invariants} assumes $f \edit{:\R \rightarrow \R}$, which allows the wavelet $\psi$ to be either real or complex. Our results can easily be extended to complex $f$, but a strictly complex wavelet would be needed, with $Sf (\lambda)$ computed for all $\lambda \in (-\infty, \infty)\setminus {0}$.
\end{remark}
\begin{remark}
	\label{rmk:ComputationalCost}
	For a discrete signal of length $n$, computing the wavelet invariants via a continuous wavelet transform is $O(n^2)$, while computing the power spectrum is $O(n\log n)$. Thus one pays a computational cost to achieve greater stability with no loss of information. On the other hand, if wavelet invariants are computed for a dyadic wavelet transform (i.e. only for $O(\log n)$ $\lambda$'s), the computational cost is the same and stability is maintained, but more information is lost.
\end{remark}
\begin{remark}
	When $(Pf) (\omega) = |\widehat{f}(\omega)|^2$ is continuous, Definition \ref{def:WSCderiv} reduces to a normal derivative, i.e. one can check that $(Sf)^{(n)}(\lambda) =  \frac{d^{n}}{d\lambda^{n}} (Sf)(\lambda)$. However when $Pf$ is not continuous, in general $(Sf)^{(n)}(\lambda) \ne  \frac{d^{n}}{d\lambda^{n}} (Sf)(\lambda)$, and $(Sf)^{(n)}(\lambda)$ is more convenient for controling the error of the estimators proposed in this article.  Throughout this article, the notation $(Sf)^{(n)}(\lambda)$ will thus denote the derivative of Definition \ref{def:WSCderiv} and $\frac{d^{n}}{d\lambda^{n}} (Sf)(\lambda)$ will denote the standard derivative.   	
\end{remark}		 

\edit{Under mild conditions}, one can show that $S : \Lb^2 (\R) \rightarrow \Lb^1 \cap \Cb (0, \infty)$. The values $\lambda = 2^j$ for $j \in \Z$ correspond to rigid versions of first order $\Lb^2 (\R)$ wavelet scattering invariants \cite{mallat:scattering2012}. 
The continuous wavelet transform $Wf$ is extremely redundant; indeed, for suitably chosen mother wavelets the dyadic wavelet transform with $\lambda = 2^j$ for $j \in \Z$ is a complete representation of $f$. However, the corresponding operator $S$ restricted to $\lambda = 2^j$ \edit{is not invertible}. When one utilizes every frequency $\lambda \in (0, \infty)$, though, the resulting $\Lb^2 (\R)$ norms $(Sf) (\lambda) = \| f \ast \psi_{\lambda} \|_2^2$ uniquely determine the power spectrum of $f$, so long as the wavelet $\psi$ satisfies a type of independence condition. 

\begin{condition}
	\label{cond: linear indep wavelet}
	Define
	\begin{align*}
	|\widehat{\psi}_\lambda^+(\omega)|^2 &= \left(|\widehat{\psi}_\lambda(\omega)|^2 + |\widehat{\psi}_\lambda(-\omega)|^2\right)\cdot\ind(\omega \geq 0) \, .
	\end{align*}
	If for any finite sequence $\{\omega_i\}_{i=1}^n$ of distinct positive frequencies, the collection $\{|\widehat{\psi}^{+}_\lambda(\omega_i)|^2\}_{i=1}^n$ are linearly independent functions of $\lambda$, we say \edit{the wavelet} $\psi$ satifies the linear independence condition.
\end{condition}

\begin{remark}
	Condition \ref{cond: linear indep wavelet} is stated in terms of $|\widehat{\psi}_\lambda^+(\omega)|^2$ to avoid assumptions on whether $\psi$ is real or complex. When $\psi (x) \in \R$, $|\widehat{\psi}_\lambda^+(\omega)|^2 = 2|\widehat{\psi}_\lambda(\omega)|^2$ for $\omega \geq 0$. When $\psi$ is complex analytic, $|\widehat{\psi}_\lambda^+(\omega)|^2 = |\widehat{\psi}_\lambda(\omega)|^2$. When $\psi \in \mathbb{C}$ but not complex analytic, $|\widehat{\psi}_\lambda^+(\omega)|^2$ simply incorporates a reflection of $|\widehat{\psi}_\lambda(\omega)|^2$ about the origin. Since we assume $f(x) \in \R$, $|\widehat{\psi}_\lambda^+(\omega)|^2$ uniquely defines $(\Sc f)(\lambda)$, since  \\
	$(\Sc f)(\lambda) = \frac{1}{2\pi}\left\langle |\widehat{f}|^2, |\widehat{\psi}^+_{\lambda}|^2 \right\rangle$ by the Plancherel and Fourier convolution theorems.
\end{remark}

\begin{theorem}
	\label{thm:WSC_PS_equivalence}
Let $f,g \in \Lb^1 \cap \Lb^2 (\R)$ and \edit{assume} $\psi$ satisfies Condition \ref{cond: linear indep wavelet} \edit{and $\widehat{\psi}$ has compact support}. Then:
\begin{equation*}
    Sf = Sg \Longleftrightarrow Pf = Pg \, .
\end{equation*}
\end{theorem}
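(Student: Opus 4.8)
The forward implication $Pf = Pg \Rightarrow Sf = Sg$ is immediate from the frequency-domain formula $(Sf)(\lambda) = \frac{1}{2\pi}\int |\widehat{f}(\omega)|^2 |\widehat{\psi}_\lambda(\omega)|^2\,d\omega$, which depends on $f$ only through $Pf$. The content is the converse, and the plan is to reduce it to the injectivity statement: the only continuous $h \in \Lb^1(\R)$ with $h(-\omega)=h(\omega)$ and $\int_0^\infty h(\omega)\,|\widehat{\psi}_\lambda^+(\omega)|^2\,d\omega = 0$ for every $\lambda > 0$ is $h \equiv 0$. Applying this to $h = Pf - Pg$ then finishes the proof: $h$ is continuous because $f,g \in \Lb^1(\R)$ (so $\widehat f, \widehat g$ are continuous), lies in $\Lb^1(\R)$ because $\widehat f, \widehat g \in \Lb^2(\R)$, is even because $f,g$ are real valued, and by the remark following Condition \ref{cond: linear indep wavelet} the hypothesis $Sf = Sg$ is exactly the orthogonality relations above with $h = Pf - Pg$.

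The engine is a multiplicative change of variables. Writing $|\widehat{\psi}_\lambda^+(\omega)|^2 = \lambda^{-1}\Psi(\omega/\lambda)$ for $\omega \geq 0$, where $\Psi(v) = |\widehat{\psi}(v)|^2 + |\widehat{\psi}(-v)|^2$, and substituting $\omega = e^s$, $\lambda = e^t$, a short computation turns the orthogonality relations into the convolution identity $\int_\R H(s)\,G(s-t)\,ds = 0$ for all $t \in \R$, where $H(s) = e^s\,h(e^s)$ and $G(u) = \Psi(e^u)$. One checks that $\|H\|_{\Lb^1(\R)} = \|h\|_{\Lb^1(0,\infty)} < \infty$ and that $\|G\|_{\Lb^1(\R)} = \int_0^\infty \Psi(v)/v\,dv = \int |\widehat{\psi}(\omega)|^2/|\omega|\,d\omega < \infty$ by the admissibility of $\psi$. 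Hence the identity says precisely that $H \ast G^- \equiv 0$, where $G^-(u) = G(-u)$, and taking Fourier transforms gives $\widehat{H}(\xi)\,\widehat{G^-}(\xi) = 0$ for every $\xi \in \R$, with $\widehat H$ continuous.

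It remains to show $\widehat{G^-}$ vanishes on no interval, for then $\widehat H \equiv 0$ by continuity, so $H \equiv 0$, so $h \equiv 0$ on $(0,\infty)$, and by evenness $h \equiv 0$, giving $Pf = Pg$. This is where the hypothesis that $\widehat{\psi}$ has compact support enters: if $\text{supp}(\widehat{\psi}) \subseteq [-R,R]$ then $G$ is supported in $(-\infty,\ln R]$, hence $G^-$ in $[-\ln R,\infty)$, so $\widehat{G^-}$ extends to a bounded holomorphic function on a half-plane that is continuous up to $\R$; since $\widehat{G^-}(0) = \int G = \int |\widehat{\psi}(\omega)|^2/|\omega|\,d\omega > 0$ it is not identically zero, so by a boundary uniqueness (Schwarz reflection) argument its zero set has empty interior. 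In this picture Condition \ref{cond: linear indep wavelet} is the statement that finitely many translates of $G^-$ — equivalently, the functions $\lambda \mapsto |\widehat{\psi}_\lambda^+(\omega_i)|^2$ — are linearly independent, the discrete counterpart of ``$\widehat{G^-}$ has no zero interval'', and either formulation closes the argument. I expect the one genuine obstacle to be exactly this last step: passing from the pointwise relation $\widehat H\,\widehat{G^-} = 0$ to $\widehat H \equiv 0$, which is where the compact support of $\widehat{\psi}$ (or Condition \ref{cond: linear indep wavelet}) must actually be invoked; the change of variables and the integrability bookkeeping are routine.
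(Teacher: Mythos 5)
Your proof is correct, and it takes a genuinely different route from the paper's. The paper reduces the converse to Lemma \ref{lem:WSC_PS_equivalence} and proves that lemma in two stages: it first forces $p=0$ near $\omega=0$ using the compact support of $\widehat{\psi}$, and then on $[\epsilon,\infty)$ builds a kernel from the rescaled profiles $\lambda^{-\beta}|\widehat{\psi}^{+}_\lambda|^2$, invokes Condition \ref{cond: linear indep wavelet} to make the associated Gram matrices (hence the kernel) strictly positive definite, and concludes via Mercer's theorem that $\langle Kg,g\rangle=0$ forces $g=0$. You instead pass to logarithmic variables, turning the scale family into a translation family, so the orthogonality relations become one convolution identity $H\ast G^{-}\equiv 0$ in $\Lb^1(\R)$, and injectivity follows from a Titchmarsh/Paley--Wiener type argument: half-line support of $G$ (from the compact support of $\widehat{\psi}$) makes $\widehat{G^{-}}$ a boundary value of a holomorphic function, so its zero set has empty interior and the continuous $\widehat{H}$ vanishes identically. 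Your route is shorter, avoids Mercer, and notably never uses Condition \ref{cond: linear indep wavelet}: the standing admissibility assumption (giving $G\in\Lb^1$), $\psi\neq 0$, and compact support of $\widehat{\psi}$ suffice, so under the theorem's stated hypotheses you need strictly less. What the paper's kernel argument buys is that Condition \ref{cond: linear indep wavelet} is the operative hypothesis, which also covers the variant noted after the theorem (bandlimited $f,g$ with $\widehat{\psi}$ not compactly supported, e.g.\ Morlet via Proposition \ref{prop:linear_indep}); your half-line analyticity argument does not extend to that case. Two cosmetic blemishes: $\widehat{G^{-}}$ need not be bounded on the whole half-plane unless you first translate the support of $G^{-}$ to $[0,\infty)$ (a unimodular factor on the Fourier side), though only holomorphy in the open half-plane plus continuity up to the boundary is needed for the local Schwarz-reflection step; and your closing sentence identifying Condition \ref{cond: linear indep wavelet} with the no-zero-interval property of $\widehat{G^{-}}$ is a heuristic that plays no role in the proof and should be dropped or flagged as such.
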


\begin{proof}
First assume $Pf = Pg$, which means $|\widehat{f}(\omega)|^2 = |\widehat{g}(\omega)|^2$ for almost every $\omega \in \R$. Using the Plancheral and Fourier convolution theorems,
\begin{align*}
    (Sf)(\lambda) = \int  |f \ast \psi_{\lambda}(x)|^2 \, dx &= \frac{1}{2\pi} \int  |\widehat{f}(\omega)|^2 |\widehat{\psi}_{\lambda}(\omega)|^2 \, d\omega \\
    &= \frac{1}{2\pi} \int  |\widehat{g}(\omega)|^2 |\widehat{\psi}_{\lambda}(\omega)|^2 \, d\omega = (Sg) (\lambda), \enspace \forall \, \lambda \in (0, \infty) \, .
\end{align*}

Now suppose $Sf = Sg$. Since $Sf$ and $Sg$ are continuous in $\lambda$, we have:
\begin{equation*}
    0 = (Sf)(\lambda) - (Sg)(\lambda) = \frac{1}{2\pi} \int  \left( |\widehat{f}(\omega)|^2 - |\widehat{g}(\omega)|^2 \right) |\widehat{\psi}_{\lambda} (\omega)|^2 \, d\omega, \enspace \forall \, \lambda \in (0, \infty) \, .
\end{equation*}
Since $f \in \Lb^1 \cap \Lb^2 (\R)$ we have $\widehat{f} \in \Lb^2 \cap \Lb^{\infty}(\R)$ and thus $Pf \in \Lb^1 \cap \Lb^{\infty}(\R)$. By interpolation we have $Pf \in \Lb^2 (\R)$, and the same for $Pg$. By applying Lemma \ref{lem:WSC_PS_equivalence} (stated below) with $p(\omega) = (Pf)(\omega)-(Pg)(\omega)$ \edit{(note $p$ is continuous since $f,g\in\Lb^1(\R)$)}, we conclude $Pf = Pg$ for almost every $\omega$.
\end{proof}

\begin{restatable}{lemma}{lemWSCPSequivalence}
	\label{lem:WSC_PS_equivalence}
Let $p \in  \Lb^2\edit{(\R)}$ \edit{be continuous and assume} $p(\omega)=p(-\omega)$, \edit{$\widehat{\psi}$ has compact support, and} Condition \ref{cond: linear indep wavelet}. Then 
\begin{align*}
\int  p(\omega) |\widehat{\psi}_\lambda(\omega)|^2\ d\omega &= 0\ \forall \lambda >0 \implies p=0\ \text{a.e.}
\end{align*}
\end{restatable}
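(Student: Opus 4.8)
The plan is to exploit the multiplicative (dilation) structure of $|\widehat{\psi}_\lambda|^2$ by passing to a Mellin transform — equivalently, a Fourier transform in logarithmic coordinates. Writing $h(u) := |\widehat{\psi}(u)|^2 + |\widehat{\psi}(-u)|^2$ for $u \ge 0$, so that $|\widehat{\psi}_\lambda^+(\omega)|^2 = \lambda^{-1} h(\omega/\lambda)$, and using that $p$ is even, the hypothesis is equivalent (after the substitution $\omega = \lambda u$) to the homogeneous Mellin-convolution identity
\[
\int_0^\infty p(\lambda u)\, h(u)\, du = 0 \qquad (\lambda > 0).
\]
The goal is to show this forces $p \equiv 0$, the mechanism being that the Mellin transform of $h$ is nonzero off a discrete set, so it can be "divided out."

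First I would choose an exponent $c \in (0, \tfrac{1}{2})$ and set $a_c(t) := p(e^t) e^{ct}$. The choice of strip matters: using that $p$ is continuous (hence bounded near $\omega = 0$) at one end and $p \in \Lb^2(\R)$ (via Cauchy--Schwarz) at the other, one checks $a_c \in \Lb^1(\R)$ precisely for $c$ in this range, so that $\widehat{a_c}(\xi) = \int_0^\infty p(\omega)\, \omega^{c - i\xi - 1}\, d\omega =: \widetilde{p}(c - i\xi)$ is a continuous function of $\xi$. Next I would multiply the displayed identity by $\lambda^{c - i\xi - 1}$ and integrate in $\lambda$ over $(0,\infty)$; the Fubini interchange is legitimate because the resulting double integral of absolute values factors as $\|a_c\|_1 \int_0^\infty h(u) u^{-c}\, du$, and the second factor is finite thanks to the admissibility bound $\int_0^\infty h(u) u^{-1}\, du < \infty$ together with $h \in \Lb^1$. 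After the substitution $\mu = \lambda u$ the double integral separates and the identity collapses to
\[
\widetilde{p}(c - i\xi)\; \widetilde{h}(1 - c + i\xi) = 0 \qquad \text{for all } \xi \in \R,
\]
where $\widetilde{h}(w) := \int_0^\infty h(u) u^{w-1}\, du$ is the Mellin transform of $h$.

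The step I expect to be the crux is showing that $\widetilde{h}(1 - c + i\xi)$ vanishes for at most countably many $\xi$; this is where the remaining hypotheses enter. Since $\widehat{\psi}$ is compactly supported, $h$ is supported in some $[0, R]$, and $\widetilde{h}(w) = \int_0^R h(u) u^{w-1}\, du$ is an analytic function of $w$ on $\{\mathrm{Re}\, w > 0\}$; it is not identically zero because $\widetilde{h}(1 - c) = \int_0^\infty h(u) u^{-c}\, du > 0$ (as $h \ge 0$ and $h \not\equiv 0$). By the identity theorem its zeros along the vertical line $\mathrm{Re}\, w = 1 - c$ form a discrete — hence countable — set $Z$. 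Therefore $\widetilde{p}(c - i\xi) = 0$ for every $\xi$ with $1 - c + i\xi \notin Z$; since $\xi \mapsto \widetilde{p}(c - i\xi) = \widehat{a_c}(\xi)$ is continuous and vanishes on a dense subset of $\R$, it vanishes identically, and Fourier uniqueness on $\Lb^1(\R)$ gives $a_c = 0$ a.e., i.e.\ $p = 0$ a.e.\ on $(0,\infty)$. Continuity and the symmetry $p(\omega) = p(-\omega)$ then promote this to $p \equiv 0$ on $\R$.

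Finally, I would note that Condition \ref{cond: linear indep wavelet} is, in this compact-support setting, the discrete counterpart of the analyticity argument above — linear independence of $\{h(\omega_i / \lambda)\}_i$ in $\lambda$ for distinct $\omega_i$ is precisely the assertion that a nonzero exponential polynomial cannot annihilate $\widetilde{h}$ on a line — so the Mellin route does not need to invoke it separately; an alternative would be to run the final contradiction through Condition \ref{cond: linear indep wavelet} directly by approximating $p$ with atomic measures, but that requires a density argument that the Mellin approach sidesteps.
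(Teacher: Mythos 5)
Your proof is correct, and it takes a genuinely different route from the paper's. The paper first uses continuity of $p$ and the compact support of $\widehat{\psi}$ to force $p=0$ on a neighborhood $(0,\epsilon)$ of the origin, then integrates $|\langle p,|\widehat{\psi}^{+}_\lambda|^2\rangle|^2$ over $\lambda$ with a weight $\lambda^{-\beta}$, changes variables $\omega\mapsto 1/\xi$ to land on the compact interval $[0,1/\epsilon]$, and shows the resulting kernel is continuous and strictly positive definite --- strict positive definiteness is exactly where Condition \ref{cond: linear indep wavelet} enters, via a Gram-matrix argument --- so Mercer's theorem and $\langle Kg,g\rangle=0$ give $p=0$. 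You instead work in multiplicative coordinates: with $h(u)=|\widehat{\psi}(u)|^2+|\widehat{\psi}(-u)|^2$ the hypothesis becomes $\int_0^\infty p(\lambda u)h(u)\,du=0$, and after multiplying by $\lambda^{c-i\xi-1}$ (Fubini is justified as you say, using the standing assumptions $\psi\in\Lb^1\cap\Lb^2$ and admissibility to control $\int_0^\infty h(u)u^{-c}\,du$) it factorizes as $\widetilde{p}(c-i\xi)\,\widetilde{h}(1-c+i\xi)=0$; compact support of $\widehat{\psi}$ makes $\widetilde{h}$ analytic and not identically zero on the right half-plane, so its zeros on the line are countable, while the choice $0<c<1/2$ (where continuity of $p$ at the origin and $p\in\Lb^2$ at infinity are both used) makes $\xi\mapsto\widetilde{p}(c-i\xi)$ continuous, forcing $\widetilde{p}\equiv 0$ and hence $p=0$ a.e., with evenness finishing the job. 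Notably, your argument never invokes Condition \ref{cond: linear indep wavelet}; that is logically harmless, and it in fact shows the condition is redundant for this lemma once $\widehat{\psi}$ has compact support and the standing admissibility assumptions hold --- a slightly stronger statement than the paper proves --- though your closing claim that the condition ``is precisely'' the nonvanishing of $\widetilde{h}$ is looser than anything you establish and should be dropped or made precise. What the paper's Mercer route buys in exchange is an argument organized around strict positive definiteness, which is what lets the authors trade the compact-support hypothesis on $\widehat{\psi}$ for bandlimitedness of the signals in the remark after Theorem \ref{thm:WSC_PS_equivalence}; your Mellin route is shorter and more transparent here, and with the roles of $\widetilde{p}$ and $\widetilde{h}$ exchanged it would adapt to that bandlimited variant as well.
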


The proof of Lemma \ref{lem:WSC_PS_equivalence} is in Appendix \ref{app:WSC_PS_equivalence}. We remark that \edit{many} wavelets satisfy Condition \ref{cond: linear indep wavelet} \edit{and have compactly supported Fourier transform,} so Theorem \ref{thm:WSC_PS_equivalence} is broadly applicable. \edit{For example, Proposition \ref{prop:linear_indep} below proves that any complex analytic wavelet with compactly supported Fourier transform satisfies Condition \ref{cond: linear indep wavelet}.} \edit{Morlet wavelets satisfy Condition \ref{cond: linear indep wavelet} (see Lemma \ref{lem:Morlet_lin_indep} in Appendix \ref{app:WSC_PS_equivalence}), but do not have compactly supported Fourier transform; however, $\widehat{\psi}$ does have fast decay for a Morlet wavelet and numerically we observe no issues. We also note, the assumption that $\widehat{\psi}$ has compact support in Theorem \ref{thm:WSC_PS_equivalence} can be removed if $f,g$ are bandlimited.} The following Proposition, proved in Appendix \ref{app:WSC_PS_equivalence}, gives some sufficient conditions guaranteeding Condition \ref{cond: linear indep wavelet}. 


\begin{restatable}{proposition}{proplinearindep}
	\label{prop:linear_indep}
	The following are sufficient to guarantee Condition \ref{cond: linear indep wavelet}:
	\begin{enumerate}
		\item[(i)] $|\widehat{\psi}(\omega)|^2$ has a compact support contained in the interval $[a,b]$, where $a$ and $b$ have the same sign, e.g., complex analytic wavelets with compactly supported Fourier transform.
		\item[(ii)] $|\widehat{\psi}(\omega)|^2 \in \Cb^{\infty} (\R)$ and there exists an $N$ such that all derivatives of order at least $N$ are nonzero at $\omega=0$, e.g., the Morlet wavelet.
	\end{enumerate}	
\end{restatable}


\edit{
\begin{remark} \label{rem: stability}
	In practice, $Pf, Sf$ are implemented as discrete vectors, and $Sf$ is obtained from $Pf$ via matrix multiplication, i.e. $Sf = F(Pf)$ for some real matrix $F$ with $F^TF$ strictly positive definite. Thus $\|Pf-Pg\|_2 \leq \sigma_{\min}^{-1}\|Sf-Sg\|_2$, where $\sigma_{\min}>0$ is the smallest singular value of the matrix $F$, and the spectral decay of $F$, which can be explicitly computed, thus determines the stability of the representation. The smoother the wavelet, the more rapidly the spectrum decays, since when $P\psi \in C^{p}$, $F^TF$ is defined by a $C^{p}$ kernel and thus has eigenvalues which decay like $o(1/n^{p+1})$ \cite{buescu2007eigenvalue}. There is thus a tradeoff between smoothness and stability. In this article we choose smoothness over stability, since smoothness is required for unbiasing noisy dilation MRA, and in our experiments the Morlet wavelet yielded the best results. We therefore invert the representation by solving an optimization problem which is initialized to be close to the desired solution (see Section \ref{sec:optimization}), and we avoid computing the pseudo-inverse of $F$, which is unstable for our smooth wavelet.  
\end{remark}}	

\section{Unbiasing for classic MRA}
\label{sec:AddNoise}

\edit{In this section} we consider the classic MRA model (Model \ref{model:classicMRA}). \edit{We} discuss unbiasing results for \edit{both} the power spectrum \edit{and} wavelet invariants, \edit{as well as} simulation results comparing the two methods.
\edit{In the following proposition we establish unbiasing results for the power spectrum by rederiving some results from \cite{bendory2017bispectrum}, extended to the continuum setting. The Proposition is proved in Appendix \ref{app:AddNoise}.}

\begin{restatable}{proposition}{propAddNoisePS}\label{prop:AddNoisePS}
	Assume Model \ref{model:classicMRA}.
	Define the following estimator of $(P f)(\omega)$:
	\begin{align*}
	(\widetilde{P f})(\omega) := \frac{1}{M} \sum_{j=1}^M (Py_j)(\omega)  - \sigma^2.
	\end{align*}
	Then with probability at least $1-1/t^2$,
	\begin{align}
	\label{equ:AddNoisePS}
	|(P f)(\omega)-(\widetilde{P f})(\omega)| &\leq \frac{2t\sigma}{\sqrt{M}}\left(\norm{f}_1+\sigma\right).
	\end{align}
\end{restatable}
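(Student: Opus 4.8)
The plan is the standard one for an estimator built as an empirical average: show $(\widetilde{Pf})(\omega)$ is unbiased, control the variance of a single summand $(Py_j)(\omega)$, and finish with Chebyshev's inequality. First I would write $\widehat{y_j}(\omega) = e^{-i\omega t_j}\widehat{f}(\omega) + \widehat{\varepsilon_j}(\omega)$, where $\widehat{\varepsilon_j}(\omega) = \int_{-1/2}^{1/2} e^{-i x\omega}\,dB^{(j)}_x$ is a Wiener integral against the Brownian motion defining $\varepsilon_j$; it is therefore a mean-zero complex Gaussian with $\E|\widehat{\varepsilon_j}(\omega)|^2 = \sigma^2\int_{-1/2}^{1/2} dx = \sigma^2$ by the It\^o isometry. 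Expanding the modulus squared,
\[
(Py_j)(\omega) = |\widehat{f}(\omega)|^2 + 2\,\mathrm{Re}\!\bigl(e^{-i\omega t_j}\widehat{f}(\omega)\,\overline{\widehat{\varepsilon_j}(\omega)}\bigr) + |\widehat{\varepsilon_j}(\omega)|^2 .
\]
Since $t_j$ and $\varepsilon_j$ are independent and $\E\,\widehat{\varepsilon_j}(\omega) = 0$, the cross term has mean zero, so $\E\,(Py_j)(\omega) = (Pf)(\omega) + \sigma^2$, and hence $\E\,(\widetilde{Pf})(\omega) = (Pf)(\omega)$.

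Next I would bound $\Var[(Py_j)(\omega)]$. Conditioning on $t_j$ fixes $a := e^{-i\omega t_j}\widehat{f}(\omega)$, which always has $|a| = |\widehat{f}(\omega)| \le \norm{f}_1$; writing $z := \widehat{\varepsilon_j}(\omega)$, the centered summand equals $2\,\mathrm{Re}(a\bar z) + (|z|^2 - \sigma^2)$. The mixed term $\E[\mathrm{Re}(a\bar z)(|z|^2-\sigma^2)]$ is an odd moment of the mean-zero Gaussian vector $(\mathrm{Re}\,z, \mathrm{Im}\,z)$ and so vanishes, leaving
\[
\Var\bigl[(Py_j)(\omega)\mid t_j\bigr] = 4\,\E[(\mathrm{Re}(a\bar z))^2] + \E[(|z|^2 - \sigma^2)^2].
\]
For the first term, $\mathrm{Re}(a\bar z)$ is a linear form in $(\mathrm{Re}\,z,\mathrm{Im}\,z)$ with coefficient vector of norm $|a|$, so it is bounded by $|a|^2$ times the largest eigenvalue of $\mathrm{Cov}(\mathrm{Re}\,z,\mathrm{Im}\,z)$, which is at most its trace $\E|z|^2 = \sigma^2$; thus $\E[(\mathrm{Re}(a\bar z))^2] \le |a|^2\sigma^2 \le \norm{f}_1^2\sigma^2$. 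For the second term, Isserlis' theorem applied to $(\mathrm{Re}\,z,\mathrm{Im}\,z)$ gives $\E|z|^4 = 3\sigma^4 - 4(s_1 s_2 - c^2)$, where $s_1, s_2$ are the component variances (so $s_1 + s_2 = \sigma^2$) and $c$ their covariance; since $c^2 \le s_1 s_2$ this is at most $3\sigma^4$, whence $\E[(|z|^2-\sigma^2)^2] = \E|z|^4 - \sigma^4 \le 2\sigma^4$. These bounds are uniform in $t_j$, so $\Var[(Py_j)(\omega)] \le 4\norm{f}_1^2\sigma^2 + 2\sigma^4 \le 4\sigma^2(\norm{f}_1 + \sigma)^2$.

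Finally, the $(Py_j)(\omega)$ are i.i.d., so $\Var[(\widetilde{Pf})(\omega)] = M^{-1}\Var[(Py_j)(\omega)] \le 4\sigma^2(\norm{f}_1+\sigma)^2/M$, and Chebyshev's inequality applied at level $2t\sigma(\norm{f}_1+\sigma)/\sqrt{M}$ yields \eqref{equ:AddNoisePS} with probability at least $1 - 1/t^2$.

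I expect the main care-point to be the fourth-moment bookkeeping for $\widehat{\varepsilon_j}(\omega)$: it is \emph{not} an isotropic complex Gaussian, since its real and imaginary parts have unequal variances, namely $\sigma^2\int_{-1/2}^{1/2}\cos^2(x\omega)\,dx$ and $\sigma^2\int_{-1/2}^{1/2}\sin^2(x\omega)\,dx$, together with a nonzero covariance; hence the naive identity $\E|z|^4 = 2(\E|z|^2)^2$ is unavailable, and the clean bound hinges on $c^2 \le s_1 s_2$ and $\lambda_{\max} \le \mathrm{tr}$ for the $2\times 2$ covariance matrix. A secondary technical point is justifying that $\widehat{\varepsilon_j}(\omega)$, and the vanishing of the odd cross-moments, are rigorous, which is handled by interpreting the integral of the noise against $e^{-i x\omega}$ as a Wiener integral and conditioning on $t_j$.
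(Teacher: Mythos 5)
Your proposal is correct and follows essentially the same route as the paper's proof: show the estimator is unbiased, bound the variance of a single summand by $4\sigma^2 (Pf)(\omega) + 2\sigma^4$ (using translation invariance of $P$, independence across $j$, and $|\widehat{f}(\omega)| \le \norm{f}_1$), and conclude with Chebyshev's inequality. The only difference is in bookkeeping: you derive the noise moment bounds ($\E|\widehat{\varepsilon}_j(\omega)|^4 \le 3\sigma^4$ and the cross-term second moment) via Isserlis' theorem on the anisotropic Gaussian pair $(\mathrm{Re}\,\widehat{\varepsilon}_j(\omega), \mathrm{Im}\,\widehat{\varepsilon}_j(\omega))$, whereas the paper invokes complex Wiener-integral moment identities (Lemma \ref{lem:PS_addnoise} via Propositions \ref{prop:genItoIso} and \ref{prop:GenFourthMoment}); incidentally, for the symmetric interval $[-\tfrac12,\tfrac12]$ the covariance of the real and imaginary parts actually vanishes, though your use of $c^2 \le s_1 s_2$ makes this immaterial.
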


We obtain an identical result for wavelet invariants (Proposition \ref{prop:AddNoiseWSC}) when signals are corrupted by additive noise only. \edit{See} Appendix \ref{app:AddNoise} \edit{for the proof}.

\begin{restatable}{proposition}{propAddNoiseWSC}\label{prop:AddNoiseWSC}
	Assume Model \ref{model:classicMRA}.
	Define the following estimator of $(\Sc f)(\lambda)$:
	\begin{align*}
	(\widetilde{\Sc f})(\lambda) := \frac{1}{M} \sum_{j=1}^M (\Sc y_j)(\lambda) - \sigma^2.
	\end{align*}
	Then with probability at least $1-1/t^2$,
	\begin{align}
	\label{equ:AddNoiseWSC}
	|(\Sc f)(\lambda)-(\widetilde{\Sc f})(\lambda)| &\leq \frac{2t\sigma}{\sqrt{M}}(\norm{f}_1+\sigma).
	\end{align}
\end{restatable}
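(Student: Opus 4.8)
\textit{Proof proposal.} The argument follows the same template as Proposition~\ref{prop:AddNoisePS}: decompose $(\Sc y_j)(\lambda)$, compute its mean, bound the variance of a single summand \emph{uniformly in} $\lambda$, and finish with Chebyshev's inequality on the i.i.d.\ average. First I would use Plancherel and the Fourier convolution theorem to write $(\Sc y_j)(\lambda) = \frac{1}{2\pi}\int |\widehat{y_j}(\omega)|^2 |\widehat\psi_\lambda(\omega)|^2\,d\omega$ and expand $\widehat{y_j}(\omega) = e^{-i\omega t_j}\widehat f(\omega) + \widehat{\varepsilon_j}(\omega)$, using $|e^{-i\omega t_j}\widehat f(\omega)| = |\widehat f(\omega)|$, to obtain the splitting $(\Sc y_j)(\lambda) = (\Sc f)(\lambda) + A_j(\lambda) + B_j(\lambda)$, with cross term $A_j(\lambda) = \frac{1}{\pi}\,\mathrm{Re}\int e^{-i\omega t_j}\widehat f(\omega)\,\overline{\widehat{\varepsilon_j}(\omega)}\,|\widehat\psi_\lambda(\omega)|^2\,d\omega$ and noise term $B_j(\lambda) = \|\varepsilon_j\ast\psi_\lambda\|_2^2 = \frac{1}{2\pi}\int|\widehat{\varepsilon_j}(\omega)|^2|\widehat\psi_\lambda(\omega)|^2\,d\omega$. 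Since $\widehat{\varepsilon_j}(\omega) = \int_{-1/2}^{1/2} e^{-i\omega x}\,dB_x$, the It\^o isometry gives $\E|\widehat{\varepsilon_j}(\omega)|^2 = \sigma^2$ for every $\omega$; conditioning on $t_j$ and using that white noise is mean-zero yields $\E A_j(\lambda) = 0$, while $\E B_j(\lambda) = \sigma^2\,\frac{1}{2\pi}\int|\widehat\psi_\lambda(\omega)|^2 d\omega = \sigma^2\|\psi_\lambda\|_2^2 = \sigma^2$. Hence $\E(\Sc y_j)(\lambda) = (\Sc f)(\lambda) + \sigma^2$, so $\widetilde{\Sc f}(\lambda)$ is unbiased in the limit $M\to\infty$.

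Next I would bound $\Var\big((\Sc y_j)(\lambda)\big) = \E[A_j^2] + 2\,\E[A_j(B_j-\sigma^2)] + \Var(B_j)$. The mixed term vanishes: conditioning on $t_j$ and using that white noise is symmetric (so the involution $\varepsilon_j\mapsto-\varepsilon_j$ preserves its law), $A_j$ is odd and $B_j$ is even, whence $\E[A_j(B_j-\sigma^2)\mid t_j] = 0$. For the cross term, a stochastic Fubini argument (legitimate since $\widehat f\,|\widehat\psi_\lambda|^2\in\Lb^1(\R)$ and the integrator lives on a bounded interval) gives $A_j(\lambda) = 2\int_{-1/2}^{1/2}\mathrm{Re}\,h(x)\,dB_x$, where $h = (f\ast k_\lambda)(\cdot - t_j)$ and $k_\lambda$ is the function with $\widehat{k_\lambda} = |\widehat\psi_\lambda|^2$; the It\^o isometry then gives $\E[A_j^2\mid t_j] = 4\sigma^2\int_{-1/2}^{1/2}(\mathrm{Re}\,h(x))^2\,dx \le 4\sigma^2\|h\|_\infty^2$. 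The step requiring care is precisely here: the naive estimate $\int_{-1/2}^{1/2}|h|^2 \le \|h\|_2^2$ is \emph{not} uniform in $\lambda$ (indeed $\|k_\lambda\|_2^2$ grows linearly in $\lambda$), so instead I exploit that the noise is supported on an interval of length one, bounding $\int_{-1/2}^{1/2}|h|^2 \le \|h\|_\infty^2$ together with $\|h\|_\infty \le \|f\|_1\|k_\lambda\|_\infty \le \|f\|_1\cdot\frac{1}{2\pi}\int|\widehat\psi_\lambda(\omega)|^2 d\omega = \|f\|_1$. This gives the $\lambda$-uniform bound $\E[A_j^2] \le 4\sigma^2\|f\|_1^2$.

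For $\Var(B_j)$, note that $\varepsilon_j\ast\psi_\lambda$ is a centered Gaussian random element of $\Lb^2(\R)$ whose covariance operator $K_\lambda$ has trace $\E B_j = \sigma^2 < \infty$; diagonalizing $K_\lambda$ writes $B_j = \sum_k \mu_k Z_k^2$ with $\{Z_k\}$ i.i.d.\ standard normal, $\mu_k \ge 0$, and $\sum_k\mu_k = \sigma^2$, so $\Var(B_j) = 2\sum_k\mu_k^2 \le 2\big(\sum_k\mu_k\big)^2 = 2\sigma^4$. Combining the three estimates, $\Var\big((\Sc y_j)(\lambda)\big) \le 4\sigma^2\|f\|_1^2 + 2\sigma^4 \le 4\sigma^2(\|f\|_1+\sigma)^2$. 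Finally, the summands $(\Sc y_j)(\lambda)-\sigma^2$ are i.i.d.\ with mean $(\Sc f)(\lambda)$, so $\Var(\widetilde{\Sc f}(\lambda)) \le 4\sigma^2(\|f\|_1+\sigma)^2/M$, and Chebyshev's inequality with threshold $\frac{2t\sigma}{\sqrt M}(\|f\|_1+\sigma)$ yields the stated probability bound.

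The only genuinely delicate point is the $\lambda$-uniform control of $\E[A_j^2]$ just described; the remaining white-noise bookkeeping (It\^o isometry, stochastic Fubini, the sign-flip symmetry killing the mixed term, and the variance of a Gaussian quadratic form) is routine, and the overall structure mirrors the power-spectrum argument of Proposition~\ref{prop:AddNoisePS}.
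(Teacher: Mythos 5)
Your proof is correct, and it follows the paper's overall template (decompose, compute the mean, bound the per-sample variance uniformly in $\lambda$, then apply Chebyshev), but the two key variance estimates are obtained by genuinely different technical means. The paper (Lemmas \ref{lem:WhiteNoise} and \ref{lem:SignalAddNoiseWSC}) works in the spatial domain: it expands $[(\Sc(f+\varepsilon))(\lambda)]^2$, discards the odd-in-$\varepsilon$ terms, bounds the four signal--noise cross terms by Cauchy--Schwarz in the spatial variable (pairing $f\ast\psi_\lambda$ and $\varepsilon\ast\psi_\lambda$ pointwise), giving a contribution of order $\sigma^2(\Sc f)(\lambda)$, and controls the pure-noise term through the Wick-type fourth-moment bound $\Ex[\,|\widehat{\varepsilon}(\omega)|^2|\widehat{\varepsilon}(\xi)|^2]\le 3\sigma^4$ of Lemma \ref{lem:PS_addnoise}; the $\lambda$-uniformity then comes simply from Young's inequality, $(\Sc f)(\lambda)=\norm{f\ast\psi_\lambda}_2^2\le\norm{f}_1^2$. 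You instead handle the cross term via stochastic Fubini plus the It\^o isometry, exploiting the unit-length support of the noise together with $\norm{k_\lambda}_\infty\le\norm{\psi_\lambda}_2^2=1$, and you handle the pure-noise term by diagonalizing the Gaussian quadratic form (Karhunen--Lo\`eve), which replaces the paper's Isserlis-type computation; your sign-flip symmetry argument for the mixed covariance plays the role of the paper's observation that odd moments of the white-noise increments vanish. Both routes yield the same per-sample bound $4\sigma^2\norm{f}_1^2+2\sigma^4$ and hence the stated inequality; your identity $A_j=2\int_{-1/2}^{1/2}\mathrm{Re}\,h(x)\,dB_x$ is a nice structural observation the paper does not make explicit, while the paper's Cauchy--Schwarz pairing sidesteps entirely the sup-norm step you flag as the delicate point. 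One small slip in a motivational aside: $\norm{k_\lambda}_2^2\propto\lambda^{-1}$, so the naive $\Lb^2$ bound degrades as $\lambda\to 0$ rather than growing linearly in $\lambda$; this does not affect your argument, since you never use that bound.
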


As $M \rightarrow \infty$, the error of both the power spectrum and wavelet invariant estimators decays to zero at the same rate, and one can perfectly unbias both representations. As demonstrated in Section \ref{sec: noisy dilation MRA model}, this is not possible for \edit{noisy dilation MRA} (Model \ref{model:genMRA}), as there is a nonvanishing bias term. However a nonlinear unbiasing procedure on the wavelet invariants can significantly reduce the bias.  

We illustrate and compare additive noise unbiasing for power spectrum estimation using \edit{$(\widetilde{P f})$}, the power spectrum method of Proposition \ref{prop:AddNoisePS}, and \edit{$(\widetilde{\Sc f})$}, the wavelet invariant method of Proposition \ref{prop:AddNoiseWSC}. \edit{To approximate $(Pf)$ from the wavelet invariants $(\widetilde{\Sc f})$, we apply the convex optimization algorithm described in Section \ref{sec:optimization} to obtain $(\widetilde{P_{\Sc}f})$, the power spectrum approximation which best matches the wavelet invariants $(\widetilde{\Sc f})$. Thus throughout this article $(\widetilde{P_{\Sc}f})$ denotes a power spectrum estimator obtained by first unbiasing wavelet invariants and then running an optimization procedure, while $(\widetilde{Pf})$ denotes an estimator computed by directly unbiasing the power spectrum. Our simulations compare the $\Lb^2$ error of both of these estimators, i.e. we compare  $\|Pf-\widetilde{Pf}\|_2$ and $\|Pf-\widetilde{P_Sf}\|_2$.} 

Figure \ref{fig:AddNoiseCorruptedPS}  shows the uncorrupted power spectrum (red curve) of a medium frequency Gabor function ($f(x) = e^{-5x^2}\cos(16x)$), and the power spectrum after the signal is corrupted by additive noise with level $\sigma = 2^{-3}$ (blue curve); \edit{the signal-to-noise ratio ($\snr$) of the experiment is $0.56$ (see Section \ref{sec:signal_synthesis})}. Figure \ref{fig:AddNoiseErrorFixedSig} shows the $\Lb^2$ error of the power spectrum estimation for the two methods as a function of $\log_2(M)$ for a fixed \edit{$\snr$}, and Figure \ref{fig:AddNoiseL2ErrorFixedM} shows the $\Lb^2$ error as a function of $\log_2(\sigma)$ for a fixed $M$. The $\Lb^2$ errors for the two methods are similar; however, estimation via wavelet invariants is advantageous when the sample size $M$ is small or the additive noise level $\sigma$ is large. As $M$ becomes very large or $\sigma$ very small, the power spectrum method is preferable as the smoothing procedure of the wavelet invariants may numerically erase some extremely small scale features of the original power spectrum. 

\begin{figure}
	\centering
	\begin{subfigure}[b]{0.32\textwidth}
		\centering
		\includegraphics[width=.85\textwidth]{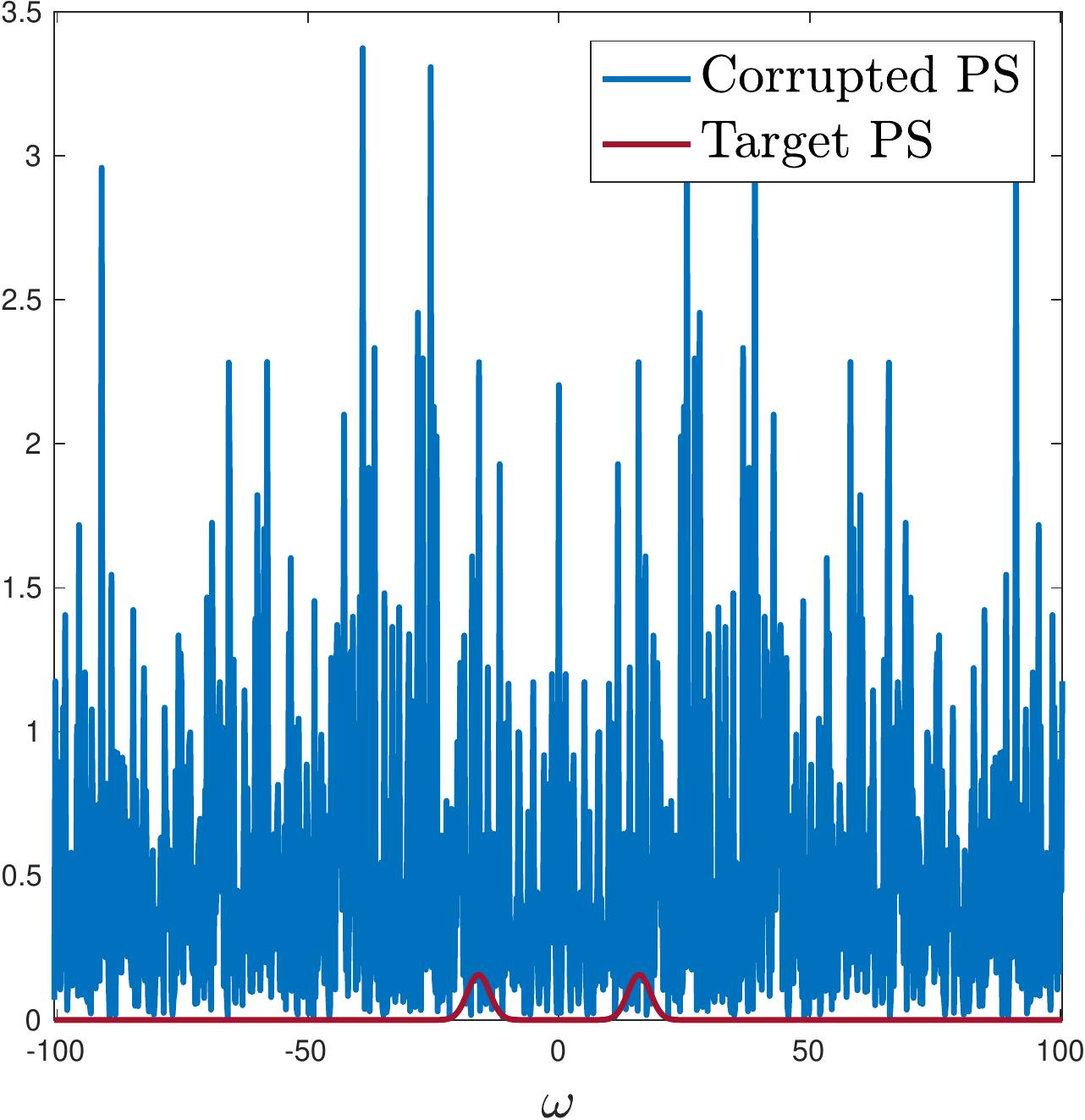}
		\caption{Noisy PS (\edit{$\snr=0.56$})}
		\label{fig:AddNoiseCorruptedPS}
	\end{subfigure}
	\hfill
		\begin{subfigure}[b]{0.32\textwidth}
		\centering
		\includegraphics[width=.85\textwidth]{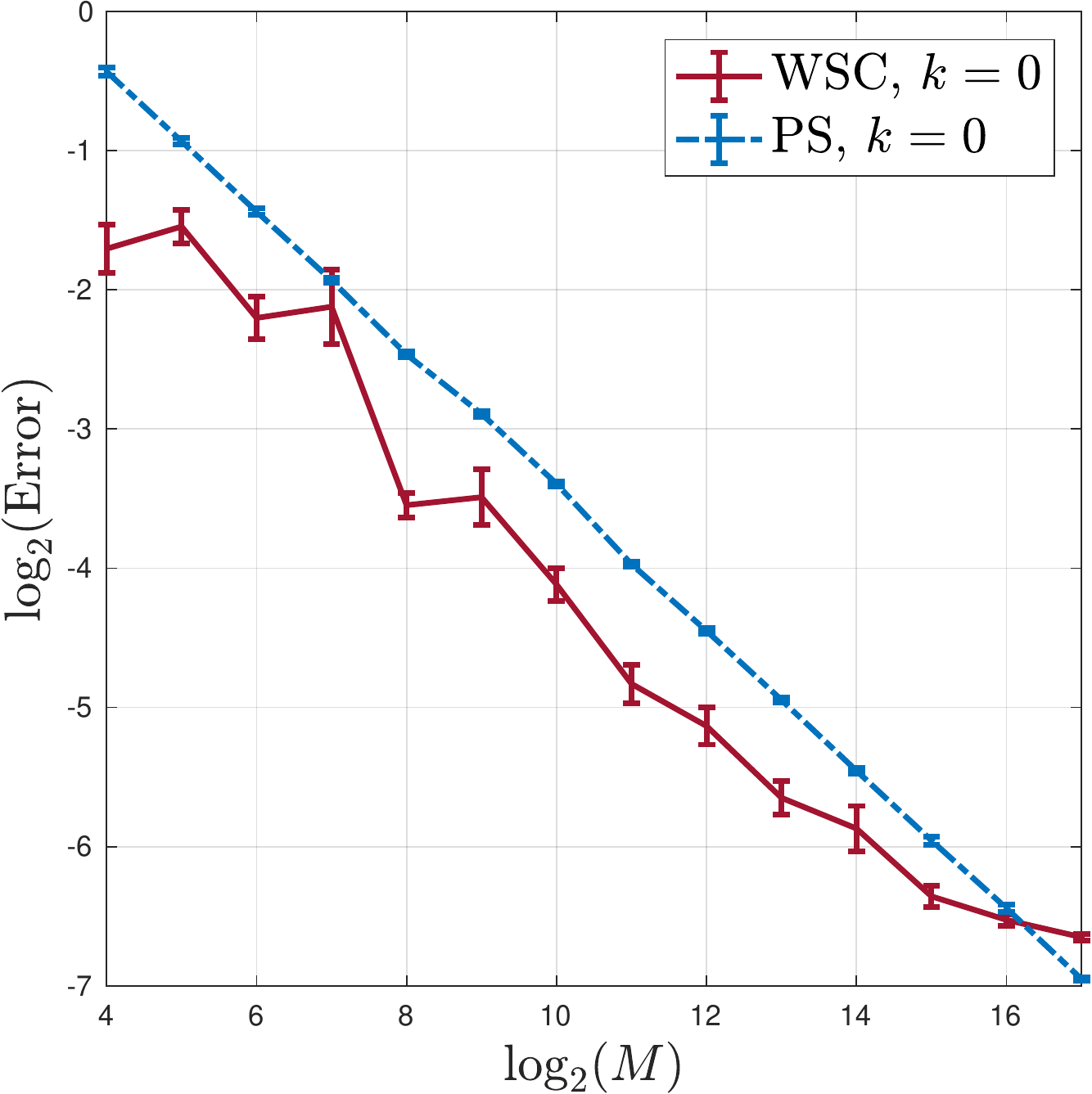}
		\caption{$\Lb^2$ error (\edit{$\snr = 0.56$})}
		\label{fig:AddNoiseErrorFixedSig}
	\end{subfigure}
	\hfill
	\begin{subfigure}[b]{0.32\textwidth}
		\centering
		\includegraphics[width=.85\textwidth]{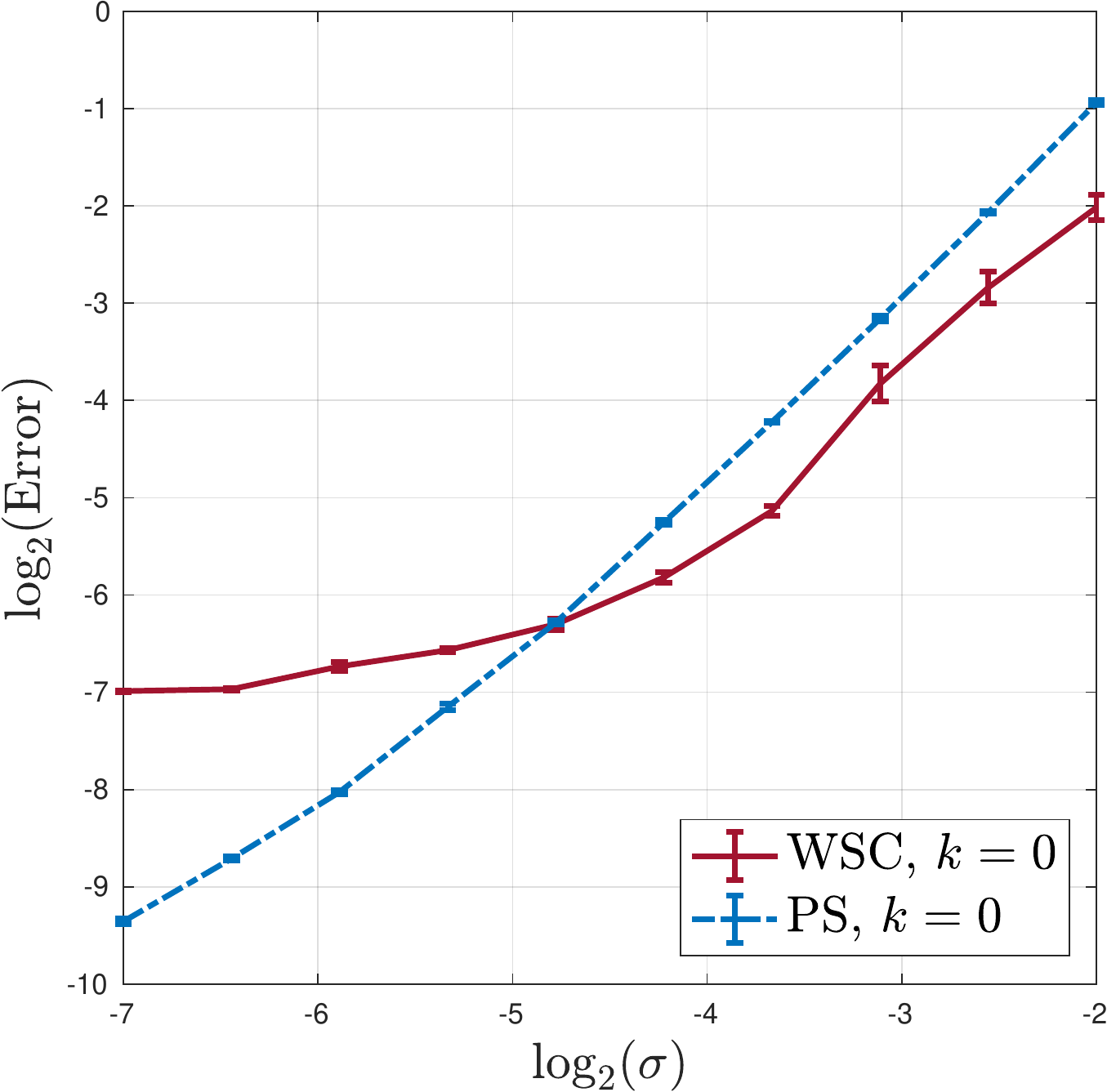}
		\caption{$\Lb^2$ error ($M = 500$)}
		\label{fig:AddNoiseL2ErrorFixedM}
	\end{subfigure}
	\caption{Simulation results for additive noise model for medium frequency Gabor $f(x) = e^{-5x^2}\cos(16x)$.}
	\label{fig:AddNoiseSim}
\end{figure}

\section{Unbiasing for dilation MRA}
\label{sec:dilationMRA}


In this section we analyze the dilation MRA model (Model \ref{model:dilMRA}).
We thus assume the signals have been randomly translated and dilated but there is no additive noise.

\edit{In fact there is a simple algorithm to recover $f$ under this model. Since $\norm{f_{\tau_j}}_2^2 = \norm{f}_2^2/(1-\tau_j)$, \\ $\frac{1}{M} \sum_{j=1}^M 1/\norm{f_{\tau_j}}_2^2$ is an unbiased estimator of $1/\norm{f}_2^2$, and so $\norm{f}_2^2$ can be accurately approximated. Once $\norm{f}_2^2$ is recovered, one can take any signal $y_j$ and dilate it so that $\norm{y_j}_2^2 = \norm{f}_2^2$, and the result will be an accurate approximation of the hidden signal $f$ for $M$ large. However, this approach collapses in the presence of even a small amount of additive noise. In the presence of additive noise, an alternative is to attempt a synchronization by centering each signal. The center $c_f$ of signal $f$ can be defined in the classical way by 
\begin{align*}
c_{f} &= \frac{1}{\norm{f}_2^2} \int x\, |f(x)|^2\ dx \, .
\end{align*}
Since the signals $y_j(x-(c_f+t_j))$ are perfectly aligned, one can thus attempt an alignment by defining $\widetilde{y}_j(x) = y_j(x-c_{y_j})$. However $c_{y_j}-(c_f+t_j) = O(\sigma\vee\sigma^2 + \eta)$, so significant errors arise in the synchronization which cannot be resolved by averaging. As our goal is ultimately to produce a method which can be extended to the noisy dilation MRA model, we abandon both the trivial solution (which cannot be extended to noisy dilation MRA) and the synchronization approach (which produces large errors), and explore a method based on empirical averages.}

\edit{We first observe that} random dilations cause $\frac{1}{M}\sum_{j=1}^M(P y_j)(\omega)$ and $\frac{1}{M}\sum_{j=1}^M(S y_j)(\lambda)$ to be biased estimators of $(Pf)(\omega)$ and $(\Sc f)(\lambda)$, and the bias for both is $O(\eta^2)$, where $\eta^2$ is the variance of the dilation distribution. However if the moments of the dilation distribution are known and $Pf, \Sc f$ are sufficiently smooth, one can apply an unbiasing procedure to the above estimators so that the resulting bias is $O(\eta^{k+2})$, where $k \geq 2$ is an even integer.

Throughout this section we assume $k\geq 2$ is an even integer, and define the constants $C_i$ from the first $k/2$ even moments of $\tau$ by $\Ex [\tau^i] = C_i \eta^i$ for $i=2,4,\ldots,k$. Note since we assume $\Ex[\tau^2]=\eta^2$, $C_2=1$. We define the constants $B_2,B_4, \ldots, B_k$ by solving
\begin{align}
\label{equ:MomentBasedConstants}
\frac{C_i}{i!} - \frac{B_2C_{i-2}}{(i-2)!} - \ldots - \frac{B_{i-2}C_2}{2!}-B_i &= 0
\end{align}
for $i=2,4,\ldots,k$; these constants are deterministic functions of the moments of $\tau$. A nonrecursive formula related to the Euler numbers can be derived which defines $B_i$ explicitly in terms of $C_2, \ldots, C_i$; however the recursive formula (\ref{equ:MomentBasedConstants}) is easier to implement numerically. 

We introduce two additional moment-based constants which are defined by the $C_i, B_i$ constants:
\begin{align}
\label{equ:T}
T &:= \max_{i = 0, 2,\ldots} C_i^{\frac{1}{i}} \\
\label{equ:E}
E &:= \max_{i=0,2,\ldots,k} \, \max_{j=0,\ldots,k+2-i} \left(\frac{T^j}{j!}|B_i|\right)^{\frac{1}{i+j}}\, ,
\end{align}
where $C_0, |B_0| = 1$, and when $i=j=0$ in (\ref{equ:E}), $\left(\frac{T^j}{j!}|B_i|\right)^{\frac{1}{i+j}}$ is replaced with 1. 

\begin{remark}
	Since the distribution of $\tau$ is bounded, we are guaranteed that $T< \infty$, and in general can consider both $T$ and $E$ to be $O(1)$ constants. For example for the uniform distribution, $T \leq \sqrt{3}$ and $|B_i| \leq \frac{|\text{Euler}(i)|}{i!} \leq 1$ which gives $E \leq \sqrt{3}$.
\end{remark}

We utilize the following two lemmas, which are proved in Appendix \ref{app:WSC}, to derive results for both the power spectrum and wavelet invariants.

\begin{restatable}{lemma}{lemRDGeneralUnbiasingMeanVar}\label{lem:RD_GeneralUnbiasing_MeanVar}
	Let $F_\lambda(\tau) = L((1-\tau)\lambda)$ for some function $L \in \Cb^{k+2}(0, \infty)$ and a random variable $\tau$ satisfying the assumptions of Section \ref{sec: mra data models}, and let $k\geq 2$ be an even integer. Assume \edit{there exist functions $\Lambda_i:\R\rightarrow\R$, \edit{$R:\R\rightarrow\R$}  such that}
	\begin{align*}
	|\lambda^iL^{(i)}(\lambda)| &\leq \Lambda_i(\lambda) \ \text{ for }\ 0 \leq i \leq k+2\quad,\quad \frac{\Lambda_{k+2}((1-\tau)\lambda)}{\Lambda_{k+2}(\lambda)}\leq R\edit{(\lambda)},
	\end{align*}
	and define the following estimator of $L(\lambda)$:
	\begin{align*}
	G_\lambda(\tau) &:= F_\lambda(\tau) - B_2\eta^2F''_\lambda(\tau)-B_4\eta^4F^{(4)}_\lambda(\tau)-\ldots -B_k\eta^kF^{(k)}_\lambda(\tau).
	\end{align*}
	Then $G_\lambda(\tau)$ satisfies
	\begin{align*}
	|\Ex\ G_\lambda(\tau) - L(\lambda)| &\lesssim k\edit{R(\lambda)}\Lambda_{k+2}(\lambda)(2E\eta)^{k+2} \\[.1cm]
	\Var\ G_\lambda(\tau) &\lesssim k^2\edit{R(\lambda)}^2\LambdaConstant(\lambda)^2
	\end{align*}
	where
	\begin{align*}
	\LambdaConstant(\lambda)^2 &:= \sum_{ \substack{ 0\leq i, j \leq k+2, i+j\geq 2}}\, \Lambda_{i}(\lambda) \Lambda_{j}(\lambda)(2E\eta)^{i+j}\, 
	\end{align*} 
	and $E$ is \edit{the} absolute constant defined in (\ref{equ:E}). 
\end{restatable}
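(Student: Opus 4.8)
The plan is to Taylor-expand $F_\lambda(\tau)$ and its $\tau$-derivatives about $\tau=0$, use the symmetry of $\tau$ to annihilate odd moments when taking expectations, and exploit that the defining relations \eqref{equ:MomentBasedConstants} for the $B_i$ are exactly what cancels the surviving low-order terms. By the chain rule $F^{(m)}_\lambda(\tau)=(-\lambda)^m L^{(m)}((1-\tau)\lambda)$; for each even $m$ with $0\le m\le k$, apply Taylor's theorem to $\tau\mapsto L^{(m)}((1-\tau)\lambda)$ to order $k+1-m$, with a Lagrange remainder of order $k+2-m$ (valid since $L\in\Cb^{k+2}(0,\infty)$) that involves $L^{(k+2)}$ evaluated at a point $(1-\theta_m\tau)\lambda$ with $\theta_m\in(0,1)$, noting $1-\theta_m\tau\in[1/2,3/2]$ because $|\tau|\le 1/2$. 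Substituting into the definition of $G_\lambda$ and collecting yields a decomposition $G_\lambda(\tau)=L(\lambda)+P(\tau)+r(\tau)$, where $P$ is a polynomial in $\tau$ of degree at most $k+1$ whose coefficients are built from $\lambda^j L^{(j)}(\lambda)$ ($j\le k+1$), the $B_i$, and powers of $\eta$, and $r$ is the sum of the $k/2+1$ Lagrange remainders.

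For the bias, take expectations. Since $\tau$ is symmetric, $\E[\tau^i]=0$ for odd $i$ and $\E[\tau^i]=C_i\eta^i$ for even $i$; grouping the surviving monomials of $\E[P(\tau)]$ by total derivative order $\ell\in\{0,2,\dots,k\}$, the coefficient of $\eta^\ell\lambda^\ell L^{(\ell)}(\lambda)$ equals $C_0=1$ when $\ell=0$ and equals exactly the left-hand side of \eqref{equ:MomentBasedConstants} (with $i=\ell$) when $\ell\ge 2$, which vanishes by the definition of $B_\ell$. Hence $\E[P(\tau)]=0$ and $\E[G_\lambda(\tau)]-L(\lambda)=\E[r(\tau)]$. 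To control $\E|r(\tau)|$, bound $|\lambda^{k+2}L^{(k+2)}((1-\theta_m\tau)\lambda)|\le 2^{k+2}\Lambda_{k+2}((1-\theta_m\tau)\lambda)\le 2^{k+2}R(\lambda)\Lambda_{k+2}(\lambda)$ (using $(1-\theta_m\tau)\lambda\ge\lambda/2$ and the hypothesis on $\Lambda_{k+2}$) together with $\E|\tau|^p\le T^p\eta^p$ (which follows from $\E|\tau|^p\le(\E[\tau^{2p}])^{1/2}$ and $\E[\tau^{2p}]=C_{2p}\eta^{2p}$). The prefactors $|B_m|\eta^m$ and the factorials appearing in the $m$-th remainder then collapse into $(2E\eta)^{k+2}$ via the inequality $\frac{T^j}{j!}|B_i|\le E^{i+j}$, which is precisely what \eqref{equ:E} encodes (apply it with $i=m$, $j=k+2-m$); summing the $k/2+1$ remainders produces the claimed factor $k$.

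For the variance, use that the variance is minimized at the mean, so $\Var\, G_\lambda(\tau)\le\E[(G_\lambda(\tau)-L(\lambda))^2]\le 2\,\E[P(\tau)^2]+2\,\E[r(\tau)^2]$. For the $r$ term, squaring the pointwise estimates above gives $\E[r_i(\tau)^2]\lesssim R(\lambda)^2\Lambda_{k+2}(\lambda)^2(2E\eta)^{2(k+2)}$ for each remainder $r_i$, so Cauchy--Schwarz over the $O(k)$ summands yields $\E[r^2]\lesssim k^2 R(\lambda)^2\Lambda_{k+2}(\lambda)^2(2E\eta)^{2(k+2)}$, which is $k^2R(\lambda)^2$ times the $(i,j)=(k+2,k+2)$ term of $\LambdaConstant(\lambda)^2$. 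For the $P$ term, bound $|P(\tau)|$ by a sum of monomials $\frac{|B_m|}{n!}\eta^m|\tau|^n\Lambda_{m+n}(\lambda)$ (with $m$ even, $0\le m\le k$, $1\le m+n\le k+1$, the $m=0$ monomials coming from $F_\lambda$), square, and take expectations using $\E|\tau|^{2p}\le T^{2p}\eta^{2p}$; invoking $\frac{T^j}{j!}|B_i|\le E^{i+j}$ once more, each cross term is $\lesssim\Lambda_i(\lambda)\Lambda_j(\lambda)(2E\eta)^{i+j}$ with $1\le i,j\le k+1$, and since a fixed order $i$ is realized by only $O(i)=O(k)$ of the monomials, summing the cross terms gives $\E[P^2]\lesssim k^2\LambdaConstant(\lambda)^2$. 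Combining the two pieces and using $R(\lambda)\ge 1$ gives $\Var\, G_\lambda(\tau)\lesssim k^2 R(\lambda)^2\LambdaConstant(\lambda)^2$.

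The main obstacle is bookkeeping rather than conceptual: one must check that every stray constant — the $|B_i|$, the factorials, the factor $2^{k+2}$ from evaluating $L^{(k+2)}$ at a dilated argument, the moment ratios $C_i$, and the extra factor $R(\lambda)$ — collapses into the clean form $(2E\eta)^{i+j}$ with only polynomial-in-$k$ overhead; definition \eqref{equ:E} is engineered for exactly this, so the work lies in identifying at each step which pair $(i,j)$ to invoke and confirming the index constraint $j\le k+2-i$ is never violated. A minor point worth flagging is that the Lagrange remainders are evaluated at $(1-\theta_m\tau)\lambda$ rather than $(1-\tau)\lambda$, so the hypothesis $\Lambda_{k+2}((1-\tau)\lambda)/\Lambda_{k+2}(\lambda)\le R(\lambda)$ must be read as holding for every dilation factor in $[1/2,3/2]$ reachable from the range of $\tau$.
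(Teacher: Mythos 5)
Your proof is correct and follows essentially the same route as the paper's: Taylor expansion about $\tau=0$, cancellation of the even-order terms via the recursion \eqref{equ:MomentBasedConstants}, remainder control through $\Lambda_{k+2}$, $R(\lambda)$, $T$, and $E$, and a polynomial-plus-remainder split for the variance. The only differences are cosmetic — you use the Lagrange rather than the integral form of the remainder and bound $\Var\,G_\lambda(\tau)\le\E[(G_\lambda(\tau)-L(\lambda))^2]\le 2\E[P^2]+2\E[r^2]$ instead of expanding $\E[G^2]-(\E G)^2$ term by term — and the point you flag about the remainder being evaluated at intermediate dilation factors is equally implicit in the paper's own argument.
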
	

\begin{restatable}{lemma}{lemRDGeneralUnbiasingDevOfEstimator}\label{lem:RD_GeneralUnbiasing_DevOfEstimator}
	Let the assumptions and notation of Lemma \ref{lem:RD_GeneralUnbiasing_MeanVar} hold, and let $\tau_1, \ldots, \tau_M$ be independent. Define:
	\begin{align*}
	\widetilde{L}(\lambda) := \frac{1}{M} \sum_{j=1}^M G_\lambda(\tau_j).
	\end{align*}
	Then with probability at least $1-1/t^2$
	\begin{align*}
	| \widetilde{L}(\lambda) - L(\lambda)| &\lesssim k\edit{R(\lambda)}\left(\Lambda_{k+2}(\lambda)(2E\eta)^{k+2} +\frac{t\LambdaConstant(\lambda)}{\sqrt{M}}\right) \, .
	\end{align*}	
\end{restatable}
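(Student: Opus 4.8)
The plan is to reduce everything to Lemma \ref{lem:RD_GeneralUnbiasing_MeanVar} together with an elementary concentration estimate, exploiting that $\widetilde{L}(\lambda) = \frac{1}{M}\sum_{j=1}^M G_\lambda(\tau_j)$ is an empirical average of $M$ independent, identically distributed copies of $G_\lambda(\tau)$. Consequently $\E\,\widetilde{L}(\lambda) = \E\,G_\lambda(\tau)$ and $\Var\,\widetilde{L}(\lambda) = M^{-1}\Var\,G_\lambda(\tau)$, so the two quantities already controlled in Lemma \ref{lem:RD_GeneralUnbiasing_MeanVar} — the bias of $G_\lambda(\tau)$ as an estimator of $L(\lambda)$, and its variance — are exactly what is needed.

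Concretely, I would first invoke Lemma \ref{lem:RD_GeneralUnbiasing_MeanVar} to obtain $|\E\,G_\lambda(\tau) - L(\lambda)| \lesssim k R(\lambda)\Lambda_{k+2}(\lambda)(2E\eta)^{k+2}$ and $\Var\,G_\lambda(\tau) \lesssim k^2 R(\lambda)^2\LambdaConstant(\lambda)^2$. By independence this gives $\Var\,\widetilde{L}(\lambda) \lesssim k^2 R(\lambda)^2\LambdaConstant(\lambda)^2/M$, hence a standard deviation of order $k R(\lambda)\LambdaConstant(\lambda)/\sqrt{M}$. Applying Chebyshev's inequality to $\widetilde{L}(\lambda)$ then yields that, for any $t>0$, with probability at least $1 - 1/t^2$ one has $|\widetilde{L}(\lambda) - \E\,\widetilde{L}(\lambda)| \le t\sqrt{\Var\,\widetilde{L}(\lambda)} \lesssim t k R(\lambda)\LambdaConstant(\lambda)/\sqrt{M}$. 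Finally the triangle inequality $|\widetilde{L}(\lambda) - L(\lambda)| \le |\widetilde{L}(\lambda) - \E\,\widetilde{L}(\lambda)| + |\E\,\widetilde{L}(\lambda) - L(\lambda)|$ combines this deviation bound with the deterministic bias bound, producing the claimed estimate with an absolute implied constant.

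I do not anticipate a genuine obstacle here: the substantive content is entirely in Lemma \ref{lem:RD_GeneralUnbiasing_MeanVar}, and what remains is bookkeeping. The only points to verify are that the constants hidden in $\lesssim$ stay absolute — which holds because they are inherited verbatim from Lemma \ref{lem:RD_GeneralUnbiasing_MeanVar} while Chebyshev's inequality contributes none — and that $G_\lambda(\tau)$ genuinely has finite variance so that Chebyshev applies, which is guaranteed by the variance bound of Lemma \ref{lem:RD_GeneralUnbiasing_MeanVar} (itself resting on the pointwise bounds $|\lambda^i L^{(i)}(\lambda)| \le \Lambda_i(\lambda)$ and the boundedness of $\tau$). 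If a sharper tail than polynomial in $t$ were wanted one could instead route through a bounded-differences or Bernstein-type inequality, but Chebyshev already delivers the stated conclusion.
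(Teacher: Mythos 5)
Your proposal is correct and follows essentially the same route as the paper: invoke Lemma \ref{lem:RD_GeneralUnbiasing_MeanVar} for the bias and variance of $G_\lambda(\tau)$, use independence to get $\Var\,\widetilde{L}(\lambda) \lesssim k^2 R(\lambda)^2 \LambdaConstant(\lambda)^2 / M$, apply Chebyshev, and finish with the triangle inequality. No gaps; your bookkeeping of the $R(\lambda)$ factor is in fact slightly more careful than the paper's intermediate display, but the final bound is identical.
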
	
The deviation of the estimator $\widetilde{L}(\lambda)$ from $L(\lambda)$ thus depends on two things: (1) the bias of the estimator which is $O(\eta^{k+2})$ and (2) the standard deviation of the estimator which is $O(\eta M^{-\frac{1}{2}})$, since $\LambdaConstant(\lambda)=O(\eta)$. 

\subsection{Power spectrum results for dilation MRA}
\label{sec:PSdilationMRA}

We now show how this unbiasing procedure based on both the moments of $\tau$ and the even derivatives of $P y$ can be used to obtain an estimator of $Pf$.

\begin{proposition}
	\label{prop:RandomDilations_PS}
	Assume Model \ref{model:dilMRA} and $Pf \in \Cb^{k+2} (\R)$.
	Define the following estimator of  $(P f)(\omega)$:	
	\begin{align*}
	(\widetilde{P f})(\omega) &:= \frac{1}{M} \sum_{j=1}^M \left[ (Py_j)(\omega) -B_2\eta^2\omega^2(Py_j)''(\omega) - \ldots - B_k\eta^k\lambda^k(Py_j)^{(k)}(\omega)\right]
	\end{align*}
	where the constants $B_i$ satisfy (\ref{equ:MomentBasedConstants}). Let: 
	\begin{align*}
	\Omega_i(\omega) &=  |\omega^i (Pf)^{i}(\omega)| \text{ for } 0 \leq i \leq k+2 \quad,\quad \edit{R(\omega)} = \max_{\tau} \frac{\Omega_{k+2}((1-\tau)\omega)}{\Omega_{k+2}(\omega)}.
	\end{align*}
	Then for all $\omega \ne 0$, with probability at least $1-1/t^2$, 
	\begin{align}
	| (\widetilde{P f})(\omega) - (P f)(\omega)| &\lesssim k\edit{R(\omega)}\left(\Omega_{k+2}(\omega)(2E\eta)^{k+2} +\frac{t\mathbf{\Omega}(\omega)}{\sqrt{M}}\right), \label{eqn: Ptilde estimate of P}
	\end{align}
	where 
	\begin{align*}
	\mathbf{\Omega}(\omega) &= \sum_{ \substack{ 0\leq i, j \leq k+2, i+j\geq 2}} \Omega_i(\omega)\Omega_j(\omega)(2E\eta)^{i+j} \,.
	\end{align*} 
\end{proposition}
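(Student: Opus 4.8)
The plan is to recognize that under the dilation MRA model the power spectrum of each observation is simply a dilated copy of $Pf$, so that the proposition becomes a direct specialization of the abstract unbiasing Lemmas \ref{lem:RD_GeneralUnbiasing_MeanVar}--\ref{lem:RD_GeneralUnbiasing_DevOfEstimator}. First I would record the key identity: writing $y_j = T_{t_j}L_{\tau_j}f$ and noting that the translation contributes only a phase to $\widehat{y_j}$, the change of variables $x\mapsto(1-\tau)x$ gives $\widehat{L_\tau f}(\omega)=\int (1-\tau)^{-1}f((1-\tau)^{-1}x)e^{-ix\omega}\,dx=\widehat{f}((1-\tau)\omega)$, the $\Lb^1$ normalization cancelling the Jacobian. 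Hence
\[
(Py_j)(\omega)=|\widehat{f}((1-\tau_j)\omega)|^2=(Pf)\big((1-\tau_j)\omega\big),\qquad\omega\in\R .
\]
Since $Pf\in\Cb^{k+2}(\R)$ each $Py_j$ is $\Cb^{k+2}$, and the chain rule yields $\omega^m(Py_j)^{(m)}(\omega)=\big((1-\tau_j)\omega\big)^m(Pf)^{(m)}\big((1-\tau_j)\omega\big)$ for $0\le m\le k+2$. Because $Pf$ is even it suffices to treat $\omega>0$, placing $\omega$ in the domain $(0,\infty)$ of the lemmas.

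Next I would match hypotheses with $L:=Pf$ and $\lambda:=\omega$. The displays above exhibit $(\widetilde{Pf})(\omega)$ as exactly the estimator $\widetilde L(\omega)=\frac1M\sum_j G_\omega(\tau_j)$ of Lemma \ref{lem:RD_GeneralUnbiasing_DevOfEstimator}: the $m$-th scaled derivative of $\tau\mapsto L((1-\tau)\omega)$ inside $G_\omega$ is $\big((1-\tau_j)\omega\big)^m L^{(m)}\big((1-\tau_j)\omega\big)$, which is the data-computable quantity $\omega^m(Py_j)^{(m)}(\omega)$, and the weights are the $B_m$ of \eqref{equ:MomentBasedConstants}. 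Taking $\Lambda_i(\omega):=\Omega_i(\omega)=|\omega^i(Pf)^{(i)}(\omega)|$, the bound $|\omega^iL^{(i)}(\omega)|\le\Lambda_i(\omega)$ holds with equality and $\Lambda_{k+2}((1-\tau)\omega)/\Lambda_{k+2}(\omega)=\Omega_{k+2}((1-\tau)\omega)/\Omega_{k+2}(\omega)\le R(\omega)$ by the definition of $R$, so $\LambdaConstant(\omega)^2=\mathbf{\Omega}(\omega)$; the symmetry, zero-mean and boundedness hypotheses on $\tau$ are part of Model \ref{model:dilMRA}. Lemma \ref{lem:RD_GeneralUnbiasing_DevOfEstimator} then gives, for each fixed $\omega\ne0$ with probability at least $1-1/t^2$, that $|(\widetilde{Pf})(\omega)-(Pf)(\omega)|=|\widetilde L(\omega)-L(\omega)|\lesssim kR(\omega)\big(\Lambda_{k+2}(\omega)(2E\eta)^{k+2}+t\LambdaConstant(\omega)M^{-1/2}\big)$, which is \eqref{eqn: Ptilde estimate of P}.

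The only genuine content is the identification in the first paragraph: establishing $(Py_j)(\omega)=(Pf)((1-\tau_j)\omega)$ and the chain-rule bookkeeping that recasts the explicitly $\omega^m$-scaled $\omega$-derivatives of the data as the scaled $\tau$-derivatives of $L((1-\tau)\,\cdot\,)$ appearing in $G_\lambda$; once this is in place the proposition is an immediate corollary and I expect no further obstacle. If one preferred not to invoke the lemma as a black box, the same three steps carried out inline — Taylor-expand $\tau\mapsto(Pf)((1-\tau)\omega)$ and its scaled derivatives to order $k+1$, cancel the $\eta^2,\dots,\eta^k$ contributions using symmetry of $\tau$ and \eqref{equ:MomentBasedConstants}, bound the order-$\eta^{k+2}$ remainder via $\Omega_{k+2}$ and $R$, then apply Chebyshev over the $M$ independent draws — reproduce the statement.
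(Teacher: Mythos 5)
Your proposal is correct and follows essentially the same route as the paper: drop the translations by invariance of $P$, observe $(Py_j)(\omega)=(Pf)((1-\tau_j)\omega)=F_\omega(\tau_j)$, and apply Lemma \ref{lem:RD_GeneralUnbiasing_DevOfEstimator} with $\lambda=\omega$, $L=Pf$, $\Lambda_i=\Omega_i$, $\LambdaConstant=\mathbf{\Omega}$. Your extra chain-rule bookkeeping identifying the $\omega^m$-scaled $\omega$-derivatives of $Py_j$ with the scaled $\tau$-derivatives in $G_\omega$ is the same implicit identification the paper's proof makes, so the two arguments coincide in substance.
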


\begin{proof}
	Since $P f$ is a translation invariant representation, we can ignore the translation factors $\{t_k\}_{k=1}^M$ and consider the model $y_j = L_{\tau_j} f$. In addition since $y_j(x) \in \R$, $(Py_j)(\omega) =	(Py_j)(-\omega)$ and it is sufficient to consider $\omega \in (0, \infty)$. Proposition \ref{prop:RandomDilations_PS} then follows directly from Lemma \ref{lem:RD_GeneralUnbiasing_DevOfEstimator} with $\lambda=\omega$, $L=Pf$ since $(Py_j)(\omega) = (Pf)((1-\tau_j)\omega)=F_\omega(\tau_j)$, \edit{$\Lambda_i = \Omega_i$, and $\Lambda = \Omega$}.
\end{proof}

We postpone a discussion of the shortcomings of Proposition \ref{prop:RandomDilations_PS} to Section \ref{sec:MotivatingExample}, where we compare the power spectrum and wavelet invariant results for dilation MRA.

\subsection{Wavelet invariant results for dilation MRA}
\label{sec:WSCdilationMRA}

We now apply the same unbiasing procedure to the wavelet invariants. Unlike for the power spectrum, where the error may depend on the frequency $\omega$ (see \eqref{eqn: Ptilde estimate of P} and Section \ref{sec:MotivatingExample}), the wavelet invariant error can be uniformly bounded independently of $\lambda$ with high probability. The following two Lemmas establish bounds on the derivatives of $(Sf)(\lambda)$ and are needed to prove Proposition \ref{prop:RandomDilations}; they are proved in Appendix \ref{app:WSC_props}.

\begin{restatable}{lemma}{lemWaveletScatteringDeriv}[Low Frequency Bound] 
	\label{lem:WaveletScatteringDeriv}
	Assume $P\psi \in \Cb^m (\R)$ and $f\in \Lb^1 (\R)$. Then the quantity $|\lambda^m (\Sc f)^{(m)}(\lambda)|$ can be bounded uniformly over all $\lambda$. Specifically:
	\begin{align*}
	|\lambda^m (Sf)^{(m)}(\lambda)| &\leq \Psi_m\norm{f}_1^2
	\end{align*}
	for $\Psi_m$ defined in (\ref{equ:Psik}). 	
\end{restatable}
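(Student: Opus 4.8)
The plan is to work entirely in the frequency domain, using Definition~\ref{def:WSCderiv}: since $(Sf)^{(m)}(\lambda) = \frac{1}{2\pi}\int |\widehat{f}(\omega)|^2\,\frac{d^m}{d\lambda^m}|\widehat{\psi}_\lambda(\omega)|^2\,d\omega$, there is no need for $Pf$ to be smooth, only $P\psi$. Writing $\widehat{\psi}_\lambda(\omega) = \lambda^{-1/2}\widehat{\psi}(\omega/\lambda)$ gives $|\widehat{\psi}_\lambda(\omega)|^2 = \lambda^{-1}(P\psi)(\omega/\lambda)$, so everything reduces to understanding $\lambda^m\,\frac{d^m}{d\lambda^m}\big[\lambda^{-1}(P\psi)(\omega/\lambda)\big]$ and then integrating against $|\widehat{f}(\omega)|^2$, which is bounded pointwise by $\norm{f}_1^2$ because $f\in\Lb^1(\R)$.

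First I would establish, by induction on $m$ (using $P\psi\in\Cb^m(\R)$, and most transparently by passing to $\vartheta=\log\lambda$, under which $\lambda\frac{d}{d\lambda}=\frac{d}{d\vartheta}$ and $\omega/\lambda=\omega e^{-\vartheta}$), a derivative identity of the form
\[
\lambda^m\frac{d^m}{d\lambda^m}\Big[\lambda^{-1}(P\psi)(\omega/\lambda)\Big] = \lambda^{-1}\sum_{i=0}^{m} a_{m,i}\,\Big(\tfrac{\omega}{\lambda}\Big)^{i}(P\psi)^{(i)}\Big(\tfrac{\omega}{\lambda}\Big),
\]
where the $a_{m,i}$ are universal integers depending only on $m$; the induction step is just the product and chain rules. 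Before inserting this inside the integral I would verify the standard criterion for differentiating under the integral sign: on any compact $\lambda$-interval the integrand and its first $m$ $\lambda$-derivatives are jointly continuous in $(\lambda,\omega)$ and, by the identity above, dominated by a fixed $\omega$-integrable function once multiplied by $|\widehat{f}(\omega)|^2\le\norm{f}_1^2$, the requisite decay of the $(P\psi)^{(i)}$ being supplied by the admissibility assumptions on $\psi$ (for the Morlet wavelet, Gaussian-type decay).

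Then I would bound $|\widehat{f}(\omega)|^2$ by $\norm{f}_1^2$, pull it out, and in each resulting term substitute $u=\omega/\lambda$, $d\omega=\lambda\,du$; the powers of $\lambda$ cancel and the integral collapses to $\int_\R |u^i (P\psi)^{(i)}(u)|\,du$, a finite quantity depending only on $\psi$ and $i$ and, crucially, \emph{not} on $\lambda$ or $f$. Summing over $i$ yields $|\lambda^m(Sf)^{(m)}(\lambda)| \le \Psi_m\norm{f}_1^2$ with $\Psi_m=\frac{1}{2\pi}\sum_{i=0}^{m}|a_{m,i}|\int_\R|u^i(P\psi)^{(i)}(u)|\,du$, which one checks agrees with the definition~(\ref{equ:Psik}); the $m=0$ case is just Young's inequality and recovers $\Psi_0=1$ by Plancherel.

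The main obstacle is not the algebra but the two integrability points: justifying the interchange of $\frac{d^m}{d\lambda^m}$ with $\int d\omega$ uniformly on $\lambda$-intervals, and ensuring each $\int_\R |u^i(P\psi)^{(i)}(u)|\,du$ is finite. Mere membership $P\psi\in\Cb^m(\R)$ gives neither; both are precisely where the $k$-admissibility hypothesis on the wavelet is needed, and the latter dictates the form of $\Psi_m$ in~(\ref{equ:Psik}). The combinatorial identity of the first step is routine, but it must be phrased so that the coefficients $a_{m,i}$ are manifestly independent of $\lambda$ and $f$, since that independence is the entire source of the claimed uniformity in $\lambda$.
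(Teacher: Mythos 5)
Your proposal is correct and follows essentially the same route as the paper: expand $\lambda^m\frac{d^m}{d\lambda^m}\bigl[\lambda^{-1}(P\psi)(\omega/\lambda)\bigr]$ into the terms $(\omega/\lambda)^i(P\psi)^{(i)}(\omega/\lambda)$ with combinatorial coefficients $\binom{m}{i}\frac{m!}{i!}$, bound $|\widehat{f}(\omega)|^2$ by $\|f\|_1^2$, and rescale $u=\omega/\lambda$ so each term contributes $\|\omega^i(P\psi)^{(i)}\|_1$, summing to $\Psi_m\|f\|_1^2$ exactly as in (\ref{equ:Psik}). The only superfluous point is your worry about differentiating under the integral sign: since $(Sf)^{(m)}(\lambda)$ is \emph{defined} by Definition \ref{def:WSCderiv} as the integral of $|\widehat{f}|^2$ against $\frac{d^m}{d\lambda^m}|\widehat{\psi}_\lambda|^2$, no interchange is needed for this lemma, and finiteness of the $\Lb^1$ norms (hence of $\Psi_m$) is supplied separately by the admissibility hypotheses in the results where the lemma is invoked.
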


\begin{restatable}{lemma}{lemWaveletScatteringDerivHighFreqDiffFunc}[High Frequency Bound for Differentiable Functions] 
	\label{lem:WaveletScatteringDerivHighFreq_DiffFunc}
	Assume $P\psi \in \Cb^m (\R)$, and $f' \in \Lb^1 (\R)$. Then the quantity $|\lambda^m{(\Sc f)}^{(m)}(\lambda)|$ can be bounded by:
	\begin{align*}
	|\lambda^m(Sf)^{(m)}(\lambda)|&\leq \frac{\Theta_m}{\lambda^2} \norm{f'}_1^2
	\end{align*}
	for $\Theta_m$ defined in (\ref{equ:Thetak}). 	
\end{restatable}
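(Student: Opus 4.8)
The plan is to estimate the frequency-domain expression of Definition \ref{def:WSCderiv} directly. Since that definition \emph{defines} $(Sf)^{(m)}(\lambda)$ as $\frac{1}{2\pi}\int |\widehat f(\omega)|^2\,\frac{d^m}{d\lambda^m}|\widehat\psi_\lambda(\omega)|^2\,d\omega$, no differentiation under the integral sign needs to be justified, and it suffices to bound this integral. Writing $\widehat\psi_\lambda(\omega)=\lambda^{-1/2}\widehat\psi(\omega/\lambda)$ gives $|\widehat\psi_\lambda(\omega)|^2=\lambda^{-1}(P\psi)(\omega/\lambda)$. The hypothesis $f'\in\Lb^1(\R)$ enters only through the pointwise Fourier bound: since $f$ is compactly supported (hence $f\in\Lb^1$) and $f'\in\Lb^1$, we have $\widehat{f'}(\omega)=i\omega\,\widehat f(\omega)$ and $\|\widehat{f'}\|_\infty\le\|f'\|_1$, so $|\widehat f(\omega)|^2\le\|f'\|_1^2/\omega^2$. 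It is precisely this extra factor $\omega^{-2}$ that, after rescaling, produces the decay $\lambda^{-2}$ and upgrades the uniform estimate of Lemma \ref{lem:WaveletScatteringDeriv} to the present one.

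Next I would record the algebraic structure of $\lambda^m\frac{d^m}{d\lambda^m}|\widehat\psi_\lambda(\omega)|^2$, which is exactly the computation already carried out in the proof of Lemma \ref{lem:WaveletScatteringDeriv}. Applying the Leibniz rule to $\lambda^{-1}(P\psi)(\omega/\lambda)$ together with Fa\`a di Bruno's formula for $\frac{d^i}{d\lambda^i}(P\psi)(\omega/\lambda)$, and using $P\psi\in\Cb^m(\R)$, one obtains constants $c_{m,i}$ depending only on $m$ (for $0\le i\le m$) such that
\[
\lambda^m\,\frac{d^m}{d\lambda^m}|\widehat\psi_\lambda(\omega)|^2=\lambda^{-1}\sum_{i=0}^m c_{m,i}\Big(\tfrac{\omega}{\lambda}\Big)^i (P\psi)^{(i)}\Big(\tfrac{\omega}{\lambda}\Big).
\]

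I would then combine the two ingredients: multiply by $|\widehat f(\omega)|^2\le\|f'\|_1^2\,\omega^{-2}$, integrate over $\omega$, and substitute $u=\omega/\lambda$ (so $d\omega=\lambda\,du$ and $\omega^{-2}=\lambda^{-2}u^{-2}$). All the powers of $\lambda$ then collapse into a single prefactor $\lambda^{-2}$, yielding
\[
|\lambda^m (Sf)^{(m)}(\lambda)|\le\frac{\|f'\|_1^2}{2\pi\lambda^2}\sum_{i=0}^m |c_{m,i}|\int_{\R}\big|u^{\,i-2}(P\psi)^{(i)}(u)\big|\,du,
\]
which is the claimed bound with $\Theta_m$ — as in (\ref{equ:Thetak}) — equal to the $\lambda$-independent quantity on the right. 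Note the integrand here is the one defining $\Psi_m$ in Lemma \ref{lem:WaveletScatteringDeriv} multiplied by $u^{-2}$, so convergence as $u\to\pm\infty$ is strictly easier and follows from the same decay of the $(P\psi)^{(i)}$ supplied by the admissibility hypotheses on $\psi$ (Appendix \ref{app:wavelet_admissability}).

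The genuinely new point, which I expect to be the only real obstacle, is integrability of $|u^{\,i-2}(P\psi)^{(i)}(u)|$ near $u=0$, where the factor $u^{-2}$ could create a singularity. For $i\ge 2$ the factor $u^{i-2}$ is bounded near the origin and $(P\psi)^{(i)}$ is continuous, so those terms are harmless. For $i=0$ and $i=1$ I would use that $P\psi=|\widehat\psi|^2$ is nonnegative, vanishes at the origin (because $\widehat\psi(0)=\int\psi=0$), and is $\Cb^m$ with $m\ge 2$ in the range of interest: the origin is then a minimum of $P\psi$, forcing $(P\psi)'(0)=0$, and Taylor's theorem gives $P\psi(u)=O(u^2)$ and $(P\psi)'(u)=O(u)$ near $0$, so $u^{-2}P\psi(u)$ and $u^{-1}(P\psi)'(u)$ stay bounded there; for small $m$ the admissibility criterion is designed to supply exactly this vanishing. (For complex analytic wavelets with compactly supported $\widehat\psi$ these terms vanish identically near $0$; for the Morlet wavelet one checks directly that $\widehat\psi(0)=0$ with $\widehat\psi(u)=O(u)$, whence $P\psi(u)=O(u^2)$.) This shows $\Theta_m<\infty$ and finishes the proof, with the bound holding uniformly over all $\lambda>0$; together with Lemma \ref{lem:WaveletScatteringDeriv} it then feeds into Proposition \ref{prop:RandomDilations}.
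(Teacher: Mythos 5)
Your proof is correct and follows essentially the same route as the paper: it reuses the expansion $\lambda^m\frac{d^m}{d\lambda^m}|\widehat{\psi}_\lambda(\omega)|^2=\sum_i C_{m,i}\,\omega^i g_\lambda^{(i)}(\omega)$ from Lemma \ref{lem:WaveletScatteringDeriv}, inserts $\|\omega\widehat{f}(\omega)\|_\infty\le\|f'\|_1$, and rescales $u=\omega/\lambda$ to extract the $\lambda^{-2}$ factor and land exactly on $\Theta_m$ as in (\ref{equ:Thetak}). Your additional discussion of integrability of $|u^{i-2}(P\psi)^{(i)}(u)|$ near the origin is not needed for the statement itself (the paper delegates finiteness of $\Theta_m$ to the admissibility condition, Lemma \ref{lem:kadmissable}) but is consistent with it.
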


\edit{When $\psi$ is a Morlet wavelet or more generally when $\psi$ is $(k+2)$-admissable as described in Appendix \ref{app:wavelet_admissability},} these lemmas allow one to bound the error of the order $k$ wavelet invariant estimator for dilation MRA in terms of the following quantities:
\begin{align}
\label{equ:LambdaConstant}
\Lambda_i(\lambda) =  \Psi_{i} \norm{f}_1^2\,\wedge\, \frac{\Theta_{i}}{\lambda^2}\norm{f'}_1^2 \quad,\quad 	\LambdaConstant(\lambda)^2 =  \sum_{ \substack{ 0\leq i, j \leq k+2, i+j\geq 2}} \, \Lambda_i(\lambda)\Lambda_j(\lambda)(2E\eta)^{i+j} \, ,
\end{align}
where \edit{$\Psi_i, \Theta_i$ are defined in (\ref{equ:Psik}), (\ref{equ:Thetak}) and} $E$ is defined in (\ref{equ:E}).

\begin{proposition}
	\label{prop:RandomDilations}
	Assume Model \ref{model:dilMRA}, the notation in (\ref{equ:LambdaConstant}), and that $\psi$ is $(k+2)$-admissable.
	Define the following estimator of $(\Sc f)(\lambda)$:
	\begin{align*}
	(\widetilde{\Sc f})(\lambda) &:= \frac{1}{M} \sum_{j=1}^M \left[(S y_j)(\lambda) -B_2\eta^2\lambda^2(S y_j)''(\lambda) - \ldots - B_k\eta^k\lambda^k(S y_j)^{(k)}(\lambda)\right]
	\end{align*}
	 where the constants $B_i$ satisfy (\ref{equ:MomentBasedConstants}). Then with probability at least $1-1/t^2$,
	\begin{align*}
	| (\widetilde{\Sc f})(\lambda) - (\Sc f)(\lambda)| &\lesssim k\left(\Lambda_{k+2}(\lambda)(2E\eta)^{k+2} +\frac{t\LambdaConstant(\lambda)}{\sqrt{M}} \right)\, .
	\end{align*}

\end{proposition}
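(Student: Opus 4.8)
The plan is to deduce Proposition~\ref{prop:RandomDilations} directly from Lemma~\ref{lem:RD_GeneralUnbiasing_DevOfEstimator} applied with $L = \Sc f$, after checking that the hypotheses of that lemma hold with the quantities $\Lambda_i$ of \eqref{equ:LambdaConstant}. First, since $\Sc f$ depends only on $|\widehat{f}|$ it is translation invariant (a translation multiplies $\widehat{y_j}$ by the unimodular factor $e^{-i\omega t_j}$, which is killed by $|\cdot|^2$), so we may discard the $t_j$ and work with $y_j = L_{\tau_j}f$. Writing $(\Sc y_j)(\lambda) = \frac{1}{2\pi}\int |\widehat{L_{\tau_j}f}(\omega)|^2|\widehat{\psi}_\lambda(\omega)|^2\,d\omega$ and using $\widehat{L_\tau f}(\omega) = \widehat{f}((1-\tau)\omega)$, the substitution $u = (1-\tau_j)\omega$ together with the identity $|\widehat{\psi}_\lambda(u/(1-\tau_j))|^2(1-\tau_j)^{-1} = |\widehat{\psi}_{(1-\tau_j)\lambda}(u)|^2$ yields the dilation identity
\[
(\Sc y_j)(\lambda) = (\Sc f)((1-\tau_j)\lambda) = F_\lambda(\tau_j).
\]
Applying the same substitution to Definition~\ref{def:WSCderiv} and the chain rule $\frac{d^n}{d\lambda^n}|\widehat{\psi}_{(1-\tau)\lambda}(u)|^2 = (1-\tau)^n\big(\frac{d^n}{d\mu^n}|\widehat{\psi}_\mu(u)|^2\big)\big|_{\mu=(1-\tau)\lambda}$ shows $\lambda^n (\Sc y_j)^{(n)}(\lambda) = ((1-\tau_j)\lambda)^n (\Sc f)^{(n)}((1-\tau_j)\lambda)$, so that $(\widetilde{\Sc f})(\lambda)$ is exactly the empirical average over $j$ of the estimator $G_\lambda(\tau_j)$ from Lemma~\ref{lem:RD_GeneralUnbiasing_MeanVar} with $L = \Sc f$ (the constants $B_i$ of \eqref{equ:MomentBasedConstants} being precisely those the lemma requires).

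Next I would verify the derivative hypotheses $|\lambda^i(\Sc f)^{(i)}(\lambda)| \le \Lambda_i(\lambda)$ for $0 \le i \le k+2$. Since $\psi$ is $(k+2)$-admissible, $P\psi \in \Cb^{k+2}(\R)$, so Lemmas~\ref{lem:WaveletScatteringDeriv} and \ref{lem:WaveletScatteringDerivHighFreq_DiffFunc} give $|\lambda^i(\Sc f)^{(i)}(\lambda)| \le \Psi_i\|f\|_1^2$ and $|\lambda^i(\Sc f)^{(i)}(\lambda)| \le \Theta_i\lambda^{-2}\|f'\|_1^2$ respectively; their minimum is precisely the $\Lambda_i(\lambda)$ of \eqref{equ:LambdaConstant}, and hence $\LambdaConstant(\lambda)$ as defined there. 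It then remains to control the ratio $R(\lambda)$ of Lemma~\ref{lem:RD_GeneralUnbiasing_DevOfEstimator}, i.e.\ an upper bound for $\max_\tau \Lambda_{k+2}((1-\tau)\lambda)/\Lambda_{k+2}(\lambda)$. Writing $a = \Psi_{k+2}\|f\|_1^2$ and $b = \Theta_{k+2}\|f'\|_1^2$, the function $\Lambda_{k+2}(\lambda) = \min(a, b\lambda^{-2})$ is non-increasing in $\lambda$, equal to $a$ below the crossover $\lambda_* = \sqrt{b/a}$ and to $b\lambda^{-2}$ above it; since $|\tau| \le 1/2$ forces $(1-\tau)\lambda \in [\lambda/2, 3\lambda/2]$, a short three-case analysis (both arguments below $\lambda_*$, both above, or straddling $\lambda_*$) gives $R(\lambda) \le 4$ for every $\lambda > 0$. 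I expect this uniform bound on $R(\lambda)$ to be the main point requiring care: it is precisely what distinguishes the wavelet estimator from the power-spectrum estimator of Proposition~\ref{prop:RandomDilations_PS}, where the analogous ratio cannot be bounded uniformly in the frequency.

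Finally, substituting $L = \Sc f$, the $\Lambda_i$ of \eqref{equ:LambdaConstant}, and $R(\lambda) = O(1)$ into the conclusion of Lemma~\ref{lem:RD_GeneralUnbiasing_DevOfEstimator} yields, with probability at least $1 - 1/t^2$,
\[
|(\widetilde{\Sc f})(\lambda) - (\Sc f)(\lambda)| \lesssim k\Big(\Lambda_{k+2}(\lambda)(2E\eta)^{k+2} + \frac{t\,\LambdaConstant(\lambda)}{\sqrt{M}}\Big)
\]
uniformly in $\lambda$, which is the claimed estimate; in passing one notes $\LambdaConstant(\lambda) = O(\eta)$, since every term in its defining sum has $i+j \ge 2$ and each $\Lambda_i(\lambda) \le \Psi_i\|f\|_1^2 < \infty$, so the stochastic part is genuinely $O(\eta M^{-1/2})$, as in the remark following Lemma~\ref{lem:RD_GeneralUnbiasing_DevOfEstimator}.
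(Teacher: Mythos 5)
Your proposal follows essentially the same route as the paper's own proof: discard the translations by invariance, use the dilation identity $(\Sc y_j)(\lambda) = (\Sc f)((1-\tau_j)\lambda) = F_\lambda(\tau_j)$, verify the derivative hypotheses $|\lambda^i(\Sc f)^{(i)}(\lambda)| \le \Lambda_i(\lambda)$ via Lemmas \ref{lem:WaveletScatteringDeriv} and \ref{lem:WaveletScatteringDerivHighFreq_DiffFunc}, and invoke Lemma \ref{lem:RD_GeneralUnbiasing_DevOfEstimator}. Your direct check that $R(\lambda)\le 4$ for the minimum $\Lambda_{k+2}(\lambda)=\Psi_{k+2}\norm{f}_1^2\wedge\Theta_{k+2}\norm{f'}_1^2\lambda^{-2}$ (using $(1-\tau)\lambda\ge\lambda/2$) is, if anything, slightly cleaner than the paper's device of applying the lemma twice, once with $\Lambda_i=\Psi_i\norm{f}_1^2$ and $R=1$ and once with $\Lambda_i=\Theta_i\norm{f'}_1^2/\lambda^2$ and $R=4$, and then passing to the minimum.

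One step is overstated, though it mirrors an identification the paper itself makes silently, so it is not a departure from the paper's argument. From your chain-rule identity $\lambda^n(\Sc y_j)^{(n)}(\lambda) = ((1-\tau_j)\lambda)^n(\Sc f)^{(n)}((1-\tau_j)\lambda)$ it does \emph{not} follow that $(\widetilde{\Sc f})(\lambda)$ is exactly the empirical average of $G_\lambda(\tau_j)$: in Lemma \ref{lem:RD_GeneralUnbiasing_MeanVar} the correction terms are $B_i\eta^i F^{(i)}_\lambda(\tau_j)=B_i\eta^i\lambda^i(\Sc f)^{(i)}((1-\tau_j)\lambda)$ for even $i$ (derivatives in $\tau$), whereas the estimator as written uses $B_i\eta^i\lambda^i(\Sc y_j)^{(i)}(\lambda)=B_i\eta^i((1-\tau_j)\lambda)^i(\Sc f)^{(i)}((1-\tau_j)\lambda)$, so each correction carries an extra random factor $(1-\tau_j)^i$. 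For $k=2$ this mismatch only enters the bias at order $\eta^4$ and is harmless for the stated bound, but for $k\ge 4$ the exact cancellation of the $\eta^4,\dots,\eta^k$ terms in Lemma \ref{lem:RD_GeneralUnbiasing_MeanVar} is proved for $G_\lambda(\tau_j)$, not for the estimator as literally defined, so one must either read the correction terms as $F^{(i)}_\lambda(\tau_j)$ or absorb the $(1-\tau_j)^i-1=O(\tau)$ discrepancies into the remainder analysis. The paper's proof elides the same point (it concludes from $(\Sc y_j)(\lambda)=F_\lambda(\tau_j)$ alone), so your write-up is faithful to it; just drop the word ``exactly'' or add the one-line reconciliation, and, as the paper does, note that continuity of $Pf$ lets you identify the Definition \ref{def:WSCderiv} derivatives with ordinary $\lambda$-derivatives before applying the chain rule.
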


\begin{proof}
	Since $\Sc f$ is a translation invariant representation, we can ignore the translation factors $\{t_k\}_{k=1}^M$ and consider the model $y_j = L_{\tau_j}f$. Since $\psi$ is $k+2$-admissable, $\widehat{\psi} \in \Cb^{k+2} (\R)$ which guarantees $(Sf)(\lambda) \in \Cb^{k+2} (0, \infty)$. We note that since $f\in \Lb^1 (\R)$, $Pf$ is continuous, and the Leibniz integral rule guarantees that $(Sf)^{(n)}(\lambda) = \frac{d^n}{d\lambda^n}(Sf)(\lambda)$ for $1 \leq n \leq k+2$. By applying Lemma \ref{lem:WaveletScatteringDeriv}, we have $|\lambda^{i}({\Sc}f)^{(i)}(\lambda)| \leq \Psi_i \norm{f}_1^2$ for all $0 \leq i \leq k+2$, so that Lemma \ref{lem:RD_GeneralUnbiasing_DevOfEstimator} holds for $L(\lambda) = (\Sc f)(\lambda)$, $\Lambda_i(\lambda) = \Psi_i \norm{f}_1^2$, and $\edit{R(\lambda)}=1$. Now by applying Lemma \ref{lem:WaveletScatteringDerivHighFreq_DiffFunc}, we have $|\lambda^{i}({\Sc}f)^{(i)}(\lambda)| \leq \Theta_i \norm{f'}_1^2/\lambda^2$ for all $0 \leq i \leq k+2$, so that  Lemma \ref{lem:RD_GeneralUnbiasing_DevOfEstimator} also holds for $L(\lambda) = (\Sc f)(\lambda)$, $\Lambda_i(\lambda) = \Theta_i \norm{f'}_1^2/\lambda^2$, and $\edit{R(\lambda)}=4$ (note since $|\tau| \leq \frac{1}{2}$, $\Lambda_{k+2}((1-\tau)\lambda) / \Lambda_{k+2}(\lambda) \leq 4$). Thus Lemma \ref{lem:RD_GeneralUnbiasing_DevOfEstimator} in fact holds with $\Lambda_i(\lambda) = \left(\Psi_{i} \norm{f}_1^2\wedge \frac{\Theta_{i}}{\lambda^2}\norm{f'}_1^2\right)$; since $(\Sc y_j)(\lambda) = (\Sc f)((1-\tau_j)\lambda)=F_\lambda(\tau_j)$, we obtain Proposition \ref{prop:RandomDilations}. 	
\end{proof}

Since $\Lambda_{i}(\lambda) \leq \Psi_i \norm{f}_1^2$, Proposition \ref{prop:RandomDilations} guarantees that the error can be uniformly bounded independent of $\lambda$. In addition if the signal is smooth, the error for high frequency $\lambda$ will have the favorable scaling $\lambda^{-2}$. An important question in practice is how to choose $k$, i.e. what order wavelet invariant estimator minimizes the bias. 
Consider for example when $f' \notin \Lb^1 (\R)$, and $\Lambda_{k+2}(\lambda)=\Psi_{k+2}\norm{f}_1^2$. By using a second order estimator, we can decrease the bias from $O(\eta^2)$ to $O(\eta^4)$, and we can further decrease the bias to $O(\eta^6)$ by choosing $k=4$. However, $\Psi_k$ increases very rapidly in $k$. Indeed, as can be seen from (\ref{equ:Psik}), $\Psi_k$ increases like $k!$. Thus one possible heuristic (assuming $\eta$ is known) is to choose $k=\widetilde{k}$ where $\widetilde{k}$ minimizes the bias upper bound $k\Psi_{k+2}(2E\eta)^{k+2}$. 
Since $\Psi_{k}$ increases factorially, $\Psi_k \sim (Ck)^k$ for some constant $C$, and $\widetilde{k}+2$ will be inversely proportional to $\eta$, that is $(\widetilde{k}+2) \sim \eta^{-1}$. 
The following corollary of Proposition \ref{prop:RandomDilations} then holds for any $k \leq \widetilde{k}$.

\begin{corollary}
	\label{cor:RandomDilationWithBestk}
	Under the assumptions of Proposition \ref{prop:RandomDilations}, if $\Psi_i(2E\eta)^i$ is decreasing for $i \leq k+2$, then with probability at least $1-1/t^2$:
	\begin{align}
	| (\widetilde{\Sc f})(\lambda) - (\Sc f)(\lambda)| &\lesssim \norm{f}_1^2 \left(k\Psi_{k+2}(2E\eta)^{k+2} +\frac{tk^2\eta}{\sqrt{M}} \right)\, .
	\end{align}
	Similarly, if $\Theta_i(2E\eta)^i$ is decreasing for $i \leq k+2$, then with probability at least $1-1/t^2$:
	\begin{align}
	| (\widetilde{\Sc f})(\lambda) - (\Sc f)(\lambda)| &\lesssim \frac{\norm{f'}_1^2}{\lambda^2} \left(k\Theta_{k+2}(2E\eta)^{k+2} +\frac{tk^2\eta}{\sqrt{M}} \right)\, .
	\end{align}
\end{corollary}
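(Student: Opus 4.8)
\textit{Proof proposal.}~The plan is to invoke Proposition \ref{prop:RandomDilations} and then collapse its two error contributions — the bias term $k\Lambda_{k+2}(\lambda)(2E\eta)^{k+2}$ and the fluctuation term $tk\LambdaConstant(\lambda)/\sqrt{M}$ — using nothing but the stated monotonicity hypothesis. For the first inequality of the corollary, note that by (\ref{equ:LambdaConstant}) we have $\Lambda_i(\lambda)=\Psi_i\norm{f}_1^2\wedge\frac{\Theta_i}{\lambda^2}\norm{f'}_1^2\leq\Psi_i\norm{f}_1^2$, so the bias term is immediately at most $k\Psi_{k+2}\norm{f}_1^2(2E\eta)^{k+2}$, which is the first term on the claimed right-hand side. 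Thus it only remains to establish $\LambdaConstant(\lambda)\lesssim k\eta\norm{f}_1^2$.

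To that end I would estimate the double sum $\LambdaConstant(\lambda)^2=\sum_{0\leq i,j\leq k+2,\,i+j\geq 2}\Lambda_i(\lambda)\Lambda_j(\lambda)(2E\eta)^{i+j}$ term by term. Using $\Lambda_i(\lambda)\leq\Psi_i\norm{f}_1^2$, each summand is bounded by $\norm{f}_1^4\big(\Psi_i(2E\eta)^i\big)\big(\Psi_j(2E\eta)^j\big)$, and the hypothesis that $\Psi_i(2E\eta)^i$ is decreasing for $0\leq i\leq k+2$ lets me replace each of the two factors by its value at the smallest admissible index, thereby avoiding the rapidly growing constant $\Psi_{k+2}$: when $i,j\geq 1$ each factor is at most $\Psi_1(2E\eta)$, while when one index vanishes, say $j=0$, the constraint $i+j\geq 2$ forces $i\geq 2$, so the product is at most $\Psi_0\cdot\Psi_2(2E\eta)^2$. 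In every case the summand is at most a fixed constant — built from $\Psi_0,\Psi_1,\Psi_2,E$, all $O(1)$ for the fixed wavelet — times $\norm{f}_1^4(2E\eta)^2$. Since the sum has at most $(k+3)^2\lesssim k^2$ terms, this gives $\LambdaConstant(\lambda)^2\lesssim k^2\eta^2\norm{f}_1^4$, i.e. $\LambdaConstant(\lambda)\lesssim k\eta\norm{f}_1^2$. Substituting both bounds into Proposition \ref{prop:RandomDilations} yields
$$|(\widetilde{\Sc f})(\lambda)-(\Sc f)(\lambda)|\lesssim k\Big(\Psi_{k+2}\norm{f}_1^2(2E\eta)^{k+2}+\frac{t\,k\eta\norm{f}_1^2}{\sqrt{M}}\Big)=\norm{f}_1^2\Big(k\Psi_{k+2}(2E\eta)^{k+2}+\frac{tk^2\eta}{\sqrt{M}}\Big),$$
which is the first claimed bound.

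The second inequality follows by running the identical argument with $\Psi_i$ replaced by $\Theta_i$ and $\norm{f}_1^2$ replaced by $\norm{f'}_1^2/\lambda^2$, using the other half of the minimum in (\ref{equ:LambdaConstant}), namely $\Lambda_i(\lambda)\leq\frac{\Theta_i}{\lambda^2}\norm{f'}_1^2$, together with the hypothesis that $\Theta_i(2E\eta)^i$ is decreasing. The only real obstacle is the bookkeeping in the middle step: one must use the constraint $i+j\geq 2$ to extract the extra factor $(2E\eta)^2\sim\eta^2$ from the double sum and check that the residual is a bounded number of $O(1)$ constants times $O(k^2)$ terms; once that is done, everything reduces to a direct substitution into the already-established Proposition \ref{prop:RandomDilations}.
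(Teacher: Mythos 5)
Your proposal is correct and is precisely the argument the paper intends: the corollary is stated as an immediate consequence of Proposition \ref{prop:RandomDilations}, obtained by bounding the bias via $\Lambda_{k+2}(\lambda)\leq \Psi_{k+2}\norm{f}_1^2$ (resp.\ $\Theta_{k+2}\norm{f'}_1^2/\lambda^2$) and the fluctuation via $\LambdaConstant(\lambda)\lesssim k\eta\norm{f}_1^2$ (resp.\ $k\eta\norm{f'}_1^2/\lambda^2$), where the monotonicity hypothesis is used exactly as you do to extract a factor $(2E\eta)^2$ from each of the $O(k^2)$ terms of the double sum, with the low-order constants $\Psi_0,\Psi_1,\Psi_2,E$ (or their $\Theta$ counterparts) absorbed into the implicit constant as the paper treats them.
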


\begin{remark}
	We observe that for a discrete lattice $I$ of $\lambda$ values, we can define the discrete 1-norm by $\| g \|_{\Lb^1(I)} = \sum_{\lambda\in I} |g(\lambda)| \ \Delta\lambda$. Assume the lattice has cardinality $n$, and that $\Psi_i(2E\eta)^i, \Theta_i(2E\eta)^i$ are decreasing for $i\leq k+2$. Applying Proposition \ref{prop:RandomDilations} with $t=\sqrt{n}s$ and a union bound over the lattice gives
	\begin{align*}
	\| \widetilde{\Sc f} - \Sc f \|_{\Lb^1(I)} &\lesssim k\left(\norm{f}_1^2\Psi_{k+2}+ \norm{f'}_1^2\Theta_{k+2}\right)(2E\eta)^{k+2}
	+ \frac{s\sqrt{n}k^2\eta}{\sqrt{M}}\left(\norm{f}_1^2+\norm{f'}_1^2\right)
	\end{align*}
	with probability at least $1-1/s^2$. When $n \ll M$, which is the context for MRA, the 1-norm of the error is $O(\eta^{k+2})$ as $M \rightarrow \infty$.
\end{remark}

%
%

\subsection{Comparison}
\label{sec:MotivatingExample}

Although Propositions \ref{prop:RandomDilations} and \ref{prop:RandomDilations_PS} at first glance appear quite similar, the wavelet invariant method has several important advantages over the power spectrum method, which we enumerate in the following remarks. 

\begin{remark}
	Proposition \ref{prop:RandomDilations} (wavelet invariants) applies to any signal satisfying $f \in \Lb^1(\R)$ but Proposition \ref{prop:RandomDilations_PS} requires $Pf \in \Cb^{k+2} (\R)$. Thus as $k$ is increased the power spectrum results apply to an increasingly restrictive function class. Furthermore, as discussed in Section \ref{sec: noisy dilation MRA model}, if the signal contains any additive noise, $Py_j$ is not even $\Cb^1$, which means the unbiasing procedure of Proposition \ref{prop:RandomDilations_PS} cannot be applied. On the other hand, by choosing $P\psi \in \Cb^{\infty} (\R)$, $Sf$ will inherit the smoothness of the wavelet, and the wavelet invariant results will hold for any $f \in \Lb^1(\R)$ and any $k$.
\end{remark}

\begin{remark}
    \label{rmk:PSbad}
    \edit{Since $(Pf_\tau)(\xi) = (Pf)((1-\tau)\xi)$, dilation will transport the frequency content at $\xi$ to $(1-\tau)\xi$, so that the displacement is $\tau\xi$. Thus when $\xi$ is very large, $|(Pf)(\xi) - (P f_\tau)(\xi)|$ can be large even for $\tau$ small. Because the wavelet invariants bin the frequency content, and these bins become increasingly large in the high frequencies, this does not occur for wavelet invariants. More specifically,}
	there is always a signal $f$ and frequency $\xi$ for which $|(Pf)(\xi) - (\widetilde{P f})(\xi)|$ is large regardless of $k$.
	Consider for example when $(Pf)(\omega)=e^{-(\omega-\xi)^2}$. Then $\Omega_k(\xi) \sim \xi^{k}$, and $|(Pf)(\xi) - (\widetilde{P f})(\xi)| \gtrsim 1$. 
	However for $M$ large enough, the order $k$ wavelet invariant estimator satisfies $|(\Sc f)(\lambda) - (\widetilde{\Sc f})(\lambda)|=O(k\Psi_{k+2}\eta^{k+2})$ for all $\lambda$. The wavelet invariants are thus stable for high frequency signals, where the power spectrum fails.  
\end{remark}

\begin{remark}
	For the wavelet invariants there will be a unique $\widetilde{k}$ which minimizes $k\Psi_{k+2}(2E\eta)^{k+2}$, and $\widetilde{k}$ does not depend on $\lambda$. Furthermore, $\widetilde{k}$ can be explicitly computed given the wavelet $\psi$ and moment constant $E$. On the other hand, the minimum of $k\Omega_{k+2}(\omega)(2E\omega)^{k+2}$ with respect to $k$ will depend on both the frequency $\omega$ and the signal $f$, so that $\widetilde{k}=\widetilde{k}(\omega, f)$, and it becomes unclear how to choose the unbiasing order. 
\end{remark}

\subsection{Simulation results for dilation MRA}
\label{sec:DilationNoiseSims}

We first illustrate the unbiasing procedure of Propositions \ref{prop:RandomDilations_PS} and \ref{prop:RandomDilations} for the high frequency signal $f(x)=e^{-5x^2}\cos(32x)$. \edit{Figure \ref{fig:HigherOrderUnbiasing} shows the power spectrum estimator $\widetilde{Pf}$ and the wavelet invariant estimator $\widetilde{P_Sf}$ for $k=0,2,4$ for both small and large dilations, where $\widetilde{P_Sf}$ denotes the combined wavelet invariant unbiasing plus optimization procedure (see Section \ref{sec:optimization}).} Higher order unbiasing is beneficial for both methods for small dilations, but fails for the power spectrum for large dilations. Both methods will of course fail for $\eta$ large enough, but for high frequency signals the power spectrum fails much sooner. 

\begin{figure}
	\centering
	\begin{subfigure}[b]{0.24\textwidth}
		\centering
		\includegraphics[width=\textwidth]{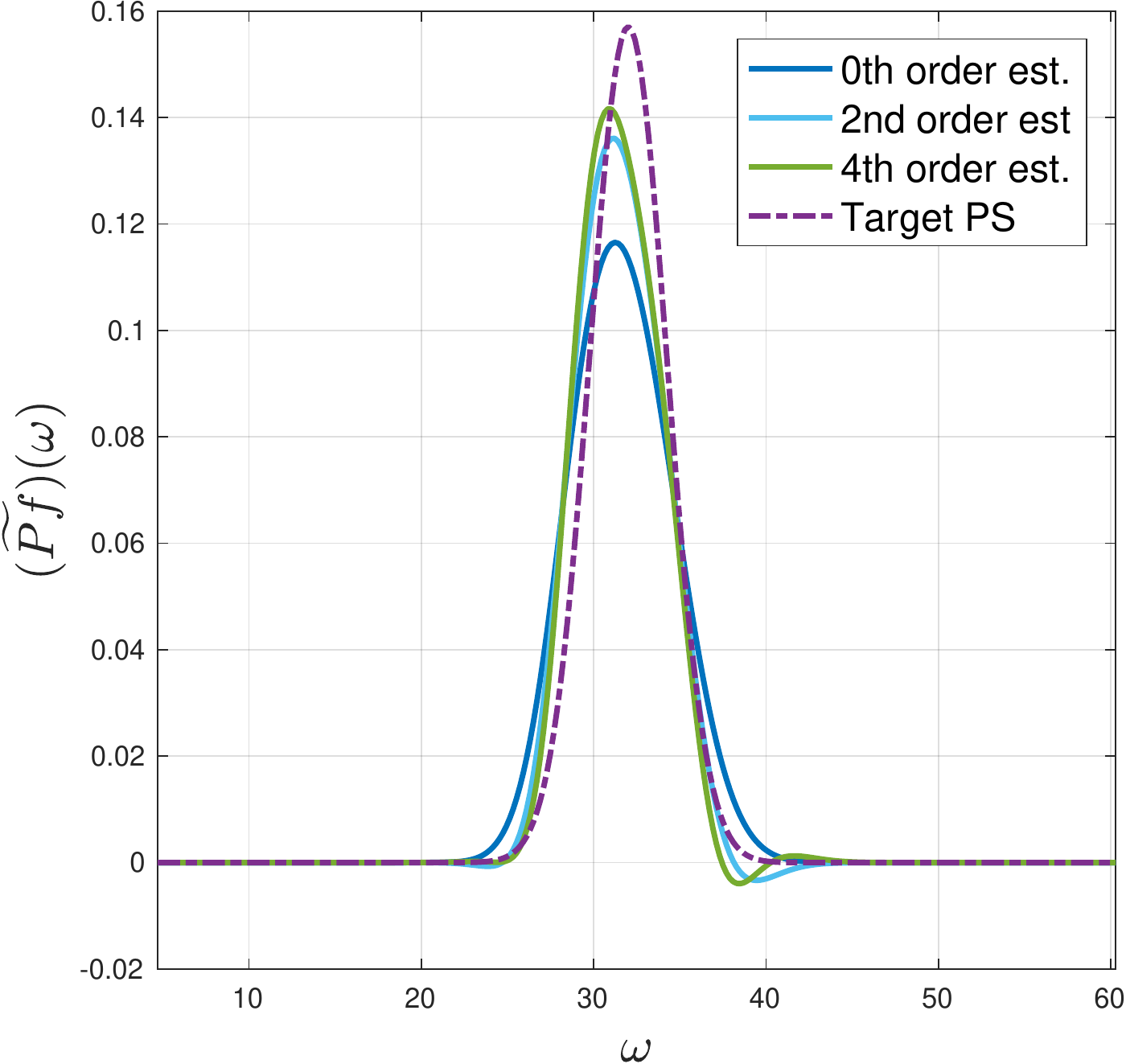}
		\caption{$f_3(x), \eta = 0.06$}
		\label{fig:PSUnbiasing_SmallDilations}
	\end{subfigure}
	\hfill
	\begin{subfigure}[b]{0.24\textwidth}
		\centering
		\includegraphics[width=\textwidth]{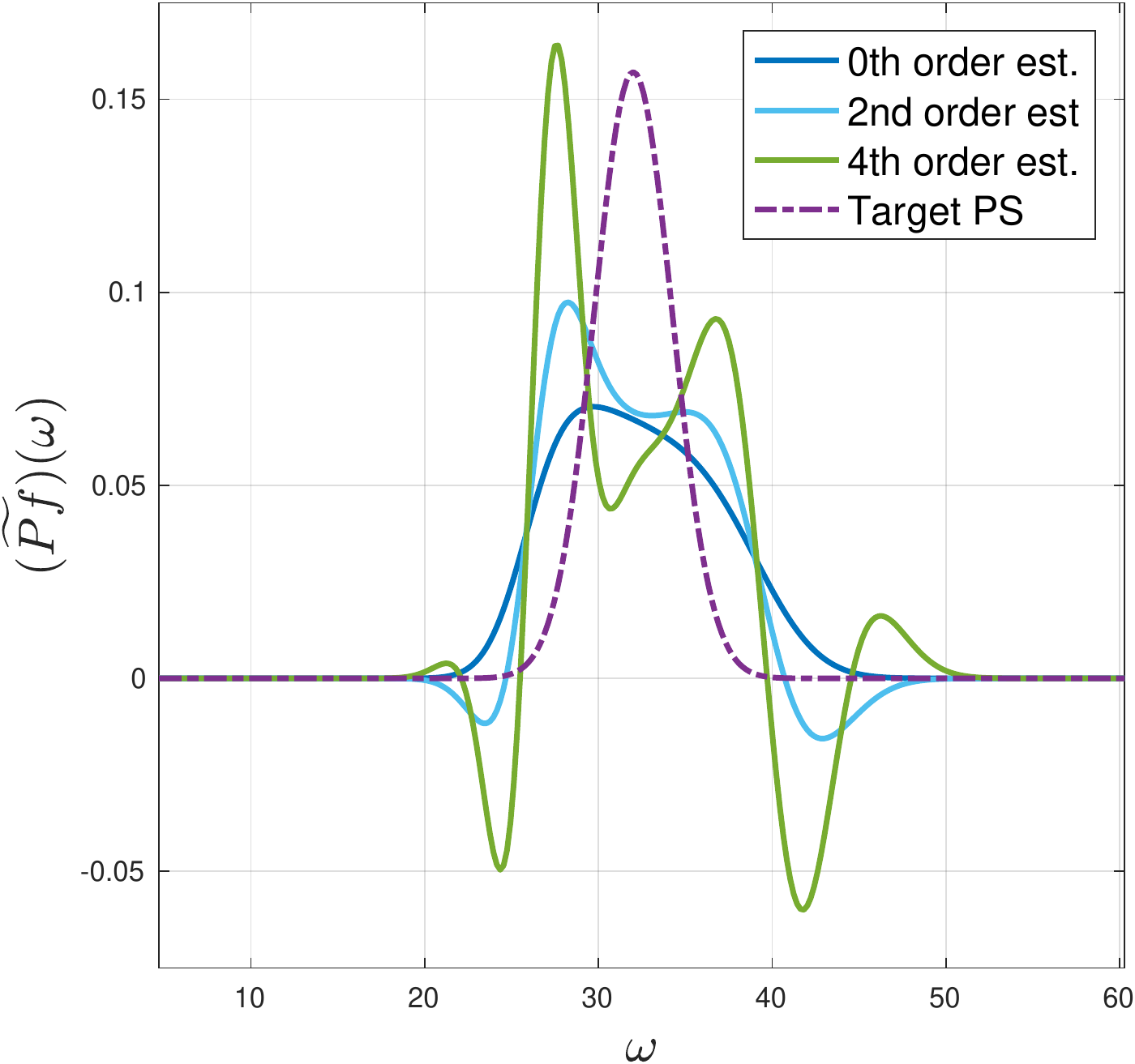}
		\caption{$f_3(x), \eta = 0.12$}
		\label{fig:PSUnbiasing_LargeDilations}
	\end{subfigure}
	\hfill
	\begin{subfigure}[b]{0.24\textwidth}
		\centering
		\includegraphics[width=\textwidth]{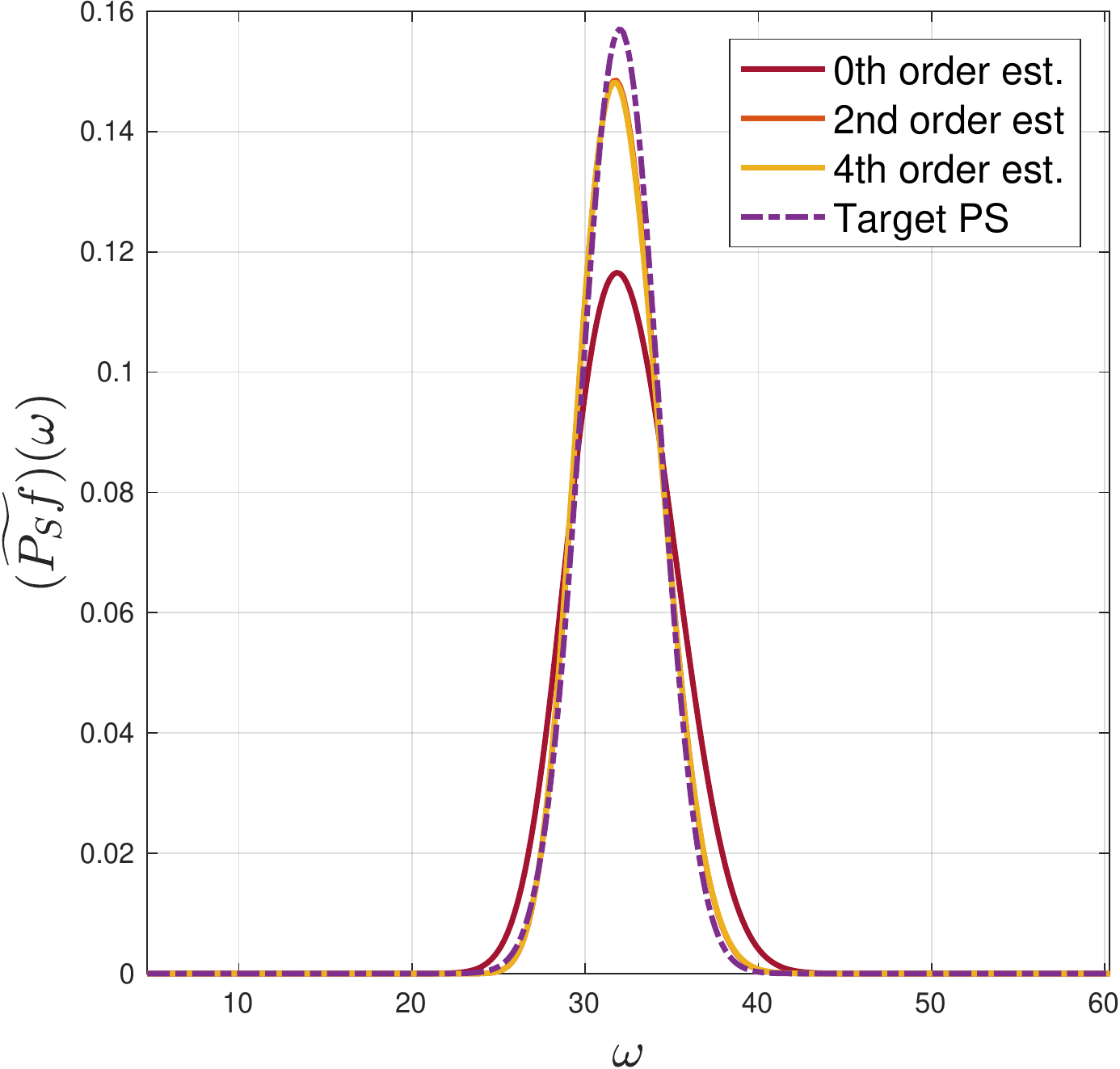}
		\caption{$f_3(x), \eta = 0.06$}
		\label{fig:WSCUnbiasing_SmallDilations}
	\end{subfigure}
	\hfill
	\begin{subfigure}[b]{0.24\textwidth}
		\centering
		\includegraphics[width=\textwidth]{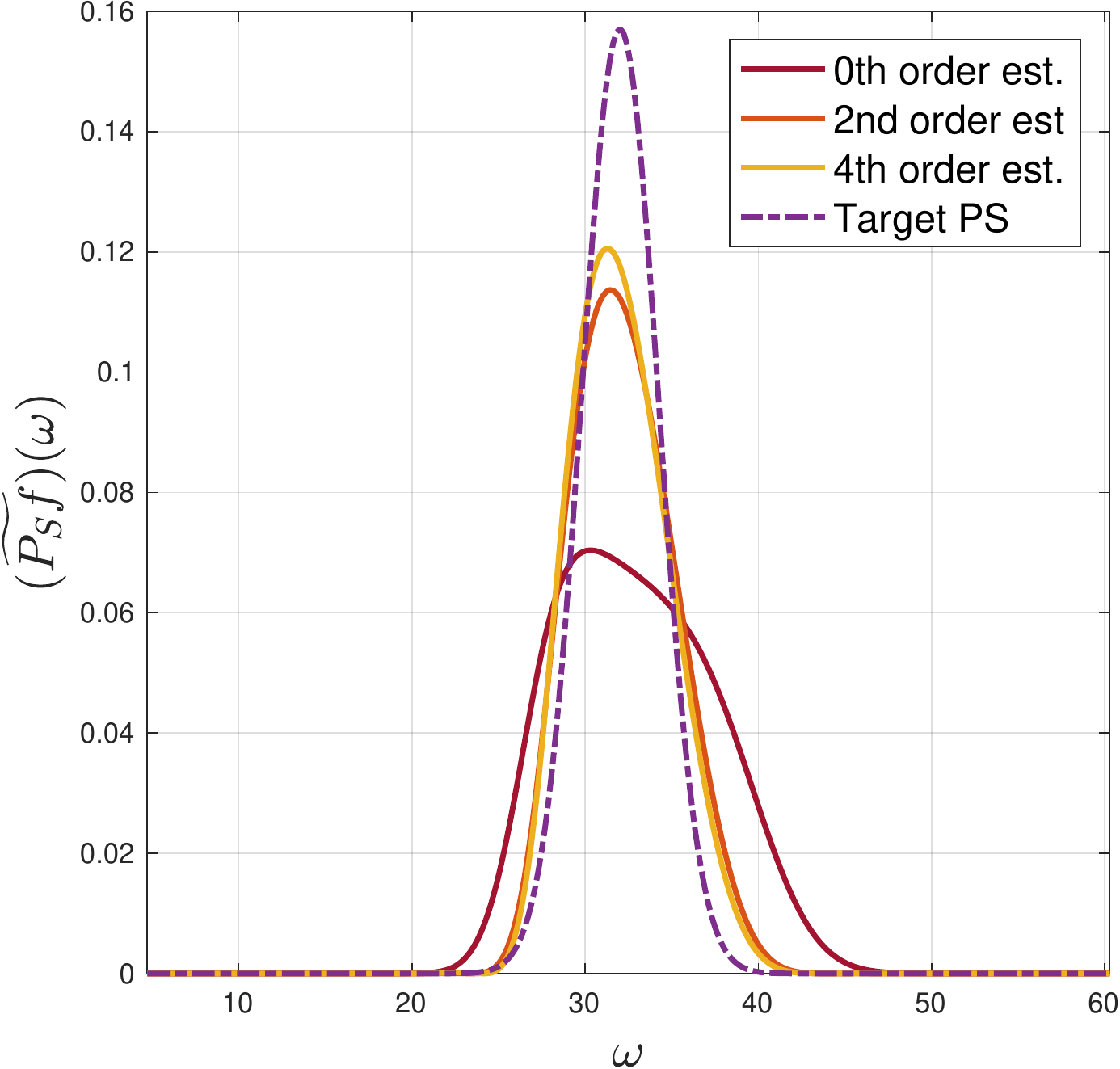}
		\caption{$f_3(x), \eta = 0.12$}
		\label{fig:WSCUnbiasing_LargeDilations}
	\end{subfigure}
	\caption{Order $k=0,2,4$ power spectrum estimators \edit{$\widetilde{Pf}$} (first two figures) and wavelet invariant estimators \edit{$\widetilde{P_Sf}$} (last two figures) for the signal $f_3(x) = e^{-5x^2} \cos (32 x)$. Figures \ref{fig:PSUnbiasing_SmallDilations} and \ref{fig:WSCUnbiasing_SmallDilations} show small dilations and Figures \ref{fig:PSUnbiasing_LargeDilations} and \ref{fig:WSCUnbiasing_LargeDilations} show large dilations.}
	\label{fig:HigherOrderUnbiasing}
\end{figure}

Next we compare \edit{$\|Pf-\widetilde{Pf}\|_2$ and $\|Pf-\widetilde{P_Sf}\|_2$}, the $\Lb^2$ error of estimating the power spectrum of the target signal via the power spectrum estimators of Proposition \ref{prop:RandomDilations_PS} and via the wavelet invariant estimators of Proposition \ref{prop:RandomDilations}, followed by a convex optimzation procedure. We consider order $k=0,2,4$ estimators for both the power spectrum and wavelet invariants on the following Gabor atoms of increasing frequency:
\begin{align*}
f_1(x) &= e^{-5x^2}\cos(8x) \\
f_2(x) &= e^{-5x^2}\cos(16x) \\
f_3(x) &= e^{-5x^2}\cos(32x).
\end{align*}
These functions satisfy $f = \text{Real}(h)$ where $(Ph)(\omega) = (\pi/5)e^{-(\omega-\xi)^2/10}$ for $\xi=8, 16, 32$, and thus exhibit the behavior described in Remark \ref{rmk:PSbad}. 

Simulation results are shown in Figure \ref{fig:DilationNoiseEmpirical}; the horizontal axis shows $\log_2(M)$ while the vertical axis shows $\log_2(\text{Error})$. For each value of $M$, the error was calculated for 10 independent simulations and then averaged. The unbiasing procedure of Propositions \ref{prop:RandomDilations_PS} and \ref{prop:RandomDilations} requires knowledge of the moments of the dilation distribution, but in practice these are unknown. Thus the first two even moments of the dilation distribution $(\eta^2, C_4\eta^4)$ were estimated empirically with the fourth order estimators described in Section \ref{sec:EmpMomentEst} (see Definition \ref{def:emp_dil_mom_est_dilMRA}). For the low frequency signal, the $4^{\text{th}}$ order power spectrum estimator was best for both small and large dilations, and is preferable due to the lower computational cost (see Remark \ref{rmk:ComputationalCost}). For the high frequency signal,  the $4^{\text{th}}$ order wavelet invariant estimator was best for large dilations and WSC $k=2$ and $k=4$ were best and equivalent for small dilations. For the medium frequency signal, the higher order power spectrum estimators were best for small dilations while the higher order wavelet invariant estimators were best for large dilations. Thus the simulation results confirm that the wavelet invariants will have an advantage over Fourier invariants when the signals are either high frequency or corrupted by large dilations. We remark that one obtains nearly identical error plots with oracle knowledge of the dilation moments, indicating that the empirical moment estimation procedure is highly accurate in the absencse of additive noise, even for small $M$ values.

\begin{figure}
	\centering
	\begin{subfigure}[b]{0.32\textwidth}
		\centering
		\includegraphics[width=.85\textwidth]{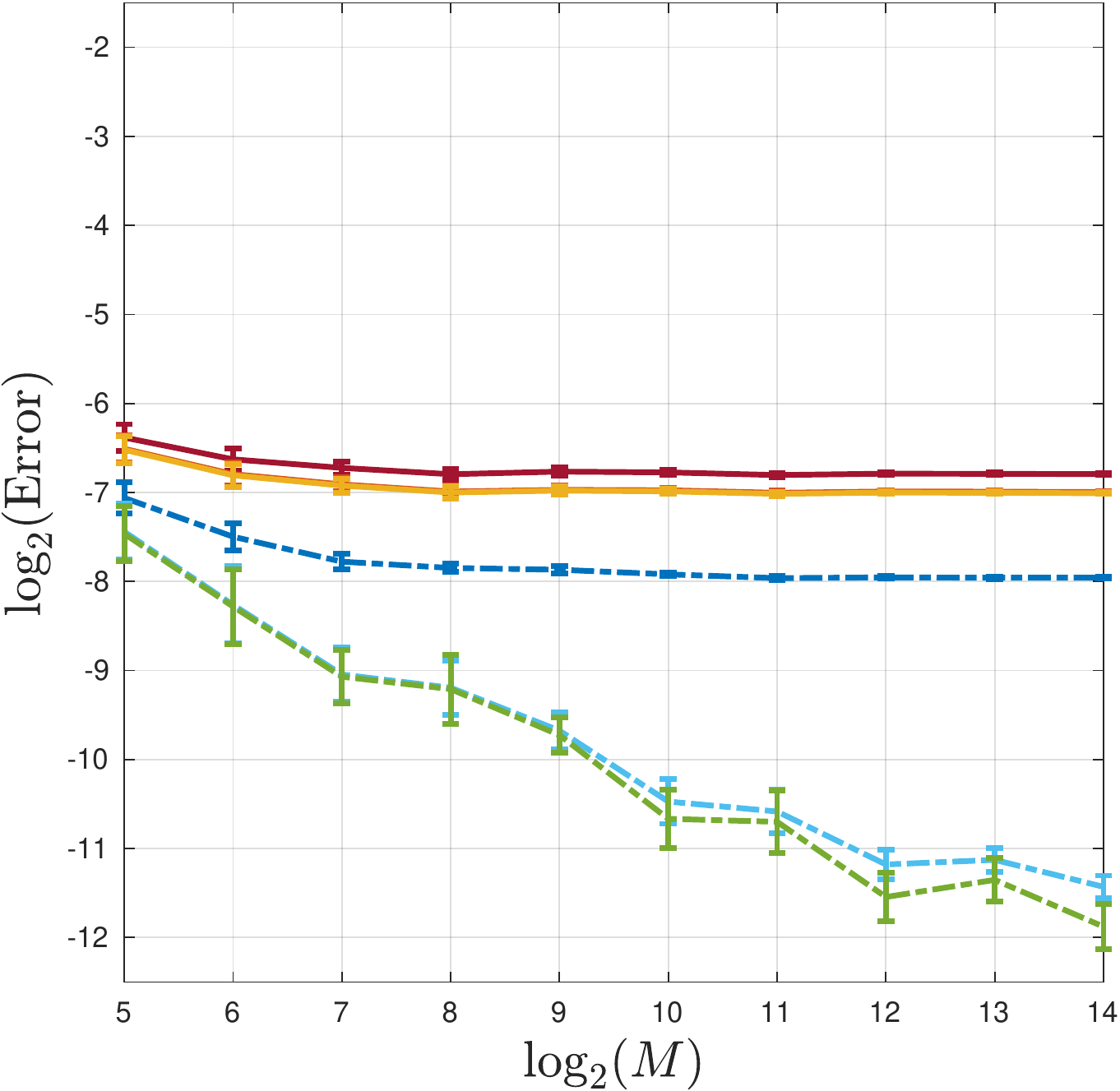}
		\caption{$f_1(x), \eta = 0.06$}
		\vspace*{.1cm}
		\label{fig:sim_E_1_low}
	\end{subfigure}
	\hfill
	\begin{subfigure}[b]{0.32\textwidth}
		\centering
		\includegraphics[width=.85\textwidth]{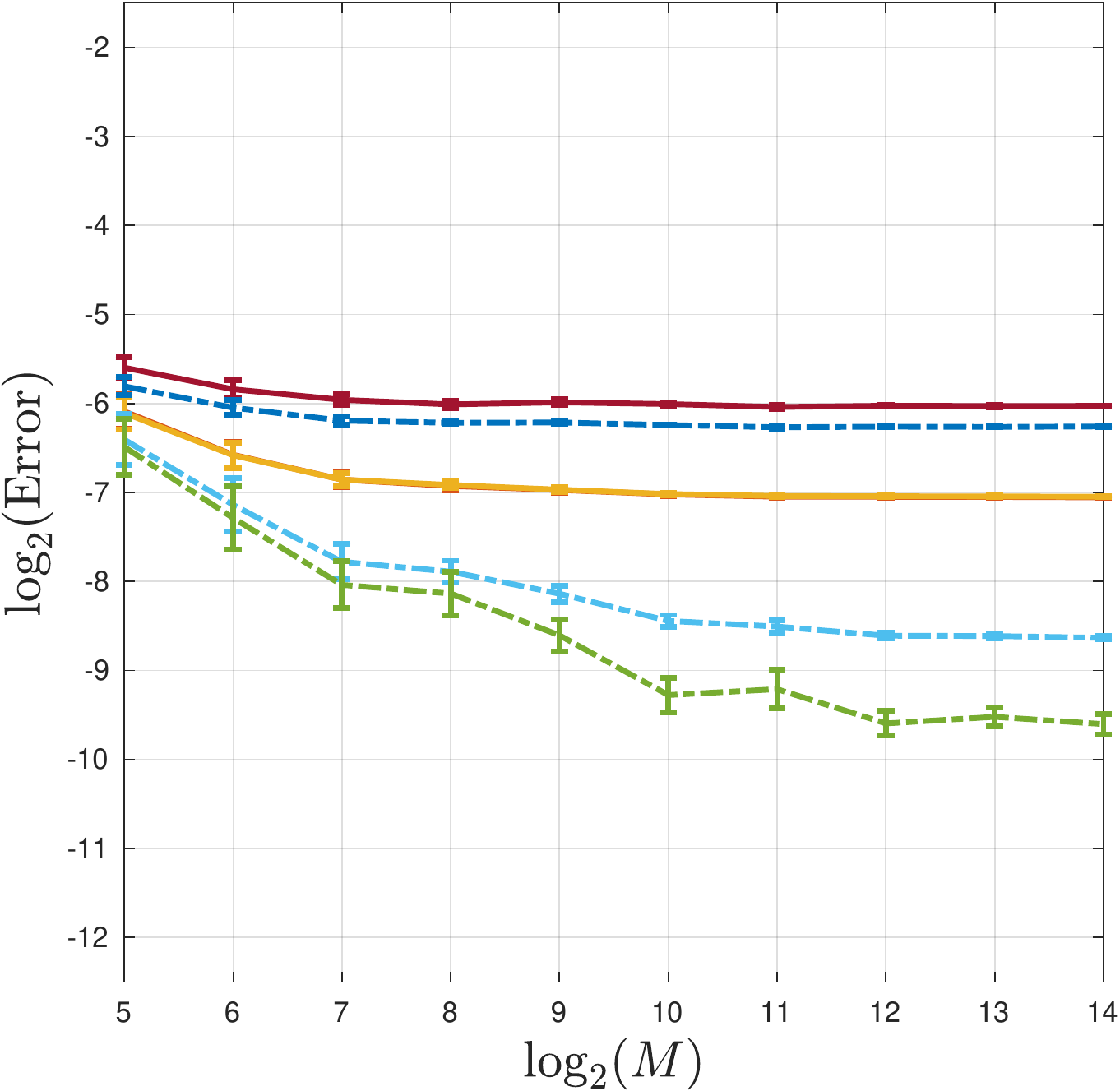}
		\caption{$f_2(x), \eta = 0.06$}
		\vspace*{.1cm}
		\label{fig:sim_E_1_med}
	\end{subfigure}
	\hfill
	\begin{subfigure}[b]{0.32\textwidth}
		\centering
		\includegraphics[width=.85\textwidth]{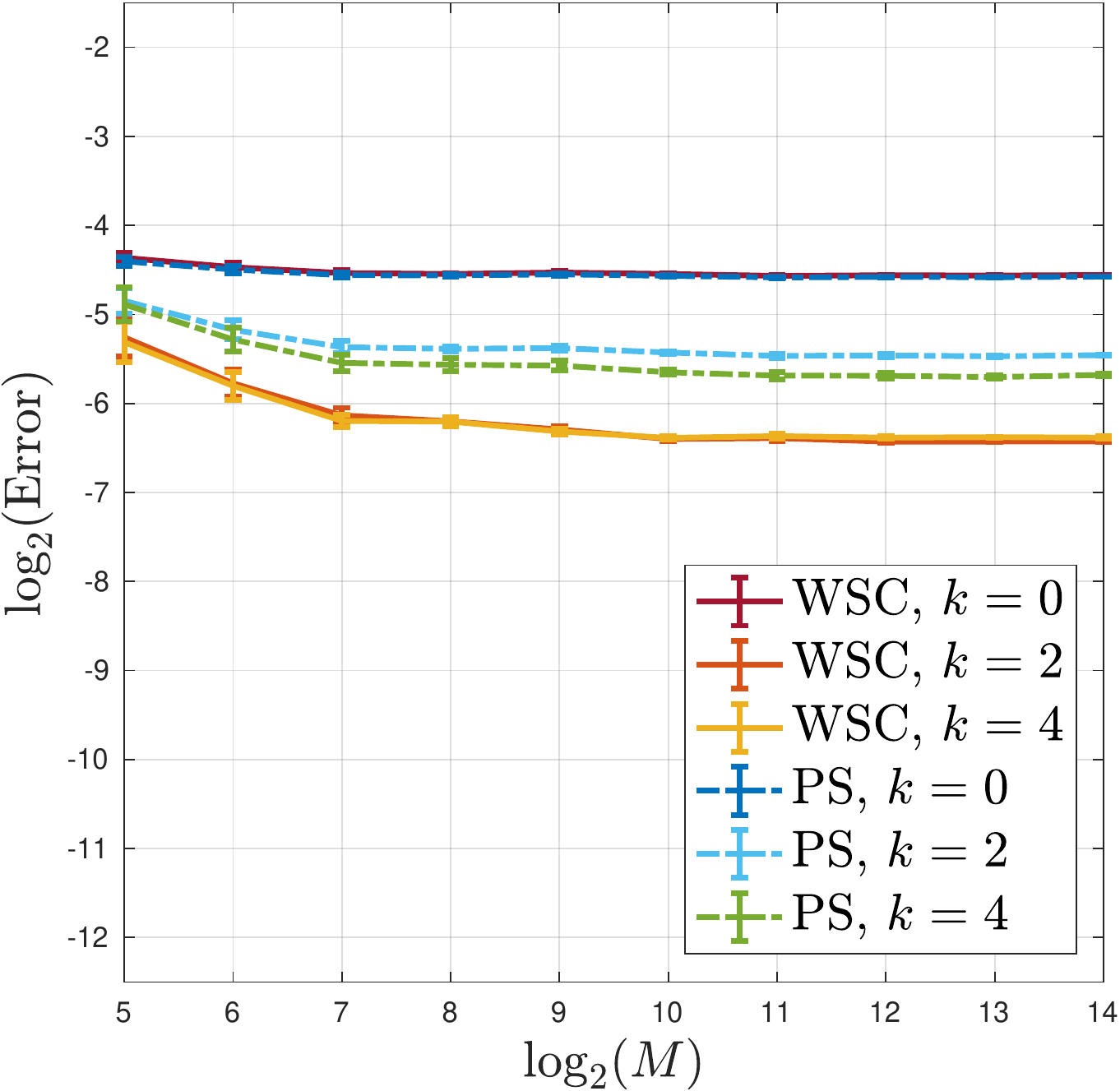}
		\caption{$f_3(x), \eta = 0.06$}
		\vspace*{.1cm}
		\label{fig:sim_E_1_high}
	\end{subfigure}
	\begin{subfigure}[b]{0.32\textwidth}
		\centering
		\includegraphics[width=.85\textwidth]{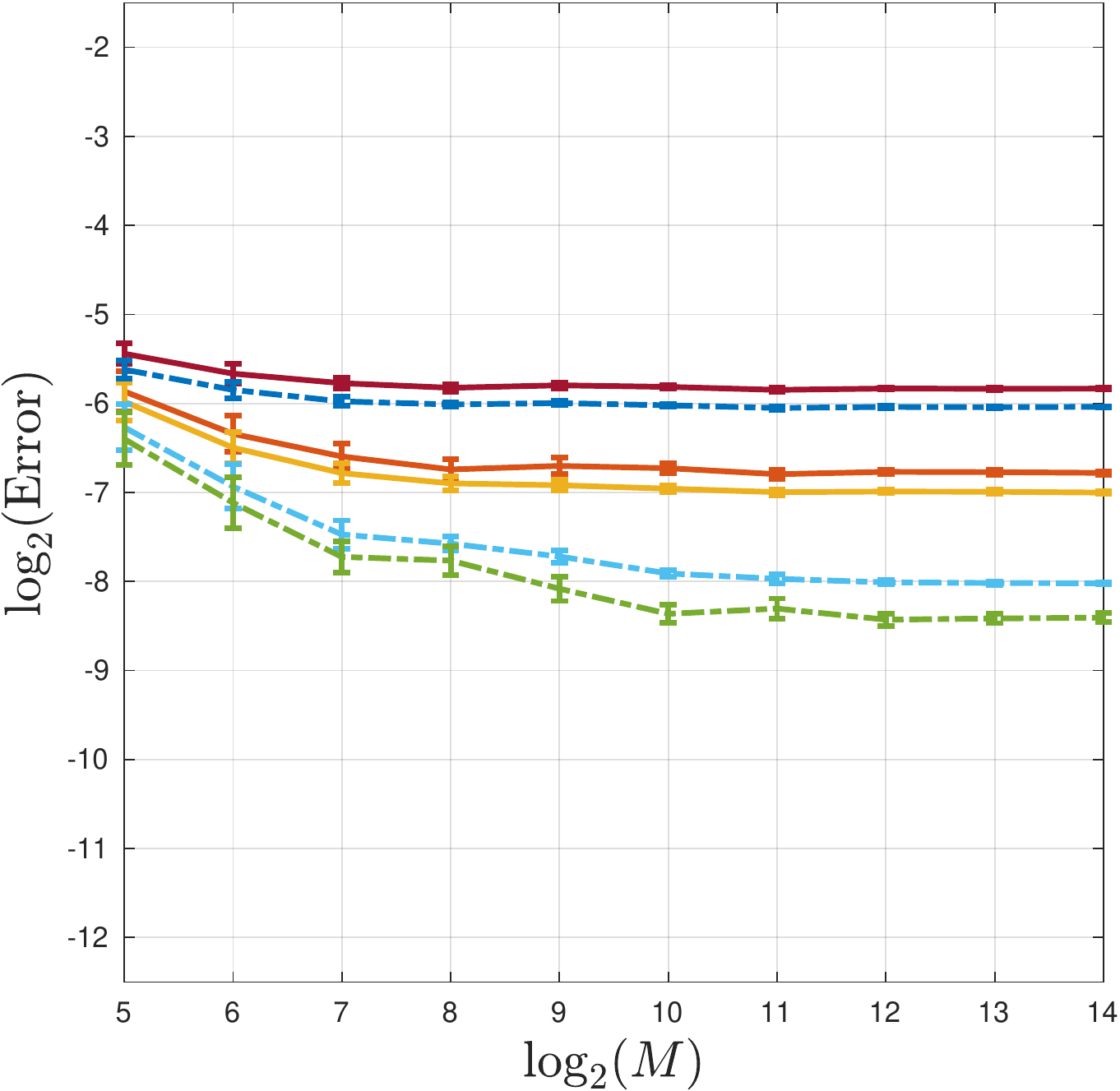}
		\caption{$f_1(x), \eta = 0.12$}
		\label{fig:sim_E_2_low}
	\end{subfigure}
	\hfill
	\begin{subfigure}[b]{0.32\textwidth}
		\centering
		\includegraphics[width=.85\textwidth]{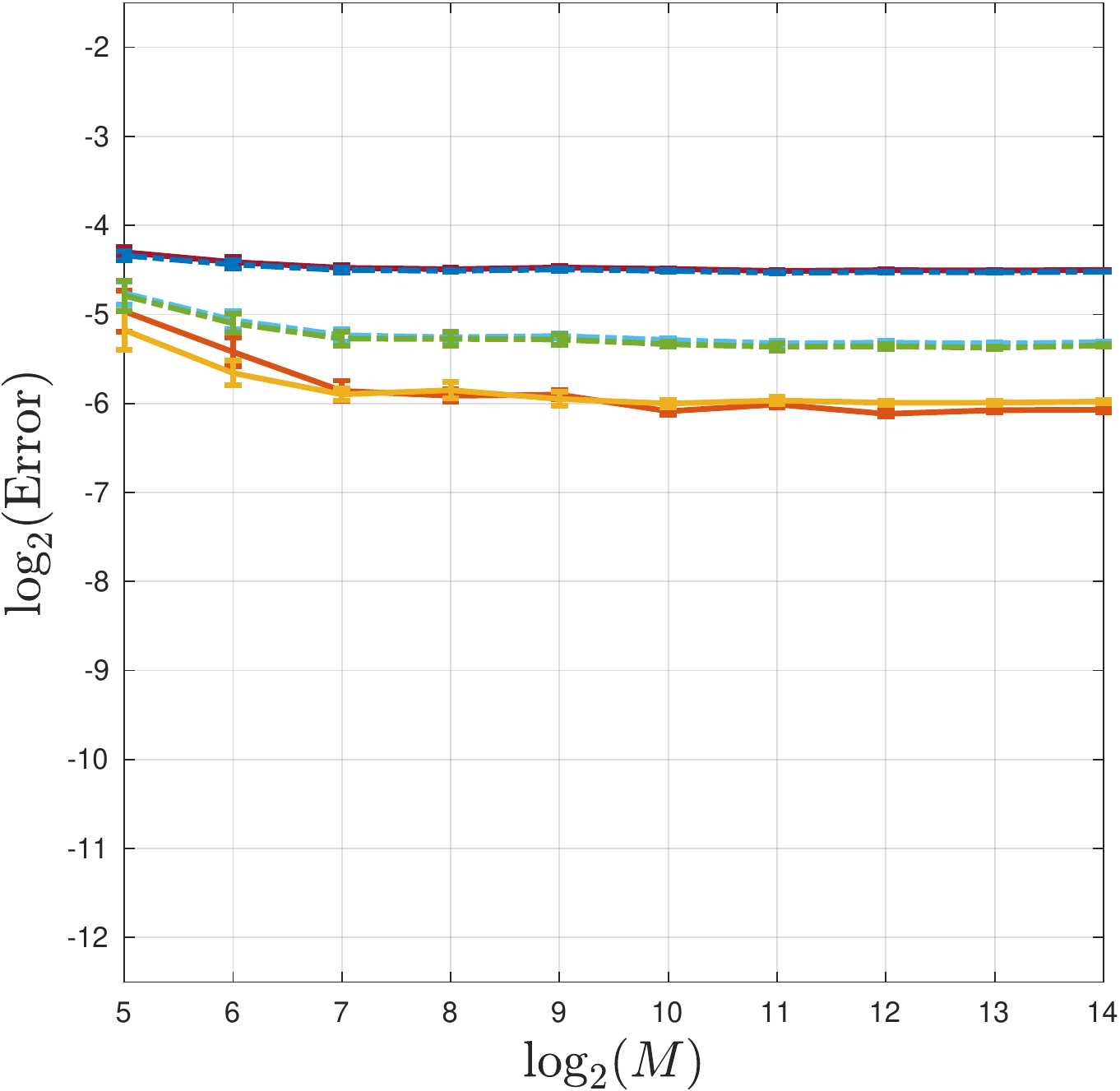}
		\caption{$f_2(x), \eta = 0.12$}
		\label{fig:sim_E_2_med}
	\end{subfigure}
	\hfill
	\begin{subfigure}[b]{0.32\textwidth}
		\centering
		\includegraphics[width=.85\textwidth]{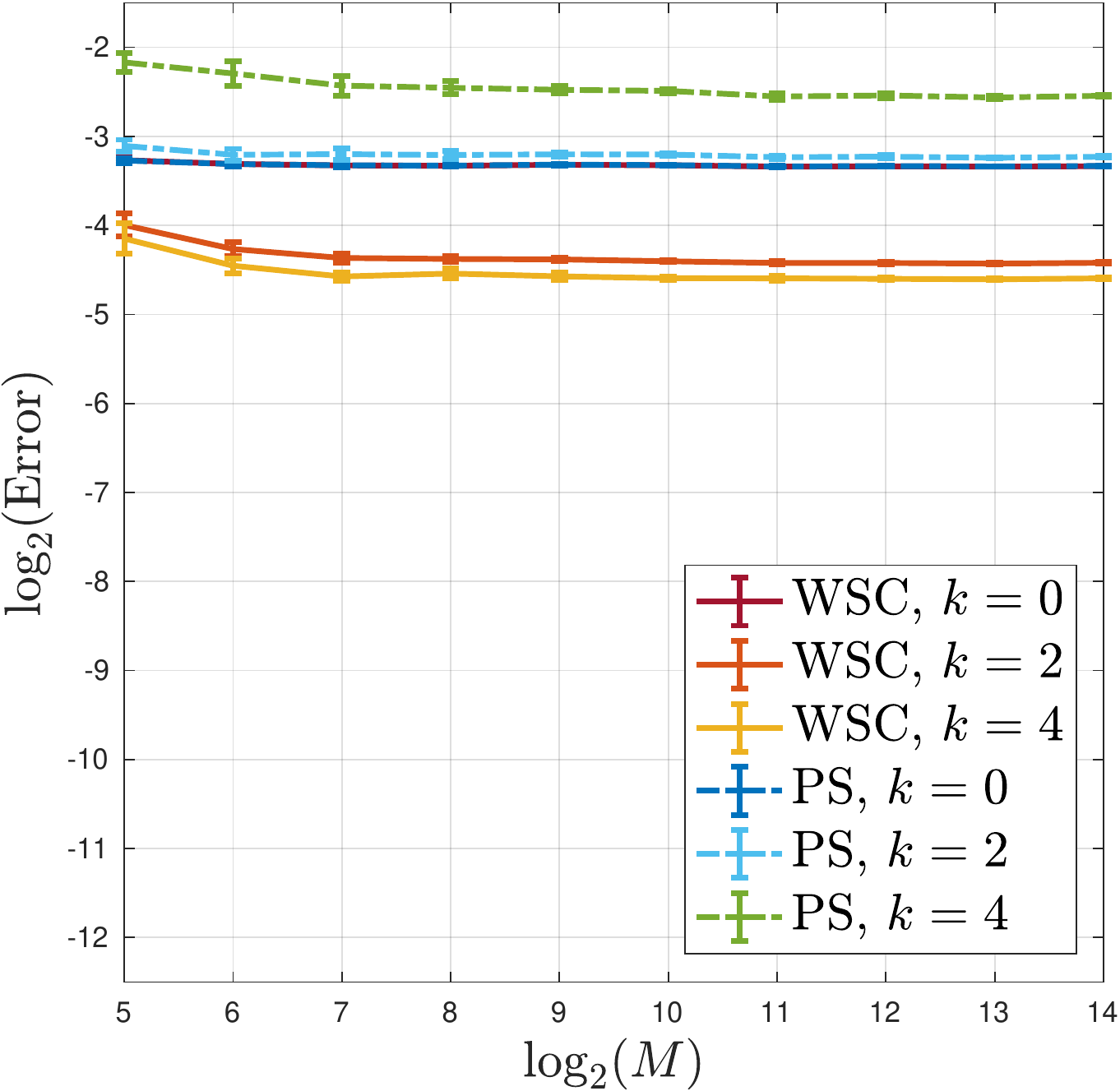}
		\caption{$f_3(x), \eta = 0.12$}
		\label{fig:sim_E_2_high}
	\end{subfigure}
	\caption{$\Lb^2$ error with standard error bars for dilation model (empirical moment estimation). Top row shows results for small dilations ($\eta=0.06$) and bottom row shows results for large dilations ($\eta=0.12$). First, second, third column shows results for low, medium, high frequency Gabor signals.  All plots have the same axis limits.
	}
	\label{fig:DilationNoiseEmpirical}
\end{figure}

\section{\edit{Noisy dilation MRA} model}
\label{sec: noisy dilation MRA model}

Finally, we consider the \edit{noisy dilation MRA} model (Model \ref{model:genMRA}) where signals are randomly translated and dilated and corrupted by additive noise. \edit{Section  \ref{sec:WSCnoisydilationMRA} gives unbiasing results for wavelet invariants and Section \ref{sec:GenMRANoiseSim} reports relevant simulations.}

\edit{\subsection{Wavelet inariant results for noisy dilation MRA}\label{sec:WSCnoisydilationMRA}}

To state Proposition \ref{prop:DilationAndAdditiveNoise} as succinctly as possible, we also define the following quantity
\begin{align}
\label{equ:PsiConstant}
\PsiConstant &:= \sum_{m=0,2,\ldots,k} \Psi_m(E\eta)^m\, ,
\end{align}
where $E$ is defined in (\ref{equ:E}) and $\Psi_m$ is defined in \eqref{equ:Psik}.

\begin{proposition}
	\label{prop:DilationAndAdditiveNoise}
	Assume Model \ref{model:genMRA} and that $\psi$ is $(k+2)$-admissable.  Define the following estimator of $(\Sc f)(\lambda)$:
	\begin{align*}
	(\widetilde{\Sc f})(\lambda) &:= \frac{1}{M}\sum_{j=1}^M \left[(S y_j)(\lambda) -B_2\eta^2\lambda^2(S y_j)''(\lambda)- \ldots - B_k\eta^k\lambda^k(S y_j)^{(k)}(\lambda)\right]- \sigma^2
	\end{align*}
	where the constants $B_i$ satisfy (\ref{equ:MomentBasedConstants}). 
	Then with probability at least $1-1/t^2$
	\begin{align}
	&\left|(\widetilde{\Sc f})(\lambda) - (\Sc f)(\lambda)\right|\lesssim k\Lambda_{k+2}(\lambda)(2E\eta)^{k+2} +\frac{t}{\sqrt{M}}\left[k\LambdaConstant(\lambda)+ \PsiConstant\sigma^2+\sqrt{\PsiConstant(\Lambda_0(\lambda)+\LambdaConstant(\lambda))}\sigma\right]\, , 	\label{equ:GenMRA_error}
	\end{align}
	where $E, \LambdaConstant(\lambda), \PsiConstant$ are as defined in (\ref{equ:E}), (\ref{equ:LambdaConstant}), (\ref{equ:PsiConstant}).
\end{proposition}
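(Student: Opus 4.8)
The plan is to split the estimator $(\widetilde{\Sc f})(\lambda)$ into a ``dilated‑signal'' part, handled by the machinery already behind Proposition~\ref{prop:RandomDilations}, plus two noise‑driven pieces whose means vanish and whose variances are controlled directly. Since $\Sc$ is translation invariant we drop the shifts $t_j$ and write $y_j = L_{\tau_j}f + \varepsilon_j$. In the frequency picture $|\widehat{y_j}(\omega)|^2 = |\widehat{L_{\tau_j}f}(\omega)|^2 + 2\,\mathrm{Re}\!\big(\overline{\widehat{L_{\tau_j}f}(\omega)}\,\widehat{\varepsilon_j}(\omega)\big) + |\widehat{\varepsilon_j}(\omega)|^2$, and since Definition~\ref{def:WSCderiv} is linear in $|\widehat{y_j}|^2$, each $(\Sc y_j)^{(n)}(\lambda)$ splits as $(\Sc L_{\tau_j}f)^{(n)}(\lambda) + X_j^{(n)}(\lambda) + Z_j^{(n)}(\lambda)$, with $X_j$ linear and $Z_j$ quadratic in $\varepsilon_j$. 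Writing $\mathcal D_\lambda := \id - \sum_{i=2,4,\dots,k} B_i\eta^i\lambda^i\frac{d^i}{d\lambda^i}$ we then have $(\widetilde{\Sc f})(\lambda) = A(\lambda) + B(\lambda) + C(\lambda)$ with $A = \frac1M\sum_j \mathcal D_\lambda[\Sc L_{\tau_j}f](\lambda)$, $B = \frac1M\sum_j \mathcal D_\lambda[X_j](\lambda)$, and $C = \frac1M\sum_j \mathcal D_\lambda[Z_j](\lambda) - \sigma^2$.

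For $A$ I would repeat the proof of Proposition~\ref{prop:RandomDilations} verbatim: $(\Sc L_{\tau_j}f)(\lambda) = (\Sc f)\big((1-\tau_j)\lambda\big)$, the dilated signal is still in $\Lb^1(\R)$ so the derivatives of Definition~\ref{def:WSCderiv} coincide with ordinary ones, and Lemmas~\ref{lem:WaveletScatteringDeriv}--\ref{lem:WaveletScatteringDerivHighFreq_DiffFunc} give $|\lambda^i(\Sc L_{\tau_j}f)^{(i)}(\lambda)| \le 4\Lambda_i(\lambda)$ (so $R(\lambda)\lesssim 1$ in Lemma~\ref{lem:RD_GeneralUnbiasing_MeanVar}). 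Lemma~\ref{lem:RD_GeneralUnbiasing_MeanVar} with $L = \Sc f$ then yields $|\Ex\,A(\lambda) - (\Sc f)(\lambda)| \lesssim k\Lambda_{k+2}(\lambda)(2E\eta)^{k+2}$ and $\Var\,A(\lambda) \lesssim k^2\LambdaConstant(\lambda)^2/M$. For $C$, the key algebraic fact is $\|\psi_\lambda\|_2 = 1$ together with the observation interval having length $1$: these force $\Ex\,Z_j(\lambda) = \sigma^2\|\psi_\lambda\|_2^2 = \sigma^2$ and $\Ex\,Z_j^{(n)}(\lambda) = \sigma^2\frac{d^n}{d\lambda^n}\|\psi_\lambda\|_2^2 = 0$ for $n\ge 1$, so $\Ex\,C(\lambda) = 0$; moreover $\mathcal D_\lambda Z_j(\lambda)$ is a Gaussian quadratic form (second Wiener chaos) in $\varepsilon_j$, and by Isserlis'/Wick's formula its variance is $\lesssim \sigma^4\|g_\lambda\|_2^2$, where $g_\lambda$ is the inverse Fourier transform of $\mathcal D_\lambda|\widehat\psi_\lambda|^2$; Plancherel and the admissibility bounds underlying the $\Psi_m$ then give $\Var\,C(\lambda) \lesssim \PsiConstant^2\sigma^4/M$.

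The delicate term is $B$. Since $\varepsilon_j$ is independent of $\tau_j$ and has mean zero, $\Ex\,B(\lambda) = 0$; conditionally on $\tau_j$ the quantity $\mathcal D_\lambda X_j(\lambda)$ is a mean‑zero Gaussian (first Wiener chaos), so by the law of total variance $\Var\,B(\lambda) = \frac1M\,\Ex_\tau\!\big[\Var\!\big(\mathcal D_\lambda X_1(\lambda)\mid\tau_1\big)\big]$, and the conditional variance is $\lesssim \sigma^2\,\frac1{2\pi}\int |\widehat{L_{\tau_1}f}(\omega)|^2\,\big|\mathcal D_\lambda|\widehat\psi_\lambda(\omega)|^2\big|^2\,d\omega$ (modulo the restriction of the noise covariance to the unit interval). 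Bounding $\big|\mathcal D_\lambda|\widehat\psi_\lambda|^2\big|^2 \le \big\|\mathcal D_\lambda|\widehat\psi_\lambda|^2\big\|_\infty\,\big|\mathcal D_\lambda|\widehat\psi_\lambda|^2\big|$, estimating the first factor by $\PsiConstant$ through the admissibility of $\psi$, and the remaining integral $\frac1{2\pi}\int|\widehat{L_{\tau_1}f}|^2\big(|\widehat\psi_\lambda|^2 + \sum_i|B_i|\eta^i\lambda^i\,|\partial_\lambda^i|\widehat\psi_\lambda|^2|\big)$ by $\Lambda_0(\lambda) + \LambdaConstant(\lambda)$ (using Lemmas~\ref{lem:WaveletScatteringDeriv}--\ref{lem:WaveletScatteringDerivHighFreq_DiffFunc}, $\|L_{\tau}f\|_1 = \|f\|_1$, and $|B_i|\le E^i$), gives $\Var\,B(\lambda) \lesssim \PsiConstant(\Lambda_0(\lambda)+\LambdaConstant(\lambda))\sigma^2/M$. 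Finally, $A_j$, $B_j$, $C_j$ are pairwise uncorrelated — their cross‑expectations vanish upon integrating out $\varepsilon_j$ (odd Gaussian moments) or $\tau_j$ — so the three variances add; combining the bias of $A$ with $\Var(\widetilde{\Sc f}(\lambda)) \lesssim \frac1M\big(k^2\LambdaConstant(\lambda)^2 + \PsiConstant^2\sigma^4 + \PsiConstant(\Lambda_0(\lambda)+\LambdaConstant(\lambda))\sigma^2\big)$, Chebyshev's inequality (which supplies the confidence $1-1/t^2$) and $\sqrt{a+b+c}\le\sqrt a+\sqrt b+\sqrt c$ produce \eqref{equ:GenMRA_error}. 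I expect the main obstacle to be the cross‑term variance bound: correctly threading the operator $\mathcal D_\lambda$ through the Gaussian integrals, accounting for the noise being white on a bounded interval rather than on $\R$, and choosing the Cauchy--Schwarz split so that the estimate degrades only like $\sqrt{\PsiConstant(\Lambda_0+\LambdaConstant)}\,\sigma$; the pure‑noise term needs the same careful accounting of the $\Psi_m$, $\PsiConstant$ constants but is otherwise a routine Gaussian‑chaos computation.
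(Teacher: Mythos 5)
Your overall architecture is the same as the paper's: drop translations, expand $|\widehat{y}_j|^2$ into signal, cross, and pure-noise pieces, push the unbiasing operator onto $|\widehat{\psi}_\lambda|^2$, handle the dilation piece with Lemmas \ref{lem:RD_GeneralUnbiasing_MeanVar}--\ref{lem:RD_GeneralUnbiasing_DevOfEstimator} exactly as in Proposition \ref{prop:RandomDilations}, show the two noise pieces are mean zero (your observation that $\int \lambda^m \tfrac{d^m}{d\lambda^m}|\widehat{\psi}_\lambda(\omega)|^2\,d\omega = 0$ for $m\geq 1$ is precisely Lemma \ref{lem:VanishingIntegrals}), and finish with Chebyshev. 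The dilation term and the mean computations are fine.

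The gap is in how you bound the variances of the two noise pieces, and it matters because the Proposition's bound must hold for every $\lambda$. All the admissibility constants $\Psi_m$ (hence $\PsiConstant$) are $\Lb^1$-type quantities, $\Psi_m \sim \sum_i \norm{\omega^i (P\psi)^{(i)}}_1$, and the only norm of the unbiased kernel $h_\lambda := A_\lambda |\widehat{\psi}_\lambda|^2$ that is uniform in $\lambda$ is $\frac{1}{2\pi}\norm{h_\lambda}_1 \lesssim \PsiConstant$ (Corollary \ref{cor:WSC_deriv_cons_fdirac}). Your cross-term argument instead needs $\norm{h_\lambda}_\infty \lesssim \PsiConstant$ (to write $|h_\lambda|^2 \leq \norm{h_\lambda}_\infty |h_\lambda|$ after integrating out the noise), and your pure-noise argument needs $\norm{h_\lambda}_2^2 \lesssim \PsiConstant^2$ (via Plancherel); neither holds uniformly, since $\norm{|\widehat{\psi}_\lambda|^2}_\infty = \lambda^{-1}\norm{P\psi}_\infty$ and $\norm{|\widehat{\psi}_\lambda|^2}_2^2 = \lambda^{-1}\norm{P\psi}_2^2$ blow up as $\lambda \to 0$, and admissibility gives no $\Lb^\infty$ or $\Lb^2$ control by $\PsiConstant$ anyway. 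The paper avoids this by never isolating $|h_\lambda|^2$: for the cross term it applies Cauchy--Schwarz \emph{before} integrating out the noise, $|Y_j|^2 \leq \bigl(\tfrac{1}{2\pi}\int |\widehat{f}_{\tau_j}|^2 |h_\lambda|\bigr)\bigl(\tfrac{1}{2\pi}\int |\widehat{\varepsilon}_j|^2 |h_\lambda|\bigr)$, then factors the expectation by independence and uses $\Ex|\widehat{\varepsilon}_j(\omega)|^2 = \sigma^2$ pointwise together with Lemma \ref{lem:WSC_deriv_cons} and Corollary \ref{cor:WSC_deriv_cons_fdirac}, giving $\Var Y_j \lesssim \PsiConstant(\Lambda_0(\lambda)+\LambdaConstant(\lambda))\sigma^2$ (Lemma \ref{lem:GenMRA_CrossTerm}); for the pure-noise term it uses $\Ex\bigl[|\widehat{\varepsilon}(\omega)|^2|\widehat{\varepsilon}(\xi)|^2\bigr] \leq 3\sigma^4$ for all $\omega,\xi$ (Lemma \ref{lem:PS_addnoise}), i.e.\ effectively the boundedness of the observation interval, so the variance is controlled by $\sigma^4\bigl(\tfrac{1}{2\pi}\norm{h_\lambda}_1\bigr)^2 \lesssim \PsiConstant^2\sigma^4$ (Lemma \ref{lem:GenMRA_AddNoise}). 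Your pure-noise step can be repaired along these lines (bound the chaos kernel on $[-\tfrac12,\tfrac12]^2$ by $\norm{\check h_\lambda}_\infty \leq \tfrac{1}{2\pi}\norm{h_\lambda}_1$ rather than by its $\Lb^2(\R)$ norm), but the cross-term bound as you wrote it does not yield $\sqrt{\PsiConstant(\Lambda_0+\LambdaConstant)}\,\sigma$ uniformly in $\lambda$; you need the paper's ordering of Cauchy--Schwarz and the expectation over $\varepsilon$.
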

The following corollary is an immediate consequence of Proposition \ref{prop:DilationAndAdditiveNoise}.
\begin{corollary}
	\label{cor:GenMRAModel_bestk}
	Let the assumptions of Proposition \ref{prop:DilationAndAdditiveNoise} hold, and in addition assume $\Psi_i(2E\eta)^i$ is decreasing for $i\leq k+2$. Then with probability at least $1-1/t^2$
	\begin{align}
	\label{equ:gen_mra_error_bound}
	\left|(\widetilde{\Sc f})(\lambda) - (\Sc f)(\lambda)\right| &\lesssim  k\Psi_{k+2}(2E\eta)^{k+2}\norm{f}_1^2 + \frac{tk}{\sqrt{M}} \left[ k\eta \norm{f}_1^2 +\sigma\norm{f}_1+\sigma^2 \right]\, .
	\end{align} 
\end{corollary}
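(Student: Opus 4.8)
The plan is to obtain \eqref{equ:gen_mra_error_bound} directly from Proposition \ref{prop:DilationAndAdditiveNoise} by collapsing each of the four quantities $\Lambda_{k+2}(\lambda)$, $\LambdaConstant(\lambda)$, $\PsiConstant$, and $\Lambda_0(\lambda)$ in \eqref{equ:GenMRA_error} into expressions involving only $\norm{f}_1$, $\eta$, $\sigma$, and $\Psi_{k+2}$, using the added monotonicity hypothesis. The workhorse is the crude uniform bound $\Lambda_i(\lambda) \leq \Psi_i\norm{f}_1^2$, which is immediate from the definition \eqref{equ:LambdaConstant} (the right-hand side is a minimum, one branch of which is $\Psi_i\norm{f}_1^2$). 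Applying this to the bias term already gives $k\Lambda_{k+2}(\lambda)(2E\eta)^{k+2} \leq k\Psi_{k+2}(2E\eta)^{k+2}\norm{f}_1^2$, the first term of \eqref{equ:gen_mra_error_bound}; this step is routine.

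First I would bound the variance factor $\LambdaConstant(\lambda)$. Writing $a_i := \Psi_i(2E\eta)^i$, the hypothesis asserts $a_0 \geq a_1 \geq \cdots \geq a_{k+2}$, with $a_0 = \Psi_0$ an $O(1)$ constant depending only on $\psi$. Combining $\Lambda_i(\lambda) \leq \Psi_i\norm{f}_1^2$ with the definition \eqref{equ:LambdaConstant} yields $\LambdaConstant(\lambda)^2 \leq \norm{f}_1^4 \sum_{i+j\geq 2} a_i a_j$. The key observation is that every admissible product is already $O(\eta^2)$: if one index is $0$ the other is at least $2$, so by monotonicity $a_i a_j \leq a_0 a_2 = \Psi_0\Psi_2(2E\eta)^2$, whereas if both indices are at least $1$ then $a_i a_j \leq a_1^2 = \Psi_1^2(2E\eta)^2$. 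Since the double sum has $O(k^2)$ terms, this gives $\LambdaConstant(\lambda)^2 \lesssim k^2\eta^2\norm{f}_1^4$, hence $\LambdaConstant(\lambda) \lesssim k\eta\norm{f}_1^2$, so that $\frac{t}{\sqrt M}\, k\LambdaConstant(\lambda) \lesssim \frac{tk^2\eta\norm{f}_1^2}{\sqrt M}$, matching the $k\eta\norm{f}_1^2$ contribution inside the bracket.

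Next I would dispose of the two noise terms. Since $(E\eta)^m \leq (2E\eta)^m$ and $a_m \leq a_0 = \Psi_0 = O(1)$, each summand of $\PsiConstant$ in \eqref{equ:PsiConstant} is $O(1)$, and as there are $O(k)$ of them we get $\PsiConstant \lesssim k$; thus $\frac{t}{\sqrt M}\PsiConstant\sigma^2 \lesssim \frac{tk\sigma^2}{\sqrt M}$, which is the $\sigma^2$ contribution. For the cross term I combine $\PsiConstant \lesssim k$ with $\Lambda_0(\lambda) \leq \Psi_0\norm{f}_1^2$ and the just-proved $\LambdaConstant(\lambda) \lesssim k\eta\norm{f}_1^2$ to get $\PsiConstant(\Lambda_0(\lambda)+\LambdaConstant(\lambda)) \lesssim k^2\norm{f}_1^2$ (using that $\eta$ is bounded so $k\eta$ is absorbed into $k$), whence $\sqrt{\PsiConstant(\Lambda_0+\LambdaConstant)}\,\sigma \lesssim k\norm{f}_1\sigma$ and $\frac{t}{\sqrt M}\sqrt{\PsiConstant(\Lambda_0+\LambdaConstant)}\,\sigma \lesssim \frac{tk\sigma\norm{f}_1}{\sqrt M}$. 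Collecting the three $\frac{t}{\sqrt M}$ contributions under the common prefactor $\frac{tk}{\sqrt M}$ and the bracket $[\,k\eta\norm{f}_1^2 + \sigma\norm{f}_1 + \sigma^2\,]$ then produces \eqref{equ:gen_mra_error_bound}.

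The one genuinely delicate point is the variance estimate in the second paragraph: the double sum defining $\LambdaConstant^2$ contains many mixed-order terms, including the $i=0$ or $j=0$ terms whose factors are individually only $O(1)$, and it is exactly the monotonicity hypothesis that forces each admissible product $a_i a_j$ with $i+j\geq 2$ down to $O(\eta^2)$. Everything else is bookkeeping via $\Lambda_i \leq \Psi_i\norm{f}_1^2$ and term-counting. I would take particular care with the constant-tracking in the cross term, as that is the only place where boundedness of $\eta$, rather than its smallness, is actually invoked.
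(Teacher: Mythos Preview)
Your proposal is correct and follows precisely the approach the paper intends: the paper states only that the corollary is ``an immediate consequence of Proposition~\ref{prop:DilationAndAdditiveNoise}'' and remarks that the resulting bound decomposes into the dilation error of Corollary~\ref{cor:RandomDilationWithBestk} plus the additive-noise error of Proposition~\ref{prop:AddNoiseWSC}. Your reduction---using $\Lambda_i(\lambda)\le\Psi_i\norm{f}_1^2$ together with the monotonicity of $a_i=\Psi_i(2E\eta)^i$ to force every admissible product in $\LambdaConstant(\lambda)^2$ down to $O(\eta^2)$, and the same monotonicity to get $\PsiConstant\lesssim k$---is exactly how those earlier results are obtained, so your argument fills in the omitted details faithfully.
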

We remark that there are two components to the estimation error bounded by the right-hand side of (\ref{equ:gen_mra_error_bound}): the first two terms are the error due to dilation, as in Corollary \ref{cor:RandomDilationWithBestk} of Proposition \ref{prop:RandomDilations}, and the last two terms are the error due to additive noise, as given in Proposition \ref{prop:AddNoiseWSC}. Thus the wavelet invariant representation allows for a decomposition of the error of the \edit{noisy dilation MRA} model into the sum of the errors of the random dilation model and the additive noise model. This is possible because the representation inherits the differentiability of the wavelet, and is not possible when $P\psi \notin \Cb^k (\R)$, in which case the dilation unbiasing procedure has a more complicated effect on the additive noise. A result equivalent to Proposition \ref{prop:DilationAndAdditiveNoise} cannot be made for the power spectrum, because the nonlinear unbiasing procedure of Proposition \ref{prop:RandomDilations_PS} cannot be applied to the power spectra of signals from the \edit{noisy dilation MRA} corruption model, since they are not differentiable in the presence of additive noise.

\begin{proof}[Proof of Proposition \ref{prop:DilationAndAdditiveNoise}]
Since $\Sc f$ is a translation invariant representation, we can ignore the translation factors $\{t_j\}_{j=1}^M$ and consider the model $y_j = f_{\tau_j} + \varepsilon_j$.	For notational convenience, we define the following order $k$ derivative ``unbiasing" operator:
\begin{align}
\label{equ:unbiasing_operator}
A_\lambda g(\lambda):= g(\lambda) -B_2\eta^2\lambda^2\frac{d}{d\lambda^2}g(\lambda) - \ldots - B_k\eta^k\lambda^k\frac{d}{d\lambda^k}g(\lambda)
\end{align} 
which is defined on any function of $\lambda$, so that we can express our estimator by
\begin{align*}
(\widetilde{\Sc f})(\lambda) &= 
\frac{1}{M}\sum_{j=1}^M \left[\frac{1}{2\pi} \int |\widehat{y}_j(\omega)|^2 A_\lambda |\widehat{\psi}_\lambda(\omega)|^2\ d\omega\right] - \sigma^2 \\
&=\frac{1}{M}\sum_{j=1}^M \left[\frac{1}{2\pi} \int \left(|\widehat{f}_{\tau_j}(\omega)|^2+\widehat{f}_{\tau_j}(\omega)\overline{\widehat{\varepsilon}_j}(\omega)+\overline{\widehat{f}_{\tau_j}}(\omega)\widehat{\varepsilon}_j(\omega)+|\widehat{\varepsilon}_j(\omega)|^2\right) A_\lambda |\widehat{\psi}_\lambda(\omega)|^2\ d\omega \right] - \sigma^2 \, .
\end{align*}

We can thus decompose the error as follows:
\begin{align*}
&|(\widetilde{\Sc f})(\lambda) - (\Sc f)(\lambda)| \leq \underbrace{\left| \frac{1}{M}\sum_{j=1}^M \frac{1}{2\pi} \int \left(\widehat{f}_{\tau_j}(\omega)\overline{\widehat{\varepsilon}_j}(\omega)+\overline{\widehat{f}_{\tau_j}}(\omega)\widehat{\varepsilon}_j(\omega)\right) A_\lambda |\widehat{\psi}_\lambda(\omega)|^2\ d\omega\right|}_{\text{Cross Term Error}} \\
&\quad+\underbrace{\left| \frac{1}{M}\sum_{j=1}^M \frac{1}{2\pi} \int |\widehat{f}_{\tau_j}(\omega)|^2 A_\lambda |\widehat{\psi}_\lambda(\omega)|^2\ d\omega - (\Sc f)(\lambda)\right|}_{\text{Dilation Error}} + \underbrace{\left|  \frac{1}{M}\sum_{j=1}^M \frac{1}{2\pi} \int |\widehat{\varepsilon}_j(\omega)|^2 A_\lambda |\widehat{\psi}_\lambda(\omega)|^2\ d\omega- \sigma^2\right|}_{\text{Additive Noise Error}} \, .
\end{align*}

To bound the above terms we utilize the following two Lemmas, which are proved in Appendix \ref{app:gen_mra}. 

\begin{restatable}{lemma}{lemGenMRAAddNoise}
	\label{lem:GenMRA_AddNoise}
	Let the notation and assumptions of Proposition \ref{prop:DilationAndAdditiveNoise} hold, and let $A_\lambda$ be the operator defined in (\ref{equ:unbiasing_operator}).
	Then with probability at least $1-1/t^2$
	\begin{align*}
	\left|  \frac{1}{M}\sum_{j=1}^M \frac{1}{2\pi} \int |\widehat{\varepsilon}_j(\omega)|^2 A_\lambda |\widehat{\psi}_\lambda(\omega)|^2\ d\omega- \sigma^2\right| \leq \frac{2t\sqrt{k}\PsiConstant\sigma^2}{\sqrt{M}}\, .
	\end{align*}
\end{restatable}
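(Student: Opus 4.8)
The plan is to treat the quantity as an average of i.i.d.\ quadratic functionals of Gaussian white noise, compute its mean exactly, bound its variance, and finish with Chebyshev's inequality. Write $X_j(\lambda):=\frac{1}{2\pi}\int|\widehat{\varepsilon}_j(\omega)|^2\,A_\lambda|\widehat{\psi}_\lambda(\omega)|^2\,d\omega$, so the quantity to control is $|\frac{1}{M}\sum_{j=1}^M X_j(\lambda)-\sigma^2|$ and the $X_j(\lambda)$ are i.i.d. First I would note that $(k+2)$-admissibility of $\psi$ makes $|\widehat{\psi}_\lambda(\omega)|^2$ a $\Cb^{k+2}$ function of $\lambda$, so $A_\lambda$ is meaningful, and that $\E|\widehat{\varepsilon}_j(\omega)|^2=\sigma^2$ for every $\omega$ by the It\^o isometry. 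Exchanging the $\omega$-integral with the $\lambda$-derivatives making up $A_\lambda$ (again by admissibility) and using $\frac{1}{2\pi}\int|\widehat{\psi}_\lambda(\omega)|^2\,d\omega=\norm{\psi_\lambda}_2^2=1$ for every $\lambda$, so that each derivative term kills a constant, one gets $\E X_j(\lambda)=\sigma^2$ exactly; hence $\frac{1}{M}\sum_j X_j(\lambda)-\sigma^2=\frac{1}{M}\sum_j\big(X_j(\lambda)-\E X_j(\lambda)\big)$ is a centered i.i.d.\ average.

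Next I would bound $\Var X_1(\lambda)$. By Fubini, $\Var X_1(\lambda)=\frac{1}{(2\pi)^2}\iint \mathrm{Cov}\big(|\widehat{\varepsilon}_1(\omega_1)|^2,|\widehat{\varepsilon}_1(\omega_2)|^2\big)\,g_\lambda(\omega_1)g_\lambda(\omega_2)\,d\omega_1\,d\omega_2$, where $g_\lambda:=A_\lambda|\widehat{\psi}_\lambda|^2$ is real-valued. Since $\big(\widehat{\varepsilon}_1(\omega_1),\widehat{\varepsilon}_1(\omega_2)\big)$ is a mean-zero complex Gaussian pair, the Wick/Isserlis identity gives $\mathrm{Cov}\big(|\widehat{\varepsilon}_1(\omega_1)|^2,|\widehat{\varepsilon}_1(\omega_2)|^2\big)=|\E\,\widehat{\varepsilon}_1(\omega_1)\overline{\widehat{\varepsilon}_1(\omega_2)}|^2+|\E\,\widehat{\varepsilon}_1(\omega_1)\widehat{\varepsilon}_1(\omega_2)|^2\le 2\sigma^4$, the last bound by Cauchy--Schwarz together with $\E|\widehat{\varepsilon}_1(\omega)|^2=\sigma^2$. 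Therefore $\Var X_1(\lambda)\le 2\sigma^4\big(\frac{1}{2\pi}\int|g_\lambda(\omega)|\,d\omega\big)^2$, and it remains only to check $\frac{1}{2\pi}\int|g_\lambda(\omega)|\,d\omega\le\PsiConstant$. By the triangle inequality this is at most $\norm{\psi_\lambda}_2^2+\sum_{m=2,4,\ldots,k}|B_m|\eta^m\cdot\frac{1}{2\pi}\int\big|\lambda^m\partial_\lambda^m|\widehat{\psi}_\lambda(\omega)|^2\big|\,d\omega$; using $\Psi_0=\norm{\psi_\lambda}_2^2=1$, the $\lambda$-independent estimate $\frac{1}{2\pi}\int|\lambda^m\partial_\lambda^m|\widehat{\psi}_\lambda(\omega)|^2|\,d\omega\le\Psi_m$ (the defining property of $\Psi_m$ in \eqref{equ:Psik}, which is exactly the content of the proof of Lemma~\ref{lem:WaveletScatteringDeriv} with $|\widehat{f}|$ replaced by the constant $1$), and $|B_m|\le E^m$ (take $j=0$ in \eqref{equ:E}), this sum is at most $\sum_{m=0,2,\ldots,k}\Psi_m(E\eta)^m=\PsiConstant$. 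Hence $\Var X_1(\lambda)\le 2\PsiConstant^2\sigma^4$.

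Finally, Chebyshev's inequality for the centered i.i.d.\ average gives, with probability at least $1-1/t^2$, $|\frac{1}{M}\sum_{j=1}^M X_j(\lambda)-\sigma^2|\le t\sqrt{\Var X_1(\lambda)/M}\le \sqrt{2}\,t\PsiConstant\sigma^2/\sqrt{M}\le 2t\sqrt{k}\,\PsiConstant\sigma^2/\sqrt{M}$ (using $\sqrt{2}\le 2\sqrt{k}$ for $k\ge 1$), which is the assertion. If one prefers the $\sqrt{k}$ to appear structurally rather than as slack, one can instead split $g_\lambda$ into its $\le k/2+1$ summands, run the identical variance argument on each resulting quadratic functional, and recombine via $\Var(\sum)\le(\#\,\text{summands})\sum\Var$. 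The crux of the argument, and the only part that is more than bookkeeping with the definitions of $A_\lambda$, $\Psi_m$, $E$ and $\PsiConstant$, is the variance estimate: recognizing $X_j(\lambda)$ as a second-order Gaussian chaos and controlling the cross-frequency covariance $\mathrm{Cov}(|\widehat{\varepsilon}_1(\omega_1)|^2,|\widehat{\varepsilon}_1(\omega_2)|^2)$; once this is identified as a Wick-identity computation, no genuine obstacle remains.
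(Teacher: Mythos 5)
Your proof is correct and follows the same overall architecture as the paper's (exact mean, variance bound, Chebyshev), but the two sub-steps are executed differently. For the mean, you differentiate under the $\omega$-integral and use that $\|\psi_\lambda\|_2^2$ is constant in $\lambda$, so each derivative term vanishes; the paper instead proves this vanishing directly, without any interchange, via Lemma \ref{lem:VanishingIntegrals} (an induction using the identity $\lambda^k\partial_\lambda^k g_\lambda = \partial_\omega(-\omega\lambda^{k-1}\partial_\lambda^{k-1}g_\lambda)-(k-1)\lambda^{k-1}\partial_\lambda^{k-1}g_\lambda$ and integration by parts, with the boundary terms controlled by admissibility). Your interchange is justifiable under $(k+2)$-admissibility (the expansion $\lambda^m\partial_\lambda^m g_\lambda(\omega)=\sum_i C_{m,i}\,\omega^i g_\lambda^{(i)}(\omega)$ with $\omega^i(P\psi)^{(i)}\in\Lb^1$ supplies a local dominating envelope), but you assert it rather than verify it — this is the one place where the paper is more careful than you are, and it is precisely why Lemma \ref{lem:VanishingIntegrals} exists. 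For the variance, you center first and compute $\mathrm{Cov}(|\widehat{\varepsilon}(\omega_1)|^2,|\widehat{\varepsilon}(\omega_2)|^2)\le 2\sigma^4$ by Wick/Isserlis (this is exactly the content of the paper's Proposition \ref{prop:MostGenFourthMoment}, which you are implicitly re-deriving using Gaussianity of the white noise), and then bound $\frac{1}{2\pi}\int|A_\lambda|\widehat{\psi}_\lambda|^2|\,d\omega\le\PsiConstant$ by the triangle inequality together with Corollary \ref{cor:WSC_deriv_cons_fdirac} and $|B_m|\le E^m$. The paper instead bounds the variance by the raw second moment, splits $A_\lambda$ into its $O(k)$ summands via Cauchy--Schwarz, and uses $\Ex[|\widehat{\varepsilon}(\omega)|^2|\widehat{\varepsilon}(\xi)|^2]\le 3\sigma^4$, arriving at $\sqrt{3k}\,\PsiConstant\sigma^2$ versus your $\sqrt{2}\,\PsiConstant\sigma^2$; both are absorbed into the stated $2\sqrt{k}\,\PsiConstant\sigma^2$ bound, so your route is in fact marginally sharper (the $\sqrt{k}$ in the statement is an artifact of the paper's Cauchy--Schwarz splitting, which your argument avoids). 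In short: no gap beyond the unjustified (but repairable) derivative--integral interchange, and a cleaner variance estimate than the paper's.
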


\begin{restatable}{lemma}{lemGenMRACrossTerm}
	\label{lem:GenMRA_CrossTerm}
	Let the notation and assumptions of Proposition \ref{prop:DilationAndAdditiveNoise} hold, and let $A_\lambda$ be the operator defined in (\ref{equ:unbiasing_operator}).
	Then with probability at least $1-1/t^2$
	\begin{align*}
	\left| \frac{1}{M}\sum_{j=1}^M \frac{1}{2\pi} \int \left(\widehat{f}_{\tau_j}(\omega)\overline{\widehat{\varepsilon}_j}(\omega)+\overline{\widehat{f}_{\tau_j}}(\omega)\widehat{\varepsilon}_j(\omega)\right) A_\lambda |\widehat{\psi}_\lambda(\omega)|^2\ d\omega\right| &\lesssim \frac{t}{\sqrt{M}}\sqrt{\PsiConstant(\Lambda_0(\lambda)+\LambdaConstant(\lambda))}\sigma \,.
	\end{align*}
\end{restatable}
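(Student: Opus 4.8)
The plan is to treat the cross term as an empirical average $\frac{1}{M}\sum_{j=1}^M Z_j$ of independent, mean-zero random variables and to bound its deviation by Chebyshev's inequality, so that the argument reduces to estimating $\Var(Z_1)$. Here $Z_j:=\frac{1}{2\pi}\int\bigl(\widehat{f}_{\tau_j}(\omega)\overline{\widehat{\varepsilon}_j(\omega)}+\overline{\widehat{f}_{\tau_j}(\omega)}\widehat{\varepsilon}_j(\omega)\bigr)\,A_\lambda|\widehat{\psi}_\lambda(\omega)|^2\,d\omega$. Since $f$ and each $\varepsilon_j$ are real, $\mathrm{Re}(\widehat{f}_{\tau_j}(\omega)\overline{\widehat{\varepsilon}_j(\omega)})$ is even in $\omega$, so I would first replace $|\widehat{\psi}_\lambda(\omega)|^2$ by its even part $\widetilde{g}_\lambda(\omega):=\tfrac12(|\widehat{\psi}_\lambda(\omega)|^2+|\widehat{\psi}_\lambda(-\omega)|^2)$, which changes neither $Z_j$ nor the estimator; set $g_\lambda:=A_\lambda\widetilde{g}_\lambda$ (real and even) and let $G_\lambda$ be its inverse Fourier transform (also real and even). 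Using that $g_\lambda$ is real, Parseval's theorem, and that the two summands in $Z_j$ are complex conjugates, one gets $Z_j=2\int f_{\tau_j}(x)\,(\varepsilon_j\ast G_\lambda)(x)\,dx$; writing $\varepsilon_j=dB^j$ and invoking the stochastic Fubini theorem turns this into $Z_j=2\int_{-1/2}^{1/2}(f_{\tau_j}\ast G_\lambda)(y)\,dB^j_y$, which conditionally on $\tau_j$ is a centered Gaussian. Hence $\E Z_j=0$, and by the Itô isometry together with $\mathrm{supp}(B^j)\subseteq[-\tfrac12,\tfrac12]$,
\[
\E[Z_j^2\mid\tau_j]=4\sigma^2\int_{-1/2}^{1/2}(f_{\tau_j}\ast G_\lambda)(y)^2\,dy\le 4\sigma^2\norm{f_{\tau_j}\ast G_\lambda}_\infty^2 .
\]

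The heart of the proof is then to show $\norm{f_{\tau_j}\ast G_\lambda}_\infty\lesssim\sqrt{\PsiConstant\,(\Lambda_0(\lambda)+\LambdaConstant(\lambda))}$, uniformly in $\tau_j$. Since $\widehat{f_{\tau_j}\ast G_\lambda}=\widehat{f}_{\tau_j}\,g_\lambda$, I would bound
\[
\norm{f_{\tau_j}\ast G_\lambda}_\infty\le\frac{1}{2\pi}\int|\widehat{f}_{\tau_j}(\omega)|\,|g_\lambda(\omega)|\,d\omega\le\frac{1}{2\pi}\Bigl(\int|\widehat{f}_{\tau_j}|^2|g_\lambda|\Bigr)^{1/2}\Bigl(\int|g_\lambda|\Bigr)^{1/2},
\]
expand $g_\lambda=\sum_{m=0,2,\dots,k}c_m\eta^m\lambda^m\tfrac{d^m}{d\lambda^m}\widetilde{g}_\lambda$ with $c_0=1$, $c_m=-B_m$ and $|c_m|\le E^m$ (the bound $|B_m|\le E^m$ coming from (\ref{equ:E}) with $j=0$), and estimate the two factors term by term. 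For the first factor I would apply the estimates behind Lemmas \ref{lem:WaveletScatteringDeriv} and \ref{lem:WaveletScatteringDerivHighFreq_DiffFunc} to $f_{\tau_j}$, with the modulus kept inside the integral as in their proofs and using $\norm{f_{\tau_j}}_1=\norm{f}_1$, $\norm{f'_{\tau_j}}_1\le2\norm{f'}_1$, to get $\tfrac{1}{2\pi}\int|\widehat{f}_{\tau_j}(\omega)|^2\,|\lambda^m\tfrac{d^m}{d\lambda^m}\widetilde{g}_\lambda(\omega)|\,d\omega\lesssim\Lambda_m(\lambda)$ for each $m$; summing in $m$ and absorbing the $m\ge2$ terms via $\LambdaConstant(\lambda)^2\ge\Lambda_m(\lambda)^2(2E\eta)^{2m}$ gives $\int|\widehat{f}_{\tau_j}|^2|g_\lambda|\lesssim\Lambda_0(\lambda)+\LambdaConstant(\lambda)$. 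For the second factor, the same estimates with $\widehat{f}\equiv1$ — i.e.\ the definition of $\Psi_m$ in (\ref{equ:Psik}) — give $\tfrac{1}{2\pi}\int|\lambda^m\tfrac{d^m}{d\lambda^m}\widetilde{g}_\lambda(\omega)|\,d\omega\le\Psi_m$, hence $\int|g_\lambda|\le2\pi\sum_m(E\eta)^m\Psi_m=2\pi\PsiConstant$. Multiplying yields the claimed bound on $\norm{f_{\tau_j}\ast G_\lambda}_\infty$, so $\E Z_1^2=\E\,\E[Z_1^2\mid\tau_1]\lesssim\sigma^2\PsiConstant\,(\Lambda_0(\lambda)+\LambdaConstant(\lambda))$.

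Finally, since the $Z_j$ are independent and mean-zero, $\Var(\tfrac1M\sum_jZ_j)=\tfrac1M\E Z_1^2\lesssim\tfrac{\sigma^2}{M}\PsiConstant\,(\Lambda_0(\lambda)+\LambdaConstant(\lambda))$, and Chebyshev's inequality gives the asserted bound with probability at least $1-1/t^2$. I expect the variance estimate to be the only genuine obstacle: the crude bound $\norm{f_{\tau_j}\ast G_\lambda}_\infty\le\norm{f}_1\norm{G_\lambda}_\infty$ only delivers $\norm{f}_1\PsiConstant\sigma$, which is weaker than the target $\sqrt{\PsiConstant(\Lambda_0+\LambdaConstant)}\sigma$, so one is forced into the Cauchy--Schwarz split of $|g_\lambda|$ in which exactly one factor carries $|\widehat{f}_{\tau_j}|^2$ (producing the $\Lambda_m$'s via the scattering-derivative lemmas) and the other does not (producing the $\Psi_m$'s), plus the bookkeeping of the $m$-series using $|B_m|\le E^m$ and the defining structures of $\LambdaConstant$ and $\PsiConstant$. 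The Parseval/Fubini reduction and the Itô isometry are routine.
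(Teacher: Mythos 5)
Your proposal is correct and reaches the paper's bound, but it routes the key variance estimate differently. The paper works entirely in the frequency domain: it splits the cross term into $Y_j+\overline{Y_j}$ with $Y_j=\frac{1}{2\pi}\int\overline{\widehat{f}_{\tau_j}}\widehat{\varepsilon}_j\,A_\lambda|\widehat{\psi}_\lambda|^2\,d\omega$, shows $\E Y_j=0$ by conditioning on $\tau_j$, applies Cauchy--Schwarz to the product $\widehat{f}_{\tau_j}\cdot\widehat{\varepsilon}_j$ so that $|Y_j|^2$ is bounded by an $|\widehat{f}_{\tau_j}|^2$-weighted factor times an $|\widehat{\varepsilon}_j|^2$-weighted factor, and then uses independence of $\tau_j$ and $\varepsilon_j$ together with $\E|\widehat{\varepsilon}_j(\omega)|^2=\sigma^2$ (Lemma \ref{lem:PS_addnoise}) to factor the expectation; the two factors are controlled exactly as in your argument, via Lemma \ref{lem:WSC_deriv_cons} (giving $\Lambda_0+\LambdaConstant$, using $\norm{f_{\tau_j}}_1=\norm{f}_1$, $\norm{f'_{\tau_j}}_1\le 2\norm{f'}_1$) and Corollary \ref{cor:WSC_deriv_cons_fdirac} (giving $\PsiConstant$). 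You instead condition on $\tau_j$, pass to the spatial domain by Parseval and stochastic Fubini, and compute the conditional variance exactly by the It\^{o} isometry, then convert the resulting $\Lb^2$ norm on the unit interval into $\norm{f_{\tau_j}\ast G_\lambda}_\infty^2$ and apply Cauchy--Schwarz inside the single weight $|g_\lambda|$; this buys an exact conditional variance and makes transparent where the noise support enters (your $\Lb^\infty$ step uses that the interval has length $1$, which in the paper's route is encoded in $\E|\widehat{\varepsilon}_j(\omega)|^2=\sigma^2$), at the cost of the symmetrization of $|\widehat{\psi}_\lambda|^2$, the Parseval/Fubini reduction, and the (harmless) check that the $\Psi_m$, $\Theta_m$ bounds apply to the reflected wavelet, none of which the paper needs since it never leaves the frequency domain. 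The endgame --- independent mean-zero terms, variance $\lesssim\sigma^2\PsiConstant(\Lambda_0(\lambda)+\LambdaConstant(\lambda))$, Chebyshev --- is identical, and your absorption of the $m\ge 2$ terms into $\LambdaConstant$ via $\LambdaConstant^2\ge\Lambda_m^2(2E\eta)^{2m}$ matches the paper's use of the same constants.
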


Applying Proposition \ref{prop:RandomDilations} to bound the dilation error, Lemma \ref{lem:GenMRA_AddNoise} to bound the additive noise error, and Lemma \ref{lem:GenMRA_CrossTerm} to bound the cross term error gives (\ref{equ:GenMRA_error}). 

\end{proof}

\subsection{Simulation results for \edit{noisy dilation MRA}}
\label{sec:GenMRANoiseSim}

We once again consider the Gabor atoms of varying frequency introduced in Section \ref{sec:DilationNoiseSims}, and compare the $\Lb^2$ error of estimating the power spectrum by (1) averaging the power spectra of the noisy signals, and applying additive noise unbiasing; this is the zero order power spectrum method (PS $k=0$), defined in Proposition \ref{prop:AddNoisePS}, and (2) by approximating the wavelet invariants by the estimators given in Proposition \ref{prop:DilationAndAdditiveNoise} for $k=0,2,4$, \edit{and then applying the optimization procedure described in Section \ref{sec:optimization}}; we refer to these methods as WSC $k=i$ for $i=0,2,4$. We emphazise that for the \edit{noisy dilation MRA} model, it is impossible to define higher order methods for the power spectrum.

We first consider the errors obtained given oracle knowledge of the noise moments, both additive and dilation. Results are shown in Figure \ref{fig:GenMRAModelOracle} for all parameter combinations resulting from $\sigma = 2^{-4}, 2^{-3}$ \edit{(giving $\snr=2.2, 0.56$)} and $\eta = 0.06, 0.12$. The horizontal axis shows $\log_2(M)$ and the vertical axis shows $\log_2(\text{Error})$; for each value of $M$, the error was calculated for 10 independent simulations and then averaged. For all simulations $\tau$ was given a uniform distribution, a challenging regime for dilations, and the sample size ranged over $16 \leq M \leq 131,072$. For the medium and high frequency signals, for large enough $M$, WSC $k=2$ and WSC $k=4$ have significantly smaller error than the order zero estimators, indicating that the nonlinear unbiasing procedure of Proposition \ref{prop:DilationAndAdditiveNoise} contributes a definitive advantage. For the high frequency signal and large $M$, the error using WSC $k=4$ is decreased by a factor of about 3 from the PS $k=0$ error. For small dilations ($\eta = 0.06$), there is not much of a difference in performance between WSC $k=2$ and WSC $k=4$, but the gap between these estimators widens for large dilations ($\eta=0.12$), as the fourth order correction becomes more important. For the low frequency signal under small dilations, PS $k=0$ achieves the smallest error for large $M$. However when $M$ is small or the dilations are large, the WSC estimators have the advantage for the low frequency signal as well, and WSC $k=4$ is once again the best estimator for large $M$.

\begin{figure}
	\centering
	\begin{subfigure}[b]{0.32\textwidth}
		\centering
		\includegraphics[width=.85\textwidth]{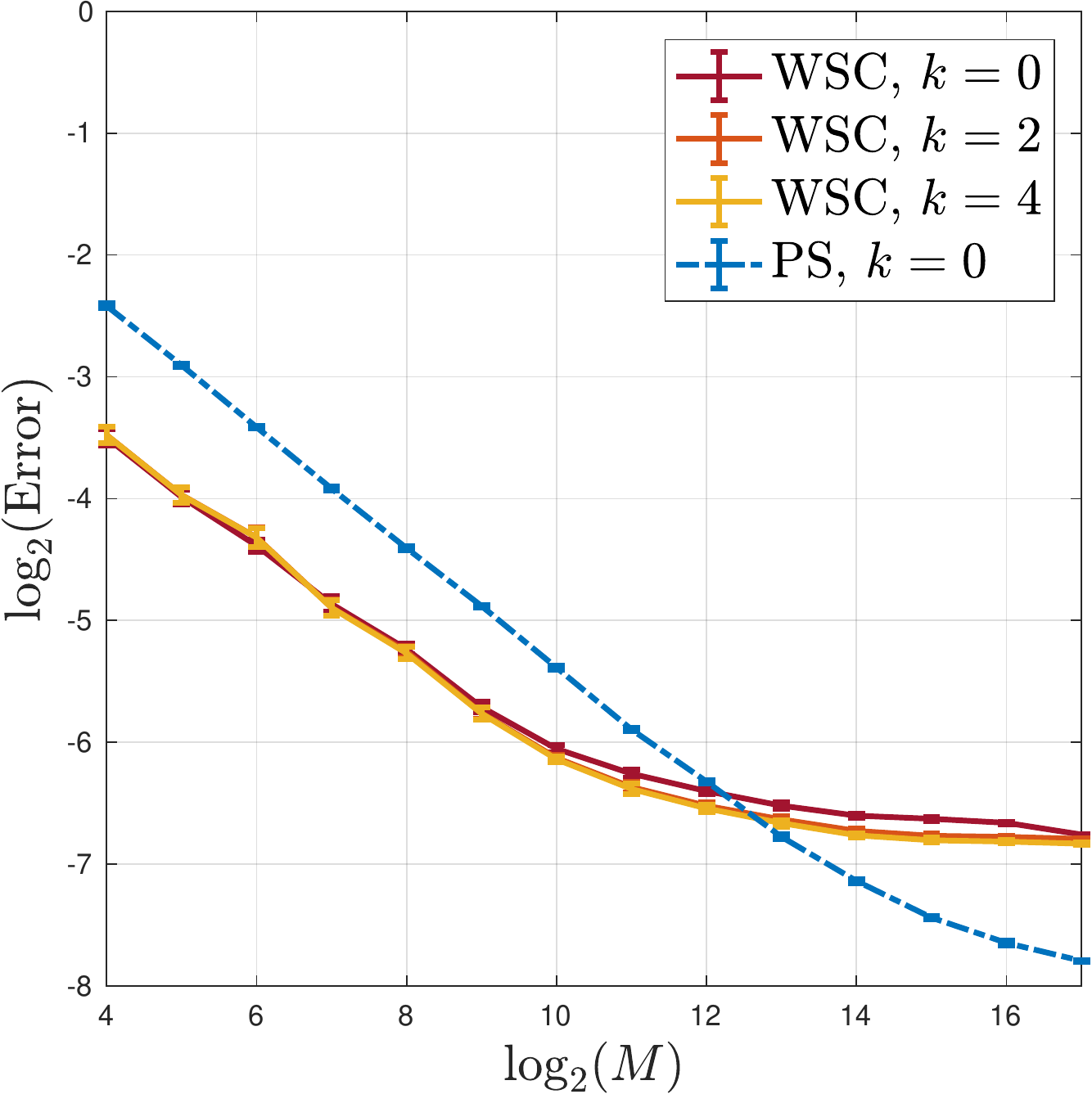}
		\caption{$f_1,\edit{\snr=2.2}, \eta=0.06$}
		\vspace*{.1cm}
		\label{fig:sim_O_3_low}
	\end{subfigure}
	\hfill
	\begin{subfigure}[b]{0.32\textwidth}
		\centering
		\includegraphics[width=.85\textwidth]{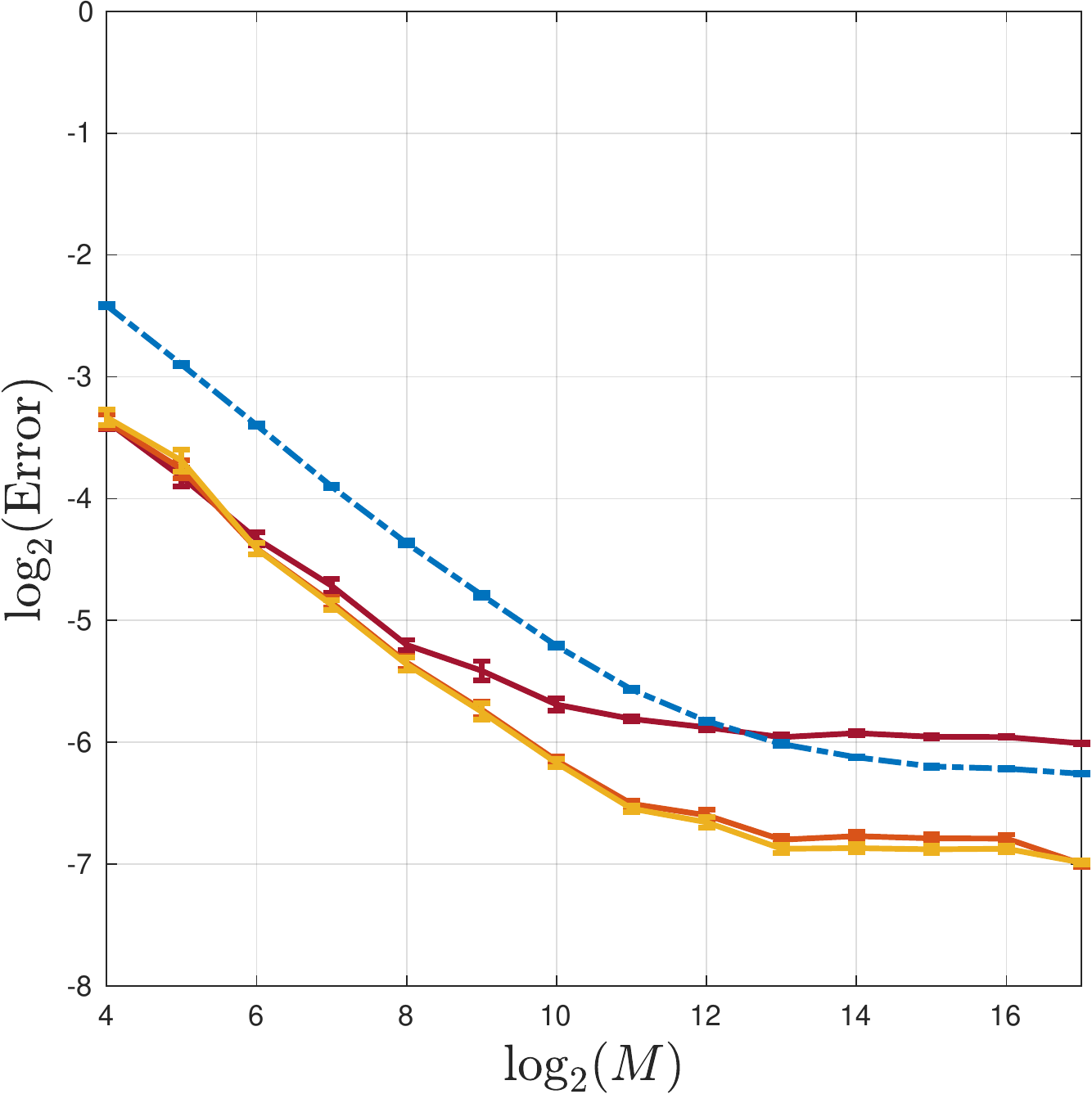}
		\caption{$f_2,\edit{\snr=2.2}, \eta=0.06$}
		\vspace*{.1cm}
		\label{fig:sim_O_3_med}
	\end{subfigure}
	\hfill
	\begin{subfigure}[b]{0.32\textwidth}
		\centering
		\includegraphics[width=.85\textwidth]{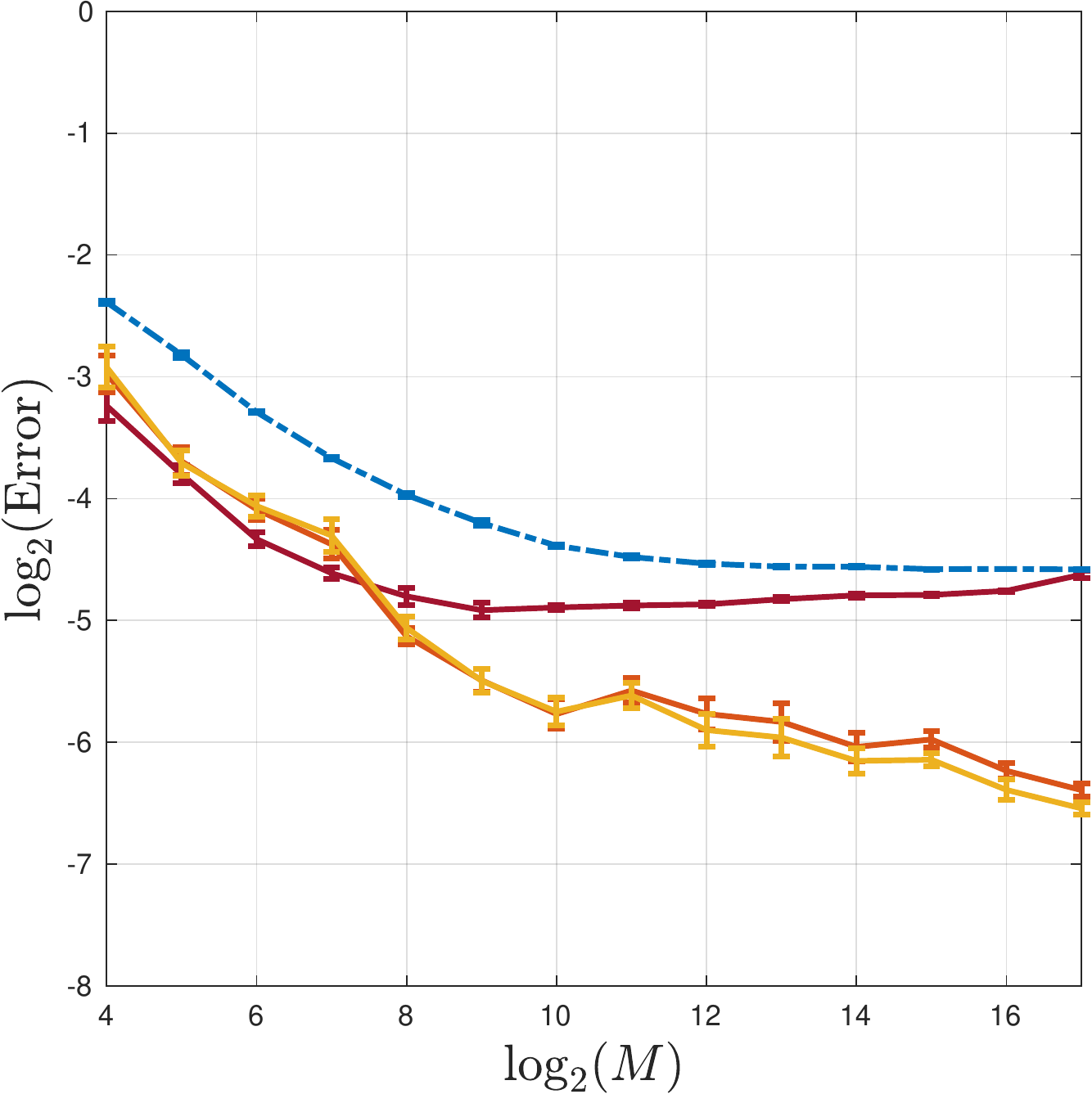}
		\caption{$f_3,\edit{\snr=2.2}, \eta=0.06$}
		\vspace*{.1cm}
		\label{fig:sim_O_3_high}
	\end{subfigure}
	\begin{subfigure}[b]{0.32\textwidth}
		\centering
		\includegraphics[width=.85\textwidth]{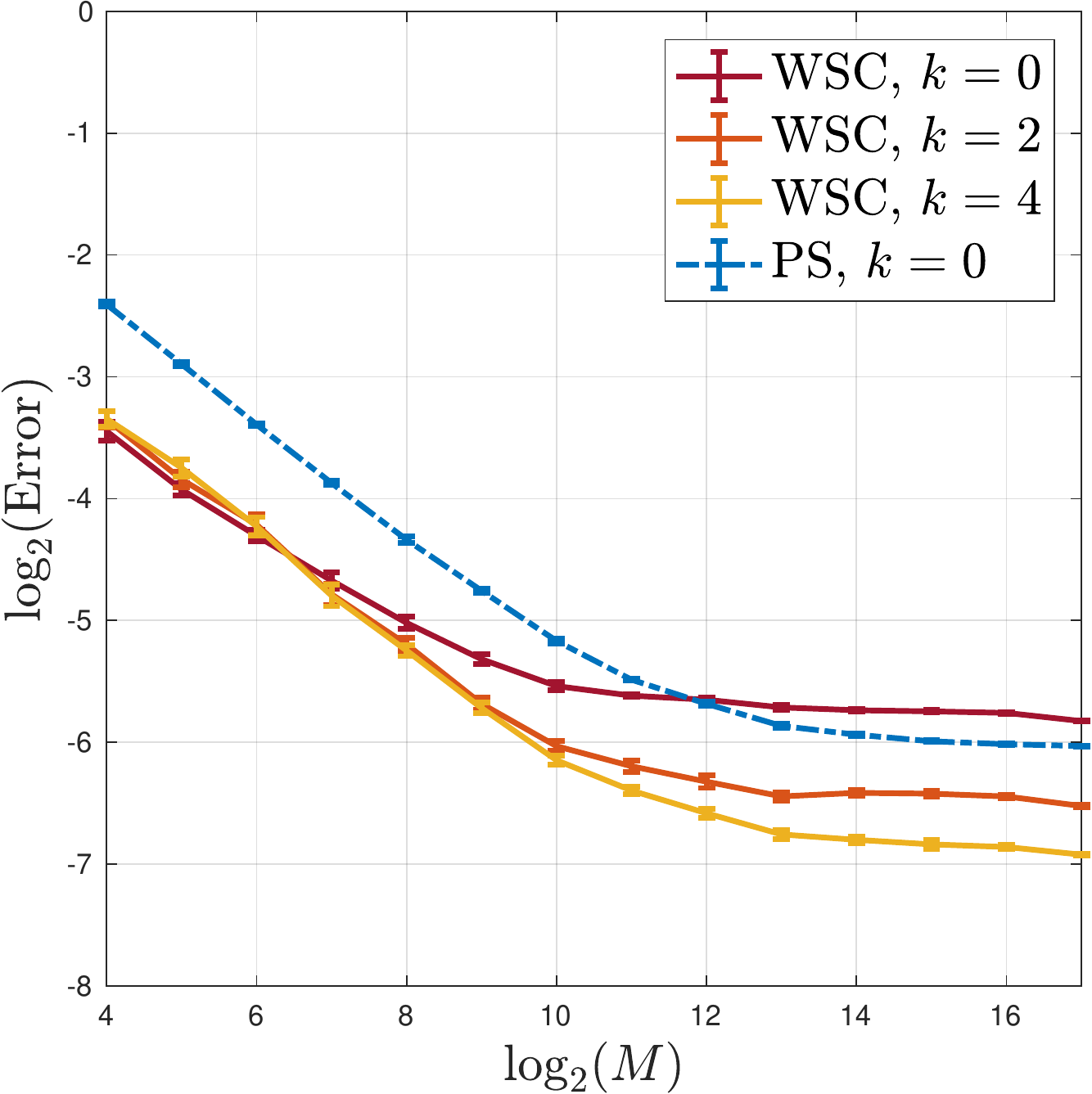}https://www.overleaf.com/project/5c9d8c94a843b632cccc13d1
		\caption{$f_1,\edit{\snr=2.2}, \eta=0.12$}
		\vspace*{.1cm}
		\label{fig:sim_O_4_low}
	\end{subfigure}
	\hfill
	\begin{subfigure}[b]{0.32\textwidth}
		\centering
		\includegraphics[width=.85\textwidth]{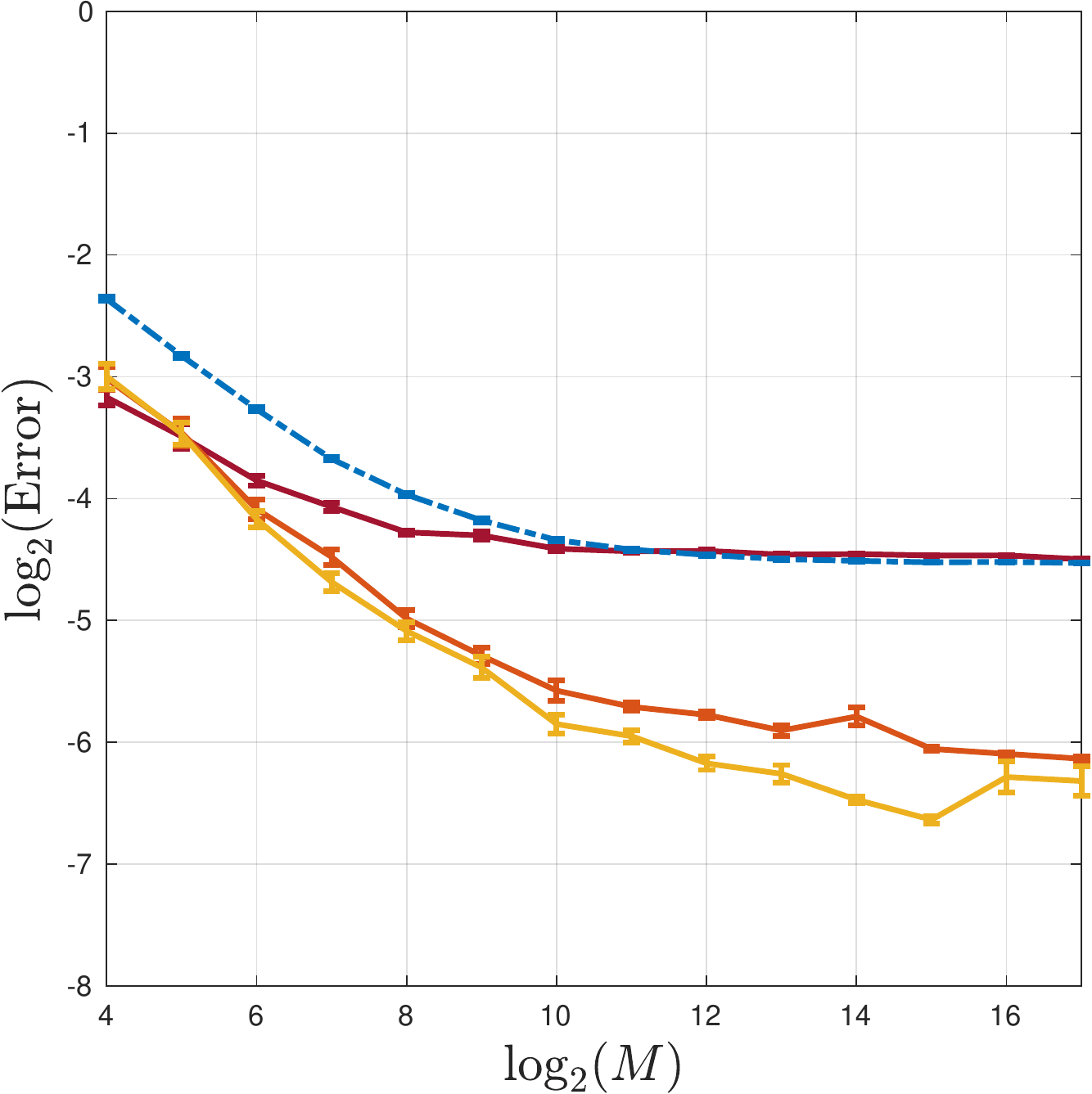}
		\caption{$f_2,\edit{\snr=2.2}, \eta=0.12$}
		\vspace*{.1cm}
		\label{fig:sim_O_4_med}
	\end{subfigure}
	\hfill
	\begin{subfigure}[b]{0.32\textwidth}
		\centering
		\includegraphics[width=.85\textwidth]{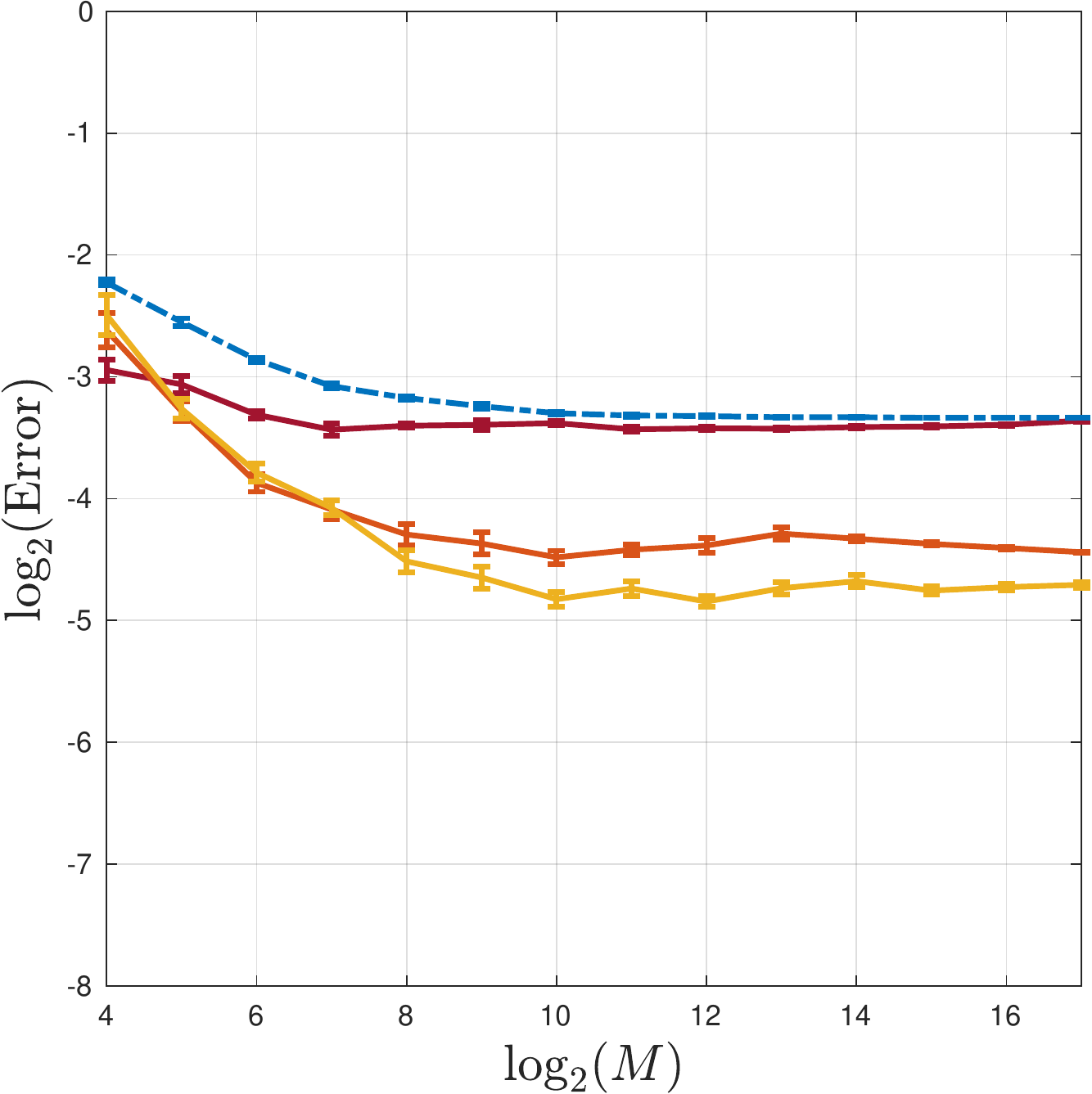}
		\caption{$f_3,\edit{\snr=2.2}, \eta=0.12$}
		\vspace*{.1cm}
		\label{fig:sim_O_4_high}
	\end{subfigure}
	\begin{subfigure}[b]{0.32\textwidth}
		\centering
		\includegraphics[width=.85\textwidth]{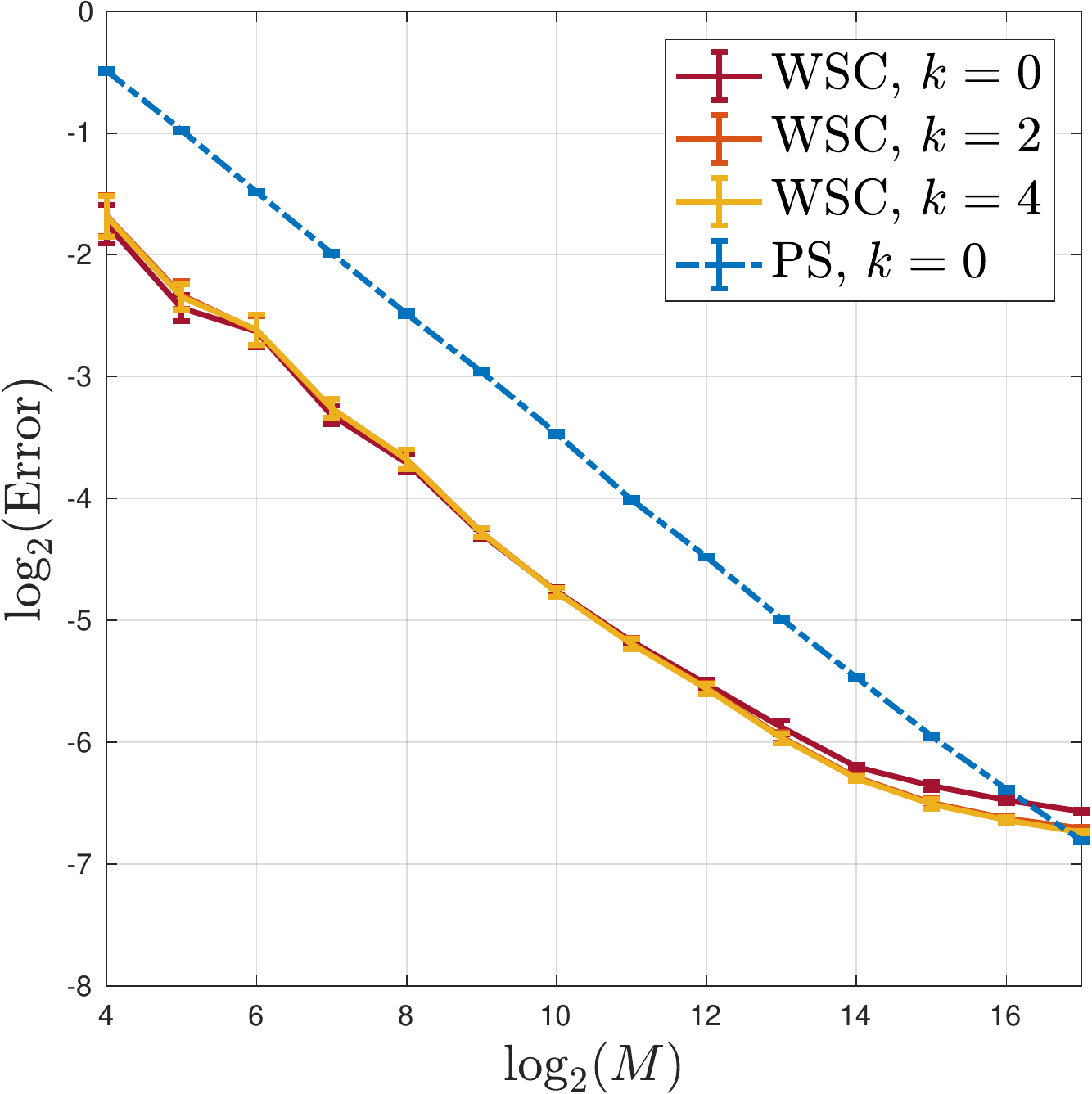}
		\caption{$f_1,\edit{\snr=0.56}, \eta=0.06$}
		\vspace*{.1cm}
		\label{fig:sim_O_5_low}
	\end{subfigure}
	\hfill
	\begin{subfigure}[b]{0.32\textwidth}
		\centering
		\includegraphics[width=.85\textwidth]{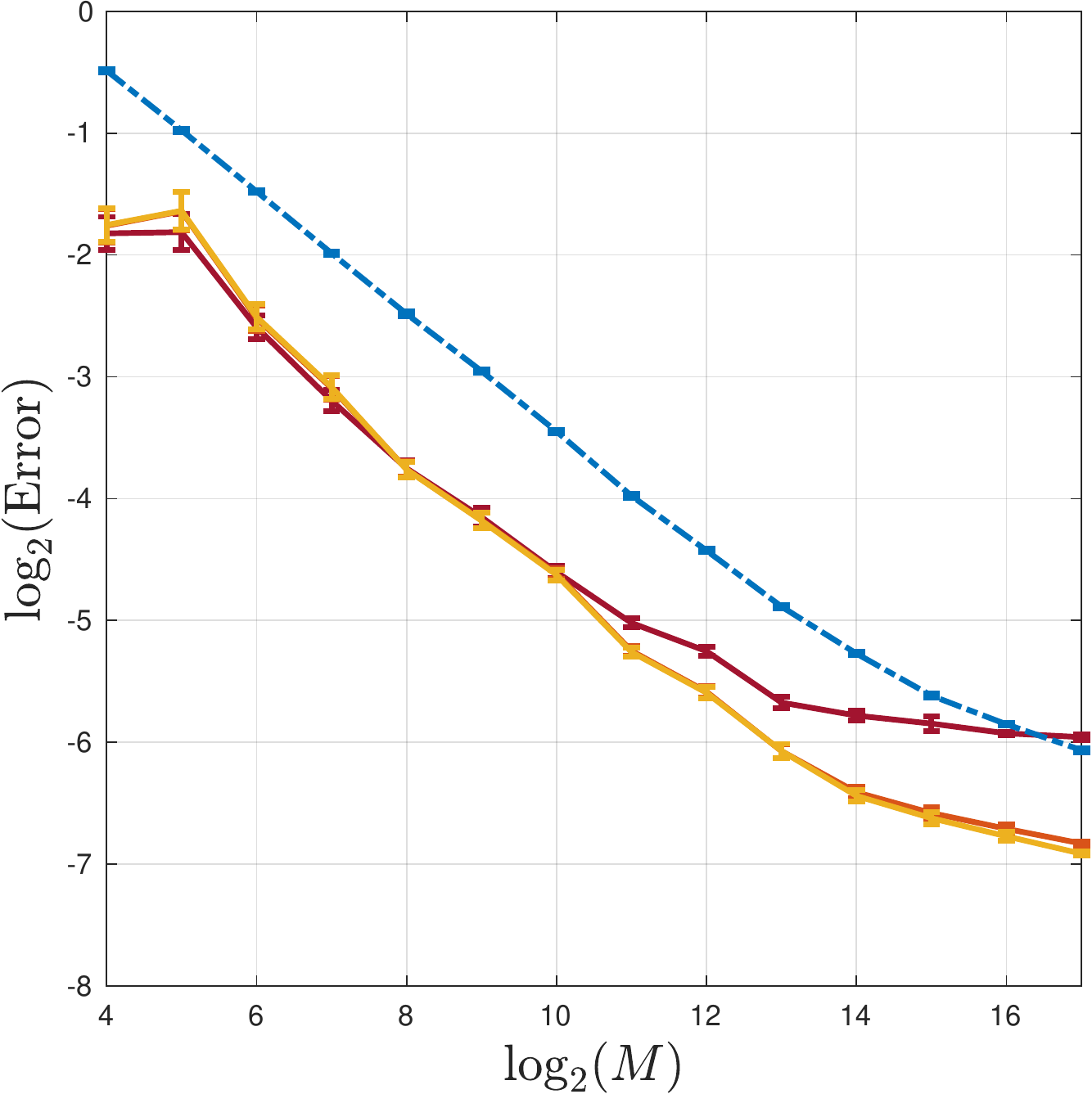}
		\caption{$f_2,\edit{\snr=0.56}, \eta=0.06$}
		\vspace*{.1cm}
		\label{fig:sim_O_5_med}
	\end{subfigure}
	\hfill
	\begin{subfigure}[b]{0.32\textwidth}
		\centering
		\includegraphics[width=.85\textwidth]{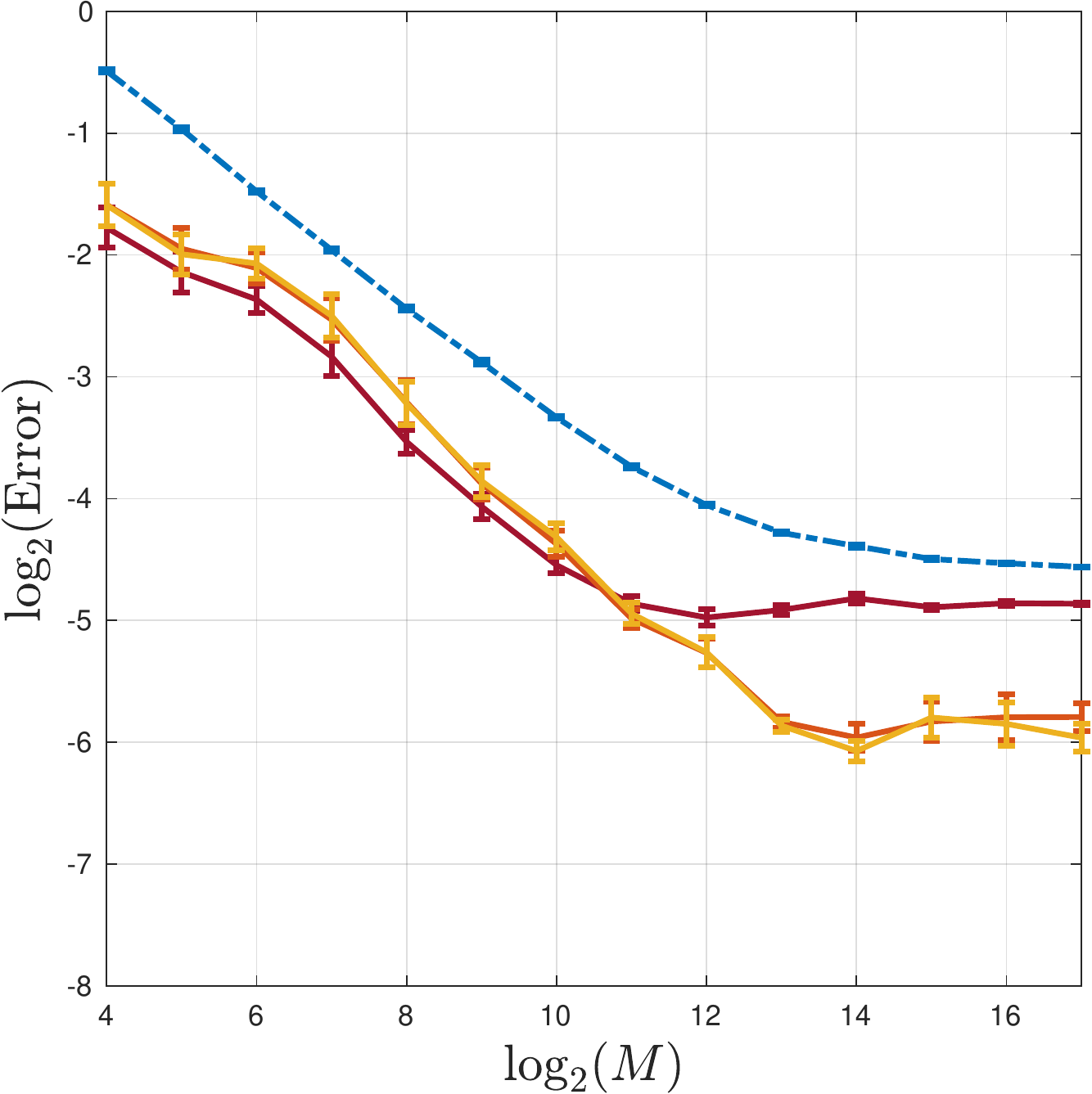}
		\caption{$f_3,\edit{\snr=0.56}, \eta=0.06$}
		\vspace*{.1cm}
		\label{fig:sim_O_5_high}
	\end{subfigure}
	\begin{subfigure}[b]{0.32\textwidth}
		\centering
		\includegraphics[width=.85\textwidth]{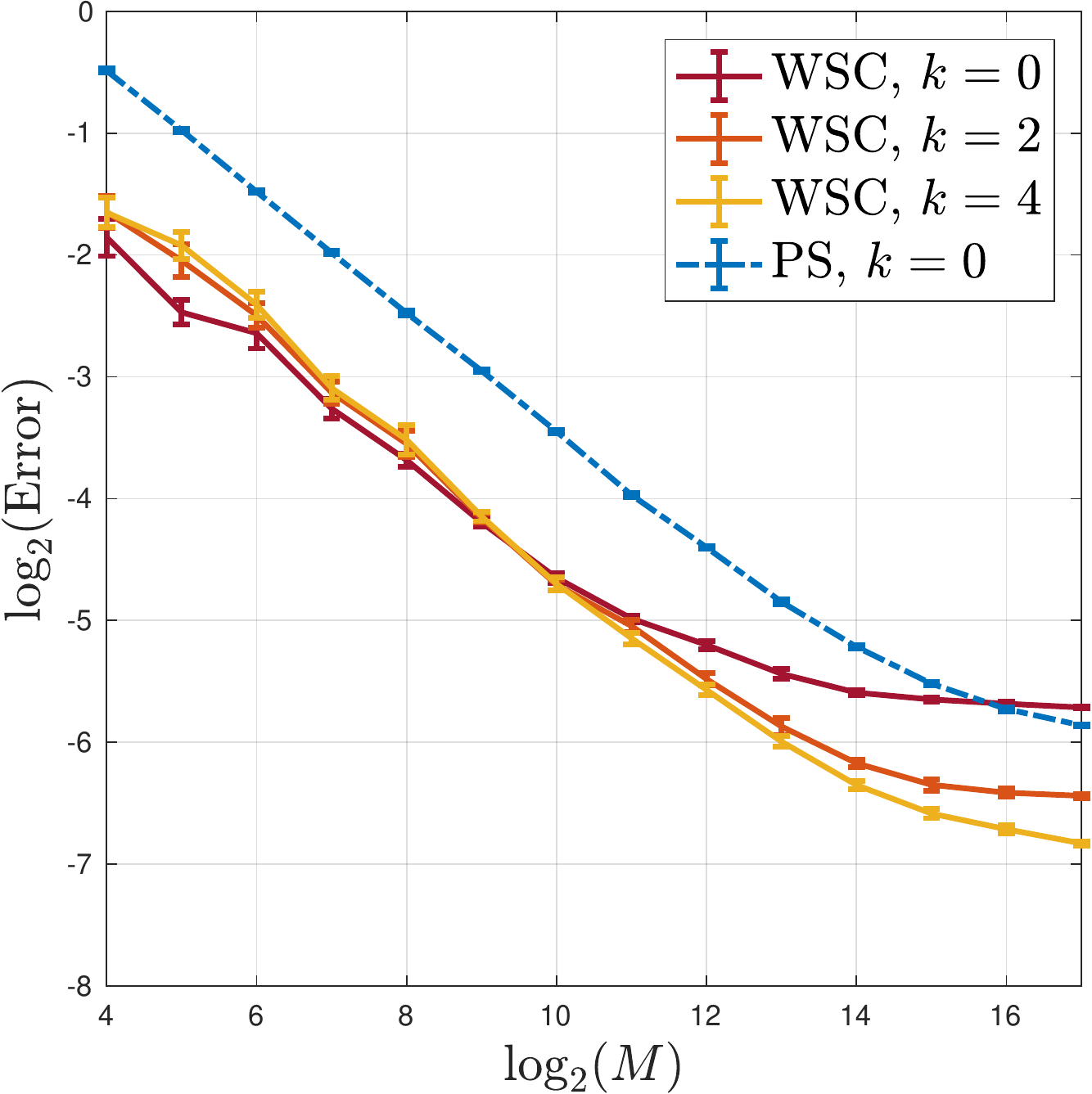}
		\caption{$f_1,\edit{\snr=0.56}, \eta=0.12$}
		\label{fig:sim_O_6_low}
	\end{subfigure}
	\hfill
	\begin{subfigure}[b]{0.32\textwidth}
		\centering
		\includegraphics[width=.85\textwidth]{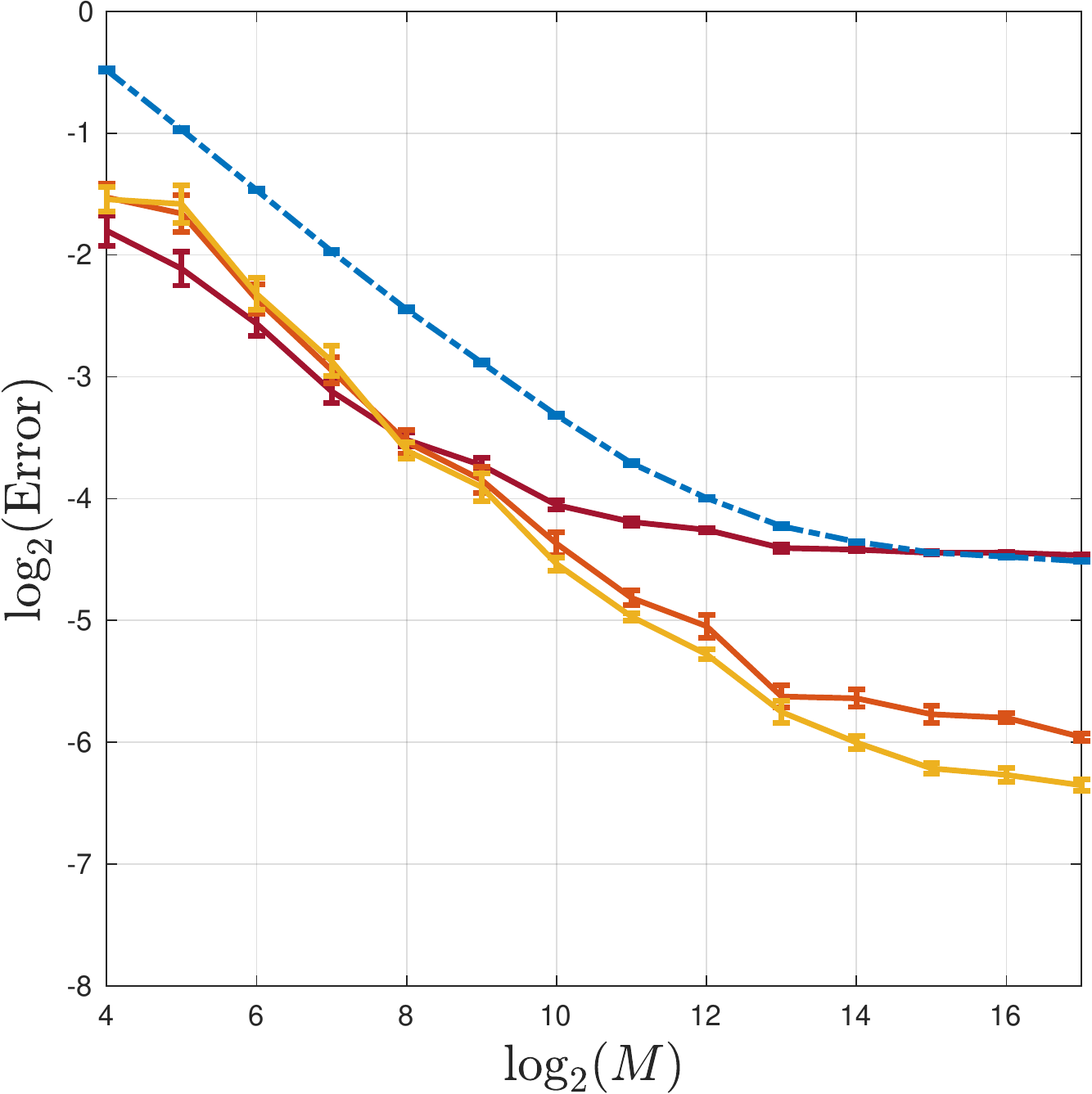}
		\caption{$f_2,\edit{\snr=0.56}, \eta=0.12$}
		\label{fig:sim_O_6_med}
	\end{subfigure}
	\hfill
	\begin{subfigure}[b]{0.32\textwidth}
		\centering
		\includegraphics[width=.85\textwidth]{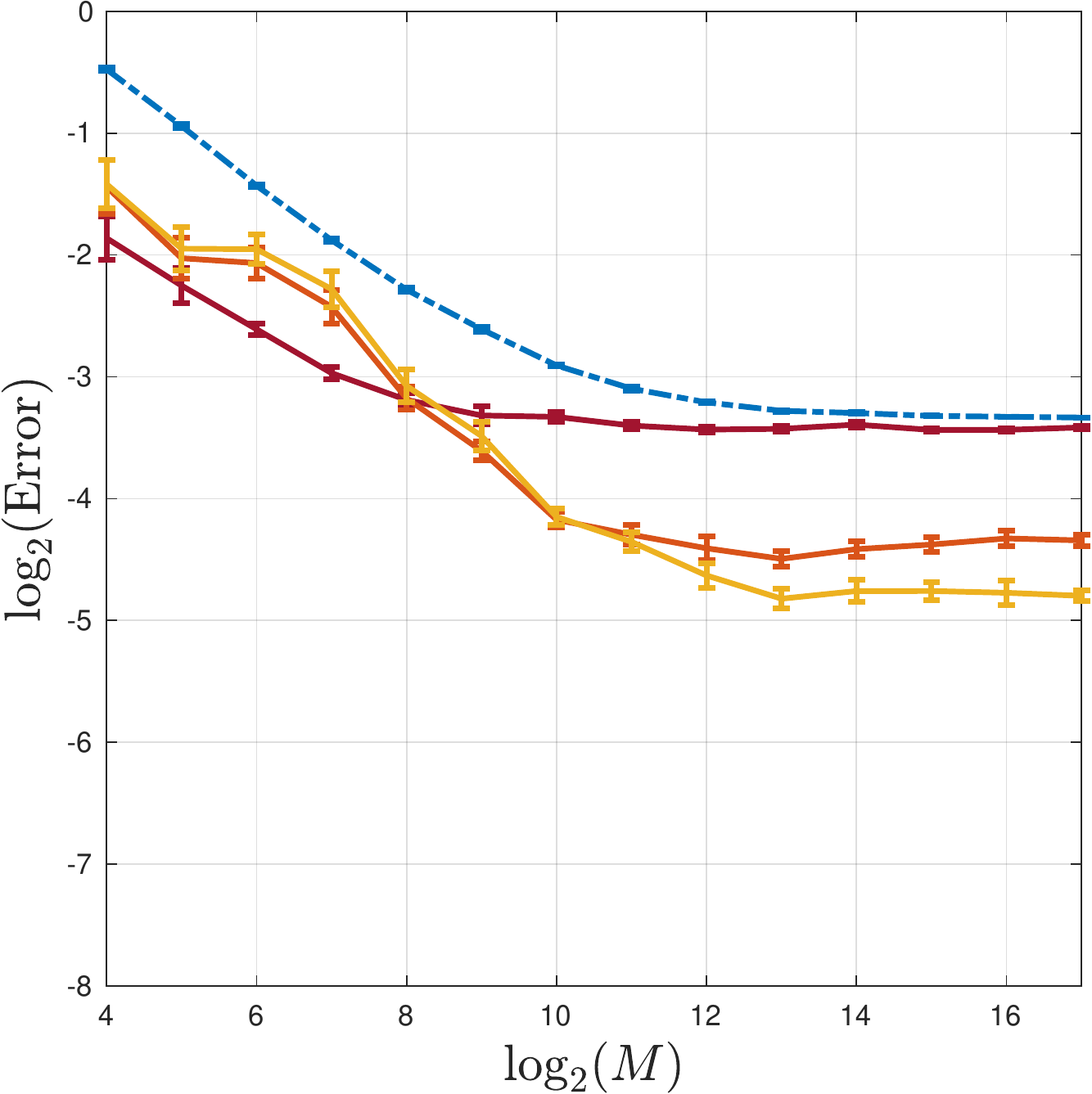}
		\caption{$f_3,\edit{\snr=0.56}, \eta=0.12$}
		\label{fig:sim_O_6_high}
	\end{subfigure}
	\caption{$\Lb^2$ error with standard error bars for \edit{noisy dilation MRA} model (oracle moment estimation). First, second, third column shows results for low, medium, high frequency Gabor signals.  All plots have the same axis limits.}
	\label{fig:GenMRAModelOracle}
\end{figure}

We note that although in general recovering the power spectrum is insufficient for recovering the signal, the signal can be recovered when \edit{$\widehat{f} (\omega) \in \R$ and} $\widehat{f}(\omega)\geq 0$ by taking the inverse Fourier transform of the root power spectrum. Figure \ref{fig:SignalRecovery} shows the approximate signals recovered by this procedure from PS $k=0$ (Figure \ref{fig:PSRecSignal}) and WSC $k=4$ (Figure \ref{fig:WSCRecSignal}) for the high frequency Gabor signal $f_3(x)$ (Figure \ref{fig:TargetSignal}). The WSC recovered signal is a much better approximation of the target signal. The recovered power spectra are shown in Figure \ref{fig:CompareRecPS}; PS $k=0$ is much flatter than the target power spectrum, while WSC $k=4$ is a good approximation of both the shape and height of the target power spectrum.

\begin{figure}
	\centering
	\begin{subfigure}[b]{0.24\textwidth}
		\centering
		\includegraphics[width=\textwidth]{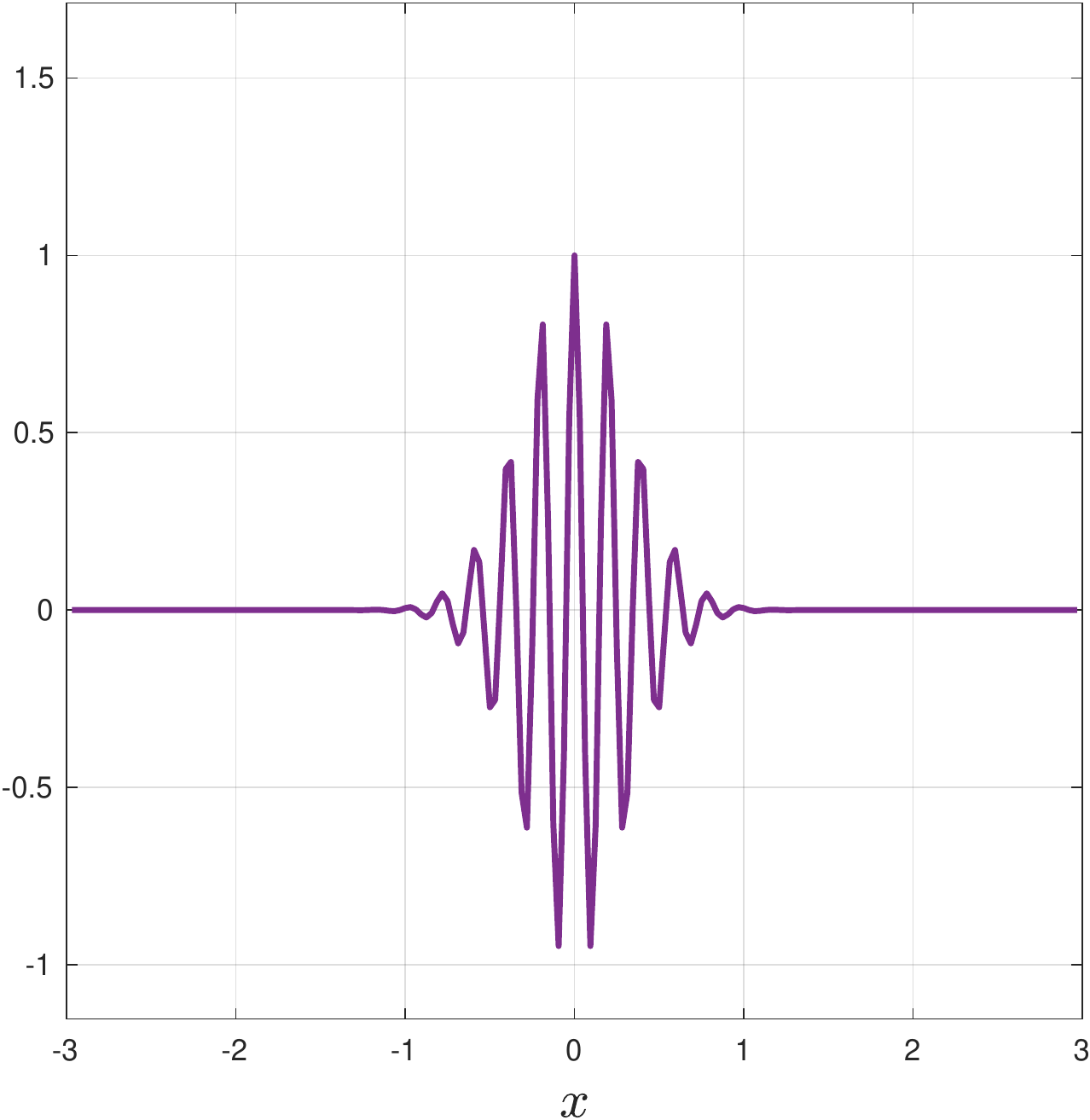}
		\caption{Target Signal}
		\label{fig:TargetSignal}
	\end{subfigure}
	\hfill
	\begin{subfigure}[b]{0.24\textwidth}
		\centering
		\includegraphics[width=\textwidth]{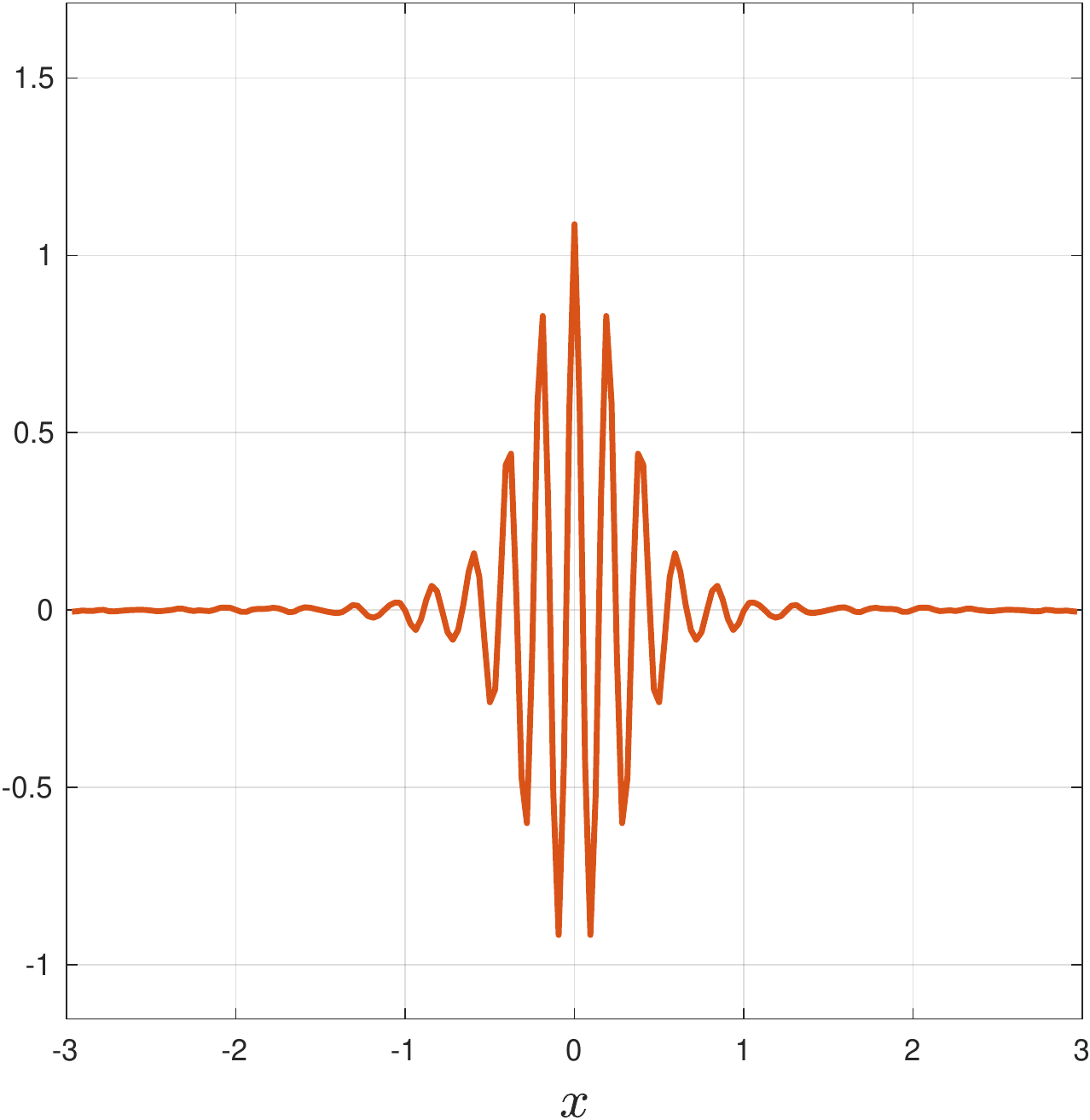}
		\caption{WSC Recovered}
		\label{fig:WSCRecSignal}
	\end{subfigure}
	\hfill
	\begin{subfigure}[b]{0.24\textwidth}
		\centering
		\includegraphics[width=\textwidth]{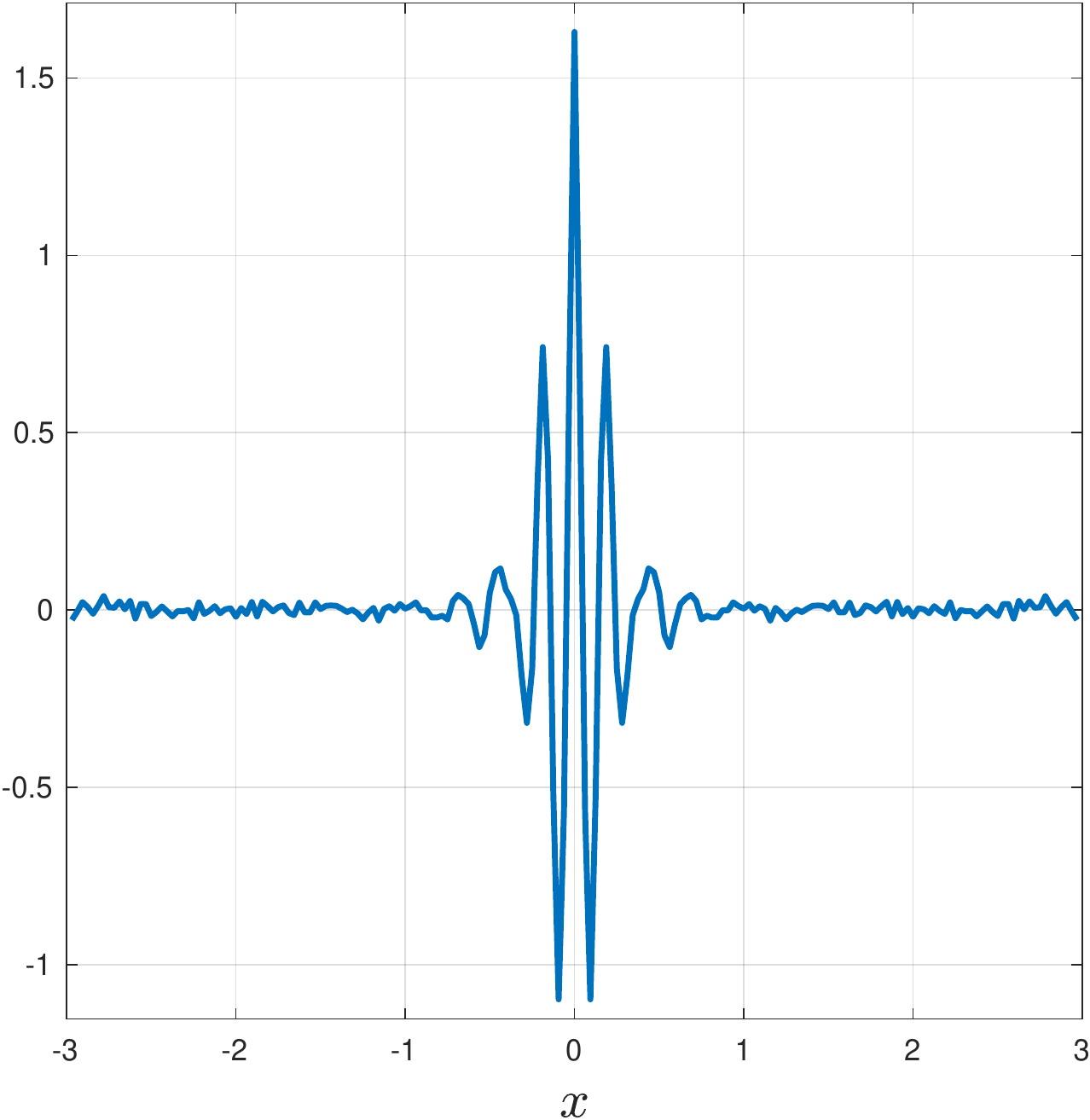}
		\caption{PS Recovered}
		\label{fig:PSRecSignal}
	\end{subfigure}
	\hfill
	\begin{subfigure}[b]{0.24\textwidth}
		\centering
		\includegraphics[width=\textwidth]{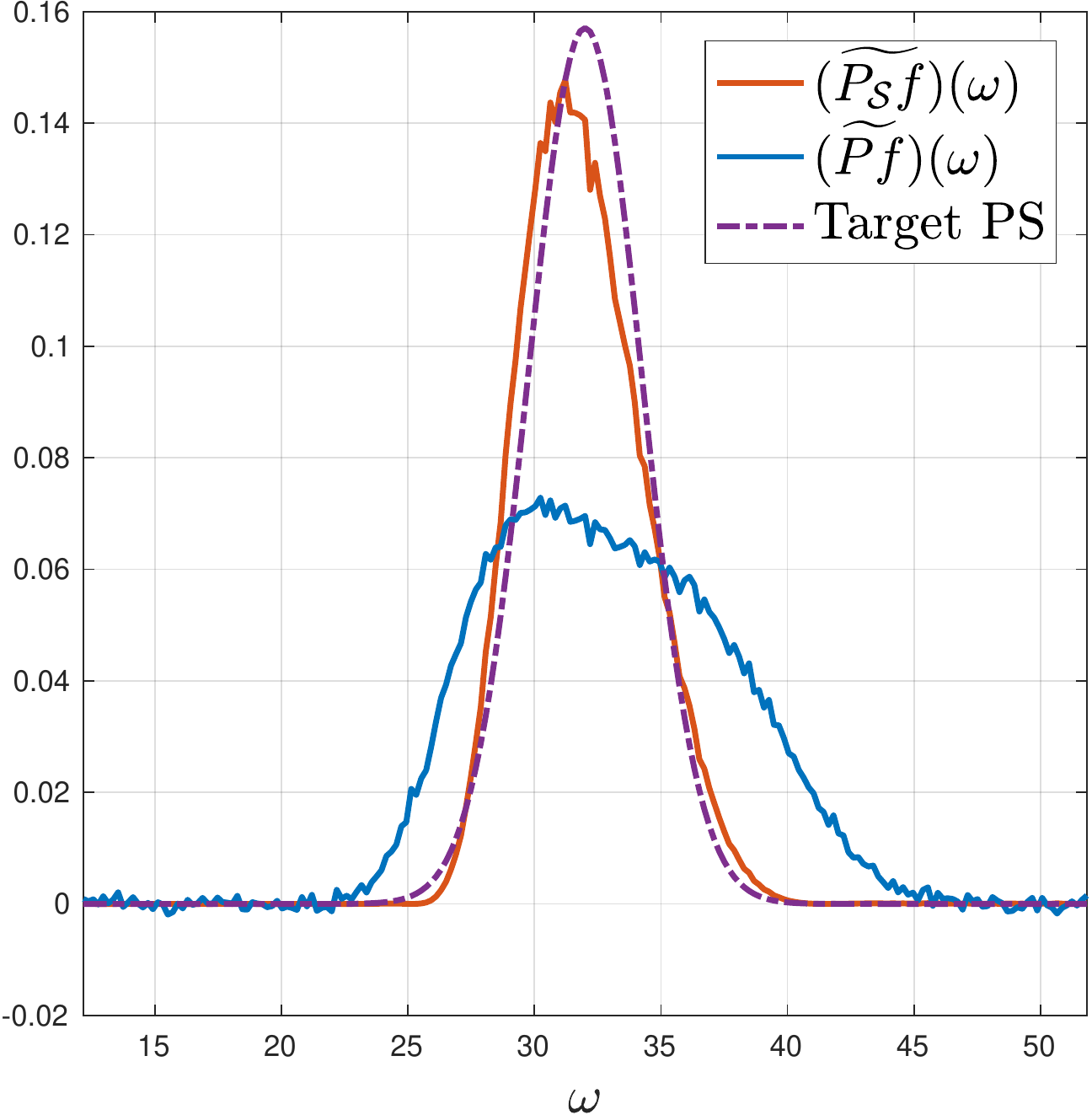}
		\caption{Recovered PS}
		\label{fig:CompareRecPS}
	\end{subfigure}
	\caption{Signal recovery results for $f_3(x) = e^{-5x^2} \cos (32 x)$ with $M=20,000$, $\eta=0.12$, $\edit{\snr=2.2}$. }
	\label{fig:SignalRecovery}
\end{figure}

Appendix \ref{app:numerical_implementation} outlines an empirical procedure for estimating the moments of $\tau$ in the special case when $t=0$ in the \edit{noisy dilation MRA} model (i.e., no random translations). All simulations reported in Figure \ref{fig:GenMRAModelOracle} are repeated (with minor modifications) with empirical additive and dilation moment estimation, and the results are reported in Figure \ref{fig:GenMRAModelEmpirical} of Appendix \ref{app:numerical_implementation}.

Appendix \ref{sec:additional_sim_results} contains additional simulation results for a variety of high frequency signals.

\edit{
\begin{remark} One could also solve noisy dilation MRA with an expectation-maximization (EM) algorithm. Appendix \ref{app:EMalg} describes how the method proposed in \cite{abbe2018multireference} can be extended to solve Model \ref{model:genMRA}. Althought EM algorithms provide a flexible tool for accurate parameter estimation in a variety of MRA models, the primary disadvantage is the high computational cost of each iteration. Each iteration costs $O(Mn^3)$, while wavelet invariant estimators can be computed in $O(Mn^2)$. In addition the statistical priors chosen may bias the signal reconstruction \cite{bendory2019single}, and the algorithm will generally only converge to a local maximum. In this article we thus explore whether it is possible to solve noisy dilation MRA more efficiently and accurately by nonlinear unbiasing procedures.
\end{remark}}

\section{Numerical implementation}
\label{sec:num_imp}

In this section we describe the numerical implementation of the proposed method used to generate the results reported in Sections \ref{sec:AddNoise}, \ref{sec:DilationNoiseSims}, and \ref{sec:GenMRANoiseSim}. Section \ref{sec:signal_synthesis} describes how signals were generated, and Sections \ref{sec:AddNoiseLevel} and  \ref{sec:EmpMomentEst} describe empirical procedures for estimating the additive noise level and the moments of the dilation distribution $\tau$. Finally, \edit{Section \ref{sec:derivatives} discusses how the derivatives used for unbiasing were computed, and} Section \ref{sec:optimization} describes the convex optimization algorithm used to recover $Pf$ from $Sf$. All simulations used a Morlet wavelet constructed with $\xi = 3\pi/4$.

\subsection{Signal \edit{generation and $\snr$}}
\label{sec:signal_synthesis}

All signals were defined on $[-N/4, N/4]$ and then padded with zeros to obtain a signal defined on \\$[-N/2, N/2]$; the additive noise was also defined on $[-N/2, N/2]$. Signals were sampled at a rate of $1/2^\ell$, thus resolving frequencies in the interval $[-2^\ell\pi, 2^{\ell}\pi]$ with a frequency sampling rate of $2\pi/N$. We used $N = 2^5$ and $\ell = 5$ in all experiments, keeping the box size and resolution fixed. \edit{For each experiment with hidden signal $f$, the SNR was calculated by
$\snr = \left(\frac{1}{N} \int_{-N/2}^{N/2} f(x)^2\ dx\right) / \sigma^2.$}

\subsection{Empirical estimation of additive noise level}
\label{sec:AddNoiseLevel}

The additive noise level $\sigma^2$ can be estimated from the mean vertical shift of the mean power spectrum $\frac{1}{M} \sum_{j=1}^M |\widehat{y}_j(\omega)|^2$ in the tails of the distribution. Specifically, for $\Sigma = [-2^\ell \pi, 2^\ell \pi] \setminus [-2^{\ell-1} \pi, 2^{\ell-1} \pi]$, we define
\begin{align*}
\widetilde{\sigma}^2 &= \frac{1}{|\Sigma|} \sum_{\omega \in \Sigma} \frac{1}{M} \sum_{j=1}^M |\widehat{y}_j(\omega)|^2.
\end{align*}
If we choose $\ell$ large enough so that the target signal frequencies are \edit{essentially} contained in the interval $[-2^{\ell-1}\pi, 2^{\ell-1}\pi]$, $|\widehat{y}_j(\omega)|^2 = |\widehat{\varepsilon}_j(\omega)|^2$ for $\omega \in \Sigma$, and this is a robust and unbiased estimation procedure since $\Ex|\widehat{\varepsilon}_j(\omega)|^2=\sigma^2$ by Lemma \ref{lem:PS_addnoise}. 

\subsection{Empirical moment estimation for dilation MRA}
\label{sec:EmpMomentEst}

Given the additive noise level, the moments of the dilation distribution $\tau$ for dilation MRA (Model \ref{model:dilMRA}) can be empirically estimated from the mean and variance of the random variables $\alpha_{m}(y_j)$ defined by
\begin{align}
\label{equ:alpha_m_def}
\alpha_{m}(y_j) &= \int_0^{2^{\ell}\pi} \omega^m |\widehat{y}_j(\omega)|^2\ d\omega
\end{align}
for integer $m \geq 0$. More specifically, we define the order $m$ squared coefficient of variation by
\begin{align}
\label{equ:CV_m_dilMRA}
CV_m &:= \frac{\Var[\alpha_{m}(y_j)]}{|\Ex[\alpha_{m}(y_j)]|^2} \, .
\end{align}
The following proposition guarantees that for $M$ large the second and fourth moments of the dilation distribution can be recovered from $CV_0, CV_1$. In fact one could continue this procedure for higher $m$ values, i.e. $\{CV_m\}_{m=0}^{k/2-1}$ will define estimators of the first $\frac{k}{2}$ even moments of $\tau$, accurate up to $O(\eta^{k+2})$, but for brevity we omit the general case.
\begin{proposition}
	\label{prop:emp_mom_est_dilMRA}
	Assume Model \ref{model:dilMRA} and $CV_0, CV_1$ defined by (\ref{equ:alpha_m_def}) and (\ref{equ:CV_m_dilMRA}). Then
	\begin{align*}
	CV_0 &=\eta^2+(3C_4-3)\eta^4+O(\eta^6) \\
	CV_1 &= 4\eta^2 + (25C_4-33)\eta^4+O(\eta^6) \, .
	\end{align*}	
\end{proposition}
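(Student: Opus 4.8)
The plan is to compute the mean and variance of $\alpha_m(y_j)$ for $m = 0, 1$ as power series in $\eta$, keeping terms through order $\eta^4$, and then form the ratio $CV_m$. The starting point is that, since $\alpha_m$ only involves the power spectrum $|\widehat{y}_j|^2$ and $y_j = L_{\tau_j} f$ (translations are irrelevant here because $|\widehat{T_t f}| = |\widehat{f}|$ and there is no additive noise in Model~\ref{model:dilMRA}), we have the exact identity $|\widehat{y}_j(\omega)|^2 = (Pf)((1-\tau_j)\omega)$. Substituting and changing variables $u = (1-\tau_j)\omega$ in \eqref{equ:alpha_m_def} gives
\begin{align*}
\alpha_m(y_j) = (1-\tau_j)^{-(m+1)} \int_0^{(1-\tau_j) 2^\ell \pi} u^m (Pf)(u)\, du \, .
\end{align*}
If we assume (as is implicit in the setup, since the target signal frequencies are contained in the resolved band) that $(Pf)$ is supported inside $[0, 2^{\ell-1}\pi]$ or at least that its tail beyond $(1-\tau)2^\ell\pi$ is negligible for $|\tau|\le 1/2$, the upper limit can be replaced by $\infty$, so $\alpha_m(y_j) = (1-\tau_j)^{-(m+1)} A_m$ where $A_m := \int_0^\infty u^m (Pf)(u)\, du$ is a deterministic constant. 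Thus $\alpha_m(y_j)$ is simply $A_m$ times the random scalar $(1-\tau)^{-(m+1)}$, and $CV_m$ depends only on the distribution of $(1-\tau)^{-(m+1)}$ — the constant $A_m$ cancels in the ratio.

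The next step is the Taylor expansion. Write $g_p(\tau) = (1-\tau)^{-p}$ with $p = m+1$; expand $g_p(\tau) = 1 + p\tau + \frac{p(p+1)}{2}\tau^2 + \frac{p(p+1)(p+2)}{6}\tau^3 + \frac{p(p+1)(p+2)(p+3)}{24}\tau^4 + O(\tau^5)$. Using the moment assumptions from Model~\ref{model:genMRA}(iii) — $\E[\tau]=0$, $\E[\tau^2]=\eta^2$, $\E[\tau^3]=0$ by symmetry, $\E[\tau^4]=C_4\eta^4$, and $\E[\tau^j] = O(\eta^{j})$ generally (which follows since $|\tau|\le 1/2$ and $\Var\tau = \eta^2$, or more carefully from $\E[\tau^i] = C_i\eta^i$) — one gets
\begin{align*}
\E[g_p(\tau)] &= 1 + \tfrac{p(p+1)}{2}\eta^2 + \tfrac{p(p+1)(p+2)(p+3)}{24}C_4\eta^4 + O(\eta^6) \, , \\
\E[g_p(\tau)^2] &= \E[(1-\tau)^{-2p}] = 1 + \tfrac{2p(2p+1)}{2}\eta^2 + \tfrac{2p(2p+1)(2p+2)(2p+3)}{24}C_4\eta^4 + O(\eta^6) \, ,
\end{align*}
where the second line just reuses the first with $p \mapsto 2p$. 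Then $\Var[g_p(\tau)] = \E[g_p^2] - (\E[g_p])^2$ and $CV_m = \Var[g_p(\tau)] / (\E[g_p(\tau)])^2$; dividing the variance series by the mean-squared series (via $1/(1+x) = 1 - x + x^2 - \cdots$) and collecting terms through $\eta^4$ yields the claimed formulas. For $m=0$ ($p=1$): the $\eta^2$ coefficient of $\Var$ is $\frac{2\cdot3}{2} - \frac{1\cdot2}{2} = 3 - 1 = 2$... one must be careful here — actually $\Var$ at order $\eta^2$ is $\E[\tau^2](g_p'(0)^2)$-type; let me just say the bookkeeping gives $CV_0 = \eta^2 + (3C_4 - 3)\eta^4 + O(\eta^6)$ and $CV_1 = 4\eta^2 + (25C_4 - 33)\eta^4 + O(\eta^6)$ after the algebra, and the odd moments drop out by symmetry so no $\eta^3$ term survives.

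The main obstacle is purely the bookkeeping: tracking all $\eta^4$ contributions correctly through the quotient, in particular the cross-terms between the $\eta^2$ part of one factor and the $\eta^2$ part of another when expanding $(\E[g_p])^{-2}$ and $\E[g_p^2] - (\E[g_p])^2$. A secondary point requiring care is justifying the replacement of the finite integration limit $(1-\tau_j)2^\ell\pi$ by $\infty$ (equivalently, that the estimator is unaffected by aliasing/truncation of $(Pf)$ at the band edge); this is where the standing assumption that the signal is essentially bandlimited to $[-2^{\ell-1}\pi, 2^{\ell-1}\pi]$ enters, and the error it introduces should be folded into the $O(\eta^6)$ remainder (or stated as an additional hypothesis). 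Everything else — the change of variables, the cancellation of $A_m$, and the binomial/Taylor expansions — is routine.
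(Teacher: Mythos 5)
Your proposal is correct and follows essentially the same route as the paper: translations drop out, the change of variables gives $\alpha_m(y_j) = (1-\tau_j)^{-(m+1)}\alpha_m(f)$ (using the standing assumption that $Pf$ is essentially supported in the resolved band, so the truncated upper limit is harmless), the constant cancels in the ratio, and $CV_m = \E[(1-\tau)^{-2(m+1)}]/\bigl(\E[(1-\tau)^{-(m+1)}]\bigr)^2 - 1$ is then Taylor-expanded using $\E\tau=0$, $\E\tau^2=\eta^2$, vanishing odd moments, and $\E\tau^4=C_4\eta^4$. The series you record for the numerator and denominator agree with the paper's, and the division you defer as ``bookkeeping'' is exactly the explicit $\eta^4$ expansion the paper carries out, yielding the stated coefficients (your momentary $3-1=2$ slip is immaterial since you correctly fell back on the full quotient expansion).
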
	

\begin{proof}
Since $y_j = L_{\tau_j}f(x-t_j)$, 
\begin{align*}
\alpha_{m}(y_j) &= \int_0^{2^{\ell}\pi} \omega^m |\widehat{f}((1-\tau_j)\omega)|^2\ d\omega \\
&= \int_0^{2^{\ell}\pi (1-\tau_j)} \frac{\xi^m}{(1-\tau_j)^m} |\widehat{f}(\xi)|^2\ \frac{d\xi}{(1-\tau_j)} \\
&= (1-\tau_j)^{-(m+1)} \alpha_m(f)\, ,
\end{align*}
where we assume we have choosen $\ell$ large enough so that the target signal frequencies are essentially supported in $[-2^{\ell-1} \pi, 2^{\ell-1} \pi]$. Thus:
\begin{align*}
CV_m &= \frac{ \Ex[\alpha_{m}(y_j)^2] - (\Ex[\alpha_{m}(y_j)])^2}{(\Ex[\alpha_{m}(y_j)])^2} = \frac{ \Ex[(1-\tau_j)^{-2(m+1)} ] }{(\Ex[(1-\tau_j)^{-(m+1)}])^2} - 1 \, .
\end{align*}
When $m=0$, we have
\begin{align*}
CV_0 &= \frac{ \Ex[(1-\tau_j)^{-2} ] }{(\Ex[(1-\tau_j)^{-1}])^2} - 1 \\
&= \frac{\Ex[1+2\tau+3\tau^2+4\tau^3+5\tau^4+O(\tau^5)]}{(\Ex[1+\tau+\tau^2+\tau^3+\tau^4+O(\tau^5) ])^2}-1 \\
&= \frac{1+3\eta^2+5C_4\eta^4+O(\eta^6)}{(1+\eta^2+C_4\eta^4+O(\eta^6))^2}-1 \\
&= \frac{1+3\eta^2+5C_4\eta^4+O(\eta^6))}{1+2\eta^2+(2C_4+1)\eta^4+O(\eta^6)}-1 \\
&=(1+3\eta^2+5C_4\eta^4+O(\eta^6))(1-2\eta^2+(3-2C_4)\eta^4+O(\eta^6))-1 \\
&=\eta^2+(3C_4-3)\eta^4+O(\eta^6)\, .
\end{align*}
When $m=1$,  we have
\begin{align*}
CV_1 &= \frac{ \Ex[(1-\tau_j)^{-4} ] }{(\Ex[(1-\tau_j)^{-2}])^2} - 1 \\
&= \frac{\Ex[1+4\tau+10\tau^2+20\tau^3+35\tau^4+O(\tau^5)]}{(\Ex[1+2\tau+3\tau^2+4\tau^3+5\tau^4+O(\tau^5)])^2} -1 \\
&= \frac{1+10\eta^2+35C_4\eta^4+O(\eta^6)}{(1+3\eta^2+5C_4\eta^4+O(\eta^6))^2} -1 \\
&= \frac{1+10\eta^2+35C_4\eta^4+O(\eta^6)}{(1+6\eta^2+(9+10C_4)\eta^4+O(\eta^6))} -1 \\
&=(1+10\eta^2+35C_4\eta^4+O(\eta^6))(1-6\eta^2+(27-10C_4)\eta^4+O(\eta^6)) - 1 \\
&= 4\eta^2 + (25C_4-33)\eta^4+O(\eta^6)\, .
\end{align*}
\end{proof}

We cannot compute $CV_m$ exactly, but by replacing $\Var, \Ex$ with their finite sample estimators, we obtain an approximate $\widetilde{CV}_m \rightarrow CV_m$ as $M \rightarrow \infty$. Motivated by Proposition \ref{prop:emp_mom_est}, we thus use $\widetilde{CV}_0, \widetilde{CV}_1$ to define estimators of $\eta^2$ and $C_4\eta^4$. 

\begin{definition}
	\label{def:emp_dil_mom_est_dilMRA}
	Assume Model \ref{model:dilMRA} and let $\widetilde{CV}_0, \widetilde{CV}_1$ be the empirical versions of (\ref{equ:CV_m_dilMRA}).
	Define the second order estimator of $\eta^2$ by $\widetilde{\eta}^2 = \widetilde{CV}_0.$
	Define the fourth order estimators of $(\eta^2, C_4\eta^4)$ by the unique positive solution $(\widetilde{\eta}^2, \widetilde{C}_4)$ of
	\begin{align*}
	\widetilde{CV}_0 &=\eta^2+(3C_4-3)\eta^4 \\
	\widetilde{CV}_1 &= 4\eta^2 + (25C_4-33)\eta^4.
	\end{align*}
\end{definition}

For \edit{noisy dilation MRA} (Model \ref{model:genMRA}), estimating the dilation moments is more difficult. We give a procedure for estimating the moments in the special case $t=0$ in Appendix \ref{app:numerical_implementation}. Empirical moment estimation procedures which are simultaneously robust to translations, dilations, and additive noise is an important area of future research.   

\edit{
\subsection{Derivatives}
\label{sec:derivatives}
All derivatives were approximated numerically using finite difference calculations. A 6$^{\text{th}}$ order finite difference approximation was used for second derivatives, and a 4$^{\text{th}}$ order finite difference approximation was used for fourth derivatives. This procedure was done on the empirical mean for each representation, not the individual signals. In fact since the wavelet is known, $\frac{d^{n}}{d\lambda^n}|\widehat{\psi}_\lambda(\omega)|^2$ could be computed analytically, and $(Sy_j)^{(n)}(\lambda)$ computed using Definition \ref{def:WSCderiv}. Thus error due to finite difference approximations could be avoided for wavelet invariant derivatives.
}

\subsection{Optimization}
\label{sec:optimization}

\edit{In this section we describe the convex optimization algorithm for computing $(\widetilde{P_{\Sc}f})$, the power spectrum approximation which best matches the wavelet invariants $(\widetilde{\Sc f})$.}
Since the wavelet invariants are only computed for $\lambda>0$, we also incorporate zero frequency information into the loss function via $(\widetilde{Pf})(0)$, an approximation of the power spectrum at frequency zero.     
For all of the examples reported in this article, the quasi-newton algorithm was used to solve an unconstrained optimization problem minimizing the following \edit{convex} loss function:
\begin{align*}
\text{loss}(\,\widehat{g}\,) := \sum_{\lambda} \left(\left\langle \widehat{g}^2, |\widehat{\psi}^+_{\lambda}|^2 \right\rangle - \widetilde{\Sc f}(\lambda)\right)^2 + \left(\widehat{g}(0)^2 -(\widetilde{Pf})(0) \right)^2 \, ,
\end{align*}
where
\begin{align*}
|\widehat{\psi}_\lambda^+(\omega)|^2 &= \left(|\widehat{\psi}_\lambda(\omega)|^2 + |\widehat{\psi}_\lambda(-\omega)|^2\right)\cdot\ind(\omega \geq 0) \, .
\end{align*}
Letting $\widehat{g}^*$ denote the minimizer of the above loss function, we then define $(\widetilde{P_{\Sc}f}):=\widehat{g}^*(\omega)^2$. Theorem \ref{thm:WSC_PS_equivalence} ensures that when the loss function is defined with the exact wavelet invariants $\Sc f$, it has a unique minimizer corresponding to $Pf$. Whenever $f(x) \in \R$, the symmetry of $(Pf)(\omega)$ ensures that $(\Sc f)(\lambda) = \left\langle |\widehat{f}|^2, |\widehat{\psi}^+_{\lambda}|^2 \right\rangle$, and thus it is sufficient to optimize over the nonnegative frequencies and then symmetrically extend the solution. Such a procedure ensures the output of the optimization algorithm is symmetric while avoiding adding constraints to the optimization. 
\edit{The algorithm was initialized using the mean power spectrum with additive noise unbiasing only, i.e. PS $k=0$.}
The optimization output does depend on various numerical tolerance parameters which were held fixed for all examples.

\edit{
\begin{remark}
    Alternatively, one can invert the representation by applying a pseudo-inverse with Tikhonov regularization. Specifically if $F$ is the matrix defining the wavelet invariants, so that $Sy=F(Py)$, then one can define $(\widetilde{P_{\Sc}f}) = (F^TF+\lambda I)^{-1}F^T(\widetilde{\Sc f})$. This procedure however requires careful selection of the hyper-parameter $\lambda$ and did not work as well as inverting via optimization in our experiments.
\end{remark}
}

\section{Conclusion}

This article considers a generalization of classic MRA which incorporates random dilations in addition to random translations and additive noise, and proposes solving the problem with a wavelet invariant representation. These wavelet invariants have several desirable properties over Fourier invariants which allow for the construction of unbiasing procedures which cannot be constructed for Fourier invariants. Unbiasing the representation is critical for high frequency signals, where even small diffeomorphisms cause a large perturbation. After unbiasing, the power spectrum of the target signal can be recovered from a convex optimization procedure. 

Several directions remain for further investigation, including extending results to higher dimensions and considering rigid transformations instead of translations. Such extensions could be especially relevant to image processing, where variations in the size of an object can be modeled as dilations. Incorporating the effect of tomographic projection would also lead to results more directly relevant to problems such as Cryo-EM. The tools of the present article, although significantly reducing the bias, do not allow for a completely unbiased estimator for \edit{noisy dilation MRA} due to the bad scaling of certain intrinsic constants. Thus an important open question is whether it is possible to define unbiased estimators for \edit{noisy dilation MRA} using a different approach. The \edit{noisy dilation MRA} model of this article corresponds to linear diffeomorphisms, and constructing unbiasing procedures which apply to more general diffeomorphisms is also an important future direction. In addition, one can construct wavelet invariants which characterize higher order auto-correlation functions such as the bispectrum, and future work will investigate full signal recovery with such invariants.  

\section*{Funding} 

This  work was supported by: the Alfred P. Sloan Foundation [Sloan Fellowship FG-2016-6607 to M.H.]; the Defense Advanced Research Projects Agency [Young Faculty Award D16AP00117 to M.H.]; and the National Science Foundation [grant 1620216 and CAREER award 1845856 to M.H].

\section*{Acknowledgements}

We would like to thank the reviewers for their detailed comments and insights which greatly improved the manuscript. We would also like to thank Stephanie Hickey for providing useful references on flexible regions of macromolecular structures. 

\appendix

\section{\edit{Wavelet admissibility conditions}}
\label{app:wavelet_admissability}

\edit{This appendix describes the wavelet admissibility conditions which are needed for the main results in this article, namely Propositions \ref{prop:RandomDilations} and \ref{prop:DilationAndAdditiveNoise}.}
The wavelet $\psi$ is $\bm{k}$\textbf{-admissable} if $\widehat{\psi} \in \Cb^k (\R)$ and $\Psi_k < \infty, \Theta_k < \infty$ where
\begin{align}
\label{equ:Psik}
\Psi_k &:= \frac{1}{2\pi} \sum_{i=0}^k {k \choose i} \frac{k!}{i!} \, \norm{\omega^{i}(P\psi)^{(i)}(\omega)}_1 \, , \\
\label{equ:Thetak}
\Theta_k &:= \frac{1}{2\pi} \sum_{i=0}^k {k \choose i} \frac{k!}{i!} \, \norm{\omega^{i-2}(P\psi)^{(i)}(\omega)}_1\, .
\end{align} 

For $\psi$ to be $k$-admissable, it is sufficient for $\widehat{\psi} \in \Cb^{k} (\R)$, $(P\psi)^{(i)}$ to decay faster than $\omega^{i+1}$, and \\ $\int  \frac{|\widehat{\psi}(\omega)|^2}{\omega^2} \, d\omega < \infty$
(see Lemma \ref{lem:kadmissable} in Appendix \ref{app:WSC_props}). The condition $\int  \frac{|\widehat{\psi}(\omega)|^2}{\omega^2} \, d\omega < \infty$ is slightly stronger than the classic admissability condition $C_{\psi} :=\int  \frac{|\widehat{\psi}(\omega)|^2}{\omega} \, d\omega < \infty$ \cite[Theorem 4.4]{Mallat:2008:WTS:1525499}. When $\widehat{\psi}$ is continuously differentiable, $\widehat{\psi}(0)=0$ is sufficient to guarantee $C_{\psi} < \infty$; but here we need $\widehat{\psi} (\omega) \sim \omega^{\frac{1}{2}+\epsilon}$ for some $\epsilon >0$ as $\omega \rightarrow 0$. If this condition is removed, we are not guaranteed $\Theta_k < \infty$, but all results in fact still hold, with $\Lambda_k(\lambda) = \Psi_k \norm{f}_1^2$ replacing $\Lambda_k(\lambda) = \Psi_{k} \norm{f}_1^2\wedge \frac{\Theta_{k}\norm{f'}_1^2}{\lambda^2}$ in Propositions \ref{prop:RandomDilations} and \ref{prop:DilationAndAdditiveNoise}. \edit{Any wavelet with fast decay satisfies this stronger admissibility condition, and it ensures that a smooth signal will enjoy a fast decay of wavelet invariants.}
\begin{remark}
	The Morlet wavelet $\psi(x) = g(x)(e^{i\xi x}-C)$ is $k$-admissable for any $k$, since $\widehat{\psi} \in \Cb^{\infty} (\R)$, $P\psi$ has fast decay, and $\widehat{\psi} (\omega) \sim \omega$ as $\omega \rightarrow 0$. One can also choose $\widehat{\psi}$ to be an order $k+1$-spline of compact support. 
\end{remark}

\section{Properties of wavelet invariants}
\label{app:WSC_props}

This appendix establishes several important properties of wavelet invariants. Lemma \ref{lem:kadmissable} gives sufficient conditions guaranteeing that a wavelet is $k$-admissable. Lemmas \ref{lem:WaveletScatteringDeriv} and \ref{lem:WaveletScatteringDerivHighFreq_DiffFunc} bound wavelet invariant derivatives. Lemma \ref{lem:WSC_deriv_cons} bounds terms which arise in the dilation unbiasing procedure of Sections \ref{sec:WSCdilationMRA} and \ref{sec: noisy dilation MRA model}.

\begin{lemma}[$k$-admissable]
	\label{lem:kadmissable}
	If $\widehat{\psi}\in\Cb^{k} (\R)$, $(P\psi)^{(i)}$ decays fast than $\omega^{i+1}$, and $\int  \frac{|\widehat{\psi}(\omega)|^2}{\omega^2}\ d\omega <\infty$, then $\psi$ is $k$-admissable.
\end{lemma}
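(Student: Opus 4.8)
The plan is to verify the two quantitative conditions in the definition of $k$-admissibility, namely $\Psi_k < \infty$ and $\Theta_k < \infty$, since $\widehat{\psi}\in\Cb^k(\R)$ is assumed. First I would note that $P\psi = |\widehat{\psi}|^2 = \widehat{\psi}\,\overline{\widehat{\psi}}$ inherits the regularity of $\widehat{\psi}$: conjugation preserves $\Cb^k$ smoothness and products of $\Cb^k$ functions are $\Cb^k$, so $P\psi\in\Cb^k(\R)$ and each $(P\psi)^{(i)}$ with $0\le i\le k$ is continuous. I would also record that $\widehat{\psi}(0)=\int\psi = 0$ because $\psi$ is a wavelet. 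The sums defining $\Psi_k$ and $\Theta_k$ are finite linear combinations, with constant coefficients $\tfrac{1}{2\pi}\binom{k}{i}\tfrac{k!}{i!}$, of integrals $\int_\R |\omega^{m}(P\psi)^{(i)}(\omega)|\,d\omega$ where $m=i$ (for $\Psi_k$) or $m=i-2$ (for $\Theta_k$); in every case $m\le i$. So it is enough to show each such integral is finite, and I would split $\R$ into the tail $\{|\omega|>1\}$ and the core $\{|\omega|\le 1\}$.

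On the tail $\{|\omega|>1\}$ I would use the decay hypothesis on $(P\psi)^{(i)}$ — that it decays faster than $|\omega|^{-(i+1)}$, which I take to mean $|(P\psi)^{(i)}(\omega)|\lesssim |\omega|^{-(i+1)-\epsilon}$ for some $\epsilon>0$ and $|\omega|$ large. Since $m\le i$, this gives $|\omega^{m}(P\psi)^{(i)}(\omega)|\lesssim |\omega|^{-1-\epsilon}$, which is integrable on $\{|\omega|>1\}$; hence every tail contribution is finite.

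On the core $\{|\omega|\le 1\}$ the behaviour depends on the sign of $m$. When $m\ge 0$ — this covers all of $\Psi_k$ and the terms $i=2,\dots,k$ of $\Theta_k$ — one has $|\omega^{m}|\le 1$ and $(P\psi)^{(i)}$ is bounded on $[-1,1]$ by continuity, so the integral is finite. The only terms left are $i=0$ and $i=1$ in $\Theta_k$. For $i=0$ ($m=-2$) the core integral equals $\int_{|\omega|\le 1}\frac{|\widehat{\psi}(\omega)|^2}{\omega^2}\,d\omega$, which is bounded by the hypothesis $\int_\R \frac{|\widehat{\psi}(\omega)|^2}{\omega^2}\,d\omega<\infty$ — the one place this strengthened admissibility integral is actually needed. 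For $i=1$ ($m=-1$) I would use $|(P\psi)'(\omega)| \le 2|\widehat{\psi}(\omega)|\,|\widehat{\psi}'(\omega)|$ together with $\widehat{\psi}(0)=0$ to write
\[
\frac{|(P\psi)'(\omega)|}{|\omega|}\;\le\; 2\,|\widehat{\psi}'(\omega)|\cdot\frac{|\widehat{\psi}(\omega)|}{|\omega|}\;=\;2\,|\widehat{\psi}'(\omega)|\cdot\Bigl|\int_0^1 \widehat{\psi}'(t\omega)\,dt\Bigr|,
\]
and the right-hand side is continuous, hence bounded, on $[-1,1]$, so this integral is finite too. Adding the tail and core bounds over the finitely many terms gives $\Psi_k<\infty$ and $\Theta_k<\infty$, so $\psi$ is $k$-admissable.

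I do not anticipate a real obstacle; the lemma is essentially a careful splitting of $\Lb^1$ integrals. The only spot demanding a little attention is the origin in the $i=1$ term of $\Theta_k$, where the $|\omega|^{-1}$ factor is controlled not by the admissibility integral but by the vanishing $\widehat{\psi}(0)=0$ combined with $\widehat{\psi}\in\Cb^1$; and, for complex $\psi$, one should confirm $P\psi\in\Cb^k$ follows from $\widehat{\psi}\in\Cb^k$, which it does.
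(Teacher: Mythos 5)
Your proof is correct, and its overall skeleton matches the paper's: reduce $\Psi_k,\Theta_k$ to finitely many $\Lb^1$ integrals, handle the tails by the decay hypothesis, the terms with nonnegative powers of $\omega$ by continuity, the $\omega^{-2}P\psi$ term by the assumed integral, and isolate $\omega^{-1}(P\psi)'$ as the only delicate term near the origin. Where you genuinely diverge is in that last step. The paper argues that $\int \frac{|\widehat{\psi}(\omega)|^2}{\omega^2}\,d\omega<\infty$ together with continuity forces $P\psi\sim\omega^{1+\epsilon}$ near $0$ and then differentiates this asymptotic to get $(P\psi)'\sim\omega^{\epsilon}$, so the same hypothesis is used twice; this passage from an integral condition to pointwise behaviour, and the differentiation of an asymptotic relation, are stated rather than justified. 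You instead use the zero-average property $\widehat{\psi}(0)=0$ together with $\widehat{\psi}\in\Cb^{1}$, writing $|(P\psi)'|\le 2|\widehat{\psi}|\,|\widehat{\psi}'|$ and $\widehat{\psi}(\omega)/\omega=\int_0^1\widehat{\psi}'(t\omega)\,dt$, which gives boundedness of $\omega^{-1}(P\psi)'$ on $[-1,1]$ by continuity alone. Your route is fully rigorous at this point and does not spend the strengthened admissibility integral on the $i=1$ term (it is needed only for $i=0$), at the price of invoking the zero mean of the wavelet, which the paper's argument does not use; both hypotheses are available, so either way the lemma holds. Your explicit reading of ``decays faster than $\omega^{i+1}$'' as decay with a polynomial margin $|\omega|^{-(i+1)-\epsilon}$ is the same reading the paper makes implicitly when it asserts $\Lb^1$ membership of the tails.
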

\begin{proof}
	We first note that $\widehat{\psi} \in \Cb^{k} (\R)$ guarantees $P\psi \in \Cb^{k} (\R)$. Since $(P\psi)^{(i)}$ decays faster than $\omega^{i+1}$ and $P\psi \in \Cb^{k} (\R)$, $\omega^i(P\psi)^{(i)}(\omega) \in \Lb^1 (\R)$ for $0 \leq i\leq k$, so $\Psi_k < \infty$. Also $P\psi \in \Cb^{k} (\R)$ and $\omega^{i}(P\psi)^{(i)} \in \Lb^1 (\R)$ implies $\omega^{i-2}(P\psi)^{(i)} \in \Lb^1 (\R)$ for $2 \leq i \leq k$. In addition, $\omega^{-2}(P\psi)(\omega) \in \Lb^1 (\R)$ by assumption. Thus to conclude $\Theta_k<\infty$, it only remains to show $\omega^{-1}(P\psi)^{'}(\omega) \in \Lb^1 (\R)$. Since $(P\psi)^{'}$ is continuous and decays faster than $\omega^2$, only the integrability around the origin needs to be verified.  We note that $\int  \frac{|\widehat{\psi}(\omega)|^2}{\omega^2}\ d\omega <\infty$ and $P\psi$ continuous implies $P\psi \sim \omega^{1+\epsilon}$ for some $\epsilon >0$ as $\omega \rightarrow 0$. Thus $(P\psi)' \sim \omega^{\epsilon}$ as $\epsilon \rightarrow 0$, so that $\omega^{-1}(P\psi)' \sim \omega^{\epsilon-1}$; the function is thus integrable around the origin since $\epsilon-1 > -1$. 
\end{proof}

\lemWaveletScatteringDeriv*

\begin{proof}
	Let $g(\omega) = (P\psi)(\omega) = |\widehat{\psi}(\omega)|^2$, and let
	\begin{align*}
	g_\lambda(\omega) := \frac{1}{\lambda}g\left(\frac{\omega}{\lambda}\right) = |\widehat{\psi}_\lambda(\omega)|^2\, .
	\end{align*}
	Utilizing Definition \ref{def:WSCderiv} we obtain
	\begin{align*}
	\lambda^m(Sf)^{(m)}(\lambda) &= \frac{1}{2\pi}\int |\widehat{f}(\omega)|^2 \left[  \lambda^m \frac{d^m}{d\lambda^m} g_\lambda(\omega)\right]\ d\omega	\, .
	\end{align*}
	Expanding the derivative gives:
	\begin{align*}
	\lambda^m \frac{d^m}{d\lambda^m} g_\lambda(\omega) &= C_{m,0}\,g_\lambda(\omega)+C_{m,1}\,\omega g'_\lambda(\omega)+C_{m,2}\,\omega^2 g''_\lambda(\omega) + \ldots C_{m,m}\,\omega^m g^{(m)}_\lambda(\omega) \, ,\\
	C_{m,i} &= (-1)^m {m \choose i} \frac{m!}{i!} \, .
	\end{align*}
	Utilizing $\norm{\widehat{f}}_{\infty} \leq \norm{f}_1$ and $g^{(i)}_\lambda(\omega) = \frac{1}{\lambda^{i+1}}g^{(i)}\left(\frac{\omega}{\lambda}\right)$, one obtains:
	\begin{align*}
	|\lambda^m (Sf)^{(m)}(\lambda)| &\leq \sum_{i=0}^m \frac{|C_{m,i}|}{2\pi} \int  |\widehat{f}(\omega)|^2 |\omega^i g_\lambda^{(i)}(\omega)| \ d\omega \\
	&\leq \norm{f}_1^2 \sum_{i=0}^m \frac{|C_{m,i}|}{2\pi} \int |\omega^i g_\lambda^{(i)}(\omega)| \ d\omega \\
	&= \norm{f}_1^2 \sum_{i=0}^m \frac{|C_{m,i}|}{2\pi} \int |\omega^i g^{(i)}(\omega)| \ d\omega \\
	&= \norm{f}_1^2 \sum_{i=0}^m \frac{|C_{m,i}|}{2\pi} \cdot \norm{\omega^i g^{(i)}(\omega)}_1 \\
	&= \Psi_m \norm{f}_1^2\, .
	\end{align*}
\end{proof}

\lemWaveletScatteringDerivHighFreqDiffFunc*

\begin{proof}
	Recall from the proof of Lemma \ref{lem:WaveletScatteringDeriv} that:
	\begin{align*}
	|\lambda^m (Sf)^{(m)}(\lambda)| &\leq \sum_{i=0}^m \frac{|C_{m,i}|}{2\pi} \int  |\widehat{f}(\omega)|^2 |\omega^i g_\lambda^{(i)}(\omega)| \ d\omega 
	\end{align*}
	where $g_\lambda(\omega) = \frac{1}{\lambda}g\left(\frac{\omega}{\lambda}\right) = |\widehat{\psi}_\lambda(\omega)|^2$ and $C_{m,i} = (-1)^m {m \choose i} \frac{m!}{i!}$. Since $\norm{\omega\widehat{f}(\omega)}_{\infty} \leq \norm{f'}_1$ and $g^{(i)}_\lambda(\omega) = \frac{1}{\lambda^{i+1}}g^{(i)}\left(\frac{\omega}{\lambda}\right)$, we obtain:
	\begin{align*}
	|\lambda^m (Sf)^{(m)}(\lambda)| &\leq \sum_{i=0}^m \frac{|C_{m,i}|}{2\pi} \int  |\omega\widehat{f}(\omega)|^2 |\omega^{i-2} g_\lambda^{(i)}(\omega)| \ d\omega \\
	&\leq \norm{f'}_1^2 \sum_{i=0}^m\frac{|C_{m,i}|}{2\pi} \int  |\omega^{i-2} g_\lambda^{(i)}(\omega)| \ d\omega \\
	&= \frac{\norm{f'}_1^2}{\lambda^2} \sum_{i=0}^m \frac{|C_{m,i}|}{2\pi} \int  |\omega^{i-2} g^{(i)}(\omega)| \ d\omega \\
	&= \frac{\norm{f'}_1^2}{\lambda^2} \sum_{i=0}^m \frac{|C_{m,i}|}{2\pi} \cdot \norm{\omega^{i-2}g^{(i)}(\omega)}_1 \\
	&= \frac{\Theta_m}{\lambda^2} \norm{f'}_1^2\, .
	\end{align*}
\end{proof}

\begin{lemma} 
	\label{lem:WSC_deriv_cons}
	Assume $Pf\in\Cb^0 (\R)$ and $\psi$ is $m$-admissable, and let $B_m, E, \Psi_m, \Theta_m$ be as defined in (\ref{equ:MomentBasedConstants}), (\ref{equ:E}), (\ref{equ:Psik}) (\ref{equ:Thetak}). Then:
	\begin{align*}
	\frac{1}{2\pi}\int |\widehat{f}(\omega)|^2 \cdot \left|B_m\eta^m\lambda^m \frac{d^m}{d\lambda^m} |\widehat{\psi}_\lambda(\omega)|^2\right|\ d\omega &\leq (E\eta)^m \Lambda_m(\lambda)\, ,
	\end{align*}
	where
	\begin{align*}
	\Lambda_m(\lambda) &= \left(\norm{f}_1^2\Psi_m \wedge \frac{\norm{f'}_1^2\Theta_m}{\lambda^2}\right)\, .
	\end{align*}
\end{lemma}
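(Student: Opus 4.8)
The plan is to observe that the proofs of Lemmas \ref{lem:WaveletScatteringDeriv} and \ref{lem:WaveletScatteringDerivHighFreq_DiffFunc} already deliver, essentially verbatim, the estimate we need for the integral, and then to dispose of the scalar $|B_m|\eta^m$ by reading off $|B_m|\le E^m$ from the definition of $E$. Writing $g = P\psi$ and $g_\lambda(\omega) = \lambda^{-1}g(\omega/\lambda) = |\widehat\psi_\lambda(\omega)|^2$, the $m$-admissibility of $\psi$ gives $g\in\Cb^m(\R)$ with $\omega^i g^{(i)},\ \omega^{i-2}g^{(i)}\in\Lb^1(\R)$ for $0\le i\le m$, so the Leibniz rule (valid since $Pf\in\Cb^0$) yields $\lambda^m\frac{d^m}{d\lambda^m}g_\lambda(\omega) = \sum_{i=0}^m C_{m,i}\,\omega^i g_\lambda^{(i)}(\omega)$ with $C_{m,i} = (-1)^m{m\choose i}\frac{m!}{i!}$ and $g_\lambda^{(i)}(\omega) = \lambda^{-(i+1)}g^{(i)}(\omega/\lambda)$. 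Applying the triangle inequality pointwise in $\omega$,
\begin{align*}
\frac{1}{2\pi}\int |\widehat f(\omega)|^2\left|\lambda^m\frac{d^m}{d\lambda^m}|\widehat\psi_\lambda(\omega)|^2\right|d\omega \le \sum_{i=0}^m\frac{|C_{m,i}|}{2\pi}\int |\widehat f(\omega)|^2\,|\omega^i g_\lambda^{(i)}(\omega)|\,d\omega,
\end{align*}
which is exactly the quantity bounded in the two cited proofs.

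From there I would invoke the two bounds established there. Using $\norm{\widehat f}_\infty\le\norm{f}_1$ together with the rescaling $\omega = \lambda u$ (under which $\int|\omega^i g_\lambda^{(i)}(\omega)|\,d\omega = \int|u^i g^{(i)}(u)|\,du$ is $\lambda$-independent) gives the bound $\norm{f}_1^2\Psi_m$; using $\norm{\omega\widehat f(\omega)}_\infty\le\norm{f'}_1$ and writing $|\omega^i g_\lambda^{(i)}(\omega)| = |\omega|^2\,|\omega^{i-2}g_\lambda^{(i)}(\omega)|$ gives $\lambda^{-2}\norm{f'}_1^2\Theta_m$. Hence the left-hand side above is at most $\norm{f}_1^2\Psi_m\wedge\lambda^{-2}\norm{f'}_1^2\Theta_m = \Lambda_m(\lambda)$.

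Finally I would factor out $|B_m|\eta^m$ and note that taking $i=m$, $j=0$ in the definition \eqref{equ:E} of $E$ forces $E\ge|B_m|^{1/m}$, i.e. $|B_m|\eta^m\le(E\eta)^m$; multiplying this by the previous display gives the claimed inequality. I do not expect a genuine obstacle here: the content is a re-packaging of the two derivative lemmas together with the elementary observation $|B_m|\le E^m$, and the only steps needing a word of justification are the interchange of $\frac{d^m}{d\lambda^m}$ with $\int\cdot\,d\omega$ (where $Pf\in\Cb^0(\R)$ and $m$-admissibility are used) and the finiteness of $\Psi_m,\Theta_m$ (again from $m$-admissibility).
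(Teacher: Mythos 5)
Your proposal is correct and matches the paper's own argument: the paper likewise takes the intermediate bounds $\frac{1}{2\pi}\int |\widehat f(\omega)|^2\,|\lambda^m\frac{d^m}{d\lambda^m}|\widehat\psi_\lambda(\omega)|^2|\,d\omega \le \Psi_m\norm{f}_1^2$ and $\le \Theta_m\norm{f'}_1^2/\lambda^2$ directly from the proofs of Lemmas \ref{lem:WaveletScatteringDeriv} and \ref{lem:WaveletScatteringDerivHighFreq_DiffFunc}, and then concludes with $|B_m|\le E^m$ exactly as you do via the $i=m$, $j=0$ term in \eqref{equ:E}. The only cosmetic difference is that you re-expand the Leibniz computation rather than just citing it, which is fine (and note that, by Definition \ref{def:WSCderiv}, the derivative already sits inside the integral, so no interchange actually needs justification).
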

\begin{proof}
	From the proof of Lemma \ref{lem:WaveletScatteringDeriv}:
	\begin{align*}
	\frac{1}{2\pi}\int |\widehat{f}(\omega)|^2 \cdot \left|\lambda^m \frac{d^m}{d\lambda^m} |\widehat{\psi}_\lambda(\omega)|^2\right|\ d\omega &\leq \Psi_m \norm{f}_1^2\, .
	\end{align*}
	From the proof of Lemma \ref{lem:WaveletScatteringDerivHighFreq_DiffFunc}:
	\begin{align*}
	\frac{1}{2\pi}\int |\widehat{f}(\omega)|^2 \cdot \left|\lambda^m \frac{d^m}{d\lambda^m} |\widehat{\psi}_\lambda(\omega)|^2\right|\ d\omega &\leq \Theta_m \frac{\norm{f'}_1^2}{\lambda^2}\, .
	\end{align*}
	Utilizing $|B_m|\leq E^m$ gives
	\begin{align*}
	\frac{1}{2\pi}\int |\widehat{f}(\omega)|^2 \cdot \left|B_m\eta^m\lambda^m \frac{d^m}{d\lambda^m} |\widehat{\psi}_\lambda(\omega)|^2\right|\ d\omega \leq (E\eta)^m\left(\norm{f}_1^2\Psi_m \wedge \frac{\norm{f'}_1^2\Theta_m}{\lambda^2}\right)\, .
	\end{align*}
\end{proof}

The following Corollary is obtained from Lemma \ref{lem:WSC_deriv_cons} when $f$ is a dirac-delta function.
\begin{corollary}
	\label{cor:WSC_deriv_cons_fdirac}
	Assume $\psi$ is $m$-admissable, and let $B_m, E, \Psi_m$ be as defined in (\ref{equ:MomentBasedConstants}), (\ref{equ:E}), (\ref{equ:Psik}). Then:
	\begin{align*}
	\frac{1}{2\pi}\int\left|B_m\eta^m\lambda^m \frac{d^m}{d\lambda^m} |\widehat{\psi}_\lambda(\omega)|^2\right|\ d\omega &\leq (E\eta)^m \Psi_m\, .
	\end{align*}
\end{corollary}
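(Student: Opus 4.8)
The plan is to read off Corollary \ref{cor:WSC_deriv_cons_fdirac} as the special case of Lemma \ref{lem:WSC_deriv_cons} in which the signal $f$ is (formally) the Dirac delta $\delta$. For $f=\delta$ one has $\widehat{f}\equiv 1$ and $\norm{f}_1=1$, so the left-hand side of Lemma \ref{lem:WSC_deriv_cons} collapses to exactly the integral appearing in the corollary, while the bound $(E\eta)^m\Lambda_m(\lambda)$ on the right reduces to $(E\eta)^m\Psi_m$ once one discards the $\Theta_m/\lambda^2$ branch of the minimum defining $\Lambda_m$ and uses $\norm{f}_1^2=1$.

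Since $\delta\notin\Lb^1(\R)$, I would make this precise by a standard approximation. Choose a mollifier sequence $f_n$ with $\norm{f_n}_1=1$, $\lvert\widehat{f_n}\rvert\le 1$ everywhere, and $\widehat{f_n}\to 1$ pointwise; each $f_n$ satisfies the hypotheses of Lemma \ref{lem:WSC_deriv_cons} (in particular $Pf_n=\lvert\widehat{f_n}\rvert^2$ is continuous), so applying that lemma and bounding $\Lambda_m(\lambda)\le\Psi_m\norm{f_n}_1^2=\Psi_m$ gives
\[
\frac{1}{2\pi}\int \lvert\widehat{f_n}(\omega)\rvert^2\,\Bigl\lvert B_m\eta^m\lambda^m\tfrac{d^m}{d\lambda^m}\lvert\widehat{\psi}_\lambda(\omega)\rvert^2\Bigr\rvert\,d\omega \;\le\; (E\eta)^m\Psi_m .
\]
The integrand on the left is dominated by the integrable function $\bigl\lvert B_m\eta^m\lambda^m\tfrac{d^m}{d\lambda^m}\lvert\widehat{\psi}_\lambda\rvert^2\bigr\rvert$ --- integrable precisely because $\psi$ is $m$-admissible, as in the proof of Lemma \ref{lem:WaveletScatteringDeriv} --- so dominated convergence lets me send $n\to\infty$ and recover the corollary.

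An equivalent and perhaps cleaner route, which avoids mentioning $\delta$ at all, is to redo the estimate directly. From the proof of Lemma \ref{lem:WaveletScatteringDeriv}, $\lambda^m\tfrac{d^m}{d\lambda^m}g_\lambda(\omega)=\sum_{i=0}^m C_{m,i}\,\omega^i g_\lambda^{(i)}(\omega)$ with $\lvert C_{m,i}\rvert={m\choose i}\tfrac{m!}{i!}$, and the change of variables used there gives $\int\lvert\omega^i g_\lambda^{(i)}(\omega)\rvert\,d\omega=\norm{\omega^i(P\psi)^{(i)}}_1$ independently of $\lambda$. Summing over $i$ and comparing with the definition (\ref{equ:Psik}) yields
\[
\frac{1}{2\pi}\int\Bigl\lvert B_m\eta^m\lambda^m\tfrac{d^m}{d\lambda^m}\lvert\widehat{\psi}_\lambda(\omega)\rvert^2\Bigr\rvert\,d\omega \;\le\; \lvert B_m\rvert\eta^m\sum_{i=0}^m\frac{\lvert C_{m,i}\rvert}{2\pi}\norm{\omega^i(P\psi)^{(i)}}_1 \;=\; \lvert B_m\rvert\eta^m\,\Psi_m ,
\]
and the proof finishes with $\lvert B_m\rvert\le E^m$, which is immediate from the definition (\ref{equ:E}) of $E$. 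There is no genuine obstacle here; the only point deserving a sentence of care is that the $f=\delta$ substitution is not literally an instance of Lemma \ref{lem:WSC_deriv_cons}, which is why I would include either the limiting argument or the self-contained estimate above.
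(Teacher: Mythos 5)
Your proposal is correct and follows essentially the same route as the paper, which obtains the corollary precisely as the ``$f=\delta$'' case of Lemma \ref{lem:WSC_deriv_cons} (equivalently, by rerunning the estimate from the proof of Lemma \ref{lem:WaveletScatteringDeriv} with $|\widehat{f}|^2$ replaced by $1$ and then using $|B_m|\leq E^m$). Your added care about $\delta\notin\Lb^1(\R)$, via mollification or the self-contained change-of-variables estimate, is a legitimate tightening of a step the paper leaves implicit, not a different argument.
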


\section{PS and wavelet invariant equivalence}
\label{app:WSC_PS_equivalence}

This appendix contains supporting results for demonstrating the equivalence of the power spectrum and wavelet invariants. Lemma \ref{lem:WSC_PS_equivalence} establishes that wavelet invariants uniquely determine any \edit{bandlimited} $\Lb^2$ function, as long as the wavelet satisfies the linear independence Condition \ref{cond: linear indep wavelet} \edit{and a mild integrability condition}. Proposition \ref{prop:linear_indep} gives two criteria which are sufficient to guarantee Condition \ref{cond: linear indep wavelet}. Finally, Lemma \ref{lem:Morlet_lin_indep} establishes that the Morlet wavelet satisfies Condition \ref{cond: linear indep wavelet}.

\lemWSCPSequivalence*

\begin{proof}
	
	\edit{Since $p$ is continuous, there exists an $\epsilon>0$ such that on $(0,\epsilon)$ one either has $p=0$, $p>0$, or $p<0$. Claim: one must have $p=0$. Suppose not, and without loss of generality assume $p>0$ on $(0,\epsilon)$ and that the support of $|\widehat{\psi}^{+}(\omega)|^2$ is contained in the interval $[1,2]$. Now choose $\lambda_0$ small enough so that $|\widehat{\psi}^{+}_{\lambda_0}(\omega)|^2$ is supported on $[\epsilon/2,\epsilon]$, i.e. $\lambda_0=\epsilon/2$. Clearly there must exist a subset $\mathcal{M}\subseteq [\epsilon/2,\epsilon]$ of positive measure such that $|\widehat{\psi}^{+}_{\lambda_0}(\omega)|^2>0$ on $\mathcal{M}$. Then:
	\begin{align*}
	0 &= \int_0^\infty p(\omega)|\widehat{\psi}^{+}_{\lambda_0}(\omega)|^2\ d\omega = \int_{\epsilon/2}^{\epsilon} p(\omega)|\widehat{\psi}^{+}_{\lambda_0}(\omega)|^2\ d\omega \geq \int_{\mathcal{M}} p(\omega)|\widehat{\psi}^{+}_{\lambda_0}(\omega)|^2\ d\omega \geq 0
	\end{align*}
	We conclude 
	\begin{align*}
	\int_{\mathcal{M}} p(\omega)|\widehat{\psi}^{+}_{\lambda_0}(\omega)|^2\ d\omega &= 0 \, ,
	\end{align*}
	but this is impossible since the integrand is strictly positive on $\mathcal{M}$. We thus conclude that $p=0$ on $(0, \epsilon)$. Thus it is sufficient to only consider frequencies $[\epsilon, \infty)$.}
	
	Assume $\int  p(\omega) |\widehat{\psi}_\lambda(\omega)|^2\ d\omega = 0$ for all $\lambda$. Since $p(\omega)=p(-\omega)$,
	\begin{align*}
	\int  p(\omega) |\widehat{\psi}_\lambda(\omega)|^2\ d\omega &= \int_{0}^{\infty} p(\omega) |\widehat{\psi}^{+}_\lambda(\omega)|^2\ d\omega = \edit{\int_{\epsilon}^{\infty} p(\omega) |\widehat{\psi}^{+}_\lambda(\omega)|^2\ d\omega =\langle p, |\widehat{\psi}^{+}_\lambda|^2 \rangle_{\edit{I}}= } 0 \quad \forall \, \lambda\, ,
	\end{align*}
	\edit{where $I=[\epsilon,\infty)$.}
	\edit{We now define $|\widehat{\phi}^{+}_\lambda(\omega)|^2 := \lambda^{-\beta} |\widehat{\psi}^{+}_\lambda(\omega)|^2$ for some $\beta>0$, and observe that
		\begin{align*}
		\int_{0}^{\infty} p(\omega) |\widehat{\phi}^{+}_\lambda(\omega)|^2\ d\omega &= 0 \quad \forall \, \lambda \quad
		\implies \quad \int_0^\infty |\langle p, |\edit{\widehat{\phi}}^{+}_\lambda|^2 \rangle_{\mathbb{R}^+}|^2 \ d\lambda = \int_0^\infty |\langle p, |\edit{\widehat{\phi}}^{+}_\lambda|^2 \rangle_{I}|^2 \ d\lambda=0 \, .
		\end{align*} }
	Note:
	\begin{align*}
	\int_0^\infty |\langle p, |\edit{\widehat{\phi}}^{+}_\lambda|^2 & \rangle_{\edit{I}}|^2 \ d\lambda = \int_0^\infty \langle p, |\edit{\widehat{\phi}}^{+}_\lambda|^2\rangle_{\edit{I}} \langle \overline{p}, |\edit{\widehat{\phi}}^{+}_\lambda|^2\rangle_{\edit{I}}\ d\lambda \\
	&= \int_0^\infty \left(\int_{\edit{I}} p(\omega_1)|\edit{\widehat{\phi}}^{+}_\lambda(\omega_1)|^2\ d\omega_1\right)\left(\int_{\edit{I}} \overline{p(\omega_2)}|\edit{\widehat{\phi}}^{+}_\lambda(\omega_2)|^2\ d\omega_2\right) d\lambda \\
	&=\int_{\edit{I}}  \overline{p(\omega_2)}\left(\int_{\edit{I}} p(\omega_1)\left(\int_0^\infty |\edit{\widehat{\phi}}^{+}_\lambda(\omega_1)|^2 |\edit{\widehat{\phi}}^{+}_\lambda(\omega_2)|^2 d\lambda\right) d\omega_1\right) d\omega_2\, .
	\end{align*}	
	\edit{We now apply the change of variable $\omega_i=1/\xi_i$, and let $g(\xi_i) =  p(1/\xi_i)$. We obtain:
	\begin{align}
	\label{equ:new_IP}
	0 &= \int_0^{1/\epsilon} \overline{g(\xi_2)}\left(\int_0^{1/\epsilon} g(\xi_1)\left(\int_0^{\infty}\frac{1}{\xi_1^2\xi_2^2}\left|\widehat{\phi}_\lambda^{+}\left(\frac{1}{\xi_1}\right)\right|^2\left|\widehat{\phi}_\lambda^{+}\left(\frac{1}{\xi_2}\right)\right|^2\ d\lambda\right)\ d\xi_1 \right)\ d\xi_2
	\end{align}}
	Now consider the kernel
	\begin{align*}
	k(\xi_1,\xi_2) &=\int_0^{\infty}\frac{1}{\xi_1^2\xi_2^2}\left|\widehat{\phi}_\lambda^{+}\left(\frac{1}{\xi_1}\right)\right|^2\left|\widehat{\phi}_\lambda^{+}\left(\frac{1}{\xi_2}\right)\right|^2\ d\lambda \, .
	\end{align*}
	\edit{Note that $k$ is a strictly positive definite kernel function} if for any finite sequence \edit{$\{\xi_i\}_{i=1}^n$ in $[0,1/\epsilon]$}, the $n$ by $n$ matrix $A$ defined by
	\begin{align*}
	A_{ij} &= \edit{k(\xi_i,\xi_j)}
	\end{align*}
	is \edit{strictly} positive definite \cite{winkler2002uncertainty}. Viewing \edit{$\tilde{\xi}_i(\lambda) = \xi_i^{-2}|\widehat{\phi}^{+}_\lambda(1/\xi_i)|^2$} as functions of $\lambda$, we see that
	\begin{align*}
	A_{ij} = \langle \edit{\tilde{\xi}}_i(\lambda), \edit{\tilde{\xi}}_j(\lambda)\rangle_{\mathbb{R}^{+}}
	\end{align*}
	and $A$ is thus a Gram matrix. Since the \edit{$\tilde{\xi}_i(\lambda)$ are linearly independent if and only if the $|\widehat{\psi}^{+}_\lambda(\omega_i)|^2$ are linearly independent, and the $|\widehat{\psi}^{+}_\lambda(\omega_i)|^2$} are linearly independent by assumption, we can conclude that $A$ and thus \edit{$k$ are strictly} positive definite. 
	\edit{Now consider} the corresponding integral operator on $\edit{[0,1/\epsilon]}$:
	\begin{align*}
	Kg(\xi_2) &= \int_0^{1/\epsilon} g(\xi_1)k(\xi_1,\xi_2)\ d\xi_1\,.
	\end{align*}
	\edit{Since $\psi\in\Lb^1(\R)$, $|\widehat{\psi}_\lambda^{+}|^2$ and thus $|\widehat{\phi}_\lambda^{+}|^2$ are continuous, and $k$ will thus be continuous as long as it remains bounded. To check boundedness we observe that $k(\xi_1,\xi_2)^2\leq k(\xi_1,\xi_1)k(\xi_2,\xi_2)$ \cite{buescu2004positive}, and 
		\begin{align*}
		k(\xi,\xi) &= \int_0^{\infty} \frac{1}{\xi^4} \left|\widehat{\phi}_{\lambda}^{+}\left(\frac{1}{\xi}\right)\right|^4\ d\lambda \\
			&= \int_0^{\infty} \frac{1}{\xi^4} \frac{1}{\lambda^{2+2\beta}}\left|\widehat{\psi}^{+}\left(\frac{1}{\lambda\xi}\right)\right|^4\ d\lambda  \\
			&= \int_0^{\infty} \frac{1}{\xi^4} (\omega\xi)^{2+2\beta} |\widehat{\psi}^{+}(\omega)|^4 \frac{d\omega}{\xi\omega^2} \\
			&= \xi^{2\beta-3} \int_0^{\infty} \omega^{2\beta}|\widehat{\psi}^{+}(\omega)|^4\ d\omega  \\
			&\leq 3\xi^{2\beta-3} \int_0^{\infty} \omega^{2\beta}|\widehat{\psi}(\omega)|^4\ d\omega \\
			&\leq 3\xi^{2\beta-3} \norm{\omega^{\beta}P\psi}_2^2\, .
		\end{align*}
		Since $\widehat{\psi}$ has a compact support, clearly $\norm{\omega^{\beta}P\psi}_2^2<\infty$, and $k$ is thus bounded on the compact interval $[0,1/\epsilon]$ as long as $\beta\geq 3/2$.}
	\edit{Since $k$ is continuous and $[0,1/\epsilon]$ is compact, $K:\Lb^2\edit{[0,1/\epsilon]} \rightarrow \Lb^2\edit{[0,1/\epsilon]}$ is a compact, self-adjoint operator and by Mercer's Theorem $K$ is also strictly positive definite \cite{winkler2002uncertainty}. Since $\langle Kg, g \rangle_{[0,1/\epsilon]}=0$ by (\ref{equ:new_IP}), we conclude $g=0$ in $\Lb^2[0,1/\epsilon]$. Thus $p(1/\xi)=0$ for almost every $\xi\in(0,1/\epsilon]$, which implies $p(\omega)=0$ for almost every $\omega\in[\epsilon,\infty)$.}
	Since $p(\omega)=p(-\omega)$ \edit{and $p=0$ on $(0,\epsilon)$}, $p=0$ for almost every $\omega\in\mathbb{R}$. \\

\end{proof}	

\proplinearindep*

\begin{proof}
Let $\{\omega_i\}_{i=1}^n$ be a finite sequence of distinct positive frequencies, and let $\tilde{\omega}_i(\lambda) = \frac{1}{|\lambda|}|\widehat{\psi}^{+}\left(\frac{\omega_i}{\lambda}\right)|^2$ denote the corresponding functions of $\lambda$. 

First assume (i).  Without loss of generality we assume that $[a,b]$ is a positive interval and that $|\widehat{\psi}(\omega)|^2 >0$ on  $(a,a+\epsilon)$ for some $\epsilon>0$. Clearly $|\widehat{\psi}^{+}(\omega)|^2 = |\widehat{\psi}(\omega)|^2$. A simple calculation shows that the support of $\tilde{\omega}_i(\lambda)$ is contained in the interval $\left[ \frac{\omega_i}{b}, \frac{\omega_i}{a}\right]$, and $\tilde{\omega}_i(\lambda)>0$ in a neighborhood of $\frac{\omega_i}{a}$. Assume we have ordered the $\omega_i$ so that $\omega_1 > \ldots >\omega_n >0$.
Now suppose
\begin{align*}
c_1\tilde{\omega}_1(\lambda) + \cdots + c_n\tilde{\omega}_n(\lambda) &= 0\, .
\end{align*}
Note $\tilde{\omega}_1(\lambda)$ is the only function in the above collection with support in a neighborhood of $\frac{\omega_1}{a}$; thus we must have $c_1=0$, so that 
\begin{align*}
c_2\tilde{\omega}_2(\lambda) + \cdots + c_n\tilde{\omega}_n(\lambda) &= 0\, .
\end{align*}
But now  $\tilde{\omega}_2(\lambda)$ is the only function in the above collection with support in a neighborhood of $\frac{\omega_2}{a}$, so we must have $c_2=0$, and proceeding iteratively we conclude that $c_1 = \ldots = c_n=0$. Thus $\{\tilde{\omega}_i(\lambda)\}_{i=1}^n$ is a linearly independent set, and Condition \ref{cond: linear indep wavelet} holds.

Now assume (ii). Since $\frac{d^{n}}{d\omega^n}\left( |\widehat{\psi}^{+}(\omega)|^2\right) \big\vert_{\omega=0} = 2\frac{d^{n}}{d\omega^n}\left( |\widehat{\psi}(\omega)|^2\right) \big\vert_{\omega=0}$, $|\widehat{\psi}^{+}(\omega)|^2$ is $\Cb^{\infty} (\R)$ and all derivatives of order at least $N$ are nonzero at $\omega=0$. Note $\{\tilde{\omega}_i(\lambda)\}_{i=1}^n=\{|\lambda|^{-1}|\widehat{\psi}^{+}(\omega_i/\lambda)|^2\}_{i=1}^n$ are linearly independent if and only if $\{|\widehat{\psi}^{+}(\omega_i/\lambda)|^2\}_{i=1}^n$ are linearly independent. Defining $\tilde{\lambda} = 1/\lambda$, this holds if and only if $\{|\widehat{\psi}^{+}(\omega_i\tilde{\lambda})|^2\}_{i=1}^n = \{g(\omega_i\tilde{\lambda})\}_{i=1}^n$ are linearly independent as functions of $\tilde{\lambda}$, where we define $g(\omega) = |\widehat{\psi}^{+}(\omega)|^2$. Assume
\begin{align*}
c_1g(\omega_1\tilde{\lambda}) + c_2g(\omega_2\tilde{\lambda})+\cdots + c_ng(\omega_n\tilde{\lambda}) &=0\, .
\end{align*}
Differentiating $m$ times for $N \leq m\leq N+n-1$, we obtain:
\begin{small}
\begin{align*}
c_1\omega_1^Ng^{(N)}(\omega_1\tilde{\lambda}) 
+\cdots + c_n\omega_n^Ng^{(N)}(\omega_n\tilde{\lambda}) &=0 \\
&\vdots \\
c_1\omega_1^{N+n-1}g^{(N+n-1)}(\omega_1\tilde{\lambda}) 
+\cdots + c_n\omega_n^{N+n-1}g^{(N+n-1)}(\omega_n\tilde{\lambda}) &=0	
\end{align*}
\end{small}
The above holds for all $\tilde{\lambda}$. We now take the limit as $\tilde{\lambda} \rightarrow 0$ to obtain:
\begin{align*}
g^{(N)}(0)(\omega_1^Nc_1+\omega_2^Nc_2+\ldots \omega_n^Nc_n) &= 0  \\
g^{(N+1)}(0)(\omega_1^{N+1}c_1+\omega_2^{N+1}c_2+\ldots \omega_n^{N+1}c_n) &= 0 \\
&\vdots \\
g^{(N+n-1)}(0)(\omega_1^{N+n-1}c_1+\omega_2^{N+n-1}c_2+\ldots \omega_n^{N+n-1}c_n) &= 0
\end{align*}
Since $g^{(m)}(0) \ne 0$, we obtain:
\begin{align*}
\begin{bmatrix}
\omega_1^N & \ldots & \omega_n^N \\
\omega_1^{N+1} & \ldots & \omega_n^{N+1} \\
\vdots & & \vdots \\
\omega_1^{N+n-1} & \ldots & \omega_n^{N+n-1}
\end{bmatrix}
\begin{bmatrix}
c_1 \\ c_2 \\ \vdots \\ c_n
\end{bmatrix}
=\begin{bmatrix}
0 \\ 0 \\ \vdots \\ 0
\end{bmatrix}
\end{align*}

\begin{align*}
\underbrace{\begin{bmatrix}
	1 & \ldots & 1 \\
	\omega_1 & \ldots & \omega_n \\
	\vdots & & \vdots \\
	\omega_1^{(n-1)} & \ldots & \omega_n^{(n-1)}
	\end{bmatrix}}_{:=A}
\underbrace{\begin{bmatrix}
	\omega_1^N & 0 & \ldots & 0 \\
	0 & \omega_2^N & \ldots & 0 \\
	\vdots & & \vdots \\
	0 & 0 & \ldots & \omega_n^{N}
	\end{bmatrix}}_{:=B}
\begin{bmatrix}
c_1 \\ c_2 \\ \vdots \\ c_n
\end{bmatrix}
=\begin{bmatrix}
0 \\ 0 \\ \vdots \\ 0
\end{bmatrix}
\end{align*}
Since $A$ is a Vandermonde matrix constructed from distinct $\omega_i$, $\det(A) \ne 0$. Since the $\omega_i$ are nonzero, $\det(B)\ne 0$. Thus $\det(AB)=\det(A)\det(B)\ne 0$. We conclude $AB$ is invertible and so all $c_i=0$, which gives Condition \ref{cond: linear indep wavelet}.

\end{proof}

\begin{lemma}
	\label{lem:Morlet_lin_indep}
	Suppose we construct a Morlet wavelet with parameter $\xi$, that is $\psi(x) = C_\xi \pi^{-1/4} e^{-x^2/2}(e^{i\xi x}-e^{-\xi^2/2})$ for $C_\xi =(1-e^{-\xi^2}-2e^{-3\xi^2/4})^{-1/2}$. Then for almost all $\xi \in \mathbb{R}^{+}$, the wavelet satisfies Condition \ref{cond: linear indep wavelet}.	
\end{lemma}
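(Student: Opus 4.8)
The plan is to invoke Proposition~\ref{prop:linear_indep}(ii): it suffices to exhibit, for almost every $\xi\in\R^{+}$, an integer $N$ such that $P\psi(\omega)=|\widehat{\psi}(\omega)|^2$ lies in $\Cb^{\infty}(\R)$ and $(P\psi)^{(m)}(0)\neq 0$ for all $m\geq N$. Smoothness is automatic since the Morlet $\psi$ is Schwartz, so the whole problem reduces to understanding the Taylor coefficients of $P\psi$ at the origin as a function of the parameter $\xi$.

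First I would compute $\widehat{\psi}$ in the paper's normalization: since the Morlet wavelet is a Gaussian times $e^{i\xi x}$ minus a Gaussian correction, $\widehat{\psi}(\omega)=C_\xi\pi^{-1/4}\sqrt{2\pi}\,\bigl(e^{-(\omega-\xi)^2/2}-e^{-\xi^2/2}e^{-\omega^2/2}\bigr)$, which is real-valued. Squaring and pulling the common factor $e^{-\omega^2/2}e^{-\xi^2/2}$ out of the bracket gives the compact expression
\[
P\psi(\omega)=D_\xi\, e^{-\omega^2}\,(e^{\omega\xi}-1)^2,\qquad D_\xi:=2\sqrt{\pi}\,C_\xi^{2}\,e^{-\xi^2}>0 .
\]
Expanding $(e^{\omega\xi}-1)^2=e^{2\omega\xi}-2e^{\omega\xi}+1$ and completing the square in each of the three resulting Gaussians yields $P\psi(\omega)=D_\xi\bigl(e^{\xi^2}e^{-(\omega-\xi)^2}-2e^{\xi^2/4}e^{-(\omega-\xi/2)^2}+e^{-\omega^2}\bigr)$.

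Next I would differentiate termwise using the Hermite identity $\frac{d^m}{d\omega^m}e^{-(\omega-a)^2}\big|_{\omega=0}=H_m(a)e^{-a^2}$, where $H_m$ denotes the degree-$m$ (physicists') Hermite polynomial, whose leading coefficient is $2^m$. The Gaussian prefactors cancel the factors $e^{-a^2}$ exactly, leaving
\[
(P\psi)^{(m)}(0)=D_\xi\,\phi_m(\xi),\qquad \phi_m(\xi):=H_m(\xi)-2H_m(\xi/2)+H_m(0).
\]
The decisive point is that $\phi_m$ is a polynomial in $\xi$ with leading term $(2^m-2)\xi^m$, hence for every $m\geq 2$ it has exact degree $m$ and is therefore not the zero polynomial, so its real zero set $Z_m\subset\R^{+}$ is finite (for $m=0,1$ one checks $\phi_m\equiv 0$, consistent with the admissibility-forced vanishing $\widehat{\psi}(0)=0$). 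Consequently $Z:=\bigcup_{m\geq 2}Z_m$ is a countable union of finite sets, hence Lebesgue-null; for every $\xi\in\R^{+}\setminus Z$ we have $(P\psi)^{(m)}(0)=D_\xi\,\phi_m(\xi)\neq 0$ for all $m\geq 2$, so Proposition~\ref{prop:linear_indep}(ii) with $N=2$ shows the corresponding Morlet wavelet satisfies Condition~\ref{cond: linear indep wavelet}. Since $Z$ has measure zero, this holds for almost every $\xi\in\R^{+}$.

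The only genuinely load-bearing step is verifying that the leading coefficient $2^m-2$ of $\phi_m$ is nonzero for $m\geq 2$: this is exactly what prevents some $\phi_m$ from being identically zero, which would enlarge the exceptional set from a null set to all of $\R^{+}$. The Fourier-transform computation and the Hermite differentiation are routine bookkeeping.
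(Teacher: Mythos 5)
Your proof is correct and follows essentially the same route as the paper: reduce to Proposition \ref{prop:linear_indep}(ii), write $P\psi(\omega)$ up to a positive constant as $e^{-\omega^2}(e^{\xi\omega}-1)^2$, identify $(P\psi)^{(m)}(0)$ with $H_m(\xi)-2H_m(\xi/2)+H_m(0)$, and discard a countable exceptional set of $\xi$. Your explicit verification that the leading coefficient $2^m-2$ is nonzero for $m\ge 2$ (so no $\phi_m$ is identically zero, which is what keeps the exceptional set null) makes explicit a point the paper leaves implicit, and your completing-the-square/Hermite-identity derivation replaces the paper's appeal to direct calculation or a CAS.
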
	

\begin{proof}
	The Fourier transform $\widehat{\psi}$ has form
	\begin{align*}
	\widehat{\psi}(\omega) &= \widetilde{C}_\xi e^{-\omega^2/2}(e^{\xi\omega}-1)
	\end{align*}
	for some constant $\widetilde{C}_\xi $ depending on $\xi$, so that
	\begin{align*}
	g(\omega) &:= \widetilde{C}_\xi^{-2}|\widehat{\psi}(\omega)|^2 =e^{-\omega^2}(e^{\xi\omega}-1)^2\, .
	\end{align*}
	From direct calculation or a computer algebra system (CAS), one obtains:
	\begin{align*}
	g^{(n)}(0) &= \begin{cases} H_n(\xi) - 2H_n(\xi/2) & n \text{ odd} \\
	H_n(\xi) - 2H_n(\xi/2) +\frac{(-1)^{\frac{n}{2}} n!}{\left(\frac{n}{2}\right)!} & n \text{ even} \end{cases} 
	\end{align*}
	where $H_n(\xi)$ is the $n^{\text{th}}$ degree physicist's Hermite polynomial. 
	We have $g'(0)=0$, but for $n>1$, $g^{(n)}(0)=0$ only when $\xi$ is a root of the above polynomial. Since the set of roots of the polynomials $\{g^{(n)}(0)\}_{n=1}^\infty$ is countable, if $\xi$ is selected at random from $\mathbb{R}$, it is not a root of any of these polynomials with probability 1, and $g^{(n)}(0) \ne 0$ for all $n$. Thus the wavelet satisfies criterion (ii) of Proposition \ref{prop:linear_indep}, and thus the linear independence Condition \ref{cond: linear indep wavelet}.
\end{proof}

\section{Supporting results: classic MRA}
\label{app:AddNoise}

This appendix contains supporting results for Section \ref{sec:AddNoise}. The first two lemmas (Lemmas \ref{lem:PS_addnoise} and Lemma \ref{lem:SignalAddNoisePS}) establish additive noise bounds for the power spectrum and are needed to prove Proposition \ref{prop:AddNoisePS}. The next two lemmas (Lemmas \ref{lem:WhiteNoise} and Lemma \ref{lem:SignalAddNoiseWSC}) establish additive noise bounds for wavelet invariants and are needed to prove Propostion \ref{prop:AddNoiseWSC}.

\begin{lemma}
	\label{lem:PS_addnoise}
	Let $\varepsilon(x)$ be a white noise processes on $[-\frac{1}{2},\frac{1}{2}]$ with variance $\sigma^2$. Then for all frequencies $\omega, \xi$:
	\begin{align}
	\Ex\left[\,|\widehat{\varepsilon}(\omega)|^2\right] &= \sigma^2\label{equ:PS_expec_addnoise} \\
	\Ex\left[\, |\widehat{\varepsilon}(\omega)|^4\right] &\leq 3\sigma^4  \label{equ:sqPS_expec_addnoise} \\
	\Ex\left[\, |\widehat{\varepsilon}(\omega)|^2|\widehat{\varepsilon}(\xi)|^2 \right] &\leq 3\sigma^4  \label{equ:productPS_expec_addnoise}\, . 
	\end{align} 
\end{lemma}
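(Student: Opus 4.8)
The plan is to treat the white noise $\varepsilon(x) = dB_x$ on $[-\tfrac12,\tfrac12]$ as a Gaussian random distribution and to write the Fourier transform as a Wiener integral. Concretely, $\widehat{\varepsilon}(\omega) = \int_{-1/2}^{1/2} e^{-ix\omega}\, dB_x$, which splits into real and imaginary parts $\widehat{\varepsilon}(\omega) = X(\omega) - iY(\omega)$ with $X(\omega) = \int_{-1/2}^{1/2} \cos(x\omega)\, dB_x$ and $Y(\omega) = \int_{-1/2}^{1/2} \sin(x\omega)\, dB_x$. By the Itô isometry these are jointly Gaussian with mean zero, and their covariances are $\Ex[X(\omega)^2] = \sigma^2 \int_{-1/2}^{1/2}\cos^2(x\omega)\,dx$, $\Ex[Y(\omega)^2] = \sigma^2\int_{-1/2}^{1/2}\sin^2(x\omega)\,dx$, and $\Ex[X(\omega)Y(\omega)] = \sigma^2\int_{-1/2}^{1/2}\cos(x\omega)\sin(x\omega)\,dx$. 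The first step is just to record that $\Ex[|\widehat\varepsilon(\omega)|^2] = \Ex[X(\omega)^2] + \Ex[Y(\omega)^2] = \sigma^2\int_{-1/2}^{1/2}1\,dx = \sigma^2$, which is \eqref{equ:PS_expec_addnoise}.

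For \eqref{equ:sqPS_expec_addnoise}, I would write $|\widehat\varepsilon(\omega)|^4 = (X^2+Y^2)^2$ and expand, using the Gaussian moment (Isserlis/Wick) formula: for a mean-zero Gaussian vector, $\Ex[Z_1Z_2Z_3Z_4]$ is the sum over the three pairings of products of covariances. This gives $\Ex[(X^2+Y^2)^2] = 3(\Ex X^2)^2 + 3(\Ex Y^2)^2 + 2\Ex X^2\,\Ex Y^2 + 4(\Ex[XY])^2$. Writing $a = \Ex X^2$, $b = \Ex Y^2$, $c = \Ex[XY]$ with $a+b = \sigma^2$ and, by Cauchy–Schwarz, $c^2 \le ab$, one bounds $3a^2 + 3b^2 + 2ab + 4c^2 \le 3a^2+3b^2+2ab+4ab = 3(a+b)^2 = 3\sigma^4$. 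That yields \eqref{equ:sqPS_expec_addnoise}.

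For \eqref{equ:productPS_expec_addnoise}, the same approach applies with the two-frequency Gaussian vector $(X(\omega),Y(\omega),X(\xi),Y(\xi))$. Expanding $|\widehat\varepsilon(\omega)|^2|\widehat\varepsilon(\xi)|^2 = (X(\omega)^2+Y(\omega)^2)(X(\xi)^2+Y(\xi)^2)$ and applying Wick's formula term by term, each fourth moment $\Ex[Z_1Z_2Z_3Z_4]$ is a sum of three products of covariances; after grouping, one gets $\Ex[|\widehat\varepsilon(\omega)|^2|\widehat\varepsilon(\xi)|^2] = \Ex|\widehat\varepsilon(\omega)|^2\,\Ex|\widehat\varepsilon(\xi)|^2 + (\text{cross-covariance terms})$. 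The leading term is $\sigma^4$, and the cross terms, each a squared covariance between a sine/cosine coordinate at $\omega$ and one at $\xi$, are controlled by Cauchy–Schwarz against the diagonal variances, so the whole thing is bounded by $3\sigma^4$ (in fact the same algebra as in \eqref{equ:sqPS_expec_addnoise} with the roles of the four Gaussians reshuffled bounds it by $3\sigma^4$).

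The only mildly delicate point — and the step I'd expect to need the most care — is making the Wiener-integral representation of $\widehat\varepsilon$ rigorous and justifying the moment computations for a white-noise process that is only a generalized function; once $\widehat\varepsilon(\omega)$ is identified as the Gaussian variable $\int e^{-ix\omega}dB_x$ with the covariance structure above, everything reduces to elementary Gaussian fourth-moment bookkeeping plus Cauchy–Schwarz. I would state the Itô-isometry representation up front, cite it, and then let the Wick expansion do the rest.
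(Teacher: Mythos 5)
Your proof is correct, and it reaches the three bounds by a genuinely different (more self-contained) route than the paper. The paper interprets $\widehat{\varepsilon}(\omega)=\int_{-1/2}^{1/2}e^{-i\omega x}\,dB_x$ and then simply plugs the complex integrand into its stochastic-calculus toolkit: Proposition \ref{prop:genItoIso} for \eqref{equ:PS_expec_addnoise}, Proposition \ref{prop:GenFourthMoment} for \eqref{equ:sqPS_expec_addnoise}, and Proposition \ref{prop:MostGenFourthMoment} for \eqref{equ:productPS_expec_addnoise}; this yields exact identities such as $\Ex|\widehat{\varepsilon}(\omega)|^4=2\sigma^4+\sigma^4\bigl(\int_{-1/2}^{1/2}e^{-2i\omega x}dx\bigr)\bigl(\int_{-1/2}^{1/2}e^{2i\omega x}dx\bigr)$, which are then bounded by $3\sigma^4$ just by replacing the oscillatory exponentials by their modulus. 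You instead split $\widehat{\varepsilon}(\omega)=X(\omega)-iY(\omega)$ into real Wiener integrals and run Isserlis/Wick on the real Gaussian vector, closing the estimates with Cauchy--Schwarz on the covariances ($c^2\le ab$, and $\sum_{i,j}c_{ij}^2\le\sigma^4$ in the two-frequency case); your algebra checks out and in fact reproduces the paper's exact expressions before bounding. What your route buys is independence from Propositions \ref{prop:GenFourthMoment} and \ref{prop:MostGenFourthMoment}, whose proofs the paper omits — you rederive exactly what is needed from the elementary real Gaussian fourth-moment formula. What the paper's route buys is brevity and exact intermediate identities, with all the stochastic-calculus bookkeeping quarantined in reusable propositions. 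Your closing caveat about rigorously identifying $\widehat{\varepsilon}(\omega)$ as the Wiener integral with the stated covariance structure is shared by the paper, which likewise takes this representation as the starting point, so it is not a gap relative to the paper's own treatment.
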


\begin{proof}
	By Proposition \ref{prop:genItoIso},	
	\begin{align*}
	\Ex\left[\, |\widehat{\varepsilon}(\omega)|^2\right] &= \mathbb{E}\left[\, \widehat{\varepsilon}(\omega)\overline{\widehat{\varepsilon}(\omega)} \right] \\
	&= \mathbb{E} \left[ \left(\int_{-1/2}^{1/2}e^{-i \omega x} \ dB_x\right)\left(\int_{-1/2}^{1/2}e^{i \omega x} \ dB_x\right) \right] \\
	&= \sigma^2\int_{-1/2}^{1/2}\ dx \\
	&=\sigma^2,
	\end{align*}
	which shows (\ref{equ:PS_expec_addnoise}). By Proposition \ref{prop:GenFourthMoment},
	\begin{align*} 
	\Ex \left[\, |\widehat{\varepsilon}({\omega})|^4\right] &= \Ex\left[\,\widehat{\varepsilon}({\omega})^2\left(\overline{\widehat{\varepsilon}(\omega)}\right)^2\right] \\
	&= \Ex \left[\left(\int_{-1/2}^{1/2} e^{-i\omega x}\ dB_x\right)^2\left(\int_{-1/2}^{1/2} e^{i\omega x}\ dB_x\right)^2\right] \\
	&=2\sigma^4\left(\int_{-1/2}^{1/2}\ dx\right)^2+\sigma^4\left(\int_{-1/2}^{1/2}e^{-2i\omega x}\ dx\right)\left(\int_{-1/2}^{1/2}e^{2i\omega x}\ dx\right) \\
	&\leq  2\sigma^4 +\sigma^4\left(\int_{-1/2}^{1/2}|e^{-2i\omega x}|\ dx\right)\left(\int_{-1/2}^{1/2}|e^{2i\omega x}|\ dx\right) \\
	&= 3\sigma^4,
	\end{align*}
	which shows (\ref{equ:sqPS_expec_addnoise}). Finally, by Proposition \ref{prop:MostGenFourthMoment}, we have
	\begin{align*}
	&\Ex\left[\,|\widehat{\varepsilon}(\omega)|^2|\widehat{\varepsilon}(\xi)|^2\right] \\
	&= \Ex \left[ \left( \int_{-1/2}^{1/2} e^{-i\omega x}\ dB_x\right) \left( \int_{-1/2}^{1/2} e^{i\omega x}\ dB_x\right)\left( \int_{-1/2}^{1/2} e^{-i\xi x}\ dB_x\right) \left( \int_{-1/2}^{1/2} e^{i\xi x}\ dB_x\right)\right] \\
	&= \sigma^4 \left[ \left(\int_{-1/2}^{1/2} e^{-i(\omega+\xi)x}\ dx \right)\left(\int_{-1/2}^{1/2} e^{i(\omega+\xi)x}\ dx \right)\right]  \\
	&\qquad + \sigma^4\left[\left(\int_{-1/2}^{1/2} e^{i(\xi-\omega)x}\ dx \right)\left(\int_{-1/2}^{1/2} e^{i(\omega-\xi)x}\ dx \right) + \left(\int_{-1/2}^{1/2}\ dx \right)\left(\int_{-1/2}^{1/2}\ dx \right)  \right] \\
	&\leq \sigma^4\left[3\left(\int_{-1/2}^{1/2}\ dx \right)\left(\int_{-1/2}^{1/2}\ dx \right)  \right] \\
	&= 3\sigma^4,
	\end{align*}
	which gives (\ref{equ:productPS_expec_addnoise}).
\end{proof}

\begin{lemma} \label{lem:SignalAddNoisePS}
	Let $\varepsilon(x)$ be a white noise processes on $[-\frac{1}{2},\frac{1}{2}]$ with variance $\sigma^2$. Then for any signal $f \in \Lb^1(\R)$:
	\begin{align*}
	\Ex\left[(P(f+\varepsilon))(\omega)\right] &= (Pf)(\omega) + \sigma^2 \\
	\Var\left[(P(f+\varepsilon))(\omega)\right] &\leq 4\sigma^2(Pf)(\omega)+2\sigma^4 \, .
	\end{align*}	
\end{lemma}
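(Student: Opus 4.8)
The plan is to expand the noisy power spectrum into its deterministic, cross, and pure‑noise contributions and control each. Writing $\widehat{f}$ for $\widehat{f}(\omega)$ and $\widehat{\varepsilon}$ for $\widehat{\varepsilon}(\omega)$,
\[
(P(f+\varepsilon))(\omega) = |\widehat{f}+\widehat{\varepsilon}|^2 = (Pf)(\omega) + X + Y, \qquad X := \widehat{f}\,\overline{\widehat{\varepsilon}} + \overline{\widehat{f}}\,\widehat{\varepsilon} = 2\,\mathrm{Re}\!\big(\widehat{f}\,\overline{\widehat{\varepsilon}}\big), \qquad Y := |\widehat{\varepsilon}|^2,
\]
so that $X$ is real‑valued and $Y \geq 0$. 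For the mean, $\Ex[\widehat{\varepsilon}(\omega)] = \int_{-1/2}^{1/2} e^{-i\omega x}\,\Ex[dB_x] = 0$ forces $\Ex[X] = 0$, while $\Ex[Y] = \sigma^2$ is \eqref{equ:PS_expec_addnoise} of Lemma \ref{lem:PS_addnoise}; together these give the first identity $\Ex[(P(f+\varepsilon))(\omega)] = (Pf)(\omega)+\sigma^2$.

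For the variance, since $(Pf)(\omega)$ is deterministic, $\Var[(P(f+\varepsilon))(\omega)] = \Var[X+Y] = \Var[X] + 2\,\mathrm{Cov}[X,Y] + \Var[Y]$, and I would bound the three terms separately. First, \eqref{equ:PS_expec_addnoise} and \eqref{equ:sqPS_expec_addnoise} of Lemma \ref{lem:PS_addnoise} give $\Var[Y] = \Ex[|\widehat{\varepsilon}|^4] - \sigma^4 \leq 3\sigma^4 - \sigma^4 = 2\sigma^4$. Second, $\Var[X] = \Ex[X^2]$: expanding $X^2 = \widehat{f}^2\,\overline{\widehat{\varepsilon}}^2 + 2|\widehat{f}|^2|\widehat{\varepsilon}|^2 + \overline{\widehat{f}}^2\,\widehat{\varepsilon}^2$ and invoking the Ito isometry (Proposition \ref{prop:genItoIso}), we have $\Ex[\widehat{\varepsilon}(\omega)^2] = \sigma^2\!\int_{-1/2}^{1/2} e^{-2i\omega x}\,dx$, of modulus at most $\sigma^2$, together with $\Ex[|\widehat{\varepsilon}(\omega)|^2] = \sigma^2$, so that $\Ex[X^2] \leq 2\sigma^2|\widehat{f}|^2 + 2\sigma^2|\widehat{f}|^2 = 4\sigma^2(Pf)(\omega)$.

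The only point requiring an idea is the cross term $\mathrm{Cov}[X,Y] = \Ex[XY]$, and here I would use the symmetry of Brownian motion: the path reflection $B \mapsto -B$ preserves the law of $B$ while sending $\widehat{\varepsilon} \mapsto -\widehat{\varepsilon}$, hence $X \mapsto -X$ and $Y = |\widehat{\varepsilon}|^2 \mapsto Y$; thus $(X,Y) \overset{d}{=} (-X,Y)$, whence $\Ex[XY] = -\Ex[XY] = 0$. (Equivalently, $XY$ is a sum of triple products of $\widehat{\varepsilon}$ and $\overline{\widehat{\varepsilon}}$, i.e.\ an odd‑degree polynomial in the mean‑zero Gaussian real and imaginary parts of $\widehat{\varepsilon}(\omega)$, so it has zero mean; this also follows from the third‑moment form of the Ito isometry.) Adding the three bounds gives $\Var[(P(f+\varepsilon))(\omega)] \leq 4\sigma^2(Pf)(\omega) + 2\sigma^4$. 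I expect the main obstacle to be exactly this observation that the covariance vanishes, which is what lets one avoid the crude estimate $\Var[X+Y] \leq 2\Var[X] + 2\Var[Y]$ (which would cost an extra factor of $2$); the remaining steps are routine bookkeeping with Lemma \ref{lem:PS_addnoise}.
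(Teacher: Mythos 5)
Your proof is correct and follows essentially the same route as the paper's: both expand $|\widehat{f}+\widehat{\varepsilon}|^2$, use the moment bounds $\Ex[\widehat{\varepsilon}]=0$, $\Ex[|\widehat{\varepsilon}|^2]=\sigma^2$, $\Ex[|\widehat{\varepsilon}|^4]\leq 3\sigma^4$ from Lemma \ref{lem:PS_addnoise}, observe that the odd-order terms (your $\Ex[XY]$, the paper's $\Ex[|\widehat{\varepsilon}|^2\widehat{\varepsilon}]$) vanish, and control the phase term $\widehat{f}^2\overline{\widehat{\varepsilon}}^2+\overline{\widehat{f}}^2\widehat{\varepsilon}^2$ by $2\sigma^2|\widehat{f}|^2$. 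The only differences are cosmetic bookkeeping: you organize the bound as $\Var[X]+2\,\mathrm{Cov}[X,Y]+\Var[Y]$ with the covariance killed by the $B\mapsto -B$ symmetry, whereas the paper computes $\Ex[(P(f+\varepsilon))^2]$ directly and subtracts the squared mean.
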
	

\begin{proof}
	Since $\mathbb{E}\left[\widehat{\varepsilon}(\omega)\right] = \mathbb{E}\left[\overline{\widehat{\varepsilon}(\omega)}\right] =0$ and $\mathbb{E}\left[|\widehat{\varepsilon}(\omega)|^2\right] =\sigma^2$ by Lemma \ref{lem:PS_addnoise},
	\begin{align*}
	\Ex\left[(P(f+\varepsilon))(\omega)\right] 
	&= \Ex\left[\left( \widehat{f}(\omega)+\widehat{\varepsilon}(\omega) \right)\left( \overline{\widehat{f}(\omega)}+\overline{\widehat{\varepsilon}(\omega)} \right)\right] \\
	&= \Ex\left[ |\widehat{f}(\omega)|^2+\widehat{f}(\omega)\overline{\widehat{\varepsilon}(\omega)}+\widehat{\varepsilon}(\omega) \overline{\widehat{f}(\omega)} + |\widehat{\varepsilon}(\omega)|^2\right] \\
	&= (Pf)(\omega) + \sigma^2. 
	\end{align*}
	We now control $\Var[(P(f+\varepsilon))(\omega)]$. 	
	Note that:
	\begin{align*}
	\left[(P(f+\varepsilon))(\omega)\right]^2 &=\left(|\widehat{f}(\omega)|^2+\widehat{f}(\omega)\overline{\widehat{\varepsilon}(\omega)}+\widehat{\varepsilon}(\omega) \overline{\widehat{f}(\omega)} + |\widehat{\varepsilon}(\omega)|^2\right)^2
	\end{align*}
	and that
	\begin{align*}
	 \Ex\left[|\widehat{\varepsilon}(\omega)|^2\, \widehat{\varepsilon}(\omega) \right] &=  \mathbb{E} \left[ \left(\int_{-1/2}^{1/2}e^{-i \omega x} \ d B_x\right)\left(\int_{-1/2}^{1/2}e^{i \omega s} \ dB_s\right)\left(\int_{-1/2}^{1/2}e^{-i \omega p} \ dB_p\right) \right] \\
	 &=0,
	 \end{align*}
	since even when $x=s=p$, $\Ex [(\Delta B_x)^3] =0$. Ignoring the terms with zero expectation, we thus get:
	\begin{align*}
	\Ex[(P(f+\varepsilon))(\omega)^2]
	&= \Ex \left( |\widehat{f}(\omega)|^4 + 4|\widehat{f}(\omega)|^2|\widehat{\varepsilon}(\omega)|^2+|\widehat{\varepsilon}(\omega)|^4 + \widehat{f}(\omega)^2\overline{\widehat{\varepsilon}({\omega})}^2+\widehat{\varepsilon}({\omega})^2 \overline{\widehat{f}({\omega})}^2\right) \\
	&\leq  \Ex \left( |\widehat{f}(\omega)|^4 + 6|\widehat{f}(\omega)|^2|\widehat{\varepsilon}(\omega)|^2+|\widehat{\varepsilon}(\omega)|^4 \right) \\
	&= [(Pf)(\omega)]^2 + 6\sigma^2(Pf)(\omega)+3\sigma^4
	\end{align*}
	where the last line follows from Lemma \ref{lem:PS_addnoise}. Thus
	\begin{align*}
	\Var[(P(f+\varepsilon))(\omega)] &= \Ex[(P(f+\varepsilon))(\omega)^2] - (\Ex[(P(f+\varepsilon))(\omega)])^2  \\
	&\leq   [(Pf)(\omega)]^2 + 6\sigma^2(Pf)(\omega)+3\sigma^4 - ((Pf)(\omega) + \sigma^2)^2 \\
	&=  4\sigma^2(Pf)(\omega)+2\sigma^4. 
	\end{align*}
\end{proof}

\propAddNoisePS*

\begin{proof}
	Let $f^{\transvar_j}(x)=f(x-\transvar_j)$ so that $y_j = f^{\transvar_j}+\varepsilon_j$. We first note since $\widehat{f^{\transvar_j}}(\omega) = e^{-i\omega\transvar_j}\widehat{f}(\omega)$, the power spectrum is translation invariant, that is $(Pf^{\transvar_j})(\omega) = (Pf)(\omega)$ for all $\omega, \transvar_j$.	
	Thus by Lemma \ref{lem:SignalAddNoisePS},
	\begin{align*}
	\Ex[(Py_j)(\omega)] &= \Ex[(P(f^{\transvar_j}+\varepsilon_j))(\omega)] = (Pf^{\transvar_j})(\omega) + \sigma^2 = (Pf)(\omega) + \sigma^2
	\end{align*}
	and
	\begin{align*}
	\Var[(Py_j)(\omega)] &= \Var[(P(f^{\transvar_j}+\varepsilon_j))(\omega)] 
	\leq 4\sigma^2(Pf^{\transvar_j})(\omega)+2\sigma^4 
	= 4\sigma^2(Pf)(\omega)+2\sigma^4.
	\end{align*}
	Since the $y_j$ are independent,
	\begin{align*}
	\Var\left(\frac{1}{M}\sum_{j=1}^M (Py_j)(\omega)\right) &\leq \frac{1}{M}\left(4\sigma^2(Pf)(\omega)+2\sigma^4\right).
	\end{align*}
	Applying Chebyshev's inequality to the random variable $X=\frac{1}{M}\sum_{j=1}^M (Py_j)(\omega)$, we obtain: 
	\[ \Prob\left(\ \left|\frac{1}{M}\sum_{j=1}^M (Py_j)(\omega) - \left((Pf)(\omega)+\sigma^2\right)\right| \geq  \frac{t(2\sigma\sqrt{(Pf)(\omega)}+\sqrt{2}\sigma^2)}{\sqrt{M}} \right) \leq \frac{1}{t^2}. \]
	Observing that $\sqrt{(Pf)(\omega)}=|\widehat{f}(\omega)|\leq \norm{f}_1$ gives (\ref{equ:AddNoisePS}).
	
\end{proof}

\begin{lemma}\label{lem:WhiteNoise}
	Let $\varepsilon(x)$ be a white noise processes on $[-\frac{1}{2},\frac{1}{2}]$ with variance $\sigma^2$. Then:
	\begin{align*}
	\Ex[({\Sc}\varepsilon)(\lambda)] &= \sigma^2 \\
	\Ex \,[({\Sc}\varepsilon)(\lambda)^2] &\leq 3\sigma^4\, .
	\end{align*}	
\end{lemma}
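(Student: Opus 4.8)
\textit{Proof proposal.} The plan is to pass to the frequency side and reduce both claims to the second and fourth moment bounds for $|\widehat{\varepsilon}|$ already recorded in Lemma \ref{lem:PS_addnoise}. By Plancherel's theorem and the Fourier convolution theorem,
\begin{align*}
(\Sc\varepsilon)(\lambda) = \norm{\varepsilon \ast \psi_\lambda}_2^2 = \frac{1}{2\pi}\int |\widehat{\varepsilon}(\omega)|^2 |\widehat{\psi}_\lambda(\omega)|^2\ d\omega \, ,
\end{align*}
and since $\psi_\lambda \in \Lb^2(\R)$ with $\norm{\psi_\lambda}_2 = 1$, we have $\frac{1}{2\pi}\int |\widehat{\psi}_\lambda(\omega)|^2\ d\omega = 1$. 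For the first moment, I would interchange expectation and integration — legitimate by Tonelli's theorem, as the integrand $|\widehat{\varepsilon}(\omega)|^2|\widehat{\psi}_\lambda(\omega)|^2$ is nonnegative — and apply \eqref{equ:PS_expec_addnoise}, giving
\begin{align*}
\Ex[(\Sc\varepsilon)(\lambda)] = \frac{1}{2\pi}\int \Ex[|\widehat{\varepsilon}(\omega)|^2]\, |\widehat{\psi}_\lambda(\omega)|^2\ d\omega = \frac{\sigma^2}{2\pi}\int |\widehat{\psi}_\lambda(\omega)|^2\ d\omega = \sigma^2 \, .
\end{align*}

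For the second moment, I would write $(\Sc\varepsilon)(\lambda)^2$ as a double integral in the frequency variables $\omega$ and $\xi$:
\begin{align*}
(\Sc\varepsilon)(\lambda)^2 = \frac{1}{(2\pi)^2}\int\!\!\int |\widehat{\varepsilon}(\omega)|^2 |\widehat{\varepsilon}(\xi)|^2 |\widehat{\psi}_\lambda(\omega)|^2 |\widehat{\psi}_\lambda(\xi)|^2\ d\omega\, d\xi \, .
\end{align*}
Again the integrand is nonnegative, so Tonelli permits moving the expectation inside; bounding the mixed fourth moment by \eqref{equ:productPS_expec_addnoise}, namely $\Ex[|\widehat{\varepsilon}(\omega)|^2 |\widehat{\varepsilon}(\xi)|^2] \leq 3\sigma^4$ uniformly in $\omega,\xi$, the double integral factors:
\begin{align*}
\Ex[(\Sc\varepsilon)(\lambda)^2] \leq \frac{3\sigma^4}{(2\pi)^2}\left(\int |\widehat{\psi}_\lambda(\omega)|^2\ d\omega\right)\left(\int |\widehat{\psi}_\lambda(\xi)|^2\ d\xi\right) = 3\sigma^4 \, .
\end{align*}

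The computation is essentially routine once Lemma \ref{lem:PS_addnoise} is in hand; the only point deserving care is the justification of the Fubini/Tonelli interchanges, but since $|\widehat{\psi}_\lambda|^2$ is integrable and all quantities involved are nonnegative, Tonelli's theorem applies without incident, so I do not anticipate a genuine obstacle here.
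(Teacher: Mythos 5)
Your proposal is correct and follows essentially the same route as the paper: pass to the frequency domain via Plancherel and the convolution theorem, invoke the second and mixed fourth moment bounds of Lemma \ref{lem:PS_addnoise}, and use $\norm{\psi_\lambda}_2 = 1$ to evaluate the remaining integrals. The only difference is that you make the Tonelli justification for interchanging expectation and integration explicit, which the paper leaves implicit.
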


\begin{proof}
	Since $\Ex [\,|\widehat{\varepsilon}(\omega)|^2] = \sigma^2$ by Lemma \ref{lem:PS_addnoise}, we have:
	\begin{align*}
	\Ex[({\Sc}\varepsilon)(\lambda)] &= \Ex\left[ \norm{\varepsilon*\psi_\lambda}_2^2 \right]\\
	&= \Ex\left[ \frac{1}{2\pi}\  \norm{\widehat{\varepsilon}\cdot \widehat{\psi}_\lambda}_2^2\right] \\
	&=\Ex\left[ \frac{1}{2\pi} \int  |\widehat{\varepsilon}(\omega)|^2|\widehat{\psi}_\lambda(\omega)|^2\ d\omega\right] \\
	&=\frac{\sigma^2}{2\pi} \int  |\widehat{\psi}_\lambda(\omega)|^2\ d\omega \\
	&=\sigma^2\norm{\psi_\lambda}_2^2 \\
	&=\sigma^2.
	\end{align*}
	Since by Lemma \ref{lem:PS_addnoise}, $\Ex\left[\,|\widehat{\varepsilon}(\omega)|^2|\widehat{\varepsilon}(\xi)|^2\right] \leq 3\sigma^4$, we also have:
	\begin{align*}
	\Ex \,[({\Sc}\varepsilon)(\lambda)^2] &=  \Ex\left[ \norm{\varepsilon*\psi_\lambda}_2^4\right] \\
	&= \Ex\left[ \frac{1}{(2\pi)^2}\ \norm{\widehat{\varepsilon}\cdot \widehat{\psi}_\lambda}_2^2\ \norm{\widehat{\varepsilon}\cdot \widehat{\psi}_\lambda}_2^2\right] \\
	&=  \Ex\left[ \frac{1}{(2\pi)^2}\int \int |\widehat{\varepsilon}(\omega)|^2|\widehat{\varepsilon}(\xi)|^2|\widehat{\psi}_\lambda(\omega)|^2|\widehat{\psi}_\lambda(\xi)|^2\ d\omega\ d\xi\right] \\
	&\leq \frac{3\sigma^4}{(2\pi)^2} \int \int |\widehat{\psi}_\lambda(\omega)|^2|\widehat{\psi}_\lambda(\xi)|^2\ d\omega\ d\xi \\
	&=3\sigma^4 \left(\norm{\psi_\lambda}_2^2\right)^2 \\
	&=3\sigma^4.
	\end{align*}
\end{proof}

\begin{lemma}\label{lem:SignalAddNoiseWSC}
	Let $\varepsilon(x)$ be a white noise processes on $[-\frac{1}{2},\frac{1}{2}]$ with variance $\sigma^2$. Then for any signal $f \in \Lb^1 (\R)$:
	\begin{align*}
	\Ex[({\Sc}(f+\varepsilon))(\lambda)] &= ({\Sc}f)(\lambda) + \sigma^2 \\
	\Var[({\Sc}(f+\varepsilon))(\lambda)] &\leq 4\sigma^2({\Sc}f)(\lambda)+2\sigma^4 \, .
	\end{align*}	
\end{lemma}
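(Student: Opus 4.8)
The plan is to follow the template of the proof of Lemma~\ref{lem:SignalAddNoisePS}, but now on the Fourier side, since $(\Sc g)(\lambda)$ is a \emph{frequency average} rather than a pointwise quantity. By Plancherel and the convolution theorem, $(\Sc(f+\varepsilon))(\lambda) = \int|\widehat f(\omega)+\widehat\varepsilon(\omega)|^2\,W(\omega)\,d\omega$ with $W(\omega):=\frac{1}{2\pi}|\widehat\psi_\lambda(\omega)|^2\ge 0$; the normalization $\|\psi_\lambda\|_2=1$ gives $\int W=1$, so $W\,d\omega$ is a probability measure and $(\Sc f)(\lambda)=\int|\widehat f|^2W$. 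Expanding $|\widehat f+\widehat\varepsilon|^2=|\widehat f|^2+c+|\widehat\varepsilon|^2$ with $c(\omega)=\widehat f(\omega)\overline{\widehat\varepsilon(\omega)}+\overline{\widehat f(\omega)}\widehat\varepsilon(\omega)$ gives the decomposition $(\Sc(f+\varepsilon))(\lambda)=(\Sc f)(\lambda)+C+N$, where $C:=\int c\,W$ is real, $N:=\int|\widehat\varepsilon|^2W=(\Sc\varepsilon)(\lambda)$, and $(\Sc f)(\lambda)$ is deterministic. Since $\Ex[\widehat\varepsilon(\omega)]=0$, $\Ex[C]=0$; by Lemma~\ref{lem:WhiteNoise}, $\Ex[N]=\sigma^2$. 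Hence $\Ex[(\Sc(f+\varepsilon))(\lambda)]=(\Sc f)(\lambda)+\sigma^2$. (This mean identity also follows at once by integrating the pointwise identity $\Ex[(P(f+\varepsilon))(\omega)]=(Pf)(\omega)+\sigma^2$ of Lemma~\ref{lem:SignalAddNoisePS} against $W$.)

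For the variance I would use $\Var[(\Sc(f+\varepsilon))(\lambda)]=\Var[C+N]=\Ex[C^2]+\Var[N]+2\Ex[CN]$, valid because $\Ex[C]=0$. Lemma~\ref{lem:WhiteNoise} gives $\Var[N]=\Ex[N^2]-\sigma^4\le 3\sigma^4-\sigma^4=2\sigma^4$. The mixed term vanishes: writing $CN$ as a double $\omega,\xi$ integral, the integrand is an $\widehat f$-factor times an expectation of a product of three (complex) linear functionals of Brownian motion — one from $c(\omega)$ and two from $|\widehat\varepsilon(\xi)|^2$ — and every such odd moment is zero, by the same computation used in the proof of Lemma~\ref{lem:SignalAddNoisePS} to show $\Ex[|\widehat\varepsilon(\omega)|^2\widehat\varepsilon(\omega)]=0$. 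So $\Var[(\Sc(f+\varepsilon))(\lambda)]=\Ex[C^2]+2\sigma^4$, and it remains to prove $\Ex[C^2]\le 4\sigma^2(\Sc f)(\lambda)$.

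For this, I would expand $\Ex[C^2]=\int\!\!\int\Ex[c(\omega)c(\xi)]\,W(\omega)W(\xi)\,d\omega\,d\xi$. Expanding $c(\omega)c(\xi)$ into four products and applying the generalized It\^o isometry (Proposition~\ref{prop:genItoIso}) gives $\Ex[\widehat\varepsilon(\omega)\overline{\widehat\varepsilon(\xi)}]=\sigma^2 D(\omega-\xi)$ and $\Ex[\widehat\varepsilon(\omega)\widehat\varepsilon(\xi)]=\sigma^2 D(\omega+\xi)$ with $D(u)=\int_{-1/2}^{1/2}e^{-iux}\,dx$, hence $|D|\le 1$; consequently $|\Ex[c(\omega)c(\xi)]|\le 4\sigma^2|\widehat f(\omega)||\widehat f(\xi)|$. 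Therefore $\Ex[C^2]\le 4\sigma^2\big(\int|\widehat f(\omega)|\,W(\omega)\,d\omega\big)^2\le 4\sigma^2\int|\widehat f(\omega)|^2W(\omega)\,d\omega=4\sigma^2(\Sc f)(\lambda)$, the middle inequality being Cauchy--Schwarz (Jensen) against the probability measure $W\,d\omega$. Combining the pieces gives $\Var[(\Sc(f+\varepsilon))(\lambda)]\le 4\sigma^2(\Sc f)(\lambda)+2\sigma^4$. All interchanges of expectation with $d\omega$-integration are legitimate because $f\in\Lb^1(\R)$ makes $\widehat f$ bounded, so the integrands are dominated by constant multiples of $W$ (or $W\otimes W$), which are integrable. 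The main obstacle is the $\Ex[C^2]$ bound: one must control the covariances of $\widehat\varepsilon$ at \emph{distinct} frequencies, and the key realization is that bounding the resulting kernel $D$ by $1$ and then invoking Cauchy--Schwarz reproduces exactly the clean, $\lambda$-independent constant $4\sigma^2(\Sc f)(\lambda)$ — a purely spatial-side computation using $\|f\ast\psi_\lambda\|$-type correlations would instead retain an extra factor $|\widehat\psi_\lambda|^2$ and hence a $\lambda$-dependent bound.
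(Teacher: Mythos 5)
Your proof is correct, and it reaches the stated bound by a genuinely different (frequency-side) route than the paper. The paper's own proof stays in the spatial domain: it expands $\Ex[({\Sc}(f+\varepsilon))(\lambda)^2]$ as a double $u_1,u_2$-integral, discards all terms with an odd number of noise factors, and disposes of the remaining signal--noise cross products by a pointwise AM--GM bound that folds them into $6\sigma^2({\Sc}f)(\lambda)$; the variance bound then comes from subtracting the exact squared mean $\left(({\Sc}f)(\lambda)+\sigma^2\right)^2$. You instead decompose $({\Sc}(f+\varepsilon))(\lambda)=({\Sc}f)(\lambda)+C+N$ on the Fourier side, observe $\Ex[CN]=0$ by the same third-moment argument the paper uses in Lemma \ref{lem:SignalAddNoisePS}, bound $\Var[N]\leq 2\sigma^4$ directly from Lemma \ref{lem:WhiteNoise}, and control $\Ex[C^2]\leq 4\sigma^2({\Sc}f)(\lambda)$ via the explicit covariance kernel $|D|\leq 1$ together with Cauchy--Schwarz against the probability measure $\frac{1}{2\pi}|\widehat{\psi}_\lambda(\omega)|^2\,d\omega$. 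What your route buys is transparency of constants (the $2\sigma^4$ is exactly the pure-noise variance and the $4\sigma^2({\Sc}f)(\lambda)$ exactly the signal--noise covariance contribution) and it avoids the lossy AM--GM step; what the paper's route buys is that it never needs covariances of $\widehat{\varepsilon}$ at distinct frequencies and runs completely parallel to its power-spectrum Lemma \ref{lem:SignalAddNoisePS}. One small caveat: your use of $\Ex[\widehat{\varepsilon}(\omega)\widehat{\varepsilon}(\xi)]=\sigma^2 D(\omega+\xi)$ goes slightly beyond the literal statement of Proposition \ref{prop:genItoIso}, which pairs $f$ with $\overline{f}$, but it is the same bilinear It\^{o} isometry the paper itself invokes elsewhere (e.g., in Appendix G via Theorem 4.5 of \cite{klebaner2012introduction}), so the step is legitimate provided you state it as such.
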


\begin{proof}
	Utilizing $\Ex[\varepsilon] = \Ex\left[\overline{\varepsilon}\right]=0$ and Lemma \ref{lem:WhiteNoise}, we have:
	\begin{align*}
	\Ex[({\Sc}(f+\varepsilon))(\lambda)]
	&= \Ex \left[\int  |(f+\varepsilon)*\psi_\lambda(u)|^2 \ du \right]\\
	&= \int  |f*\psi_\lambda(u)|^2+\Ex\left[\int | \varepsilon*\psi_\lambda(u)|^2 \ du\right] \\
	&=({\Sc}f)(\lambda) + \Ex [({\Sc}\varepsilon)(\lambda)] \\
	&=({\Sc}f)(\lambda) + \sigma^2.
	\end{align*}	
	To bound $\Ex[({\Sc}(f+\varepsilon))(\lambda)^2]$, note that:
	\begin{small}
		\begin{align*}
		&[({\Sc}(f+\varepsilon))(\lambda)]^2 \\
		&=\left(\int  |f*\psi_\lambda(u_1)|^2+(\varepsilon*\psi_\lambda(u_1))(\overline{f}*\overline{\psi_\lambda(u_1)})+  (f*\psi_\lambda(u_1))(\overline{\varepsilon}*\overline{\psi_\lambda(u_1)})+| \varepsilon*\psi_\lambda(u_1))|^2 \ du_1\right) \\
		&\quad \cdot\left(\int  |f*\psi_\lambda(u_2)|^2+(\varepsilon*\psi_\lambda(u_2))(\overline{f}*\overline{\psi_\lambda(u_2)})+  (f*\psi_\lambda(u_2))(\overline{\varepsilon}*\overline{\psi_\lambda(u_2)})+| \varepsilon*\psi_\lambda(u_2))|^2 \ du_2\right)
		\end{align*}
	\end{small}
	When we take expecation, any term involving one or three $\varepsilon$ terms disappear, so that:
		\begin{align*}
		\Ex[({\Sc}(f+\varepsilon))(\lambda)^2]
		&=\Ex \left[ \int \int  |f*\psi_\lambda(u_1)|^2|f*\psi_\lambda(u_2)|^2\ du_1\ du_2  \right. \\
		&\qquad+\int \int  |f*\psi_\lambda(u_1)|^2| \varepsilon*\psi_\lambda(u_2))|^2\ du_1\ du_2 \\
		&\qquad+\int \int  (\varepsilon*\psi_\lambda(u_1))(\overline{f}*\overline{\psi_\lambda(u_1)})(\varepsilon*\psi_\lambda(u_2))(\overline{f}*\overline{\psi_\lambda(u_2)})  \ du_1\ du_2 \\
		&\qquad+ \int \int  (\varepsilon*\psi_\lambda(u_1))(\overline{f}*\overline{\psi_\lambda(u_1)})(f*\psi_\lambda(u_2))(\overline{\varepsilon}*\overline{\psi_\lambda(u_2)})   \ du_1\ du_2    \\
		&\qquad+\int \int (f*\psi_\lambda(u_1))(\overline{\varepsilon}*\overline{\psi_\lambda(u_1)})(\varepsilon*\psi_\lambda(u_2))(\overline{f}*\overline{\psi_\lambda(u_2)})    \ du_1\ du_2 \\
		&\qquad+\int \int (f*\psi_\lambda(u_1))(\overline{\varepsilon}*\overline{\psi_\lambda(u_1)})(f*\psi_\lambda(u_2))(\overline{\varepsilon}*\overline{\psi_\lambda(u_2)})    \ du_1\ du_2      \\
		&\qquad +\int \int  | \varepsilon*\psi_\lambda(u_1))|^2|f*\psi_\lambda(u_2)|^2   \ du_1\ du_2 \\
		&\qquad\left. + \int \int  | \varepsilon*\psi_\lambda(u_1))|^2| \varepsilon*\psi_\lambda(u_2))|^2   \ du_1\ du_2 \right]   \\
		&\leq \Ex  \left[ \int \int  |f*\psi_\lambda(u_1)|^2|f*\psi_\lambda(u_2)|^2\ du_1\ du_2 \right. \\
		&\qquad+ 6 \int \int  |f*\psi_\lambda(u_1)|^2| \varepsilon*\psi_\lambda(u_2))|^2\ du_1\ du_2 \\ 
		&\qquad +\left.\int \int  | \varepsilon*\psi_\lambda(u_1))|^2| \varepsilon*\psi_\lambda(u_2))|^2   \ du_1\ du_2 \right]\\
		&= \Ex \left[  [({\Sc}f)(\lambda)]^2 + 6({\Sc}f)(\lambda)({\Sc}\varepsilon)(\lambda)+[({\Sc}\varepsilon)(\lambda)]^2\right] \\
		&= [({\Sc}f)(\lambda)^2] + 6\sigma^2({\Sc}f)(\lambda)+3\sigma^4\, ,
		\end{align*}
	where the last line follows from Lemma \ref{lem:WhiteNoise}. Thus
	\begin{align*}
	\Var[({\Sc}(f+\varepsilon))({\lambda})] &= \Ex[({\Sc}(f+\varepsilon))({\lambda})^2] - \left(\Ex[({\Sc}(f+\varepsilon))({\lambda})]\right)^2 \\
	&\leq  [({\Sc}f)(\lambda)]^2 + 6\sigma^2({\Sc}f)(\lambda)+3\sigma^4 - [({\Sc}f)(\lambda)+\sigma^2]^2 \\
	&= 4\sigma^2({\Sc}f)(\lambda)+2\sigma^4.
	\end{align*}
\end{proof}

\propAddNoiseWSC*

\begin{proof}
	Let $f^{\transvar_j}(x)=f(x-\transvar_j)$ so that $y_j = f^{\transvar_j}+\varepsilon_j$. We first note that the wavelet invariants are translation invariant, that is ${\Sc}f^{\transvar_j} = {\Sc}f$ for all $\transvar_j$. We now compute the mean and variance of the coefficients $({\Sc}y_j)(\lambda)$.
	By Lemma \ref{lem:SignalAddNoiseWSC}:
	\begin{align*}
	\Ex[({\Sc}y_j)(\lambda)] &= \Ex[({\Sc}(f^{\transvar_j}+\varepsilon_j))(\lambda)] =({\Sc}f^{\transvar_j})(\lambda) + \sigma^2 =({\Sc}f)(\lambda) + \sigma^2
	\end{align*}
	and
	\begin{align*}
	\Var[({\Sc}y_j)(\lambda)]  = \Var[({\Sc}(f^{\transvar_j}+\varepsilon_j))(\lambda)] 
	\leq 4\sigma^2({\Sc}f^{\transvar_j})(\lambda)+2\sigma^4
	=4\sigma^2({\Sc}f)(\lambda)+2\sigma^4.
	\end{align*}
	Since the $y_j$ are independent,
	\begin{align*}
	\Var\left[\frac{1}{M}\sum_{j=1}^M ({\Sc}y_j)(\lambda)\right] &\leq  \frac{1}{M}\left[4\sigma^2({\Sc}f)(\lambda)+2\sigma^4\right].
	\end{align*}
	Applying Chebyshev's inequality to the random variable $X=\frac{1}{M}\sum_{j=1}^M ({\Sc}y_j)({\lambda})$ gives:
	\[ \Prob\left(\ \left|\frac{1}{M}\sum_{j=1}^M ({\Sc}y_j)({\lambda}) - \left[({\Sc}f)(\lambda)+\sigma^2\right]\right| \geq  \frac{t(2\sigma\sqrt{ ({\Sc}f)(\lambda)}+\sqrt{2}\sigma^2)}{\sqrt{M}} \right) \leq \frac{1}{t^2}. \]
	By Young's convolution inequality, $({\Sc}f)(\lambda) = \norm{f*\psi_\lambda}_2^2 \leq \norm{f}_1^2 \norm{\psi_\lambda}_2^2 = \norm{f}_1^2$, which gives (\ref{equ:AddNoiseWSC}). 
\end{proof}

\section{Supporting results: dilation MRA}
\label{app:WSC}

This appendix contains the technical details of the dilation unbiasing procedure which is central to Propositions \ref{prop:RandomDilations_PS}, \ref{prop:RandomDilations}, and \ref{prop:DilationAndAdditiveNoise}. Lemma \ref{lem:RD_GeneralUnbiasing_MeanVar} bounds the bias and variance of the estimator and Lemma \ref{lem:RD_GeneralUnbiasing_DevOfEstimator} bounds the error of the estimator given $M$ independent samples.

\lemRDGeneralUnbiasingMeanVar*

\begin{proof}	
We Taylor expand $F_\lambda(\tau)$ about $\tau=0$: 
\begin{align*}
F_\lambda(\tau) &= F_\lambda(0) +F'_\lambda(0)\tau +\frac{F''_\lambda(0)}{2}\tau^2+\ldots + \frac{F^{(k+1)}_\lambda(0)}{(k+1)!}\tau^{k+1} \\ &\qquad+\underbrace{\int_0^\tau\frac{F^{(k+2)}_\lambda(t)}{(k+1)!}(\tau-t)^{k+1}\ dt}_{:=R_{0}(\tau,\lambda)}\, .
\end{align*}
We note:
\begin{align*}
\Ex\left[F_\lambda(\tau)\right] &= F_\lambda(0)+\frac{F''_\lambda(0)}{2}\eta^2+\ldots+\frac{F_\lambda^{k}(0)}{k!}C_k\eta^{k} + \Ex\left[ R_{0}(\tau,\lambda)\right]
\end{align*}
which motivates an unbiasing with the first $k/2$ even derivatives, and thus a Taylor expansion of these derivatives:
\begin{align*}
F_\lambda(\tau) &= F_\lambda(0) +F'_\lambda(0)\tau +\ldots + \frac{F^{(k+1)}_\lambda(0)}{(k+1)!}\tau^{k+1}  +\underbrace{\int_0^\tau\frac{F^{(k+2)}_\lambda(t)}{(k+1)!}(\tau-t)^{k+1}\ dt}_{:=R_{0}(\tau,\lambda)} \\
F''_\lambda(\tau) &= F''_\lambda(0) + F^{(3)}_\lambda(0)\tau + \ldots + \frac{F_\lambda^{(k+1)}(0)}{(k-1)!}\tau^{k-1}+\underbrace{\int_0^\tau\frac{F^{(k+2)}_\lambda(t)}{(k-1)!}(\tau-t)^{k-1}\ dt}_{:=R_2(\tau,\lambda)} \\
F^{(4)}_\lambda(\tau) &= F^{(4)}_\lambda(0)+F^{(5)}_\lambda(0)\tau + \ldots + \frac{F_\lambda^{(k+1)}(0)}{(k-3)!}\tau^{k-3}+\underbrace{\int_0^\tau\frac{F^{(k+2)}_\lambda(t)}{(k-3)!}(\tau-t)^{k-3}\ dt}_{:=R_4(\tau,\lambda)} \\
&\vdots \\
F^{(k)}_\lambda(\tau) &= F^{(k)}_\lambda(0)+F^{(k+1)}_\lambda(0)\tau + \underbrace{\int_0^\tau F^{(k+2)}_\lambda(t)(\tau-t)\ dt}_{:=R_k(\tau,\lambda)}\, .
\end{align*}
Multiplication of the $i^{\text{th}}$ even derivative by $B_i\eta^i$ gives:
\begin{small}
\begin{align*}
F_\lambda(\tau) &= F_\lambda(0) +F'_\lambda(0)\tau +\ldots + \frac{F^{(k+1)}_\lambda(0)}{(k+1)!}\tau^{k+1}  +R_{0}(\tau,\lambda)\\
B_2\eta^2 F''_\lambda(\tau) &= B_2\eta^2 F''_\lambda(0) + B_2\eta^2 F^{(3)}_\lambda(0)\tau + \ldots + B_2\eta^2\frac{F_\lambda^{(k+1)}(0)}{(k-1)!}\tau^{k-1}+B_2\eta^2R_2(\tau,\lambda)\\
B_4\eta^4F^{(4)}_\lambda(\tau) &= B_4\eta^4F^{(4)}_\lambda(0)+B_4\eta^4F^{(5)}_\lambda(0)\tau + \ldots + B_4\eta^4\frac{F_\lambda^{(k+1)}(0)}{(k-3)!}\tau^{k-3}+B_4\eta^4R_4(\tau,\lambda) \\
&\vdots \\
B_k\eta^k F^{(k)}_\lambda(\tau) &= B_k\eta^kF^{(k)}_\lambda(0)+B_k\eta^kF^{(k+1)}_\lambda(0)\tau + B_k\eta^kR_k(\tau,\lambda)\, .
\end{align*}
\end{small}
We want an estimator that targets $F_\lambda(0)=L(\lambda)$. We thus consider the following variable as an estimator: 
\begin{align*}
G_\lambda(\tau) &:= F_\lambda(\tau) - B_2\eta^2F''_\lambda(\tau)-B_4\eta^4F^{(4)}_\lambda(\tau)- \ldots -B_k\eta^kF^{(k)}_\lambda(\tau)
\end{align*}
and show that $\Ex\left[G_\lambda(\tau)\right] = F_\lambda(0)+O(\eta^{k+2})$ for constants $B_i$ chosen according to (\ref{equ:MomentBasedConstants}).
We have:
\begin{small}
\begin{align*}
\Ex\left[F_\lambda(\tau)\right] &= F_\lambda(0) +F''_\lambda(0)\frac{C_2}{2}\eta^2 +\ldots + F^{(k)}_\lambda(0)\frac{C_k}{k!}\eta^k  + \Ex\left[R_{0}(\tau,\lambda)\right] \\
\Ex\left[B_2\eta^2 F''_\lambda(\tau)\right] &=  F''_\lambda(0)B_2\eta^2 + F^{(4)}_\lambda(0)\frac{B_2C_2}{2}\eta^4 + \ldots + F_\lambda^{(k)}(0)\frac{B_2C_{k-2}}{(k-2)!}\eta^{k}+\Ex\left[B_2\eta^2R_2(\tau,\lambda)\right]\\
\Ex\left[B_4\eta^4F^{(4)}_\lambda(\tau)\right] &= F^{(4)}_\lambda(0)B_4\eta^4+F^{(6)}_\lambda(0)\frac{B_4C_2}{2}\eta^6 + \ldots + F_\lambda^{(k)}(0)\frac{B_4C_{k-4}}{(k-4)!}\eta^{k}+\Ex\left[B_4\eta^4R_4(\tau,\lambda)\right] \\
&\vdots \\
\Ex\left[B_{k-2}\eta^{k-2} F^{(k-2)}_\lambda(\tau)\right] &= F^{(k-2)}_\lambda(0)B_{k-2}\eta^{k-2} + F^{(k)}_\lambda(0)\frac{B_{k-2}C_2}{2}\eta^k + \Ex\left[ B_{k-2}\eta^{k-2}R_{k-2}(\tau,\lambda)\right] \\
\Ex\left[B_k\eta^k F^{(k)}_\lambda(\tau)\right] &= F^{(k)}_\lambda(0)B_k\eta^k + \Ex\left[B_k\eta^kR_k(\tau,\lambda)\right]
\end{align*}
\end{small}
That is:
\begin{align*}
\Ex\left[G_\lambda(\tau)\right] &= F_\lambda(0) + F''_\lambda(0)\left(\frac{C_2}{2!}-B_2\right)\eta^2 + F^{(4)}_\lambda(0)\left(\frac{C_4}{4!}-\frac{B_2C_2}{2!}-B_4\right)\eta^4 \\
&\qquad +F^{(6)}_\lambda(0)\left(\frac{C_6}{6!} - \frac{B_2C_4}{4!}-\frac{B_4C_2}{2!}-B_6\right)\eta^6 \\
&\qquad \ldots + F^{(k)}_\lambda(0)\left(\frac{C_k}{k!} - \frac{B_2C_{k-2}}{(k-2)!}  - \ldots - \frac{B_{k-2}C_2}{2!}-B_k \right)\eta^k + H_1(\lambda)
\end{align*}
where 
\begin{align*}
H_1(\lambda) &= \Ex \left[ R_0(\lambda,\tau) - B_2\eta^2R_2(\tau,\lambda) - \ldots -B_k\eta^k R_k(\lambda,\tau) \right].
\end{align*}
Since (\ref{equ:MomentBasedConstants}) guarantees that
\begin{align*}
B_2 &= \frac{C_2}{2!} \\
B_4 &= \frac{C_4}{4!} - \left(\frac{C_2}{2!}\right)^2 \\
B_6 &= \frac{C_6}{6!} -\frac{C_2C_4}{2!4!} - \left(\frac{C_4}{4!} - \left(\frac{C_2}{2!}\right)^2\right)\frac{C_2}{2!} \\
&\vdots \\
B_k &= \frac{C_k}{k!} - \frac{B_2C_{k-2}}{(k-2)!}  - \ldots - \frac{B_{k-2}C_2}{2!},
\end{align*}
the coefficients of $\eta^2, \eta^4, \ldots, \eta^k$ vanish, and we obtain:
\begin{align*}
\Ex\left[G_\lambda(\tau)\right] &= F_\lambda(0) + H_1(\lambda)\, .
\end{align*}

First we bound the bias $H_1(\lambda)$. In the remainder of the proof we let $B_0=-1$ to simplify notation, so that: 
\begin{align*}
H_1(\lambda) &= \sum_{i=0,2,\ldots,k} -B_iR_i(\lambda,\tau)\eta^i \, .
\end{align*}
We first obtain a bound for $|B_iR_i(\lambda,\tau)\eta^i|$. Note:
\begin{align*}
(k+1-i)!\, \eta^iR_i(\lambda,\tau) &= \eta^i\int_0^\tau F^{(k+2)}_\lambda(t)(\tau-t)^{k+1-i}\ dt \\
&=  \eta^i\int_0^\tau \lambda^{k+2}L^{(k+2)}((1-t)\lambda)(\tau-t)^{k+1-i}\ dt\, .
\end{align*}
We observe that: 
\begin{align*}
\left|((1-t)\lambda)^{k+2} L^{({k+2})}((1-t)\lambda)\right|& \leq \Lambda_{k+2}((1-t)\lambda) \\
\left|\lambda^{k+2} L^{({k+2})}((1-t)\lambda)\right|& \leq \frac{1}{(1-t)^{k+2}}\frac{\Lambda_{k+2}((1-t)\lambda)}{\Lambda_{k+2}(\lambda)}	\Lambda_{k+2}(\lambda) \\
\left|\lambda^{k+2} L^{({k+2})}((1-t)\lambda)\right|& \leq \frac{\edit{R(\lambda)}\Lambda_{k+2}(\lambda)}{(1-t)^{k+2}}
\end{align*}
so that
\begin{align*}
- \frac{\edit{R(\lambda)}\Lambda_{k+2}(\lambda)}{(1-t)^{k+2}} &\leq \lambda^{k+2}L^{(k+2)}((1-t)\lambda) \leq \frac{\edit{R(\lambda)}\Lambda_{k+2}(\lambda)}{(1-t)^{k+2}} \, .
\end{align*}
Now assume first of all that $\tau$ is positive. We have:
\begin{align*}
\left|(k+1-i)!\, \eta^iR_i(\lambda,\tau)\right| &\leq \eta^i \edit{R(\lambda)}\Lambda_{k+2}(\lambda) \int_0^\tau \frac{(\tau-t)^{k+1-i}}{(1-t)^{k+2}}\ dt \\
&\leq \eta^i \edit{R(\lambda)}\Lambda_{k+2}(\lambda) \int_0^\tau \frac{\tau^{k+1-i}}{(1-t)^{k+2}}\ dt \\
&= \eta^i \tau^{k+1-i} \edit{R(\lambda)}\Lambda_{k+2}(\lambda)\frac{1}{(k+1)}\left(\frac{1}{(1-\tau)^{k+1}}-1\right) \\
&\leq \frac{2^{k+2}\edit{R(\lambda)}}{k+1}\eta^i \tau^{k+2-i} \Lambda_{k+2}(\lambda)
\end{align*}
where the last line follows since $\frac{1}{(1-\tau)^{k+1}} \leq 2\cdot 2^{k+1}\tau$ for $\tau \in [0,\frac{1}{2}]$. A similar argument can be applied when $\tau$ is negative, and we can conclude
\begin{align}
\label{equ:RemainderBound}
\left|B_i\eta^i R_i(\lambda,\tau)\right| &\leq \frac{2^{k+2}\edit{R(\lambda)}}{(k+1)(k+1-i)!}\Lambda_{k+2}(\lambda) |B_i| \eta^i|\tau|^{k+2-i}\,. 
\end{align}
which gives
\begin{align*}
\Ex \left|B_i\eta^i R_i(\lambda,\tau)\right| &\leq \frac{2^{k+2}\edit{R(\lambda)}}{(k+1)(k+1-i)!}\Lambda_{k+2}(\lambda) T^{k+2-i}|B_i| \eta^{k+2} \\
&= \frac{2^{k+2}(k+2-i)\edit{R(\lambda)}}{k+1}\Lambda_{k+2}(\lambda) \frac{T^{k+2-i}}{(k+2-i)!}|B_i| \eta^{k+2} \, .
\end{align*}
We thus obtain
\begin{align*}
\left|\Ex\left[G_\lambda(\tau)\right] - L(\lambda) \right| &=|H_1(\lambda)| 
\leq \frac{\edit{R(\lambda)}\Lambda_{k+2}(\lambda)}{k+1}  (2E  \eta)^{k+2} \sum_{i=0,2,\ldots,k} (k+2-i)  \\
&\lesssim \edit{R(\lambda)}k \Lambda_{k+2}(\lambda) (2E\eta)^{k+2}\, ,
\end{align*}
which establishes the bound on the bias. We now bound the variance. We note:
\begin{align*}
G_\lambda(\tau) &= \underbrace{\sum_{i=0,2,\ldots,k}\ \sum_{j=0,1,\ldots,k+1-i} \frac{-B_i}{j!}F_\lambda^{(i+j)}(0)\eta^i\tau^j}_{:=(\mathbf{I})} + \underbrace{\sum_{i=0,2,\ldots,k}-B_iR_i(\lambda,\tau)\eta^i}_{:=(\mathbf{II})}\, .
\end{align*}
Thus:
\begin{align*}
\Var\left[G_\lambda(\tau)\right] &= \Ex\left[G_\lambda(\tau)^2\right] - \Ex\left[G_\lambda(\tau)\right]^2  \\
&= \Ex\left[(\mathbf{I})(\mathbf{I})\right]+2\Ex\left[(\mathbf{I})(\mathbf{II})\right] + \Ex\left[(\mathbf{II})(\mathbf{II})\right] - F_\lambda(0)^2-2F_\lambda(0)H_1(\lambda) - H_1(\lambda)^2 \\
&\leq \underbrace{\left(\Ex\left[(\mathbf{I})(\mathbf{I})\right] - F_\lambda(0)^2\right)}_{:=(\textbf{A})} + \underbrace{\left(2\Ex\left[(\mathbf{I})(\mathbf{II})\right] - 2F_\lambda(0)H_1(\lambda) \right)}_{:=(\textbf{B})} + \underbrace{\Ex\left[(\mathbf{II})(\mathbf{II})\right]}_{:=(\textbf{C})}
\end{align*}
and we proceed to bound each term.
\begin{align*}
(\mathbf{I})(\mathbf{I})-F_\lambda(0)^2 &= \sum_{i=0,2,\ldots,k} \sum_{\ell=0,2,\ldots,k}\sum_{j=0}^{k+1-i} \sum_{s=0}^{k+1-\ell} \frac{B_iB_\ell}{j!\ell!}F_\lambda^{(i+j)}(0)F_\lambda^{(\ell+s)}(0)\eta^{i+\ell}\tau^{j+s}\, \mathbf{1}_E\,
\end{align*}
where $\mathbf{1}_E$ is an indicator function indicating that $i,j,\ell,s$ are not all zero.
We have
\begin{align*}
\Ex \left| \frac{B_iB_\ell}{j!\ell!}F_\lambda^{(i+j)}(0)F_\lambda^{(\ell+s)}(0)\eta^{i+\ell}\tau^{j+s} \right| &\leq  \frac{|B_iB_\ell|}{j!\ell!} C_{j+s}\Lambda_{i+j}(\lambda) \Lambda_{\ell+s}(\lambda)\eta^{i+\ell+j+s} \\
&\leq  \frac{|B_iB_\ell|}{j!\ell!} T^jT^s\Lambda_{i+j}(\lambda) \Lambda_{\ell+s}(\lambda)\eta^{i+\ell+j+s} \\
&\leq  E^{i+j}E^{\ell+s}\Lambda_{i+j}(\lambda) \Lambda_{\ell+s}(\lambda)\eta^{i+\ell+j+s} \\
&= \left(\Lambda_{i+j}(\lambda)(E\eta)^{i+j}\right)\left(\Lambda_{\ell+s}(\lambda)(E\eta)^{\ell+s} \right)\, .
\end{align*}
Noting that only terms where $j+s$ is even survive expectation,
and letting $\tilde{i} = i+j$ and $\tilde{\ell}=\ell+s$, we obtain
\begin{align*}
&\Ex\left[(\mathbf{I})(\mathbf{I})\right] -F_\lambda(0)^2 \\
&\leq \sum_{i=0,2,\ldots,k} \sum_{\ell=0,2,\ldots,k}\sum_{j=0}^{k+1-i} \sum_{s=0}^{k+1-\ell}\Lambda_{i+j}(\lambda)(4T\eta)^{i+j} \Lambda_{\ell+s}(\lambda)(4T\eta)^{\ell+s}\mathbf{1}_E \mathbf{1}(j+s\text{ even})\\
&= \sum_{\tilde{i}=0}^{k+1} \, \sum_{\tilde{\ell}=0}^{k+1}\, C_{\tilde{i},\tilde{\ell}}\Lambda_{\tilde{i}}(\lambda)(E\eta)^{\tilde{i}} \Lambda_{\tilde{\ell}}(\lambda)(E\eta)^{\tilde{\ell}}
\end{align*}
for coefficients $C_{\tilde{i},\tilde{\ell}}$ such that $C_{0,0}=0$, $C_{\tilde{i},\tilde{\ell}}=0$ if $\tilde{i}+\tilde{\ell}$ is odd, and $C_{\tilde{i},\tilde{\ell}}\leq k^2$. Thus:
\begin{align*}
\Ex\left[(\mathbf{I})(\mathbf{I})\right] -F_\lambda(0)^2 &\leq k^2 \sum_{ \substack{ 2\leq \tilde{i}+\tilde{\ell} \leq 2k+2 \\ \tilde{i}+\tilde{\ell} \text{ even} }}\, \Lambda_{\tilde{i}}(\lambda) \Lambda_{\tilde{\ell}}(\lambda)(E\eta)^{\tilde{i}+\tilde{\ell}} \leq k^2 \LambdaConstant(\lambda)^2 \, . 
\end{align*}

Next we bound $\Ex\left[(\mathbf{II})(\mathbf{II})\right]$.
\begin{align*}
(\mathbf{II})(\mathbf{II}) &= \sum_{i=0,2,\ldots,k}\ \sum_{\ell=0,2,\ldots k} B_iB_\ell R_i(\lambda,\tau)R_\ell(\lambda,\tau) \eta^{i+\ell}
\end{align*}
Utilizing Equation (\ref{equ:RemainderBound}), we have:
\begin{align*}
\left| B_iB_\ell R_i(\lambda,\tau)R_\ell(\lambda,\tau) \eta^{i+\ell} \right| &\leq \frac{2^{2k+4}\edit{R(\lambda)}^2|B_iB_\ell|}{(k+1)^2(k+1-i)!(k+1-\ell)!}\Lambda_{k+2}(\lambda)^2 \eta^{i+\ell}|\tau|^{2k+4-i-\ell}
\end{align*}
which gives
\begin{align*}
&\Ex \left| B_iB_\ell R_i(\lambda,\tau)R_\ell(\lambda,\tau) \eta^{i+\ell} \right| \leq \frac{2^{2k+4}\edit{R(\lambda)}^2T^{2k+4-i-\ell}|B_iB_\ell|}{(k+1)^2(k+1-i)!(k+1-\ell)!}\Lambda_{k+2}(\lambda)^2 \eta^{2k+4} \\
&\qquad\leq \frac{\edit{R(\lambda)}^2(k+2-i)(k+2-\ell)}{(k+1)^2}\left(\frac{T^{k+2-i}|B_i|}{(k+2-i)!}\right)\left(\frac{T^{k+2-\ell}|B_\ell|}{(k+2-\ell)!}\right)\Lambda_{k+2}(\lambda)^2 (2\eta)^{2k+4} \\
&\qquad\leq \frac{\edit{R(\lambda)}^2(k+2-i)(k+2-\ell)}{(k+1)^2}\Lambda_{k+2}(\lambda)^2 (2E\eta)^{2k+4}
\end{align*}
so that
\begin{align*}
\Ex\left[(\mathbf{II})(\mathbf{II})\right] &\leq \frac{\edit{R(\lambda)}^2}{(k+1)^2} \Lambda_{k+2}(\lambda)^2(2E\eta)^{2k+4} \sum_{i=0,2,\ldots,k}\ \sum_{\ell=0,2,\ldots k} (k+1-i)(k+2-\ell)  \\
&\lesssim k^2 \edit{R(\lambda)}^2\Lambda_{k+2}(\lambda)^2(2E\eta)^{2k+4} \\
&\leq k^2 \edit{R(\lambda)}^2\LambdaConstant(\lambda)^2 \, .
\end{align*}
Finally we bound the cross term $2\Ex\left[(\mathbf{I})(\mathbf{II})\right] - 2F_\lambda(0)H_1(\lambda)$.
\begin{align}
\label{equ:cross_term}
(\mathbf{I})(\mathbf{II}) &= \sum_{i=0,2,\ldots,k} \sum_{j=0}^{k+1-i} \sum_{\ell=0,2,\ldots,k} \frac{B_i}{j!}F_\lambda^{(i+j)}(0)\eta^i\tau^j B_\ell R_\ell(\lambda,\tau)\eta^\ell
\end{align}
Since $\left|F_\lambda^{(i+j)}(0)\right| \leq \Lambda_{i+j}(\lambda)$ and $|B_\ell R_\ell(\lambda,\tau)\eta^\ell| \leq \frac{2^{k+2}\edit{R(\lambda)}|B_\ell|}{(k+1)(k+1-\ell)!} \Lambda_{k+2}(\lambda) \eta^\ell \tau^{k+2-\ell}$ from (\ref{equ:RemainderBound}), we have
\begin{align*}
\left| \frac{B_i}{j!}F_\lambda^{(i+j)}(0)\eta^i\tau^j B_\ell R_\ell(\lambda,\tau)\eta^\ell \right| &\leq \frac{2^{k+2}\edit{R(\lambda)}|B_iB_\ell|}{(k+1)j!(k+1-\ell)!}\Lambda_{i+j}(\lambda)  \Lambda_{k+2}(\lambda) \eta^{i+\ell} \tau^{k+2+j-\ell}
\end{align*}
so that
\begin{align*}
&\Ex \left| \frac{B_i}{j!}F_\lambda^{(i+j)}(0)\eta^i\tau^j B_\ell R_\ell(\lambda,\tau)\eta^\ell \right| \\
&\qquad\leq \frac{2^{k+2}\edit{R(\lambda)}T^{k+2+j-\ell}|B_iB_\ell|}{(k+1)j!(k+1-\ell)!}\Lambda_{i+j}(\lambda)  \Lambda_{k+2}(\lambda) \eta^{i+j+k+2} \\
&\qquad= \frac{2^{k+2}\edit{R(\lambda)}(k+2-\ell)}{(k+1)} \left(\frac{T^j|B_i|}{j!}\right)\left(\frac{T^{k+2-\ell}|B_\ell|}{(k+2-\ell)!}\right)\Lambda_{i+j}(\lambda)  \Lambda_{k+2}(\lambda) \eta^{i+j+k+2} \\
&\qquad= \frac{\edit{R(\lambda)}(k+2-\ell)}{(k+1)}\left[(E\eta)^{i+j}\Lambda_{i+j}(\lambda)\right] \cdot \left[(2E\eta)^{k+2} \Lambda_{k+2}(\lambda)\right]\,.
\end{align*}
The same bound holds for the terms of $F_\lambda(0)H_1(\lambda)$, which arise from $i=0, j=0$ in (\ref{equ:cross_term}), so that
\begin{align*}
2\Ex&\left[(\mathbf{I})(\mathbf{II})\right] -2F_\lambda(0)H_1(\lambda)\\
&\lesssim  \left(\sum_{i=0,2,\ldots,k} \sum_{j=0}^{k+1-i} (E\eta)^{i+j}\Lambda_{i+j}(\lambda) \right)\left(\sum_{\ell=0,2,\ldots,k} \frac{\edit{R(\lambda)}(k+2-\ell)}{(k+1)}(2E\eta)^{k+2} \Lambda_{k+2}(\lambda)\right)\\
&\lesssim \left(k \sum_{\tilde{i}=0}^{k+1} \Lambda_{\tilde{i}}(\lambda)(E\eta)^{\tilde{i}} \right)\left(k\edit{R(\lambda)}(2E\eta)^{k+2} \Lambda_{k+2}(\lambda)\right) \\
&\leq k^2 \edit{R(\lambda)}\sum_{\tilde{i}=0}^{k+1} \Lambda_{\tilde{i}}(\lambda)\Lambda_{k+2}(\lambda)(2E\eta)^{\tilde{i}+k+2} \\
&\leq k^2 \edit{R(\lambda)}\LambdaConstant(\lambda)^2
\end{align*}
Thus $\Var[G_\lambda(\tau)] \lesssim k^2\edit{R(\lambda)}^2\LambdaConstant(\lambda)^2$ and the lemma is proved.

\end{proof}

\lemRDGeneralUnbiasingDevOfEstimator*	

\begin{proof}
	By Lemma \ref{lem:RD_GeneralUnbiasing_MeanVar} and the independence of the $\tau_j$, we have
	\begin{align*}
	|L(\lambda) - \Ex\ \widetilde{L}(\lambda)| &\lesssim k\edit{R(\lambda)}\Lambda_{k+2}(\lambda)(2E\eta)^{k+2} \\
	\Var\ \widetilde{L}(\lambda) &\lesssim \frac{1}{M}k^2\LambdaConstant(\lambda)^2
	\end{align*}
	so by Chebyshev's Inequality we can conclude that with probability at least $1-1/t^2$, we have:
	\begin{align*}
	| \widetilde{L}(\lambda) - \Ex[ \widetilde{L}(\lambda)]| &\leq \frac{tk\edit{R(\lambda)}\LambdaConstant(\lambda)}{\sqrt{M}}
	\end{align*}
	which gives
	\begin{align*}
	|L(\lambda) - \widetilde{L}(\lambda)| &\leq 	|L(\lambda) -  \Ex[\widetilde{L}(\lambda)]| + | \Ex[\widetilde{L}(\lambda)] -  \widetilde{L}(\lambda)| \\
	&\lesssim k\edit{R(\lambda)}\Lambda_{k+2}(\lambda)(2E\eta)^{k+2} +\frac{tk\edit{R(\lambda)}\LambdaConstant(\lambda)}{\sqrt{M}} \, .
	\end{align*}
\end{proof}	

\section{Supporting results: \edit{noisy dilation MRA}}
\label{app:gen_mra}

This appendix contains supporting results needed to prove Proposition \ref{prop:DilationAndAdditiveNoise}, which defines a wavelet invariant estimator for \edit{noisy dilation MRA}. Lemma \ref{lem:GenMRA_AddNoise} controls the additive noise error and Lemma \ref{lem:GenMRA_CrossTerm} controls the cross-term error. Lemma \ref{lem:VanishingIntegrals} guarantees that the dilation unbiasing procedure applied to the additive noise still has mean $\sigma^2$, which is needed to prove Lemma  \ref{lem:GenMRA_AddNoise}.

\lemGenMRAAddNoise*	

\begin{proof}
	Let
	\begin{align*}
	D(\varepsilon_{j},\lambda) &:= \frac{1}{2\pi}\int |\widehat{\epsilon_j}(\omega)|^2 A_\lambda |\widehat{\psi}_\lambda(\omega)|^2 \ d\omega\, .
	\end{align*}
	By Lemma \ref{lem:PS_addnoise}, $\Ex_\varepsilon\left[ |\widehat{\varepsilon_j}(\omega)|^2 \right]= \sigma^2$, and
	we thus obtain:
	\begin{align*}
	\Ex_\varepsilon \left[D(\varepsilon_{j},\lambda)\right]
	&= \Ex_\varepsilon \left[\frac{1}{2\pi} \int |\widehat{\varepsilon_j}(\omega)|^2 A_\lambda|\widehat{\psi}_\lambda(\omega)|^2\ d\omega\right] \\
	&= \Ex_\varepsilon\left[ \frac{1}{2\pi} \int |\widehat{\varepsilon_j}(\omega)|^2 |\widehat{\psi}_\lambda(\omega)|^2\ d\omega -\frac{1}{2\pi} \int |\widehat{\varepsilon_j}(\omega)|^2 B_2\eta^2\lambda^2\frac{d}{d\lambda^2}|\widehat{\psi}_\lambda(\omega)|^2\ d\omega - \ldots \right. \\
	&\qquad \left.- \frac{1}{2\pi} \int |\widehat{\varepsilon_j}(\omega)|^2 B_k\eta^k\lambda^k\frac{d}{d\lambda^k}|\widehat{\psi}_\lambda(\omega)|^2\ d\omega\right] \\
	&=\sigma^2\left( \frac{1}{2\pi} \int|\widehat{\psi}_\lambda(\omega)|^2\ d\omega -\frac{B_2\eta^2}{2\pi} \int \lambda^2\frac{d}{d\lambda^2}|\widehat{\psi}_\lambda(\omega)|^2\ d\omega  - \ldots  \right. \\ 
	&\qquad \left. - \frac{B_k\eta^k}{2\pi} \int \lambda^k\frac{d}{d\lambda^k}|\widehat{\psi}_\lambda(\omega)|^2\ d\omega\right) \\
	&=\sigma^2 (1-0-\ldots -0) \\
	&=\sigma^2\, ,
	\end{align*} 
	where we have used Lemma \ref{lem:VanishingIntegrals} to conclude $\int  \lambda^m\left(\frac{d^m}{d\lambda^m}|\widehat{\psi}_\lambda(\omega)|^2\right)\ d\omega = 0$ for $m=2,\ldots,k$. 
	Also since $(a_1+\ldots+a_n)^2 \leq n(a_1^2+\ldots+a_n^2)$ by the Cauchy-Schwarz inequality, we obtain:
	\begin{align*}
	\Ex_\varepsilon \left[ D(\varepsilon_{j},\lambda)^2 \right] \leq  \Ex_\varepsilon \left[k\sum_{m=0,2,..,k}\left( \frac{B_m\eta^m}{2\pi} \int |\widehat{\varepsilon_j}(\omega)|^2 \lambda^m\frac{d^m}{d\lambda^m}|\widehat{\psi}_\lambda(\omega)|^2\ d\omega\right)^2\right] 
	\end{align*}
	where we let $\frac{d}{d\lambda^0}|\widehat{\psi}_\lambda(\omega)|^2$ denote $|\widehat{\psi}_\lambda(\omega)|^2$ and $B_0=1$.
	By Lemma \ref{lem:PS_addnoise}, we have $\Ex_\varepsilon\left[ |\varepsilon_j(\omega)|^2|\varepsilon_j(\xi)|^2\right] \leq 3\sigma^4$ for all frequencies $\omega, \xi$, so that
	\begin{align*}
	\Ex_\varepsilon &\left[\left(\frac{B_m\eta^m}{2\pi}\int |\widehat{\varepsilon_j}(\omega)|^2 \lambda^m\frac{d^m}{d\lambda^m}|\widehat{\psi}_\lambda(\omega)|^2\ d\omega\right)^2\right] \\
	&\quad \leq  \Ex_\varepsilon \left[\frac{B_m^2\eta^{2m}}{4\pi^2}\int\int |\widehat{\varepsilon_j}(\omega)|^2|\widehat{\varepsilon_j}(\xi)|^2 \left|\lambda^m\frac{d^m}{d\lambda^m}|\widehat{\psi}_\lambda(\omega)|^2\right| \cdot \left|\lambda^m\frac{d^m}{d\lambda^m}|\widehat{\psi}_\lambda(\xi)|^2\right|\ d\omega\ d\xi \right]\\
	&\quad \leq 3\sigma^4\left(\frac{1}{2\pi}\int \left|B_m\eta^m\lambda^m\frac{d^m}{d\lambda^m}|\widehat{\psi}_\lambda(\omega)|^2\right|\ d\omega\right)^2 \\
	&\quad \leq 3\sigma^4\Psi_m^2(E\eta)^{2m}\, ,
	\end{align*}
	where the last line follows from Corollary \ref{cor:WSC_deriv_cons_fdirac} in Appendix \ref{app:WSC_props}.
	We thus obtain:
	\begin{align*}
	\Ex_\varepsilon \left[D(\varepsilon_{j},\lambda)^2\right] &\leq
	k\sum_{m=0,2,..,k}\Ex_\varepsilon\left[ \left(\frac{B_m\eta^m}{2\pi} \int |\widehat{\varepsilon_j}(\omega)|^2 \lambda^m\frac{d^m}{d\lambda^m}|\widehat{\psi}_\lambda(\omega)|^2\ d\omega\right)^2\right] \\
	&\leq 3k\sigma^4 \sum_{m=0,2,..,k}\Psi_m^2(E\eta)^{2m} := (\mathbf{I})
	\end{align*}
	so that
	\begin{align*}
	\Ex_\varepsilon \left[D(\varepsilon_{j},\lambda) - \sigma^2\right] &= 0 \\
	\Var_\varepsilon  \left[D(\varepsilon_{j},\lambda) - \sigma^2 \right] &= \Var_\varepsilon  \left[D(\varepsilon_{j},\lambda) \right] \leq \Ex_\varepsilon \left[(D(\varepsilon_{j},\lambda))^2\right] \leq (\mathbf{I}).
	\end{align*}
	Thus
	\begin{align*}
	\Var_\varepsilon \left(\frac{1}{M}\sum_{j=1}^M D(\varepsilon_{j},\lambda) - \sigma^2 \right) \leq \frac{(\mathbf{I})}{M}
	\end{align*}
	so that by Chebyshev's Inequality with probability at least $1-1/t^2$
	\begin{align*}
	\left|\frac{1}{M}\sum_{j=1}^M D(\varepsilon_{j},\lambda) - \sigma^2\right| &\leq \frac{t\sqrt{(\mathbf{I})}}{\sqrt{M}} \leq t\sqrt{3k}\left(\sum_{m=0,2,\ldots,k}\Psi_m(E\eta)^m\right)\frac{\sigma^2}{\sqrt{M}} 
	= 2t\sqrt{k}\PsiConstant\frac{\sigma^2}{\sqrt{M}} \, .
	\end{align*}
\end{proof}

\lemGenMRACrossTerm*

\begin{proof}
	We have
	\begin{align*}
	\frac{1}{M}\sum_{j=1}^M \frac{1}{2\pi} \int \left(\widehat{f}_{\tau_j}(\omega)\overline{\widehat{\varepsilon}_j}(\omega)+\overline{\widehat{f}_{\tau_j}}(\omega)\widehat{\varepsilon}_j(\omega)\right) A_\lambda |\widehat{\psi}_\lambda(\omega)|^2\ d\omega &=\frac{1}{M}\sum_{j=1}^M Y_j + \overline{Y_j}
	\end{align*}
	where 
	\begin{align*}
	Y_j :=\frac{1}{2\pi} \int \left(\overline{\widehat{f}_{\tau_j}}(\omega)\widehat{\varepsilon}_j(\omega)\right)A_\lambda|\widehat{\psi}_\lambda(\omega)|^2\ d\omega \, .
	\end{align*}
	The random variable $Y_j $ has randomness depending on both $\varepsilon_j$ and $\tau_j$. Note that
	\begin{align*}
	\Ex_{\varepsilon,\tau} [Y_j] &= \Ex_{\varepsilon,\tau}\left[\Ex_{\varepsilon,\tau}[Y_j | \tau_j ]\right]
	\end{align*}
	since $Y_j$ is integrable. Thus since $\Ex_{\varepsilon,\tau}[\widehat{\varepsilon}_j(\omega)] = 0$, we obtain $\Ex_{\varepsilon,\tau}[Y_j|\tau_j] = 0$, which yields $\Ex_{\varepsilon,\tau}[Y_j]=0$. We also have:
	\begin{align*}
	\Var_{\varepsilon,\tau}[Y_j] &= \Ex_{\varepsilon,\tau}[Y_j^2] \\
	&\leq \Ex_{\varepsilon,\tau}\left[\left(\frac{1}{2\pi} \int |\overline{\widehat{f}_{\tau_j}}(\omega)|\cdot|\widehat{\varepsilon}_j(\omega)|\cdot\left|A_\lambda|\widehat{\psi}_\lambda(\omega)|^2\right|\ d\omega\right)^2 \right]\\
	&\leq \Ex_{\varepsilon,\tau} \left[ \left(\frac{1}{2\pi} \int |\overline{\widehat{f}_{\tau_j}}(\omega)|^2\cdot\left|A_\lambda|\widehat{\psi}_\lambda(\omega)|^2\right|\ d\omega\right)\left(\frac{1}{2\pi} \int|\widehat{\varepsilon}_j(\omega)|^2\cdot\left|A_\lambda|\widehat{\psi}_\lambda(\omega)|^2\right|\ d\omega\right) \right]\\
	&= \Ex_{\tau}\left[\frac{1}{2\pi} \int |\overline{\widehat{f}_{\tau_j}}(\omega)|^2\cdot\left|A_\lambda|\widehat{\psi}_\lambda(\omega)|^2\right|\ d\omega\right]\Ex_{\varepsilon}\left[\frac{1}{2\pi} \int|\widehat{\varepsilon}_j(\omega)|^2\cdot\left|A_\lambda|\widehat{\psi}_\lambda(\omega)|^2\right|\ d\omega\right]\, .
	\end{align*}
	Letting $B_0=1$ and applying Lemma \ref{lem:WSC_deriv_cons}, we have:
	\begin{align*}
	\Ex_{\tau}\left[\frac{1}{2\pi} \int |\overline{\widehat{f}_{\tau_j}}(\omega)|^2\cdot\left|A_\lambda|\widehat{\psi}_\lambda(\omega)|^2\right|\ d\omega\right]
	&\leq \Ex_{\tau}\left[\sum_{m=0,2,\ldots,k}\frac{1}{2\pi} \int |\widehat{f}_{\tau_j}(\omega)|^2\cdot\left|B_m\eta^m\lambda^m\frac{d^m}{d\lambda^m}|\widehat{\psi}_\lambda(\omega)|^2\right|\ d\omega\right] \\
	&\leq \Ex_{\tau}\left[\sum_{m=0,2,\ldots,k} (E\eta)^m\left(\norm{f_{\tau_j}}_1^2\Psi_m \wedge \frac{\norm{f_{\tau_j}'}_1^2\Theta_m}{\lambda^2}\right)\right] \\
	&\leq \sum_{m=0,2,\ldots,k} (E\eta)^m\left(\norm{f}_1^2\Psi_m \wedge \frac{4\norm{f'}_1^2\Theta_m}{\lambda^2}\right) \\
	&\leq 4\sum_{m=0,2,\ldots,k} (E\eta)^m \Lambda_m(\lambda) \\
	&\lesssim \Lambda_0(\lambda) + \LambdaConstant(\lambda)\, 
	\end{align*}
	since $\norm{\tau_j}_\infty \leq \frac{1}{2}$ guarantees $\norm{f_{\tau_j}'}_1 = \frac{1}{1-\tau_j}\norm{f'}_1 \leq 2\norm{f'}_1$. Also:
	\begin{align*}
	\Ex_{\varepsilon} \left[\frac{1}{2\pi} \int|\widehat{\varepsilon}_j(\omega)|^2\cdot\left|A_\lambda|\widehat{\psi}_\lambda(\omega)|^2\right|\ d\omega\right]
	&\leq \Ex_{\varepsilon}  \left[\sum_{m=0,2,\ldots,k}\frac{1}{2\pi} \int |\widehat{\varepsilon}_j(\omega)|^2\cdot\left|B_m\eta^m\lambda^m\frac{d^m}{d\lambda^m}|\widehat{\psi}_\lambda(\omega)|^2\right|\ d\omega\right] \\&= \sigma^2 \left(\sum_{m=0,2,\ldots,k}\frac{1}{2\pi} \int\left|B_m\eta^m\lambda^m\frac{d^m}{d\lambda^m}|\widehat{\psi}_\lambda(\omega)|^2\right|\ d\omega\right) \\
	&\leq \sigma^2 \sum_{m=0,2,\ldots,k}(E\eta)^m\Psi_m \\
	&= \sigma^2 \PsiConstant
	\end{align*}
	where the second line follows from Lemma \ref{lem:PS_addnoise} in Appendix \ref{app:AddNoise} and the next to last line from Corollary \ref{cor:WSC_deriv_cons_fdirac} in Appendix \ref{app:WSC_props}. We thus have:
	\begin{align*}
	\Ex_{\varepsilon,\tau}[Y_j] &= 0 \\
	\Var_{\varepsilon,\tau}[Y_j] &\lesssim \sigma^2 \PsiConstant(\Lambda_0(\lambda)+\LambdaConstant(\lambda))
	\end{align*}
	and an identical argument can be applied to the $\overline{Y_j}$ so that by Chebyshev's Inequality with probability at least $1-1/t^2$:
	\begin{align*}
	\left|\frac{1}{M}\sum_{j=1}^M  Y_j+\overline{Y_j}\right| &\leq \left|\frac{1}{M}\sum_{j=1}^M Y_j\right| + \left|\frac{1}{M}\sum_{j=1}^M \overline{Y_j}\right|
	\lesssim t\sqrt{\PsiConstant}\sqrt{\Lambda_0(\lambda)+\LambdaConstant(\lambda)}\frac{\sigma}{\sqrt{M}}.
	\end{align*}
\end{proof}

\begin{lemma}
	\label{lem:VanishingIntegrals}
	Assume $\psi$ is $k$-admissable. Then:
	\begin{align}
	\label{equ:VanishingIntegrals}
	\int  \lambda^m\left(\frac{d^m}{d\lambda^m}|\widehat{\psi}_\lambda(\omega)|^2\right)\ d\omega &= 0
	\end{align}
	for all $1\leq m\leq k$.
\end{lemma}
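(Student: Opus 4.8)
The plan is to recognize the claimed integral as the $m$‑th derivative in $\lambda$ of a quantity that is independent of $\lambda$. Write $g = P\psi = |\widehat{\psi}|^2$ and $g_\lambda(\omega) = \lambda^{-1}g(\omega/\lambda) = |\widehat{\psi}_\lambda(\omega)|^2$. By Plancherel and $\|\psi_\lambda\|_2 = 1$,
\[
\int g_\lambda(\omega)\, d\omega = 2\pi\|\psi_\lambda\|_2^2 = 2\pi \qquad \text{for every } \lambda > 0,
\]
so $\lambda\mapsto\int g_\lambda(\omega)\,d\omega$ is constant and all of its derivatives vanish. Once differentiation under the integral sign is justified, $\int \frac{d^m}{d\lambda^m}g_\lambda(\omega)\,d\omega = 0$ for $1\le m\le k$, and multiplying by the constant $\lambda^m$ gives \eqref{equ:VanishingIntegrals}. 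To make this airtight (and to pin down exactly where $k$‑admissibility enters) I would instead run the argument computationally, as follows.

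Reusing the expansion from the proof of Lemma \ref{lem:WaveletScatteringDeriv}, for $1\le m\le k$,
\[
\lambda^m\frac{d^m}{d\lambda^m}g_\lambda(\omega) = \sum_{i=0}^m C_{m,i}\,\omega^i g_\lambda^{(i)}(\omega), \qquad C_{m,i} = (-1)^m {m\choose i}\frac{m!}{i!},
\]
with $g_\lambda^{(i)}(\omega) = \lambda^{-i-1}g^{(i)}(\omega/\lambda)$. The substitution $u=\omega/\lambda$ makes each term $\lambda$‑independent, $\int \omega^i g_\lambda^{(i)}(\omega)\,d\omega = \int u^i g^{(i)}(u)\,du$, so it suffices to show $\sum_{i=0}^m C_{m,i}\int u^i g^{(i)}(u)\,du = 0$. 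Since $\psi$ is $k$‑admissible, $\widehat{\psi}\in\Cb^k(\R)$ and $\omega^j(P\psi)^{(j)}\in\Lb^1(\R)$ for $0\le j\le k$ (cf. Lemma \ref{lem:kadmissable}), so the functions $u^j g^{(j)}(u)$ are continuous, integrable, and decay at $\pm\infty$; integrating by parts $i$ times, all boundary terms dropping out, yields
\[
\int u^i g^{(i)}(u)\,du = (-1)^i\int\Big(\tfrac{d^i}{du^i}u^i\Big)g(u)\,du = (-1)^i\,i!\int g(u)\,du.
\]
Substituting back and using $C_{m,i}(-1)^i i! = (-1)^m m!\,(-1)^i{m\choose i}$,
\[
\sum_{i=0}^m C_{m,i}\int u^i g^{(i)}(u)\,du = (-1)^m\,m!\Big(\int g\Big)\sum_{i=0}^m(-1)^i{m\choose i} = (-1)^m\,m!\Big(\int g\Big)(1-1)^m = 0
\]
for every $m\ge 1$, which is precisely \eqref{equ:VanishingIntegrals} after factoring out the constant $\lambda^m$.

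The only step that is not pure bookkeeping is the vanishing of the boundary terms in the repeated integration by parts (equivalently, the justification of differentiating under the integral sign in the first paragraph), and this is exactly where the admissibility hypotheses are used: the rapid decay of $(P\psi)^{(j)}$ together with $\int |\widehat{\psi}(\omega)|^2/\omega^2\,d\omega < \infty$ ensure the relevant products $u^{j}g^{(j)}(u)$, and the auxiliary products arising at each integration by parts, tend to zero at infinity (and supply a locally‑$\lambda$‑uniform $\Lb^1$ dominating function). Everything else reduces to the elementary identity $\sum_{i=0}^m(-1)^i{m\choose i}=0$ for $m\ge1$, so I anticipate no genuine difficulty beyond that integrability check.
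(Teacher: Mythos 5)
Your proof is correct, but it takes a genuinely different route from the paper's. The paper establishes, by induction on $m$, the pointwise identity $\lambda^{m}\tfrac{d^{m}}{d\lambda^{m}}g_\lambda(\omega) = \tfrac{d}{d\omega}\bigl(-\omega\lambda^{m-1}\tfrac{d^{m-1}}{d\lambda^{m-1}}g_\lambda(\omega)\bigr)-(m-1)\lambda^{m-1}\tfrac{d^{m-1}}{d\lambda^{m-1}}g_\lambda(\omega)$, and then runs a second induction in which each step uses a single integration by parts in $\omega$ to reduce the order-$m$ integral to the order-$(m-1)$ one. You instead exploit scale invariance directly: either conceptually, noting $\int|\widehat{\psi}_\lambda|^2\,d\omega = 2\pi\|\psi\|_2^2$ is constant in $\lambda$ so all its $\lambda$-derivatives vanish, or computationally, by reusing the expansion $\lambda^{m}\tfrac{d^{m}}{d\lambda^{m}}g_\lambda(\omega)=\sum_{i=0}^{m}C_{m,i}\,\omega^{i}g_\lambda^{(i)}(\omega)$ already proved in Lemma \ref{lem:WaveletScatteringDeriv}, the change of variables $u=\omega/\lambda$ that makes each term $\lambda$-independent, $i$-fold integration by parts giving $\int u^{i}g^{(i)}=(-1)^{i}i!\int g$, and the identity $\sum_{i=0}^{m}(-1)^{i}\binom{m}{i}=0$. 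The hypotheses you invoke are exactly those the paper itself uses: both arguments need $\omega^{j}(P\psi)^{(j)}\in\Lb^1(\R)$ for $j\le k$ (which is the content of $\Psi_k<\infty$) together with the vanishing of the boundary terms $\omega^{j+1}(P\psi)^{(j)}(\omega)\to 0$ at $\pm\infty$, and the paper asserts these vanishings at the same level of rigor you do. What your route buys is transparency — the constancy-of-norm observation explains why the integrals must vanish and the direct computation is shorter given the expansion lemma — while the paper's nested induction avoids the repeated integrations by parts within a single term at the cost of a longer derivation; either is a complete proof.
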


\begin{proof}
	We recall that since $\psi$ is $k$-admissable, $|\widehat{\psi}_\lambda(\omega)|^2 \in \Cb^k (\R)$, and to simplify notation we let $g = |\widehat{\psi}|^2$ and 
	\begin{align*}
	g_\lambda(\omega) &= \frac{1}{\lambda}g\left(\frac{\omega}{\lambda}\right) = |\widehat{\psi}_\lambda(\omega)|^2\, .
	\end{align*}
	We first establish that:
	\begin{align}
	\label{equ:IterativeCondition}
	\lambda^k\left(\frac{d}{d\lambda^k}g_\lambda(\omega)\right) &= \frac{d}{d\omega}\left(-\omega\lambda^{k-1}\frac{d}{d\lambda^{k-1}}g_\lambda(\omega)\right)-(k-1)\lambda^{k-1}\frac{d}{d\lambda^{k-1}}g_\lambda(\omega)\, .
	\end{align}
	The proof is by induction. When $k=1$, 	we obtain
	\begin{align*}
	\text{LHS of Eqn. (\ref{equ:IterativeCondition})} &= \lambda\frac{d}{d\lambda}\left(\frac{1}{\lambda}g\left(\frac{\omega}{\lambda}\right)\right) 
	= -\frac{\omega}{\lambda^2}g'\left(\frac{\omega}{\lambda}\right)-\frac{1}{\lambda}g\left(\frac{\omega}{\lambda}\right) 
	= -\omega g_\lambda'(\omega) - g_\lambda(\omega)
	\end{align*}	
	and 
	\begin{align*}
	\text{RHS of Eqn. (\ref{equ:IterativeCondition})} &= \frac{d}{d\omega}\left(-\omega g_\lambda(\omega)\right) = -\omega g_\lambda'(\omega) - g_\lambda(\omega)\, ,
	\end{align*}
	so the base case is established. We now assume that Equation (\ref{equ:IterativeCondition}) holds and show it also holds for $k+1$ replacing $k$. By the inductive hypothesis:
	\begin{align*}
	\frac{d}{d\lambda^k}g_\lambda(\omega) &= \frac{d}{d\omega}\left(-\omega\lambda^{-1}\frac{d}{d\lambda^{k-1}}g_\lambda(\omega)\right)-(k-1)\lambda^{-1}\frac{d}{d\lambda^{k-1}}g_\lambda(\omega) \\
	\frac{d}{d\lambda^{k+1}}g_\lambda(\omega) &=\frac{d}{d\omega}\left(-\omega\lambda^{-1}\frac{d}{d\lambda^{k}}g_\lambda(\omega) + \frac{d}{d\lambda^{k-1}}g_\lambda(\omega)\omega\lambda^{-2}\right) \\
	&\qquad -(k-1)\left(\lambda^{-1}\frac{d}{d\lambda^k}g_\lambda(\omega)+\frac{d}{d\lambda^{k-1}}g_\lambda(\omega)(-\lambda^{-2})\right) \\
	&=\frac{d}{d\omega}\left(-\omega\lambda^{-1}\frac{d}{d\lambda^k}g_\lambda(\omega)\right) - (k-1)\lambda^{-1}\frac{d}{d\lambda^k}g_\lambda(\omega) \\
	&\qquad + \underbrace{\frac{d}{d\omega}\left(\omega\lambda^{-2}\frac{d}{d\lambda^{k-1}}g_\lambda(\omega)\right) +(k-1)\lambda^{-2}\frac{d}{d\lambda^{k-1}}g_\lambda(\omega)}_{=-\lambda^{-1}\frac{d}{d\lambda^k}g_\lambda(\omega) \text{ by inductive hypothesis}} \\
	&=\frac{d}{d\omega}\left(-\omega\lambda^{-1}\frac{d}{d\lambda^k}g_\lambda(\omega)\right) - k\lambda^{-1}\frac{d}{d\lambda^k}g_\lambda(\omega)
	\end{align*}
	so that
	\begin{align*}
	\lambda^{k+1}\frac{d}{d\lambda^{k+1}}g_\lambda(\omega) &=\frac{d}{d\omega}\left(-\omega\lambda^{k}\frac{d}{d\lambda^k}g_\lambda(\omega)\right) - k\lambda^{k}\frac{d}{d\lambda^k}g_\lambda(\omega)\, .
	\end{align*}
	Thus (\ref{equ:IterativeCondition}) is established. We now use integration by parts to show (\ref{equ:IterativeCondition}) implies (\ref{equ:VanishingIntegrals}) in the Lemma. The proof of (\ref{equ:VanishingIntegrals}) is once again by induction. When $k=1$, we have already shown
	\begin{align}
	\label{equ:lambda_deriv_base_case}
	\lambda\left(\frac{d}{d\lambda} g_\lambda(\omega)\right) &= -\omega g_\lambda'(\omega) - g_\lambda(\omega) \, .
	\end{align}
	Integration by parts gives
	\begin{align*}
	\int \omega g_\lambda'(\omega)\ d\omega &= \left(\omega g_\lambda(\omega)\right)\biggr|_{-\infty}^{\infty} - \int g_\lambda(\omega)\ d\omega = \int g_\lambda(\omega)\ d\omega \,.
	\end{align*}
	Note $\omega g_\lambda(\omega)$ vanishes at $\pm \infty$ since $g \in \Lb^1 (\R)$ guarantees $g_\lambda \in \Lb^1 (\R)$, and thus $g_\lambda$ must decay faster that $\omega^{-1}$. Utilizing (\ref{equ:lambda_deriv_base_case}), 
	\begin{align*}
	\int \omega g_\lambda'(\omega) - g_\lambda(\omega)\ d\omega &= 0 \quad \implies \quad \int \lambda\left(\frac{d}{d\lambda} g_\lambda(\omega)\right) d\omega =0
	\end{align*}
	and the base case is established. We now assume
	\begin{align*}
	\int \lambda^{k-1}\left(\frac{d}{d\lambda^{k-1}}g_\lambda(\omega)\right)\ d\omega &= 0\, .
	\end{align*}
	By integrating Equation (\ref{equ:IterativeCondition}), we obtain:
	\begin{align*}
	\int \lambda^k & \left(\frac{d}{d\lambda^k}g_\lambda(\omega)\right)\ d\omega \\
	&= \int \frac{d}{d\omega}\left(-\omega\lambda^{k-1}\frac{d}{d\lambda^{k-1}}g_\lambda(\omega)\right)\ d\omega-(k-1)\underbrace{\int \lambda^{k-1}\frac{d}{d\lambda^{k-1}}g_\lambda(\omega)\ d\omega}_{=0 \text{ by induc. hypo.}} \\
	&=\int -\omega\frac{d}{d\omega}\left(\lambda^{k-1}\frac{d}{d\lambda^{k-1}} g_\lambda(\omega)\right)\ d\omega -\underbrace{\int \lambda^{k-1}\frac{d}{d\lambda^{k-1}}g_\lambda(\omega)\ d\omega}_{=0 \text{ by induc. hypo.}} \\
	&=-\omega\lambda^{k-1}\frac{d}{d\lambda^{k-1}}g_\lambda(\omega)\biggr|_{-\infty}^{\infty} + \underbrace{\int \lambda^{k-1}\frac{d}{d\lambda^{k-1}}g_\lambda(\omega)\ d\omega}_{=0 \text{ by induc. hypo.}} \\
	&=0 \, .
	\end{align*}
	We are guaranteed $-\omega\lambda^{k-1}\frac{d}{d\lambda^{k-1}}g_\lambda(\omega)$ vanishes at $\pm \infty$ since in the proof of Lemma \ref{lem:WaveletScatteringDeriv} we showed \\$\lambda^{k-1}\frac{d}{d\lambda^{k-1}}g_\lambda(\omega) = \sum_{j=0}^{k-1} C_j \omega^{j}g_\lambda^{(j)}(\omega)$, and $\omega^j g_\lambda^{(j)} \in \Lb^1 (\R)$ implies $\omega^{j+1}g_\lambda^{(j)}$ vanishes at $\pm \infty$.
\end{proof}	
 
\section{Moment estimation for \edit{noisy dilation MRA}}
\label{app:numerical_implementation}

In this appendix we outline a moment estimation procedure for \edit{noisy dilation MRA} (Model \ref{model:genMRA}) in the special case $t=0$, i.e. signals are randomly dilated and subjected to additive noise but are not translated. This procedure is a generalization of the method presented in Section \ref{sec:EmpMomentEst}. 

Given the additive noise level, the moments of the dilation distribution $\tau$ can be empirically estimated from the mean and variance of the random variables $\beta_{m}(y_j)$ defined by
\begin{align}
\label{equ:beta_m_def}
\beta_{m}(y_j) &= \int_0^{2^{\ell}\pi} \omega^m \widehat{y}_j(\omega)\ d\omega
\end{align}
for integer $m \geq 0$. To account for the effect of additive noise on the above random variables, we define:
\begin{equation}
\begin{array}{r@{}l}
\vspace{.5em}
g_m(\ell,\sigma)&=\displaystyle \int_{0}^{2^{\ell}\pi} \int_{0}^{2^{\ell}\pi} \frac{2\sigma^2\xi^m\omega^m\sin(\frac{1}{2}(\xi-\omega))}{(\xi-\omega)}\ d\omega\ d\xi \label{equ:g_m_def}
\end{array}
\end{equation}
and an order $m$ additive noise adjusted squared coefficient of variation by:
\begin{align}
\label{equ:CV_m}
CV_m &:= \frac{\Var[\beta_{m}(y_j)] - g_m(\ell, \sigma)}{|\Ex[\beta_{m}(y_j)]|^2}.
\end{align}
\begin{remark}
	If the noisy signals are supported in $[-\frac{N}{2}, \frac{N}{2}]$ instead of $[-\frac{1}{2}, \frac{1}{2}]$, (\ref{equ:g_m_def}) is replaced with:
	\begin{align*}
	g_m(N,\ell,\sigma)&=\int_{0}^{2^{\ell}\pi} \int_{0}^{2^{\ell}\pi} \frac{2\sigma^2\xi^m\omega^m\sin(\frac{N}{2}(\xi-\omega))}{(\xi-\omega)}\ d\omega\ d\xi.
	\end{align*}
\end{remark}
The following proposition mirrors Proposition \ref{prop:emp_mom_est_dilMRA} for dilation MRA; its proof appears at the end of Appendix \ref{app:numerical_implementation}.
\begin{proposition}	
	\label{prop:emp_mom_est}
	Assume Model \ref{model:genMRA} with $t=0$ and $CV_0, CV_1$ defined by (\ref{equ:beta_m_def}), (\ref{equ:g_m_def}), and (\ref{equ:CV_m}). Then
	\begin{align*}
	CV_0 &=\eta^2+(3C_4-3)\eta^4+O(\eta^6) \\
	CV_1 &= 4\eta^2 + (25C_4-33)\eta^4+O(\eta^6) \, .
	\end{align*}	
\end{proposition}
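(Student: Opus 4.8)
The plan is to mirror the proof of Proposition \ref{prop:emp_mom_est_dilMRA} almost verbatim, the only new ingredient being the bookkeeping for the additive noise, which is exactly what the term $g_m(\ell,\sigma)$ is designed to cancel. First I would use $\widehat{y}_j(\omega) = \widehat{f}((1-\tau_j)\omega) + \widehat{\varepsilon}_j(\omega)$ (recall $\widehat{L_{\tau_j}f}(\omega) = \widehat{f}((1-\tau_j)\omega)$) together with the change of variables $\xi = (1-\tau_j)\omega$ already used in Proposition \ref{prop:emp_mom_est_dilMRA}, to write
\begin{align*}
\beta_m(y_j) = (1-\tau_j)^{-(m+1)}\beta_m(f) + N_{m,j}\,,\qquad N_{m,j} := \int_0^{2^\ell\pi}\omega^m\,\widehat{\varepsilon}_j(\omega)\,d\omega\,,
\end{align*}
where, exactly as in Section \ref{sec:EmpMomentEst}, I assume $\ell$ is chosen large enough that the target signal's frequency content is essentially supported in $[-2^{\ell-1}\pi,2^{\ell-1}\pi]$; since $|\tau_j|\leq 1/2$ gives $1-\tau_j\geq 1/2$, the cutoff at $2^\ell\pi$ still captures all of the rescaled content, so $\int_0^{2^\ell\pi(1-\tau_j)}\xi^m\widehat{f}(\xi)\,d\xi = \beta_m(f)$. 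Because $\tau_j$ and $\varepsilon_j$ are independent and $\Ex[\widehat{\varepsilon}_j(\omega)]=0$, this yields $\Ex[\beta_m(y_j)] = \Ex[(1-\tau_j)^{-(m+1)}]\,\beta_m(f)$, hence $|\Ex[\beta_m(y_j)]|^2 = (\Ex[(1-\tau_j)^{-(m+1)}])^2|\beta_m(f)|^2$ (here one needs $\beta_m(f)\neq 0$ for $CV_m$ to be well defined).

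Next I would expand the complex variance $\Var[\beta_m(y_j)] = \Ex[|\beta_m(y_j)|^2] - |\Ex[\beta_m(y_j)]|^2$. The two cross terms $\propto \beta_m(f)\overline{N_{m,j}}$ and $\overline{\beta_m(f)}N_{m,j}$ drop out under expectation by independence and $\Ex[N_{m,j}]=0$, leaving
\begin{align*}
\Var[\beta_m(y_j)] = \big(\Ex[(1-\tau_j)^{-2(m+1)}] - (\Ex[(1-\tau_j)^{-(m+1)}])^2\big)|\beta_m(f)|^2 + \Ex[|N_{m,j}|^2]\,.
\end{align*}
The key computation is $\Ex[|N_{m,j}|^2] = g_m(\ell,\sigma)$: writing $|N_{m,j}|^2$ as a double integral in $\xi,\omega$ and inserting $\Ex[\widehat{\varepsilon}_j(\xi)\overline{\widehat{\varepsilon}_j(\omega)}] = \sigma^2\int_{-1/2}^{1/2}e^{-ix(\xi-\omega)}\,dx = 2\sigma^2\sin(\tfrac12(\xi-\omega))/(\xi-\omega)$ — the It\^o isometry used in Lemma \ref{lem:PS_addnoise} — reproduces exactly the integrand in the definition (\ref{equ:g_m_def}) of $g_m(\ell,\sigma)$. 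Subtracting $g_m(\ell,\sigma)$ thus removes the additive-noise contribution entirely, so
\begin{align*}
CV_m = \frac{\Var[\beta_m(y_j)] - g_m(\ell,\sigma)}{|\Ex[\beta_m(y_j)]|^2} = \frac{\Ex[(1-\tau_j)^{-2(m+1)}]}{(\Ex[(1-\tau_j)^{-(m+1)}])^2} - 1\,,
\end{align*}
which is word for word the expression obtained in the proof of Proposition \ref{prop:emp_mom_est_dilMRA}. Specializing to $m=0$ and $m=1$, and quoting the geometric-series Taylor expansions carried out there (where symmetry of $\tau$ kills the odd moments and $\Ex[\tau^2]=\eta^2$, $\Ex[\tau^4]=C_4\eta^4$ are used), gives the two claimed identities up to $O(\eta^6)$.

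I expect the main — really the only — obstacle to be justifying the frequency-cutoff identity $\int_0^{2^\ell\pi(1-\tau_j)}\xi^m\widehat{f}(\xi)\,d\xi = \beta_m(f)$, which is what makes the decomposition $\beta_m(y_j) = (1-\tau_j)^{-(m+1)}\beta_m(f) + N_{m,j}$ clean; this is the same "essentially supported in $[-2^{\ell-1}\pi,2^{\ell-1}\pi]$" convention used throughout Section \ref{sec:num_imp}, and under it the argument is exact. A secondary point worth spelling out is that $\beta_m(y_j)$ is complex-valued, so $\Var$ must be read as $\Ex[|\beta_m(y_j)-\Ex\beta_m(y_j)|^2]$; with that reading the cross terms vanish and the real quantity $g_m(\ell,\sigma)$ is precisely the correct correction, so the proof reduces to the noiseless case already handled.
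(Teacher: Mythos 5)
Your proposal is correct and follows essentially the same route as the paper's proof: decompose $\beta_m(y_j)$ into the dilated-signal term plus the noise term, use the It\^o isometry to identify $\Ex[|N_{m,j}|^2]=g_m(\ell,\sigma)$ so the noise contribution cancels, reduce $CV_m$ to the ratio $\Ex[(1-\tau_j)^{-2(m+1)}]/(\Ex[(1-\tau_j)^{-(m+1)}])^2-1$, and then invoke the Taylor expansions from Proposition \ref{prop:emp_mom_est_dilMRA}. Your explicit remarks on the frequency-cutoff convention and on reading $\Var$ as the complex variance are fine clarifications but do not change the argument.
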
	
Once again we cannot compute $CV_m$ exactly, but by replacing $\Var, \Ex$ with their finite sample estimators, we obtain approximations $\widetilde{CV}_m$  which can be used to define estimators of the dilation moments. 

\begin{definition}
	\label{def:emp_dil_mom_est}
	Assume Model \ref{model:dilMRA} with $t=0$ and $\widetilde{CV}_0, \widetilde{CV}_1$ the empirical counterparts of (\ref{equ:CV_m}).
	Define the second order estimator of $\eta^2$ by $\widetilde{\eta}^2 = \widetilde{CV}_0.$
	Define the fourth order estimators of $(\eta^2, C_4\eta^4)$ by the unique positive solution $(\widetilde{\eta}^2, \widetilde{C}_4)$ of
	\begin{align*}
	\widetilde{CV}_0 &=\eta^2+(3C_4-3)\eta^4 \\
	\widetilde{CV}_1 &= 4\eta^2 + (25C_4-33)\eta^4.
	\end{align*}
\end{definition}

As $M \rightarrow \infty$, the second order moment estimator is accurate up to $O(\eta^4)$ and the fourth order moment estimators are accurate up to $O(\eta^6)$. 
However in the finite sample regime, the $g_m(\ell, \sigma)$ appearing in (\ref{equ:CV_m}) will be replaced with $g_m(\ell, \sigma) \pm O(\sigma^2/\sqrt{M})$, so that the estimators given in Definition \ref{def:emp_dil_mom_est} are subject to an error of order $O(\sigma^2/\sqrt{M})$. More generally, the additive noise fluctuations imply that to estimate the first $k/2$ even moments of $\tau$ up to an $O(\eta^{k+1})$ error will require $\sigma^2/\sqrt{M} \leq \eta^{k+1}$, or $M \geq \sigma^4\eta^{-2(k+1)}$. 

Having established an empirical moment estimation procedure for \edit{noisy dilation MRA} when $t=0$, we repeat the simulations of Section \ref{sec:GenMRANoiseSim} on the restricted model, but estimate the additive and dilation moments empirically. Since accurately estimating the moments of $\tau$ is difficult for $\sigma$ large, we make three modifications to the oracle set-up. First, we lower the additive noise level by a factor of 2 from the oracle simulations, and consider all parameter combinations resulting from $\sigma = 2^{-5}, 2^{-4}$ (\edit{giving $\snr=9.0,2.2$}) and $\eta = 0.06, 0.12$. Secondly, we take $M$ substantially larger than for the oracle simultions, with $16,384 \leq M \leq 370,727$. Thirdly, we compute WSC $k=4$ only for large dilations. For large dilations $(\eta^2, C_4\eta^4)$ are approximated with fourth order estimators, while for small dilations $\eta^2$ is approximated with a second order estimator (see Definition \ref{def:emp_dil_mom_est}). 

Results are shown in Figure \ref{fig:GenMRAModelEmpirical}, and the same overall behavior observed in the oracle simulations for large $M$ holds. The additive noise level was estimated empirically as described in Section \ref{sec:AddNoiseLevel}. For the medium and high frequency signal, WSC $k=2$ has substantially smaller error than both PS $k=0$ and WSC $k=0$; for the large frequency signal, the error is decreased by at least a factor of 2 for large dilations and a factor of 4 for small dilations relative to both zero order estimators. When WSC $k=4$ is defined, it has a smaller error than WSC $k=2$ for the high frequency signal, while WSC $k=2$ is preferable for the low and medium frequency signal. We observe that for the oracle simulations WSC $k=4$ is preferable for all frequencies, so this is most likely due to error in the moment estimation degrading the WSC $k=4$ estimator. For the low frequency signal, PS $k=0$ once again achieves the smallest error for small dilations, while for large dilations the higher order wavelet methods appear to surpass PS $k=0$ for $M$ large enough.

\begin{figure}
	\centering
	\begin{subfigure}[b]{0.32\textwidth}
		\centering
		\includegraphics[width=.85\textwidth]{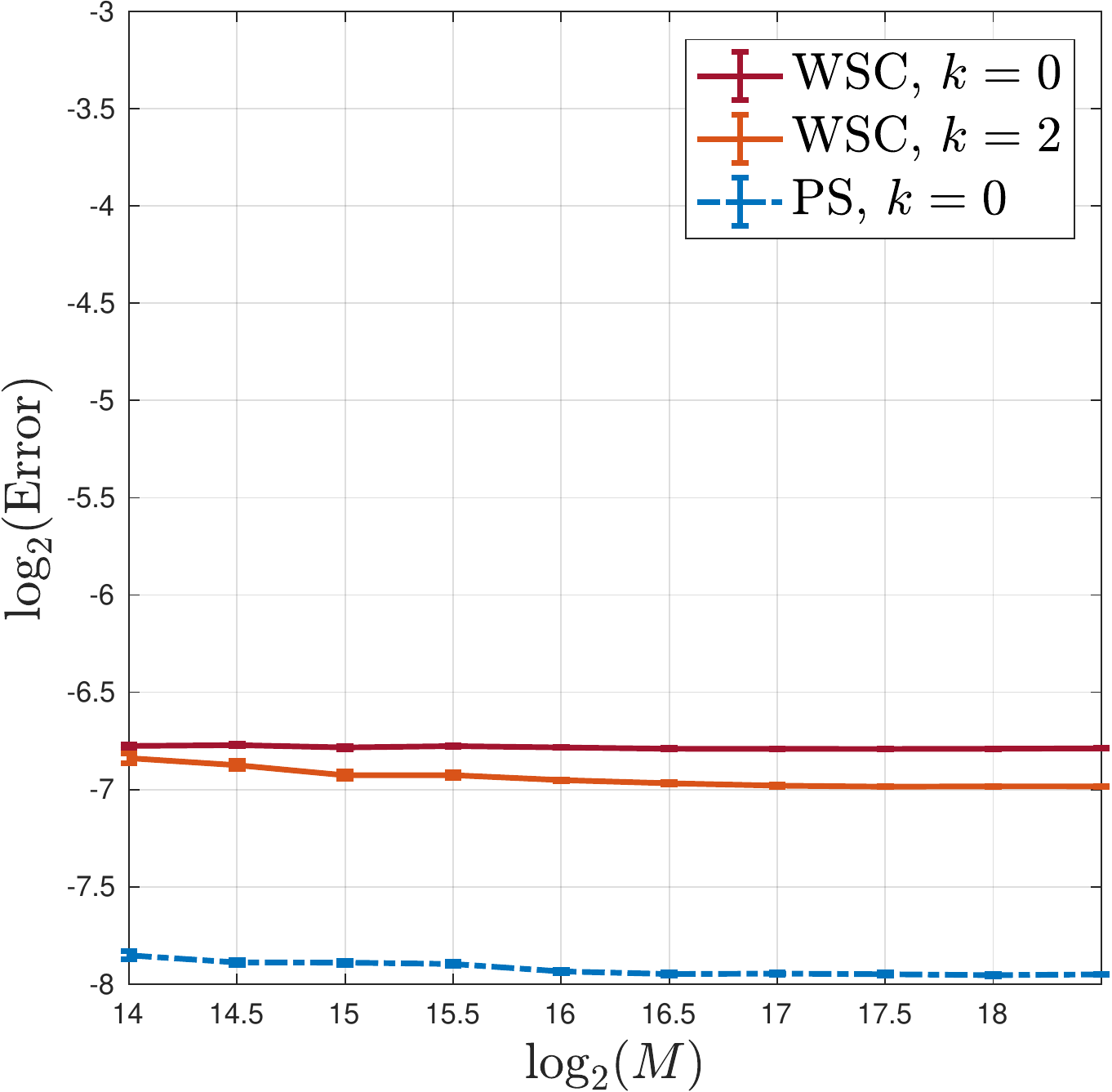}
		\caption{$f_1,\edit{\snr=9.0}, \eta=0.06$}
		\vspace*{.1cm}
		\label{fig:sim_E_7_low}
	\end{subfigure}
	\hfill
	\begin{subfigure}[b]{0.32\textwidth}
		\centering
		\includegraphics[width=.85\textwidth]{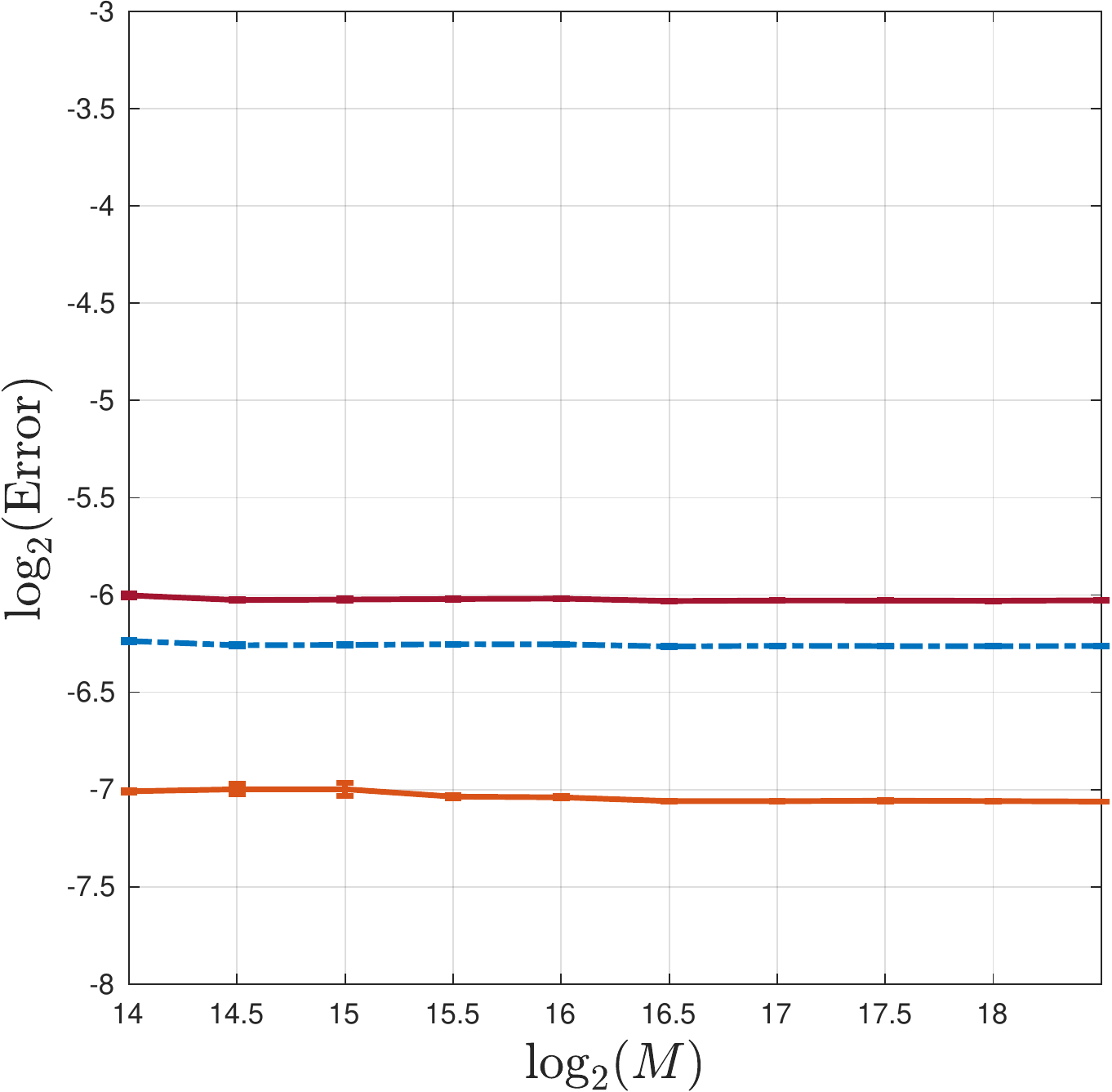}
		\caption{$f_2,\edit{\snr=9.0}, \eta=0.06$}
		\vspace*{.1cm}
		\label{fig:sim_E_7_med}
	\end{subfigure}
	\hfill
	\begin{subfigure}[b]{0.32\textwidth}
		\centering
		\includegraphics[width=.85\textwidth]{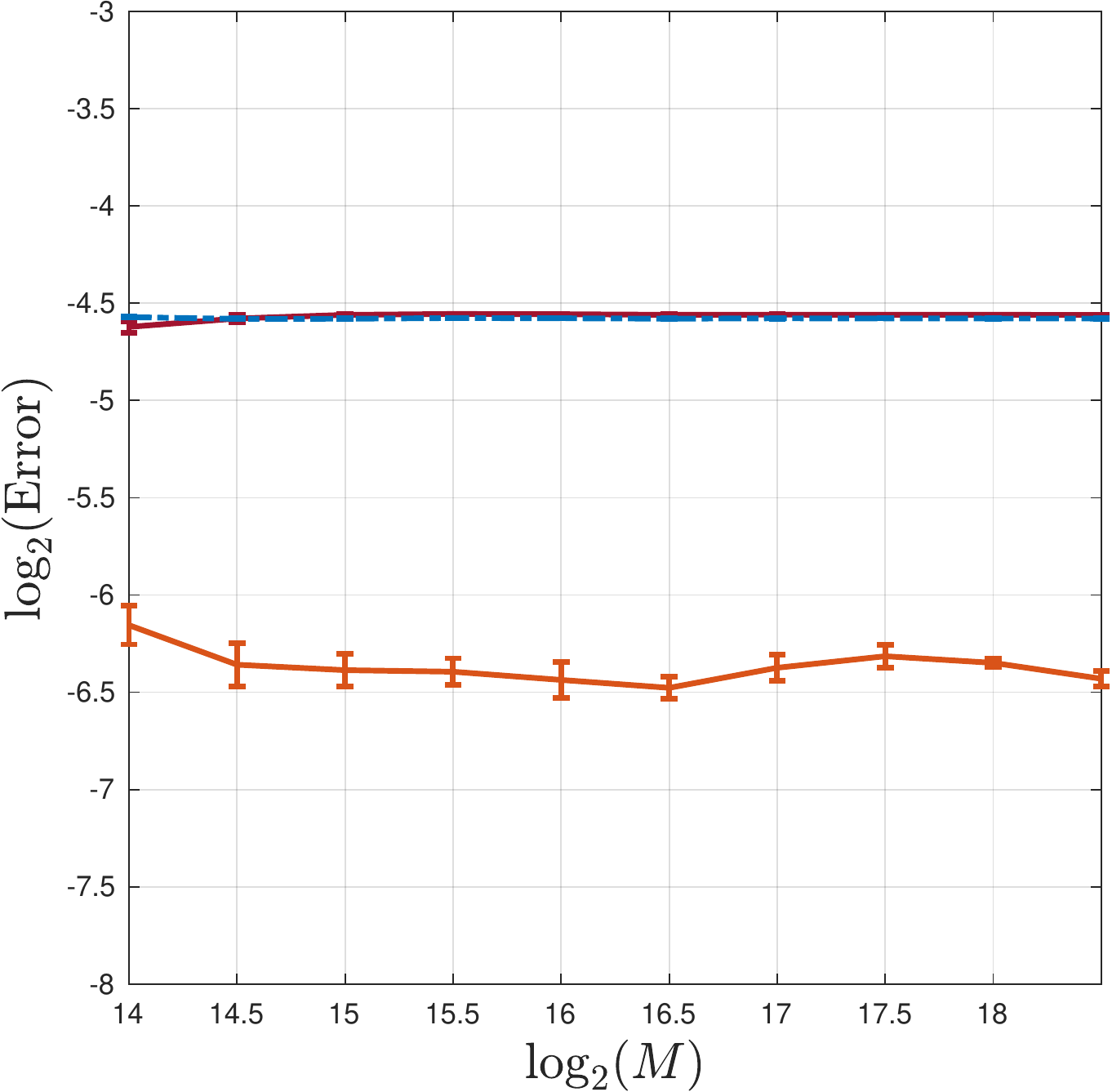}
		\caption{$f_3,\edit{\snr=9.0}, \eta=0.06$}
		\vspace*{.1cm}
		\label{fig:sim_E_7_high}
	\end{subfigure}
	\begin{subfigure}[b]{0.32\textwidth}
		\centering
		\includegraphics[width=.85\textwidth]{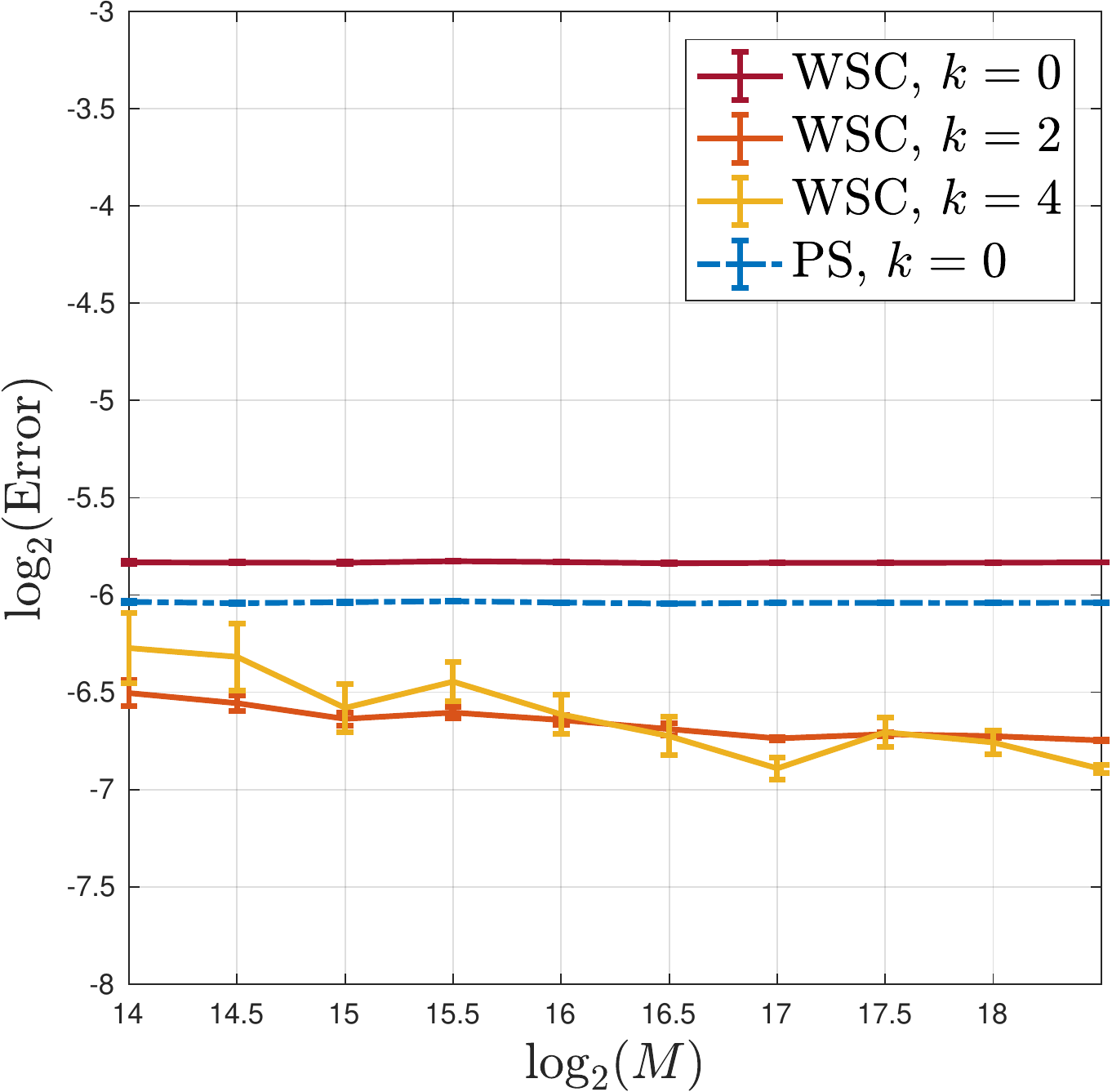}
		\caption{$f_1,\edit{\snr=9.0}, \eta=0.12$}
		\label{fig:sim_E_8_low}
	\end{subfigure}
	\hfill
	\begin{subfigure}[b]{0.32\textwidth}
		\centering
		\includegraphics[width=.85\textwidth]{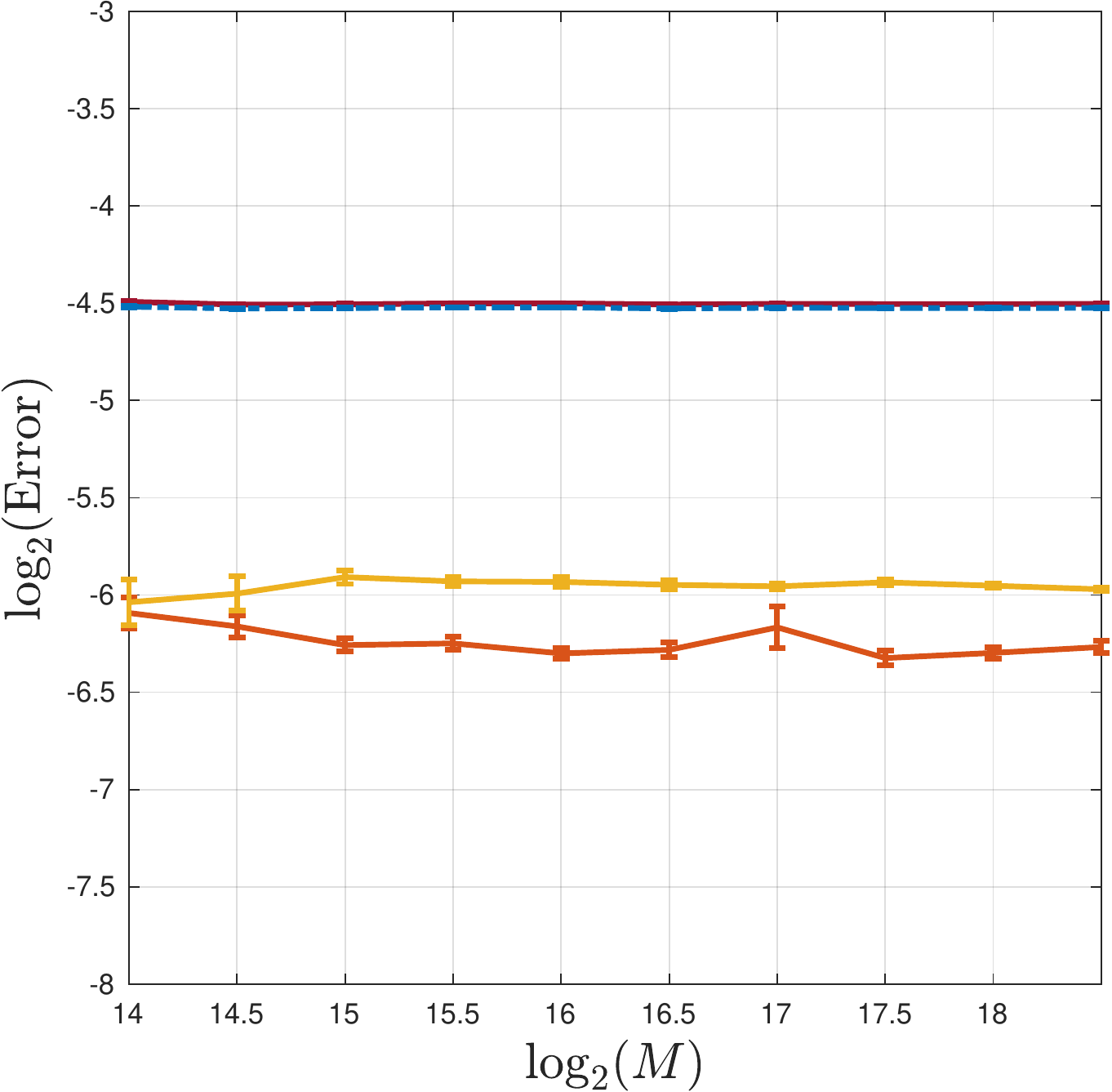}
		\caption{$f_2,\edit{\snr=9.0}, \eta=0.12$}
		\label{fig:sim_E_8_med}
	\end{subfigure}
	\hfill
	\begin{subfigure}[b]{0.32\textwidth}
		\centering
		\includegraphics[width=.85\textwidth]{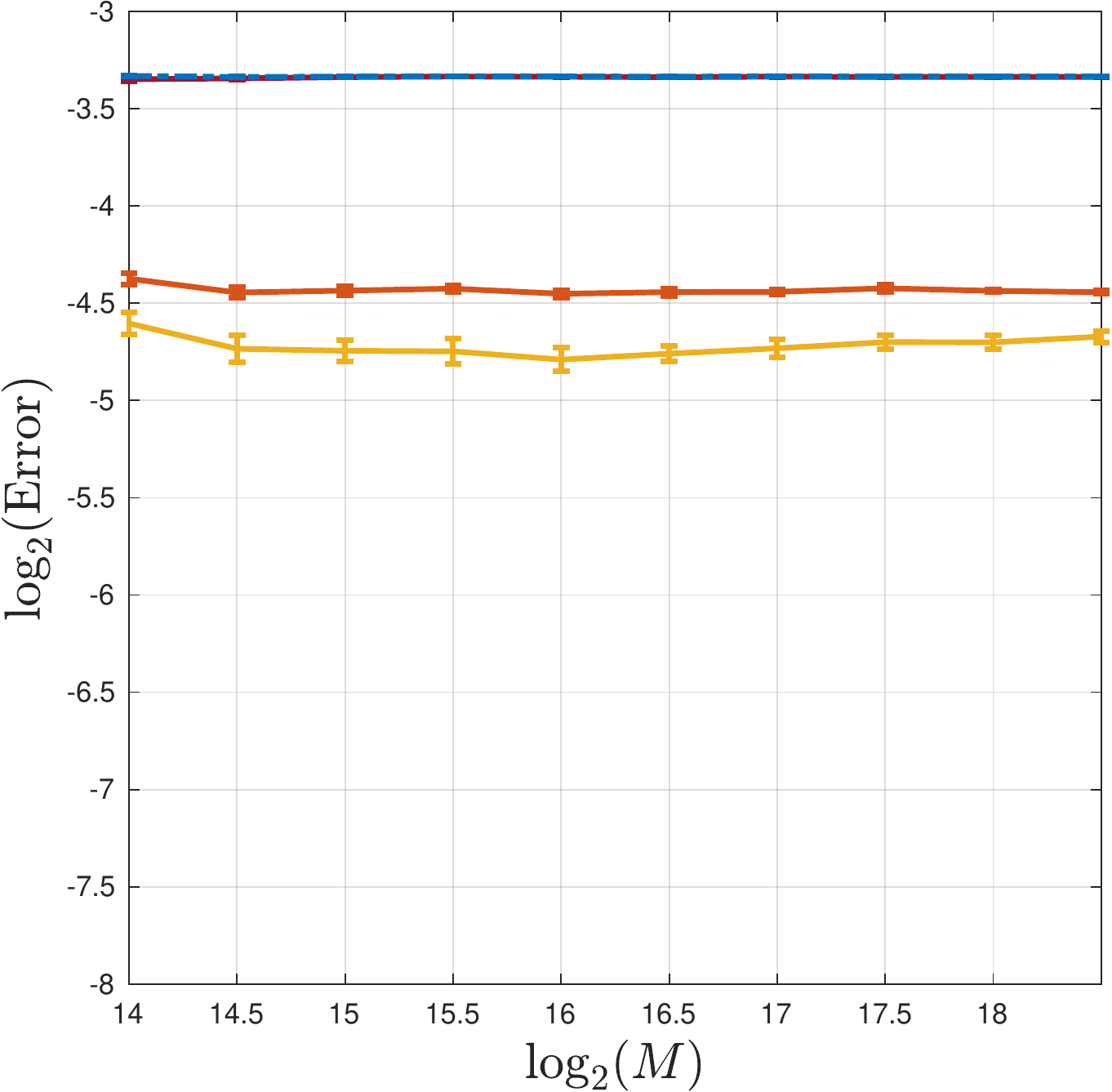}
		\caption{$f_3,\edit{\snr=9.0}, \eta=0.12$}
		\label{fig:sim_E_8_high}
	\end{subfigure}
	\begin{subfigure}[b]{0.32\textwidth}
		\centering
		\includegraphics[width=.85\textwidth]{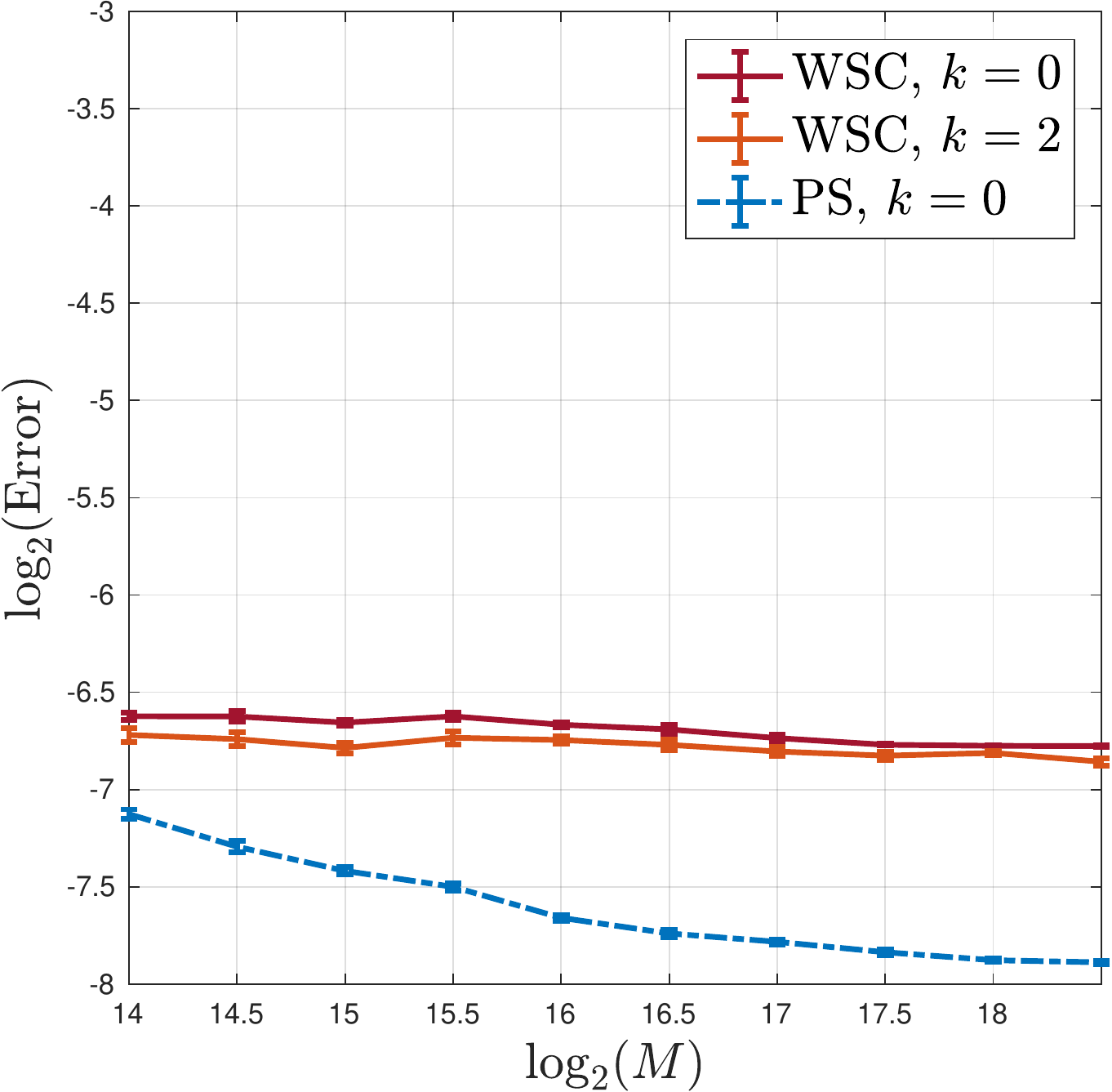}
		\caption{$f_1,\edit{\snr=2.2}, \eta=0.06$}
		\vspace*{.1cm}
		\label{fig:sim_E_3_low}
	\end{subfigure}
	\hfill
	\begin{subfigure}[b]{0.32\textwidth}
		\centering
		\includegraphics[width=.85\textwidth]{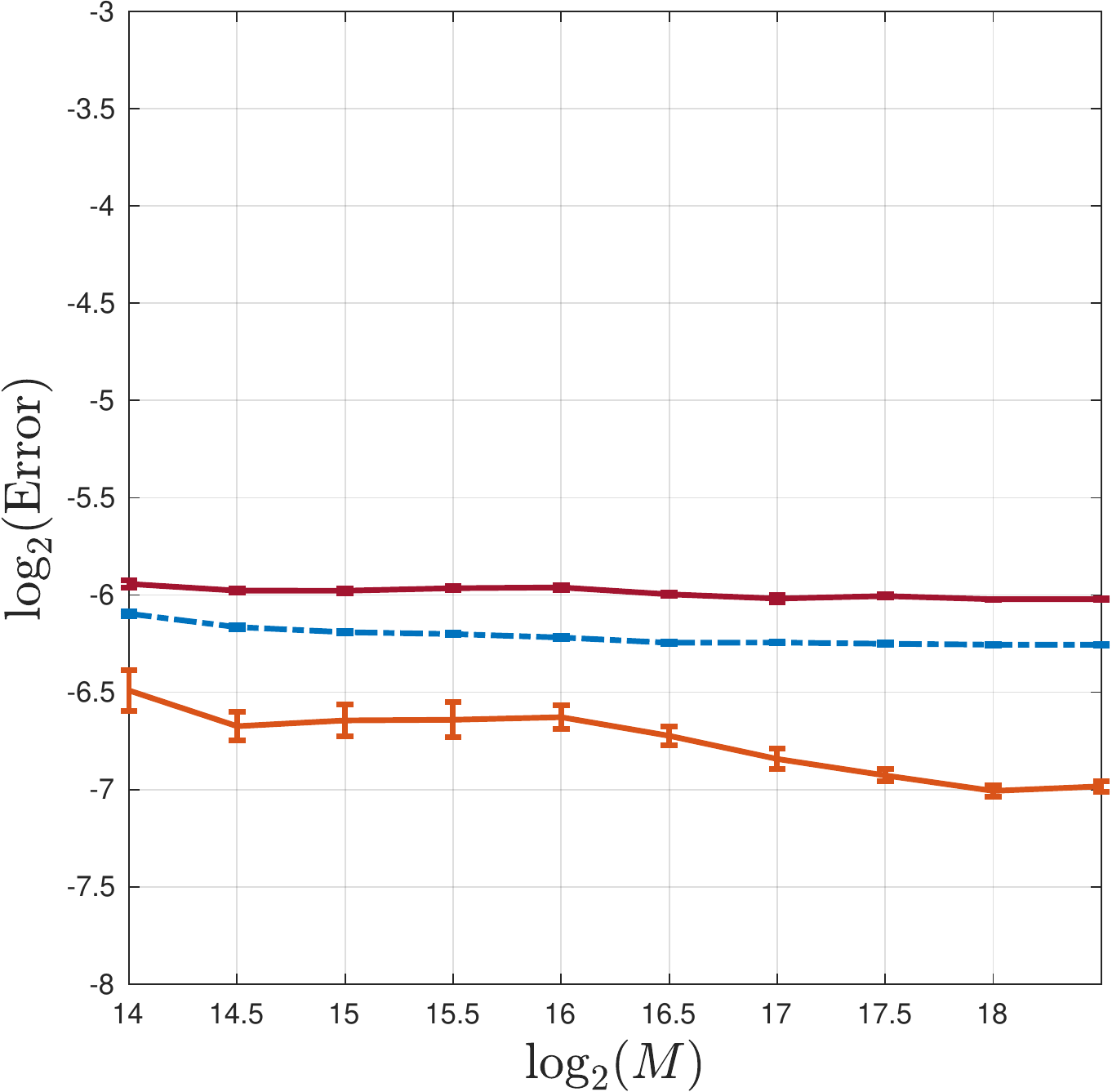}
		\caption{$f_2,\edit{\snr=2.2}, \eta=0.06$}
		\vspace*{.1cm}
		\label{fig:sim_E_3_med}
	\end{subfigure}
	\hfill
	\begin{subfigure}[b]{0.32\textwidth}
		\centering
		\includegraphics[width=.85\textwidth]{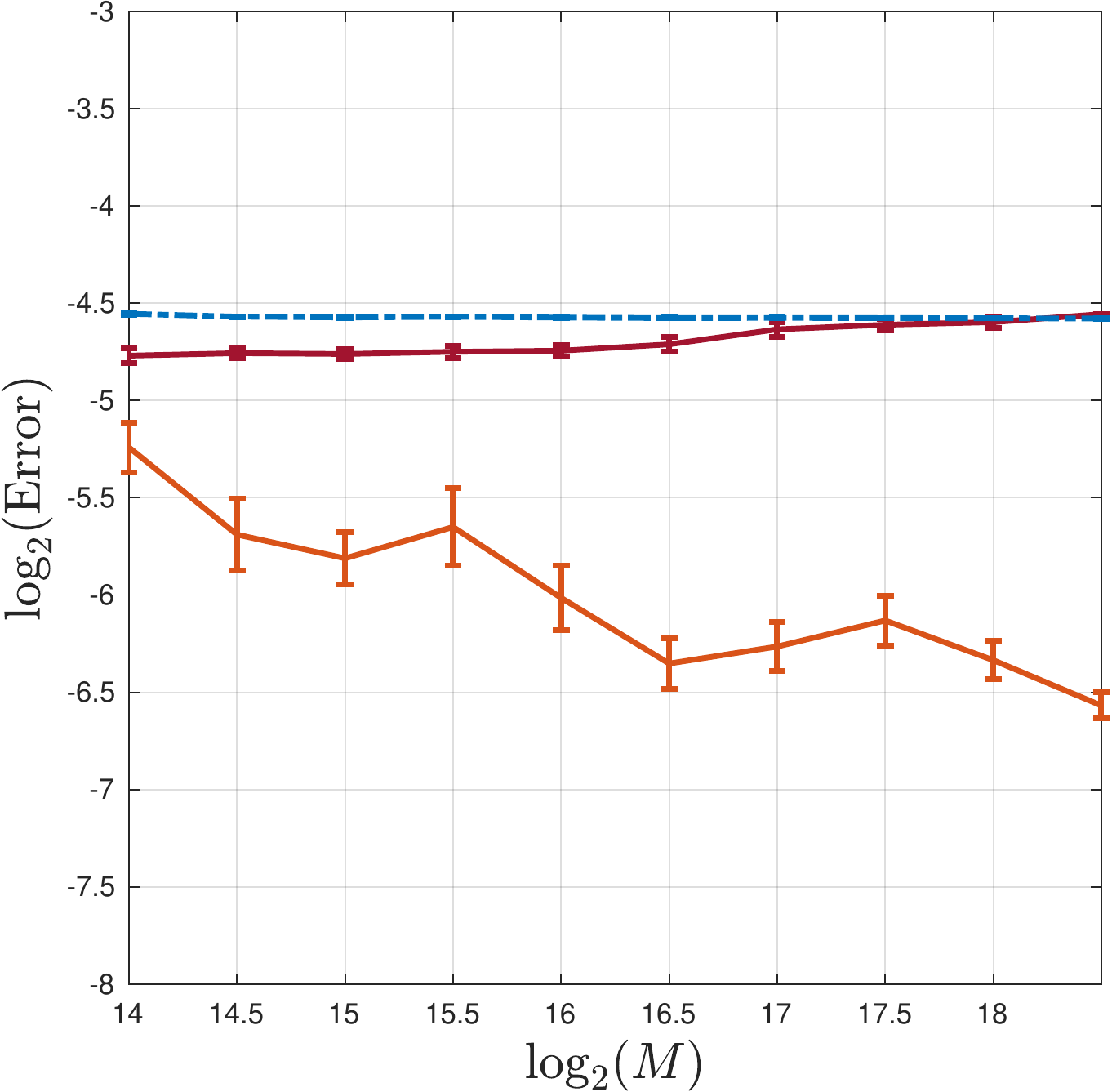}
		\caption{$f_3,\edit{\snr=2.2}, \eta=0.06$}
		\vspace*{.1cm}
		\label{fig:sim_E_3_high}
	\end{subfigure}
	\begin{subfigure}[b]{0.32\textwidth}
		\centering
		\includegraphics[width=.85\textwidth]{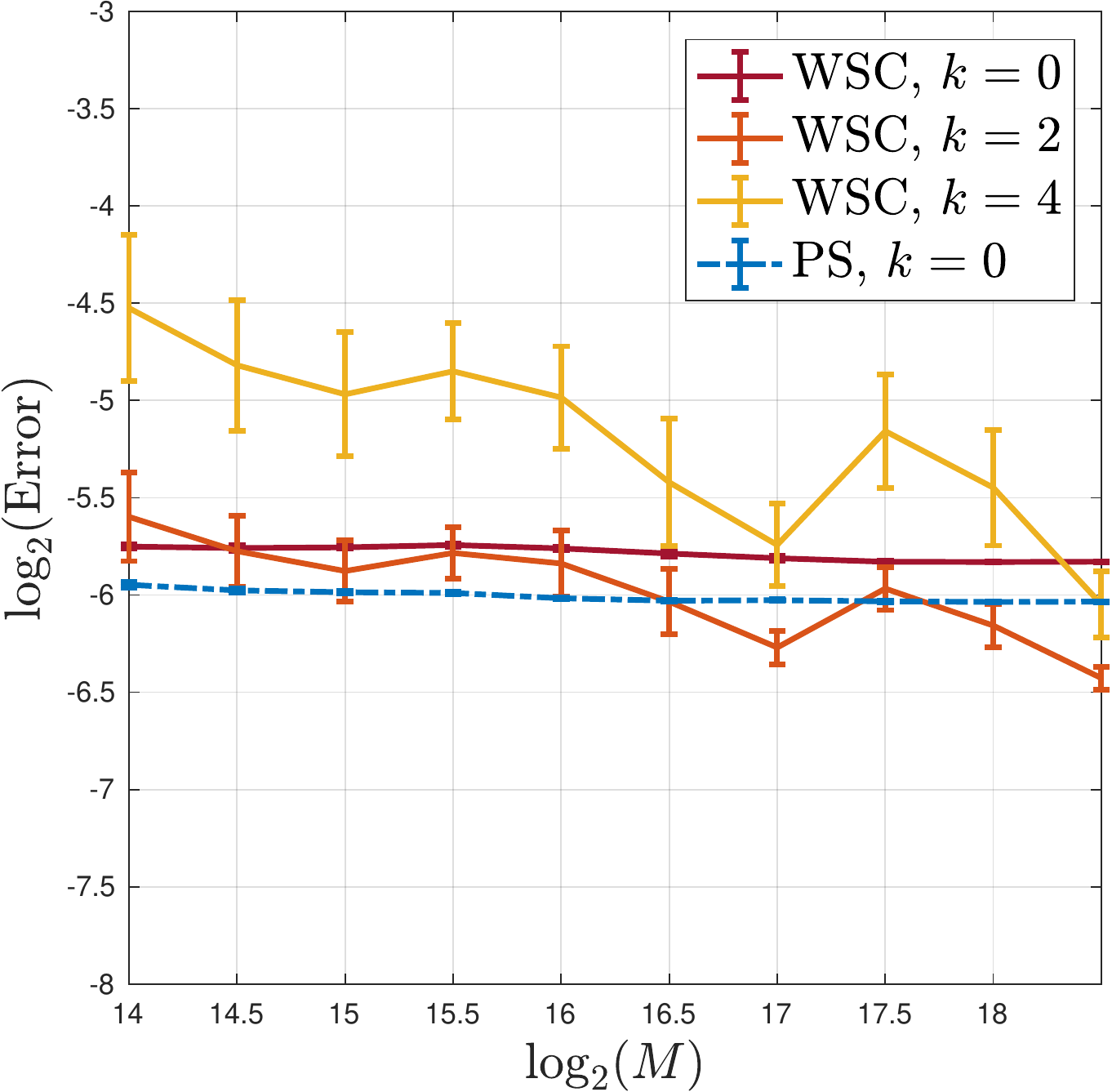}
		\caption{$f_1,\edit{\snr=2.2}, \eta=0.12$}
		\vspace*{.1cm}
		\label{fig:sim_E_4_low}
	\end{subfigure}
	\hfill
	\begin{subfigure}[b]{0.32\textwidth}
		\centering
		\includegraphics[width=.85\textwidth]{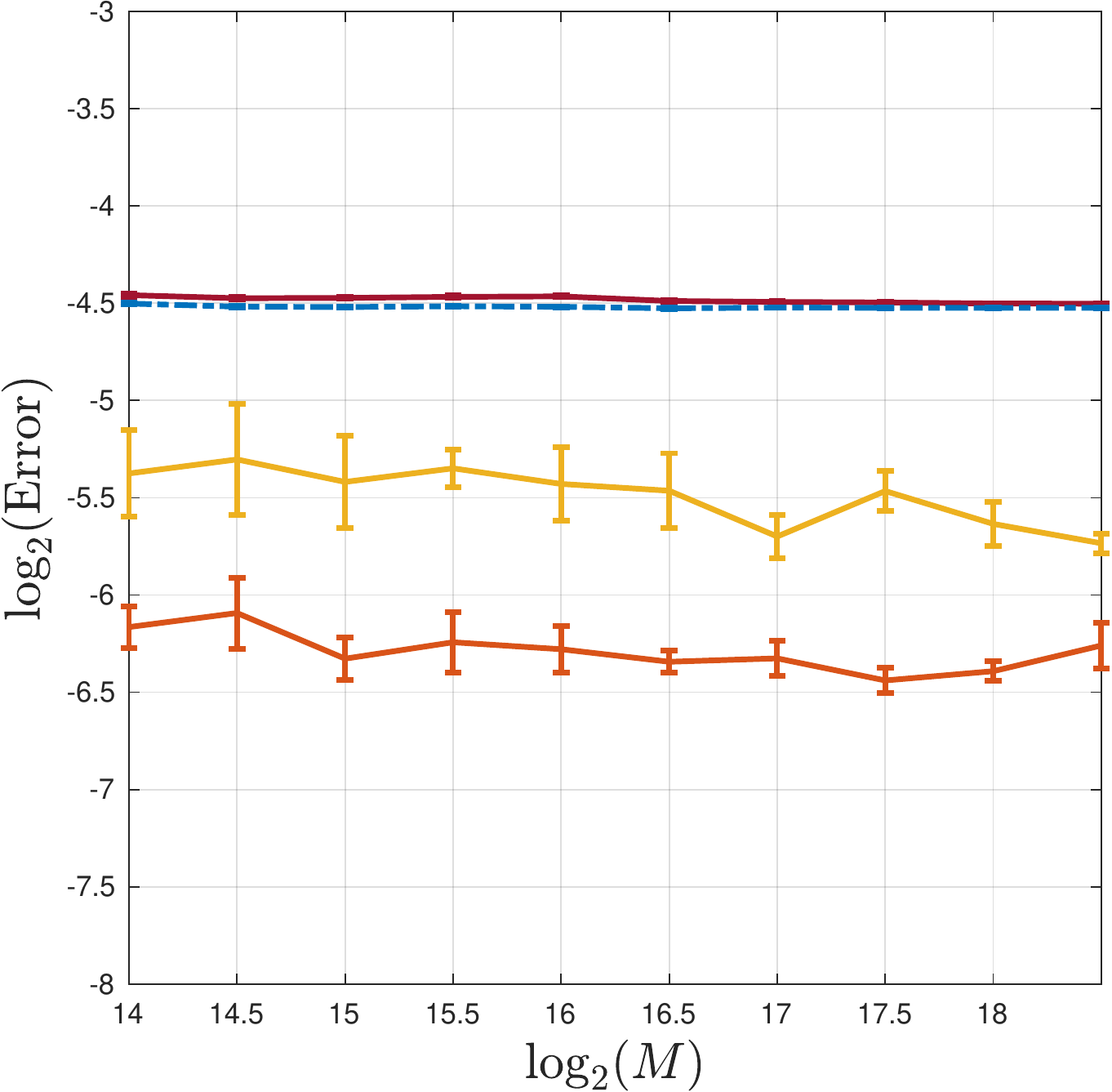}
		\caption{$f_2,\edit{\snr=2.2}, \eta=0.12$}
		\vspace*{.1cm}
		\label{fig:sim_E_4_med}
	\end{subfigure}
	\hfill
	\begin{subfigure}[b]{0.32\textwidth}
		\centering
		\includegraphics[width=.85\textwidth]{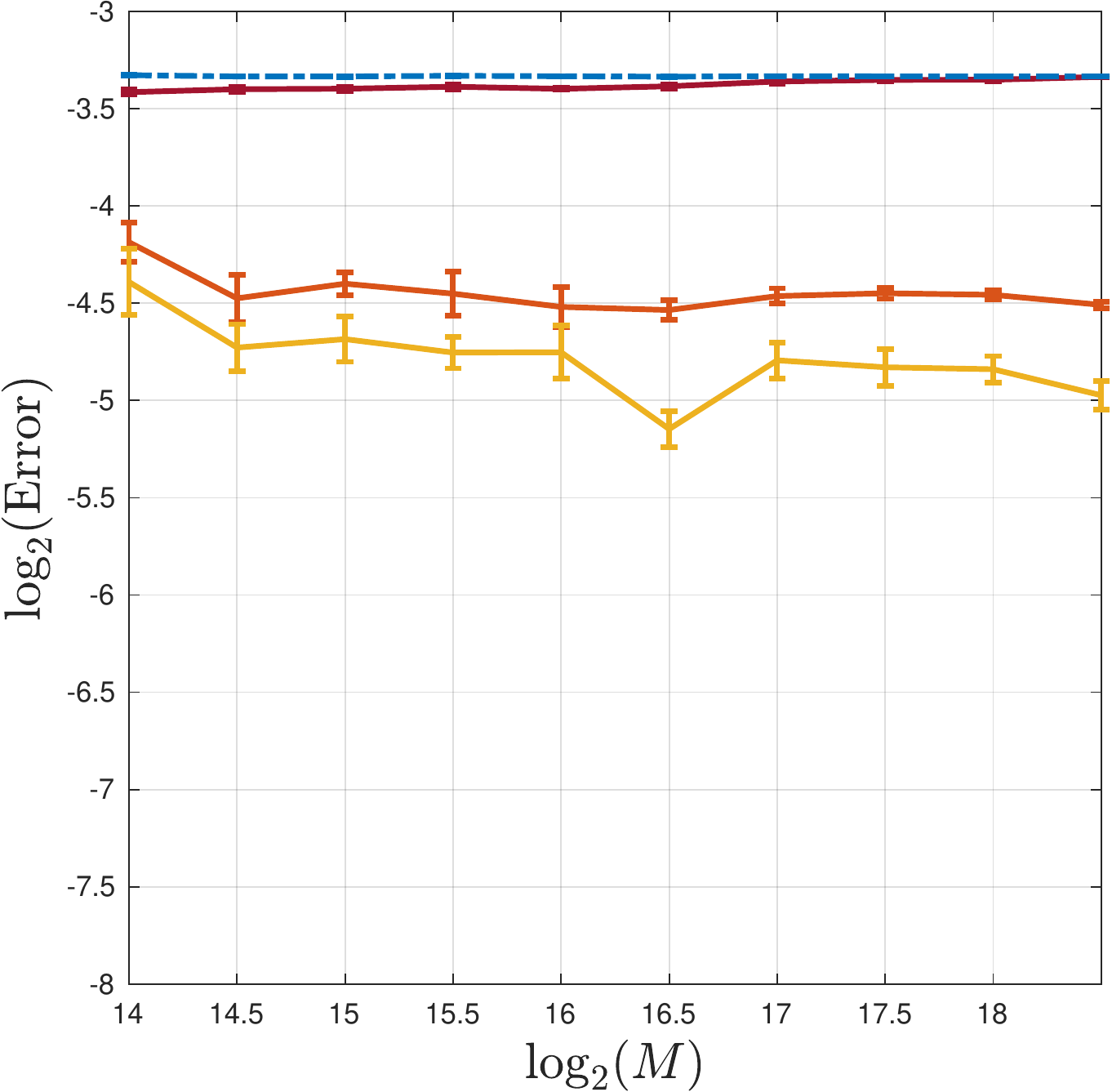}
		\caption{$f_3,\edit{\snr=2.2}, \eta=0.12$}
		\vspace*{.1cm}
		\label{fig:sim_E_4_high}
	\end{subfigure}
	\caption{$\Lb^2$ error with standard error bars for \edit{noisy dilation MRA} model ($t=0$, empirical moment estimation). First, second, third column shows results for low, medium, high frequency signals. All plots have the same axis limits.}
	\label{fig:GenMRAModelEmpirical}
\end{figure}

\begin{proof}[Proof of Proposition \ref{prop:emp_mom_est}]
	Since $y_j = L_{\tau_j}f+\varepsilon_j$, we have
	\begin{align*}
	\Ex[ \edit{\beta}_{m}(y_j)] &=\Ex \left[\int_0^{2^{\ell}\pi} \omega^m (\widehat{f}_{\tau_j}(\omega)+\widehat{\varepsilon}_j(\omega))\ d\omega \right]\\
	&= \Ex \left[\int_0^{2^{\ell}\pi} \omega^m \widehat{f}_{\tau_j}(\omega)\ d\omega\right] \\
	&= \Ex \left[\int_0^{2^{\ell}\pi} \omega^m \widehat{f}((1-\tau_j)\omega)\ d\omega\right] \\
	&= \Ex \left[\int_0^{2^{\ell}\pi(1-\tau_j)} \frac{\xi^m}{(1-\tau_j)^m} \widehat{f}(\xi)\ \frac{d\xi}{(1-\tau_j)}\right] \\
	&= \edit{\beta}_{m}(f) \Ex\left[(1-\tau_j)^{-(m+1)}\right].
	\end{align*} 
	We now compute the variance. We first establish that
	\begin{align*}
	g_m(\ell,\sigma)&=\Ex\left[\left(\int_0^{2^{\ell}\pi} \omega^m\widehat{\varepsilon}_{j}(\omega)\ d\omega\right)\left(\int_0^{2^{\ell}\pi} \omega^m\overline{\widehat{\varepsilon}_{j}(\omega)}\ d\omega\right) \right] \, .
	\end{align*}
	By Thm 4.5 of \cite{klebaner2012introduction}
	\begin{align*}
	\Ex\left[\widehat{\varepsilon}_{j}(\omega)\overline{\widehat{\varepsilon}_{j}(\xi)}\right] &= \Ex \left[ \left(\int_{-1/2}^{1/2} e^{-i\omega t}\ dB_t\right)\left(\int_{-1/2}^{1/2} e^{i\xi t}\ dB_t\right) \right] \\
	&= \sigma^2 \int_{-1/2}^{1/2} e^{i(\xi-\omega)t} \ dt \\
	&= \frac{2\sigma^2\sin(\frac{1}{2}(\xi-\omega))}{(\xi-\omega)}
	\end{align*}
	so that	
	\begin{align*}
	\Ex\left[\left(\int_0^{2^{\ell}\pi} \widehat{\varepsilon}_{j}(\omega)\ d\omega\right)\left(\int_0^{2^{\ell}\pi} \overline{\widehat{\varepsilon}_{j}(\omega)}\ d\omega\right) \right]
	&= \int_0^{2^{\ell}\pi} \int_0^{2^{\ell}\pi} \ \omega^m \xi^m \Ex\left[\widehat{\varepsilon}_{j}(\omega)\overline{\widehat{\varepsilon}_{j}(\xi)}\right]\ d\omega\ d\xi \\
	&= \int_0^{2^{\ell}\pi} \int_0^{2^{\ell}\pi} \omega^m \xi^m \frac{2\sigma^2\sin(\frac{1}{2}(\xi-\omega))}{(\xi-\omega)}\ d\omega\ d\xi \\
	&= g_m(\ell,\sigma)\, .
	\end{align*}
	We thus obtain:
	\begin{align*}
	\left[|\edit{\beta}_{m}(y_j)|^2\right] 
	&= \Ex \left[  \left(\int_0^{2^{\ell}\pi} \omega^m (\widehat{f}_{\tau_j}(\omega)+\widehat{\varepsilon}_j(\omega))\ d\omega\right)\left(\int_0^{2^{\ell}\pi} \omega^m (\overline{\widehat{f}_{\tau_j}(\omega)}+\overline{\widehat{\varepsilon}_j(\omega)})\ d\omega\right) \right] \\
	&=\Ex\left[ \left(\int_0^{2^{\ell}\pi} \omega^m\widehat{f}((1-\tau_j)\omega)\ d\omega\right)\left(\int_0^{2^{\ell}\pi} \omega^m\overline{\widehat{f}((1-\tau_j)\omega)}\ d\omega\right) \right. \\
	&\qquad \left. + \left(\int_0^{2^{\ell}\pi} \omega^m\widehat{\varepsilon}_{j}(\omega)\ d\omega\right)\left(\int_0^{2^{\ell}\pi} \omega^m\overline{\widehat{\varepsilon}_{j}(\omega)}\ d\omega\right) \right] \\
	&= \Ex\left[(1-\tau_j)^{-2(m+1)}\edit{\beta}_{m}(f)\overline{\edit{\beta}_{m}(f)} \right] +g_m(\ell,\sigma)\\
	&= |\edit{\beta}_{m}(f)|^2\,  \Ex\left[(1-\tau_j)^{-2(m+1)} \right] + g_m(\ell,\sigma) \, .
	\end{align*}
	Thus:
	\begin{align*}
	\Var[\edit{\beta}_{m}(y_j)] - g_m(\ell,\sigma)
	&= \Ex\left[|\edit{\beta}_{m}(y_j)|^2\right] - g_m(\ell,\sigma) - |\Ex\left[\edit{\beta}_{m}(y_j)\right]|^2 \\
	&= |\edit{\beta}_{m}(f)|^2  \Ex\left[(1-\tau_j)^{-2(m+1)} \right] -|\edit{\beta}_{m}(f)|^2 \left(\Ex\left[(1-\tau_j)^{-(m+1)}\right]\right)^2\, .
	\end{align*}
	Dividing by $|\Ex\left[\edit{\beta}_{m}(y_j)\right]|^2$ gives:
	\begin{align*}
	CV_m &= \frac{ \Ex[(1-\tau_j)^{-2(m+1)} ] }{(\Ex[(1-\tau_j)^{-(m+1)}])^2} - 1\, ,
	\end{align*}
	and the remainder of the proof is identical to the proof of Proposition \ref{prop:emp_mom_est_dilMRA}.
\end{proof}

\section{Additional simulations for \edit{noisy dilation MRA}}
\label{sec:additional_sim_results}

We investiagte the $\Lb^2$ error of estimating the power spectrum using PS ($k=0$) and WSC ($k=0,2,4$) for three additional high frequency functions:
\begin{align*}
f_4(x) &= \edit{1.175}\cos(32x)\cdot \ind(x\in[-0.2,0.2])  \\ 
f_5(x) &=  \edit{0.299}\exp^{-0.04x^2}\cos(30x+1.5x^2) \\
f_6(x) &=  (\edit{2.304}/\pi)\cos(35x)\text{sinc}(3x)  \, .
\end{align*}
\edit{The multiplicative constants were chosen so that the $\Lb^2$ norms of $f_4, f_5, f_6$ are comparable with the $\Lb^2$ norms of the Gabor signals $f_1, f_2, f_3$ defined in Section \ref{sec:DilationNoiseSims}.} The signal $f_4$ is not continuous and has compact support, with a slowly decaying, oscillating Fourier transform given by $\widehat{f}_4(\omega)/\edit{0.47} =$ $\text{sinc}\left(0.2(\omega-32)\right)+\text{sinc}\left(0.2(-\omega-32)\right)$. The signal $f_5$ is a linear chirp with a constantly varying instantaneous frequency. The signal $f_6$ is slowly decaying in space, with a discontinuous Fourier transform of compact support given by $ \widehat{f}_6(\omega)/\edit{0.384} = \ind(\omega \in [-38, -32])+\ind(\omega \in [32, 38])$.  

Implementation details were as described in Section \ref{sec:num_imp}, and simulations were run with oracle moment estimation on the full model (parameter values as described in Section \ref{sec:GenMRANoiseSim}). \edit{Figure \ref{fig:GenMRAMoreExsOracle} shows the $\Lb^2$ error.}
As for the high frequency Gabor in Section \ref{sec:GenMRANoiseSim}, WSC ($k=2$) and WSC ($k=4$) significantly outperformed the zero order estimators. In addition for large dilations, the WSC ($k=4$) outperformed WSC ($k=2$) on $f_4$ and $f_6$. 

\begin{figure}
	\centering
	\begin{subfigure}[b]{0.32\textwidth}
		\centering
		\includegraphics[width=.85\textwidth]{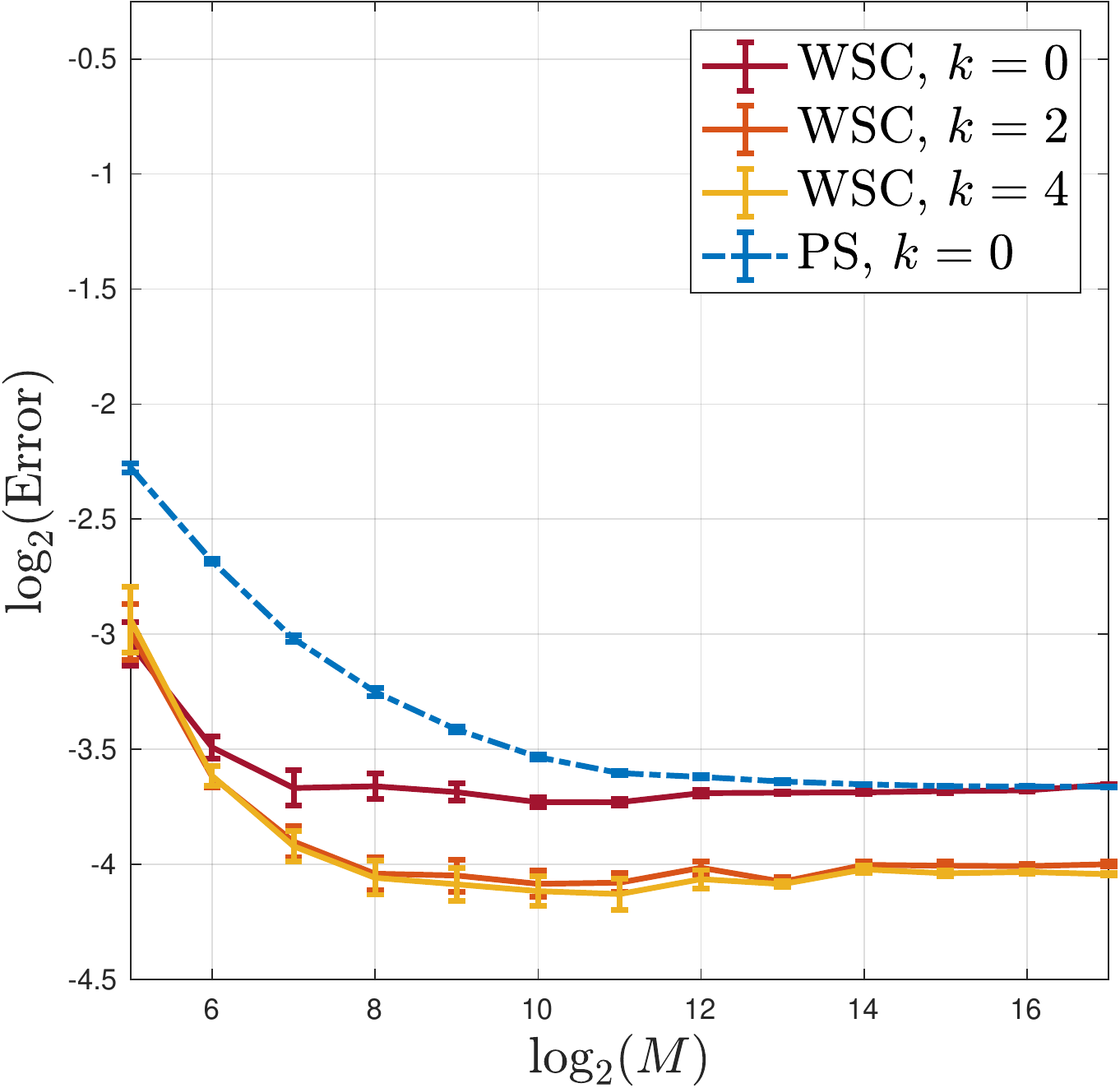}
		\caption{$f_4,\edit{\snr=2.2}, \eta=0.06$}
		\vspace*{.1cm}
		\label{fig:sim_O_3_sinc}
	\end{subfigure}
	\hfill
	\begin{subfigure}[b]{0.32\textwidth}
		\centering
		\includegraphics[width=.85\textwidth]{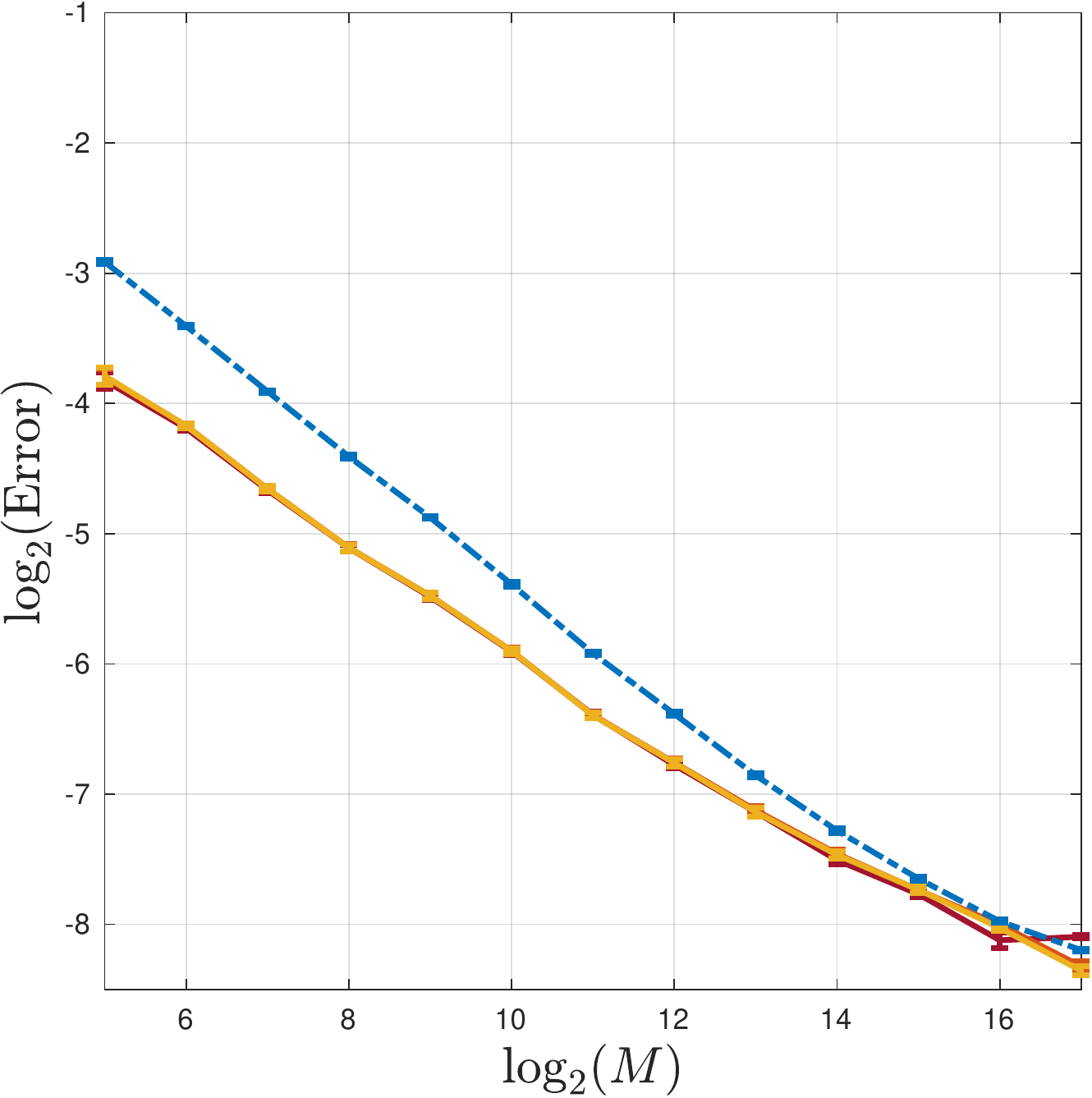}
		\caption{$f_5,\edit{\snr=2.2}, \eta=0.06$}
		\vspace*{.1cm}
		\label{fig:sim_O_3_chirp}
	\end{subfigure}
	\hfill
	\begin{subfigure}[b]{0.32\textwidth}
		\centering
		\includegraphics[width=.85\textwidth]{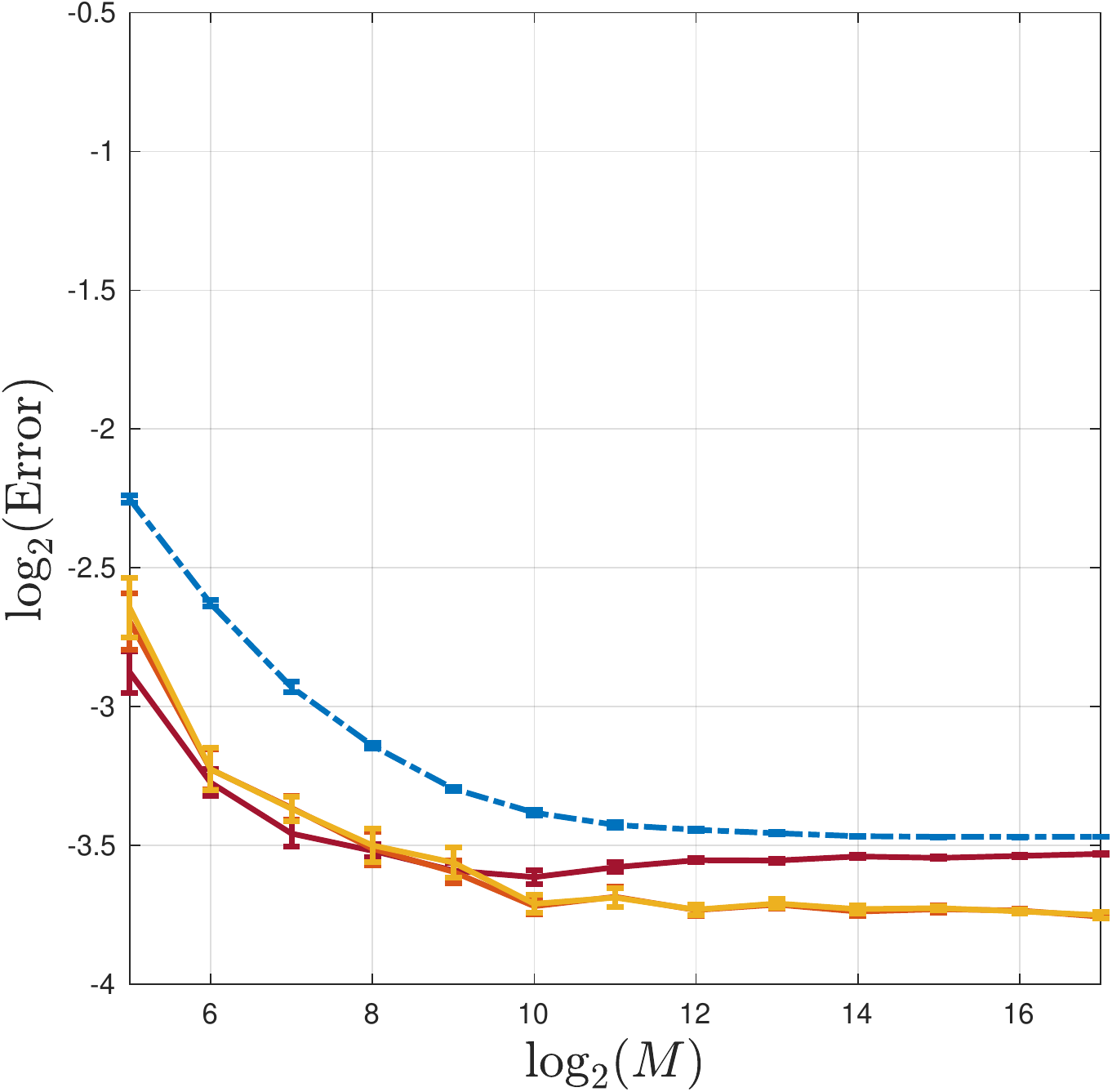}
		\caption{$f_6,\edit{\snr=2.2}, \eta=0.06$}
		\vspace*{.1cm}
		\label{fig:sim_O_3_step}
	\end{subfigure}
	\begin{subfigure}[b]{0.32\textwidth}
		\centering
		\includegraphics[width=.85\textwidth]{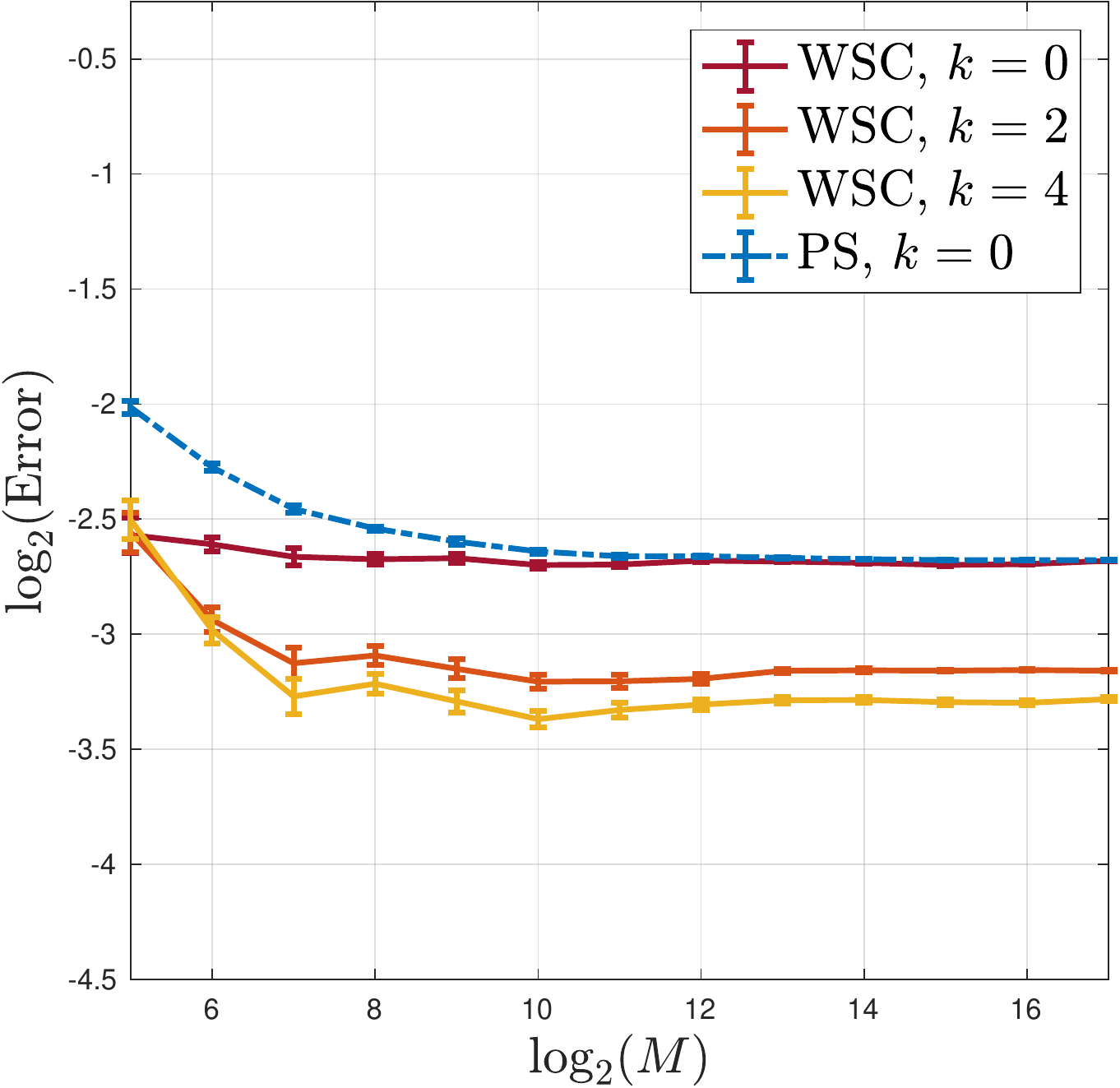}
		\caption{$f_4,\edit{\snr=2.2}, \eta=0.12$}
		\vspace*{.1cm}
		\label{fig:sim_O_4_sinc}
	\end{subfigure}
	\hfill
	\begin{subfigure}[b]{0.32\textwidth}
		\centering
		\includegraphics[width=.85\textwidth]{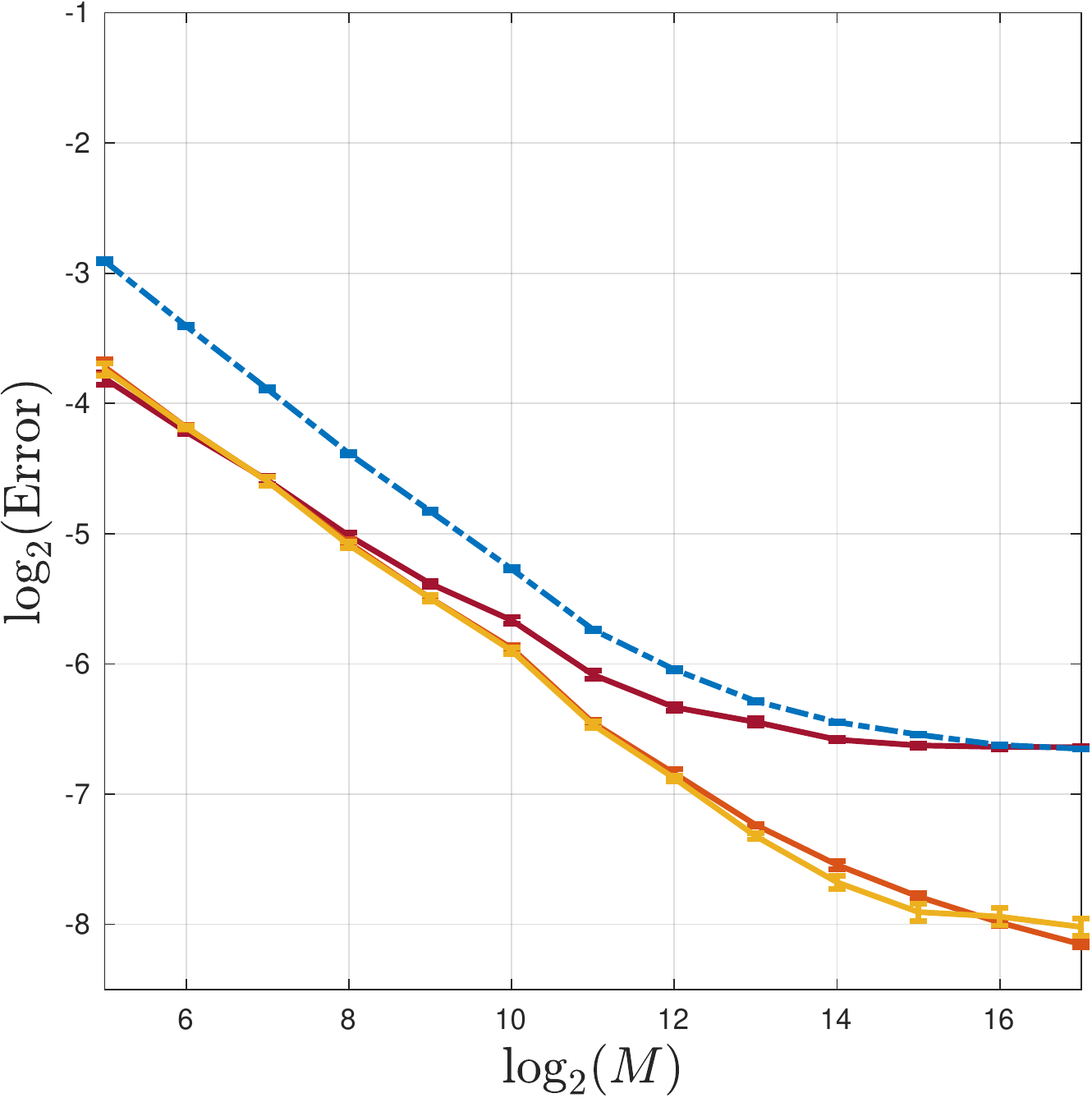}
		\caption{$f_5,\edit{\snr=2.2}, \eta=0.12$}
		\vspace*{.1cm}
		\label{fig:sim_O_4_chirp}
	\end{subfigure}
	\hfill
	\begin{subfigure}[b]{0.32\textwidth}
		\centering
		\includegraphics[width=.85\textwidth]{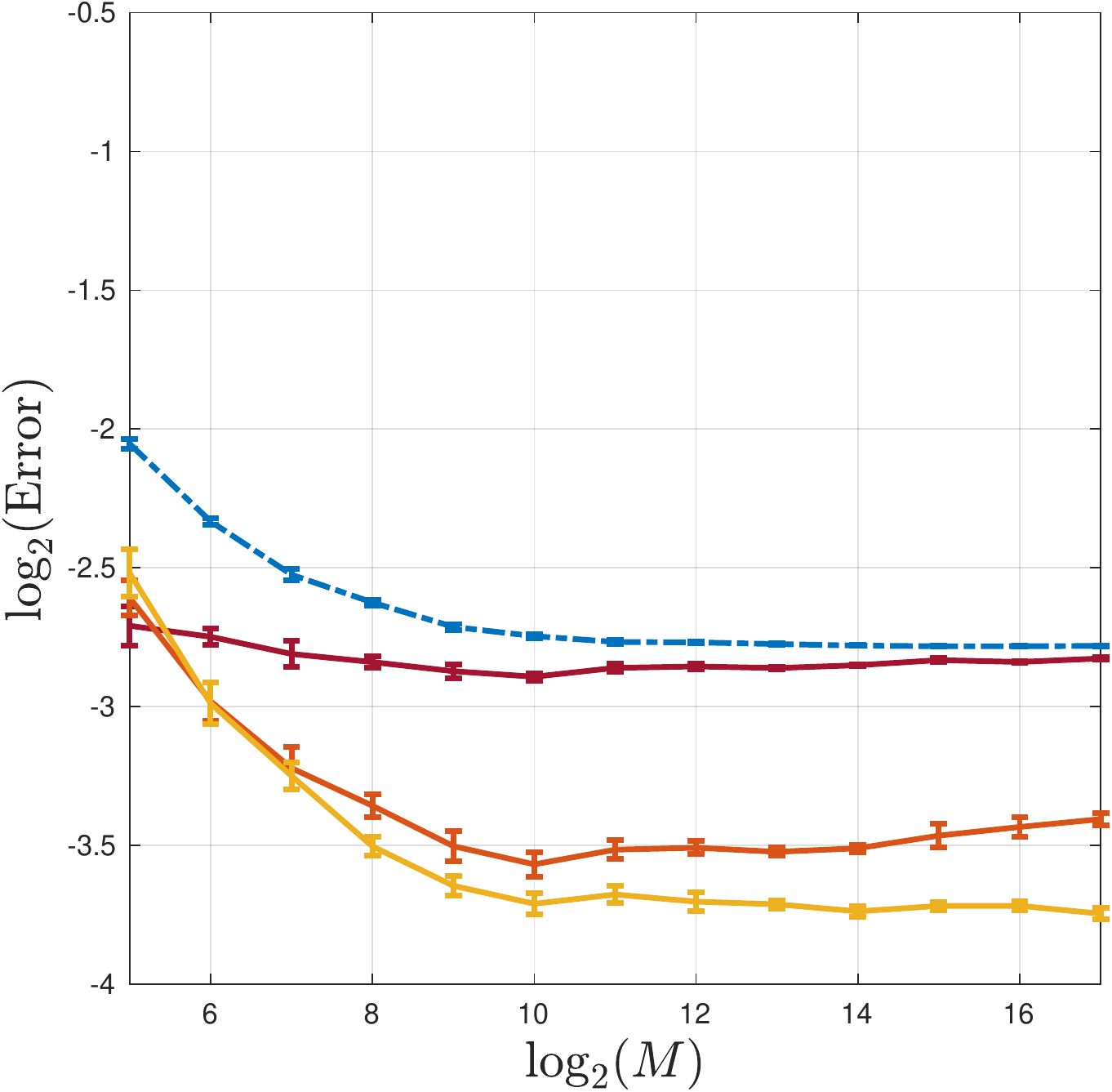}
		\caption{$f_6,\edit{\snr=2.2}, \eta=0.12$}
		\vspace*{.1cm}
		\label{fig:sim_O_4_step}
	\end{subfigure}
	\begin{subfigure}[b]{0.32\textwidth}
		\centering
		\includegraphics[width=.85\textwidth]{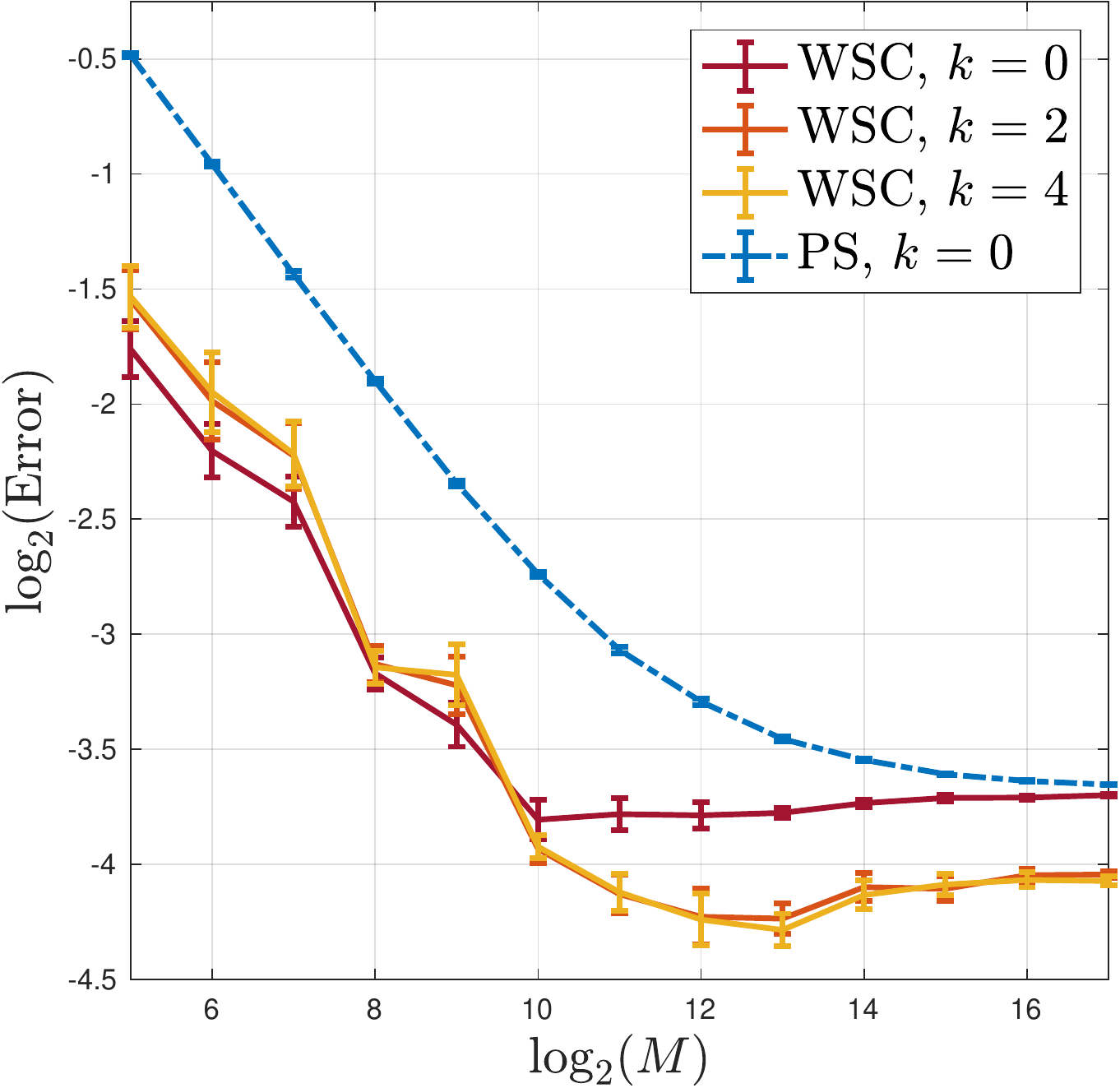}
		\caption{$f_4,\edit{\snr=0.56}, \eta=0.06$}
		\vspace*{.1cm}
		\label{fig:sim_O_5_sinc}
	\end{subfigure}
	\hfill
	\begin{subfigure}[b]{0.32\textwidth}
		\centering
		\includegraphics[width=.85\textwidth]{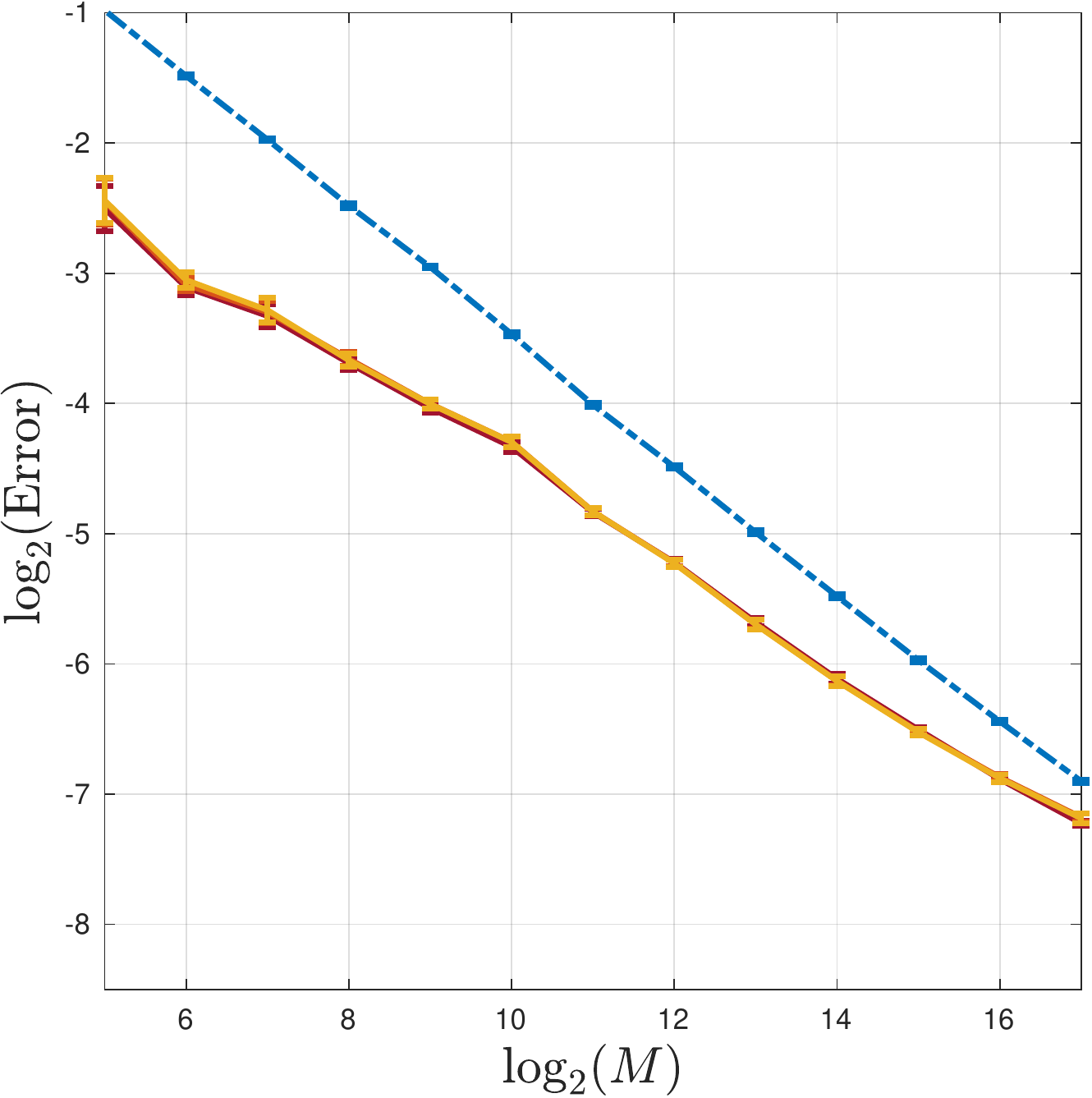}
		\caption{$f_5,\edit{\snr=0.56}, \eta=0.06$}
		\vspace*{.1cm}
		\label{fig:sim_O_5_chirp}
	\end{subfigure}
	\hfill
	\begin{subfigure}[b]{0.32\textwidth}
		\centering
		\includegraphics[width=.85\textwidth]{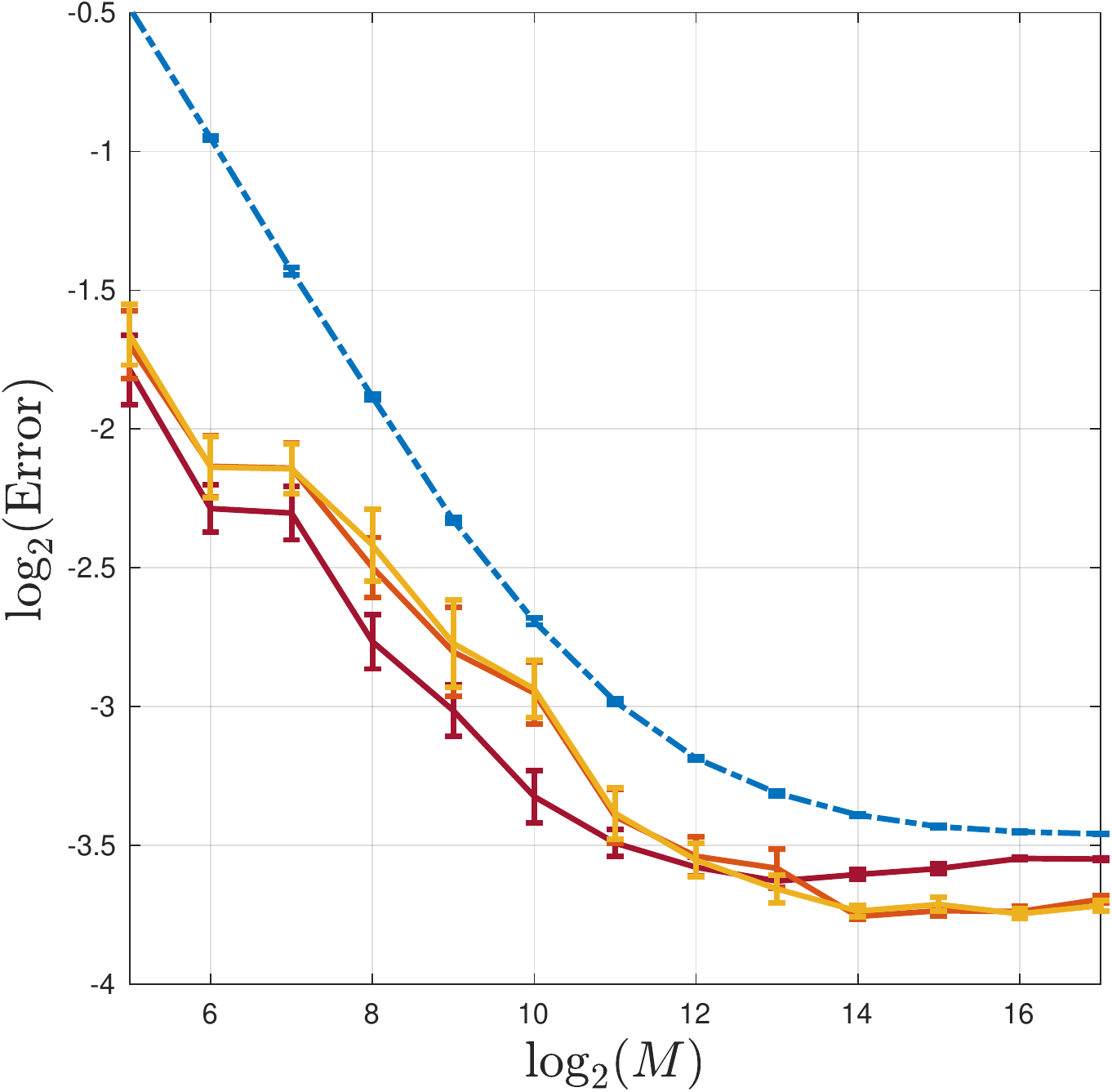}
		\caption{$f_6,\edit{\snr=0.56}, \eta=0.06$}
		\vspace*{.1cm}
		\label{fig:sim_O_5_step}
	\end{subfigure}
	\begin{subfigure}[b]{0.32\textwidth}
		\centering
		\includegraphics[width=.85\textwidth]{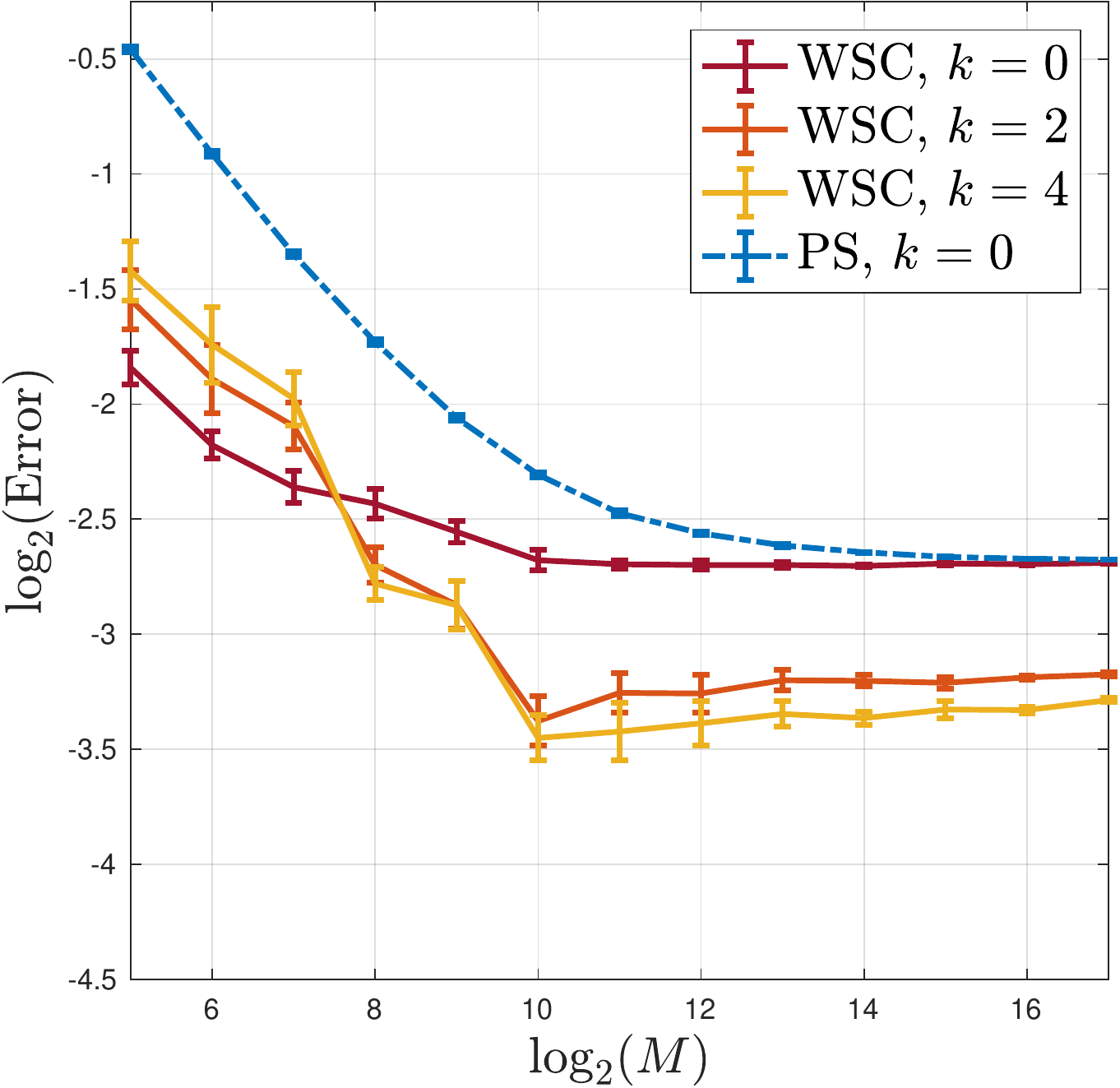}
		\caption{$f_4,\edit{\snr=0.56}, \eta=0.12$}
		\label{fig:sim_O_6_sinc}
	\end{subfigure}
	\hfill
	\begin{subfigure}[b]{0.32\textwidth}
		\centering
		\includegraphics[width=.85\textwidth]{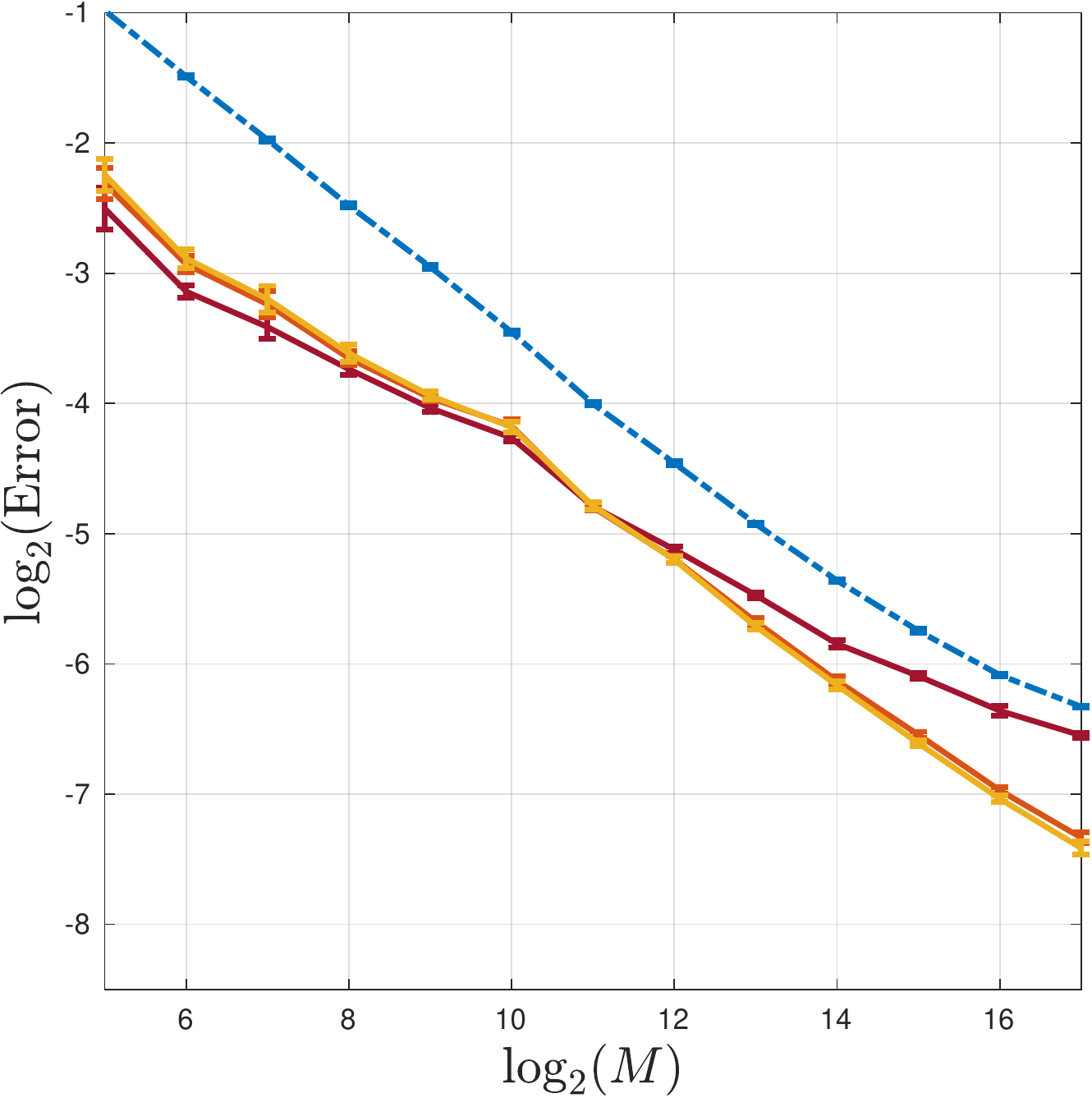}
		\caption{$f_5,\edit{\snr=0.56}, \eta=0.12$}
		\label{fig:sim_O_6_chirp}
	\end{subfigure}
	\hfill
	\begin{subfigure}[b]{0.32\textwidth}
		\centering
		\includegraphics[width=.85\textwidth]{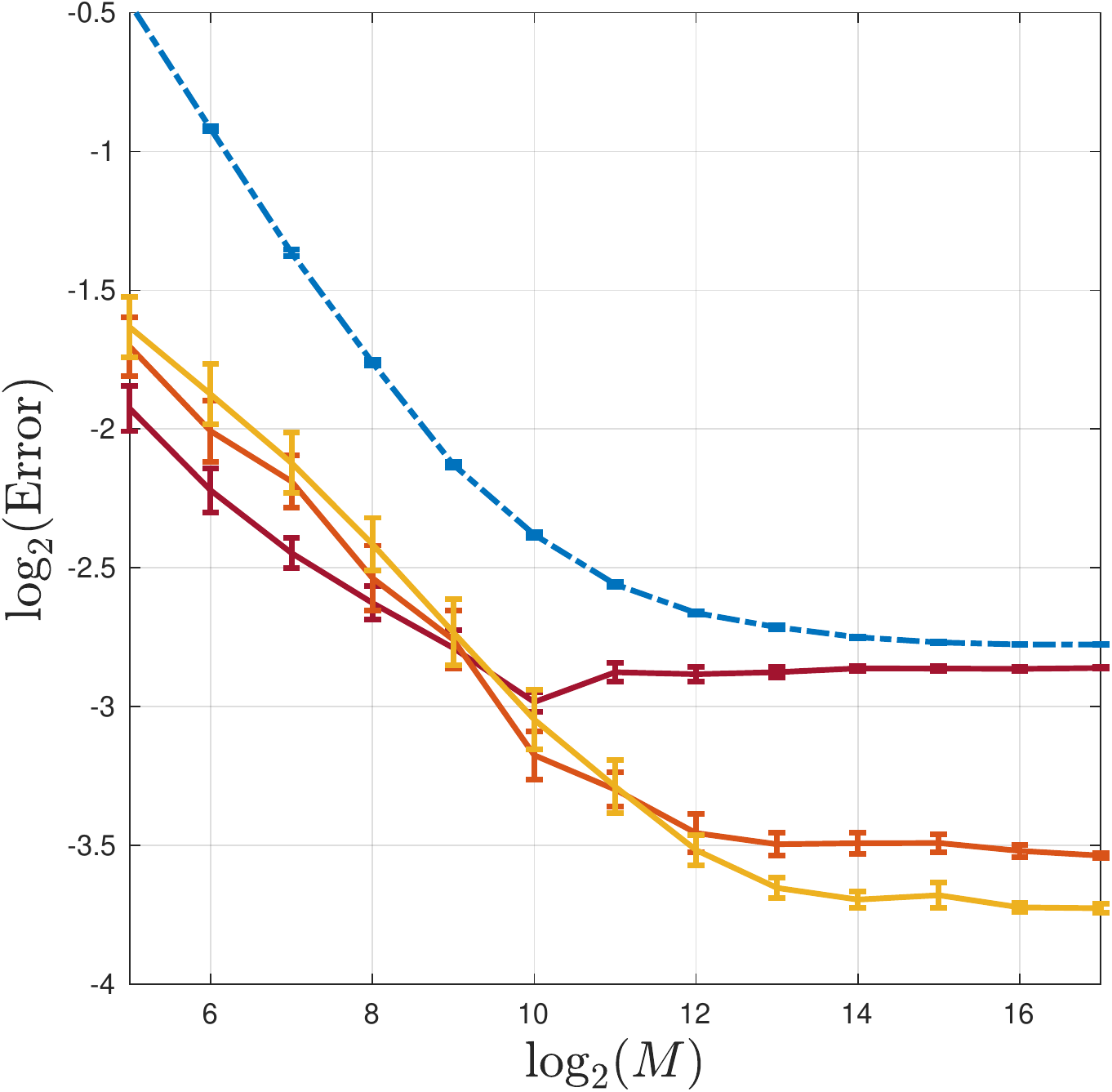}
		\caption{$f_6,\edit{\snr=0.56}, \eta=0.12$}
		\label{fig:sim_O_6_step}
	\end{subfigure}
	\caption{$\Lb^2$ error with standard error bars for \edit{noisy dilation MRA} model (oracle moment estimation). First, second, third column shows results for $f_4$, $f_5$, $f_6$.  All plots for the same signal have the same axis limits.}
	\label{fig:GenMRAMoreExsOracle}
\end{figure}

\edit{\section{Expectation maximization algorithm for noisy dilation MRA}}
\label{app:EMalg}

\edit{In this appendix we discuss how the expectation-maximization (EM) algorithm proposed in \cite{abbe2018multireference} can be extended to solve noisy dilation MRA. We first summarize the EM framework, which differentiates between observed data $y = \{y_j\}_{j=1}^M$, latent variables $s = \{s_j\}_{j=1}^M$, and model parameters $x$.  The goal is to produce the $x$ which 
	maximizes the marginalized likelihood function
	\begin{align*}
	p(y \big| x ) &= \int p(y,s \big| x) \ ds \, .
	\end{align*}
	Maximizing $p(y \big| x )$ directly is generally not tenable because enumerating the various values for $s$ is too costly. However EM algorithms can be used to find local maxima of the above function, by iterating between estimating the conditional distribution of latent variables given the current estimate of parameters (E-step) and estimating parameters given the current estimate of the conditional distribution of latent variables (M-step). Specifically the iterative procedure updates $x^{k}$, the current estimate of $x$, by:
	\begin{align}
	\label{equ:genQstep}
	Q(x \big| x^k) &= \mathbb{E}_{s | y, x^k} \left[ \log p(y,s \big| x) \right] \quad &\text{E-step} \\
	\label{equ:genMstep}
	x^{k+1} &= \text{arg}\,\max\limits_{x} Q(x \big| x^k)  \quad &\text{M-step}
	\end{align}
	Since (under certain conditions) $\log p(y \big| x)$ improves at least as much as $Q$ at each iteration \cite{dempster1977maximum}, the algorithm converges to a local maximum of $p(y \big| x)$. \\
	\indent This framework can be applied to noisy dilation MRA, and explicit formulas for both the E-step and M-step can be derived. Assume for simplicity that signals have been discretized to have length $n$, and that the translation distribution $\rho_t$ and dilation distribution $\rho_\tau$ are unkown and also discrete with $n$ possible values $\{t^{\ell}\}_{\ell=1}^n$, $\{\tau^{q}\}_{q=1}^n$ respectively. 
	Letting $x=(f, \rho_t, \rho_\tau)$ denote the parameters, $s_j=(t_j,\tau_j)$ denote the latent/nuisance variables, and $p_x$ denote conditioning on $x$, the likelihood function has form:
	\begin{align*}
	p(y,s \big| x ) 
	&=p_x(y \big| s )p_x(s) 
	=\prod_{j=1}^M \frac{1}{(2\pi\sigma^2)^{\frac{n}{2}}} \exp\left(-\frac{1}{2\sigma^2} \norm{ L_{\tau_j}T_{t_j}f - y_j }_2^2 \right) \rho_t(t_j)\rho_\tau(\tau_j) \, .
	\end{align*}
	Thus (up to a constant) the log likelihood has form
	\begin{align}
	\label{equ:log_likelihood}
	\log p(y,s \big| x ) &= \sum_{j=1}^M-\frac{1}{2\sigma^2} \norm{ L_{\tau_j}T_{t_j}f - y_j }_2^2 + \sum_{j=1}^M \log \rho_t(t_j) +\sum_{j=1}^M \log \rho_\tau(\tau_j) \, .
	\end{align}
	Given the current estimate $x^k =(f^k, \rho^k_t,\rho^k_\tau)$ of parameters, the E-step is performed by first computing the conditional distribution of the latent variables: 
	\begin{align}
	\label{equ:Estep}
	w_k^{\ell,q,j} &= \Prob\left(t_j = t^\ell, \tau_j=\tau^q \big| x^k\right)= C_k^j \exp\left(-\frac{1}{2\sigma^2} \norm{ L_{\tau_j}T_{t_j}f^k - y_j }_2^2 \right)\rho^k_t(t^\ell)\rho^k_\tau(\tau^q) \, ,
	\end{align}
	where $C_k^j$ is a normalizing constant so that $\sum_{\ell, q} w_k^{\ell,q,j} = 1$. These weights are then used to compute $Q$, that is, by combining (\ref{equ:genQstep}), (\ref{equ:log_likelihood}), and (\ref{equ:Estep}):
	\begin{align}
	\label{equ:Qdef}
	Q(f,\rho_t, \rho_\tau \big| f^k,\rho^k_t, \rho^k_\tau) 
	&= \sum_{j=1}^M \sum_{\ell=1}^n \sum_{q=1}^n w_k^{\ell,q,j}\left(-\frac{1}{2\sigma^2} \norm{ L_{\tau_j}T_{t_j}f - y_j }_2^2 + \log \rho_t(t^\ell) + \log \rho_\tau(\tau^q)\right)\, , 
	\end{align}
	up to a constant.
	The M-step is then computed by:
	\begin{align}
	\label{equ:Mstep}
	(f^{k+1}, \rho^{k+1}_t, \rho^{k+1}_\tau) &= \text{arg}\,\max\limits_{f, \rho_t, \rho_\tau} Q(f,\rho_t, \rho_\tau \big| f^k,\rho^k_t, \rho^k_\tau) \, .
	\end{align}
	Since $f,\rho_t, \rho_\tau$ all appear in distinct sums in (\ref{equ:Qdef}), performing the maximization in (\ref{equ:Mstep}) is straightforward. Since $\norm{ L_{\tau_j}T_{t_j}f - y_j}_2^2 = \frac{1}{1-\tau_q} \norm{ f - T_\ell^{-1} L_\tau^{-1}y_j}_2^2$, 
	it is easy to check that:
	\begin{align}
	\label{equ:fupdate}
	f^{k+1}&= \frac{1}{C} \sum_{j=1}^M \sum_{\ell=1}^n \sum_{q=1}^n \frac{w_k^{\ell,q,j}}{(1-\tau_q)}T_\ell^{-1} L_\tau^{-1}y_j \quad, \quad
	C = \sum_{j=1}^M \sum_{\ell=1}^n \sum_{q=1}^n \frac{w_k^{\ell,q,j}}{(1-\tau_q)} \, .
	\end{align}
	Using Lemma 15 in  \cite{abbe2018multireference} , one can also obtain closed form expressions for the updates to $\rho^k_t, \rho^k_\tau$:
	\begin{align*}
	\rho^{k+1}_t(t^\ell) &= \frac{\widetilde{w}_k^\ell}{ \sum_{\ell'} \widetilde{w}_k^{\ell'}} \ \text{for}\ \widetilde{w}_k^\ell = \sum_j \sum_q w^{\ell,q,j}_k \quad, \quad
	\rho^{k+1}_\tau(\tau^q) = \frac{\widetilde{v}_k^q}{ \sum_{q'} \widetilde{v}_k^{q'}} \ \text{for}\   \widetilde{v}_k^q =  \sum_j \sum_\ell w^{\ell,q,j}_k \, .
	\end{align*}
	Note when a discrete signal defined on some fixed grid is dilated, its dilation is defined on a different grid. Thus computing (\ref{equ:Estep}) and (\ref{equ:fupdate}) will involve off-grid interpolation, a subtlety not arising in classic MRA, and this interpolation may contribute additional error. We also note that one can always force the translation distribution to be uniform by retranslating the signals uniformly, and in this case all sums over $\ell$ in this section could be eliminated. This would improve the computational complexity of the algorithm but may be disadvantageous in terms of sample complexity, as in classic MRA a uniform translation distribution requires a larger sample size for accurate estimation than an aperiodic translation distribution \cite{abbe2018multireference}.}

\section{Supporting results: stochastic calculus}

This appendix contains several stochastic calculus results which are used to control the statistics of the additive noise. Proposition \ref{prop:genItoIso} is
a simple generalization of Thm 4.5 of \cite{klebaner2012introduction}. Proposition \ref{prop:GenFourthMoment} controls the second moment of the stochastic quantity in Proposition \ref{prop:genItoIso}, and is in fact a special case of Proposition \ref{prop:MostGenFourthMoment}. Both Propositions \ref{prop:GenFourthMoment} and \ref{prop:MostGenFourthMoment} are proved with standard techniques from stochastic calculus, and for brevity we omit the proofs.

\begin{proposition}
	\label{prop:genItoIso}
	Assume $\int_0^T f(t)^2\ dt < \infty$, $\int_0^T \overline{f(t)}^2\ dt < \infty$, and let $B_t$ be a Brownian motion with variance $\sigma^2$. Then:
	\[ \Ex \left[\left(\int_0^T f(t)\ dB_t\right)\left(\int_0^T \overline{f(t)}\ dB_t\right)\right] = \sigma^2 \int_0^T f(t)\overline{f(t)}\ dt \, . \]
\end{proposition}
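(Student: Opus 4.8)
The plan is to reduce the identity to the standard real-valued It\^o isometry via a real/imaginary decomposition of the integrand. Write $f(t) = u(t) + i v(t)$ with $u, v : [0,T] \to \R$ measurable; the integrability hypotheses on $f$ and $\overline{f}$ guarantee that $u, v \in \Lb^2([0,T])$, so the It\^o integrals $U := \int_0^T u(t)\, dB_t$ and $V := \int_0^T v(t)\, dB_t$ are well-defined, real-valued, square-integrable random variables. By declaring the complex stochastic integral against a deterministic integrand to be the obvious $\R$-linear combination of the integrals of its real and imaginary parts, we have $\int_0^T f\, dB_t = U + iV$ and $\int_0^T \overline{f}\, dB_t = U - iV$.

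First I would expand the product and observe that the cross terms cancel, since $U$ and $V$ commute:
\begin{align*}
\left(\int_0^T f(t)\, dB_t\right)\left(\int_0^T \overline{f(t)}\, dB_t\right) = (U + iV)(U - iV) = U^2 + V^2 .
\end{align*}
Next I would take expectations and apply the real-valued It\^o isometry to each term. In the normalization in which $B_t$ has variance $\sigma^2$ one has $\Ex[(\int_0^T g\, dB_t)^2] = \sigma^2 \int_0^T g(t)^2\, dt$ for every real $g \in \Lb^2([0,T])$ (this is precisely Thm 4.5 of \cite{klebaner2012introduction}), so
\begin{align*}
\Ex\left[ U^2 + V^2 \right] = \sigma^2 \int_0^T u(t)^2\, dt + \sigma^2 \int_0^T v(t)^2\, dt = \sigma^2 \int_0^T \left( u(t)^2 + v(t)^2 \right) dt = \sigma^2 \int_0^T f(t)\, \overline{f(t)}\, dt ,
\end{align*}
which is the asserted identity.

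There is no real obstacle here: the statement is a routine complexification of the It\^o isometry, and the only point needing mild care is the bookkeeping of the real and imaginary parts together with the fact that the hypotheses place them in $\Lb^2([0,T])$. An essentially equivalent alternative would be to note that $(g,h) \mapsto \Ex[(\int_0^T g\, dB_t)(\int_0^T h\, dB_t)]$ and $(g,h) \mapsto \sigma^2 \int_0^T g h\, dt$ are symmetric $\R$-bilinear forms on $\Lb^2([0,T];\R)$ that agree on the diagonal by the real It\^o isometry, hence agree identically by polarization; the claim for complex integrands then follows by bilinearity applied to $f = u + iv$ and $\overline{f} = u - iv$.
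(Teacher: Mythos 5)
Your proof is correct, and it is essentially the argument the paper has in mind: the paper gives no explicit proof, stating only that the result is a simple generalization of the real It\^o isometry (Theorem 4.5 of \cite{klebaner2012introduction}), and your real/imaginary decomposition $f = u + iv$ reducing the claim to that theorem is precisely that generalization. The only point worth a glance is the reading of the hypotheses $\int_0^T f(t)^2\,dt < \infty$ as absolute convergence, i.e.\ $|f| \in \Lb^2[0,T]$, which is the natural interpretation and all the paper ever uses (bounded continuous integrands on a compact interval).
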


\begin{proposition}
	\label{prop:GenFourthMoment}
	Let $f(t)$ be a bounded and continuous complex deterministic function on $[0,T]$, and let $B_t$ be a Brownian motion with variance $\sigma^2$. Then for a fixed nonrandom time $T$, we have:
	\begin{align*} 
	\Ex \left[\left(\int_0^T f(t)\ dB_t\right)^2\left(\int_0^T \overline{f(t)}\ dB_t\right)^2\right]&= 2\sigma^4\left(\int_0^T|f(t)|^2\ dt\right)^2
	+ \sigma^4\left(\int_0^Tf(t)^2\ dt\right)\left(\int_0^T\overline{f(t)}^2\ dt\right)\, .
	\end{align*}
\end{proposition}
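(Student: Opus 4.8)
The plan is to run the computation through It\^o's formula applied to the complex-valued martingale $M_t := \int_0^t f(s)\, dB_s$, using Proposition \ref{prop:genItoIso} to evaluate the lower-order moments that appear as source terms. (An essentially equivalent route, noted at the end, avoids It\^o entirely and uses only that Wiener integrals of deterministic functions are Gaussian.)

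First I would record the quadratic (co)variation data. Since $B_t$ is a real Brownian motion with variance $\sigma^2$ and $f$ is bounded and continuous on $[0,T]$, $M_t$ is a square-integrable complex martingale with $d\langle M\rangle_t = \sigma^2 f(t)^2\, dt$, $d\langle \overline{M}\rangle_t = \sigma^2 \overline{f(t)}^2\, dt$, and $d\langle M,\overline{M}\rangle_t = \sigma^2 |f(t)|^2\, dt$. Applying the two-variable It\^o formula to $\varphi(z,w) = z^2 w^2$ with $(z,w) = (M_t,\overline{M_t})$ (so $\varphi_{zz}=2w^2$, $\varphi_{ww}=2z^2$, $\varphi_{zw}=4zw$) gives
\begin{align*}
d\big(M_t^2 \overline{M_t}^2\big) = (\text{local martingale increment}) + \sigma^2\Big( \overline{M_t}^2 f(t)^2 + M_t^2 \overline{f(t)}^2 + 4 M_t \overline{M_t}\, |f(t)|^2 \Big)\, dt .
\end{align*}
Taking expectations kills the martingale part (the integrands lie in $L^2$ since $f$ is bounded on $[0,T]$ and $M$, being Gaussian, has moments of all orders), so
\begin{align*}
\frac{d}{dt}\,\Ex\big[M_t^2 \overline{M_t}^2\big] = \sigma^2 f(t)^2\, \Ex\big[\overline{M_t}^2\big] + \sigma^2 \overline{f(t)}^2\, \Ex\big[M_t^2\big] + 4\sigma^2 |f(t)|^2\, \Ex\big[M_t \overline{M_t}\big] .
\end{align*}

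Next I would substitute the second moments. The It\^o isometry of Proposition \ref{prop:genItoIso} is bilinear in its two integrands (the conjugation there is only notational), so it applies with $\overline{f}$ replaced by an arbitrary function, yielding $\Ex[M_t^2] = \sigma^2 \int_0^t f(s)^2\, ds$, $\Ex[\overline{M_t}^2] = \sigma^2 \int_0^t \overline{f(s)}^2\, ds$, and $\Ex[M_t \overline{M_t}] = \sigma^2 \int_0^t |f(s)|^2\, ds$. Plugging these in, integrating from $0$ to $T$ (with $M_0=0$), and collapsing the iterated integrals by Fubini — using that $\int_0^T\!\!\int_0^t + \int_0^T\!\!\int_0^s$ covers $[0,T]^2$ off the diagonal — turns $\sigma^4\int_0^T f(t)^2(\int_0^t \overline{f}^2) + \sigma^4\int_0^T \overline{f(t)}^2(\int_0^t f^2)$ into $\sigma^4(\int_0^T f^2)(\int_0^T \overline{f}^2)$ and $4\sigma^4\int_0^T |f(t)|^2(\int_0^t |f|^2)$ into $2\sigma^4(\int_0^T |f|^2)^2$, which is exactly the claimed identity.

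I expect the only genuine care point to be the complex bookkeeping in the second-order It\^o term: one must treat $M$ and $\overline{M}$ as two dependent It\^o processes, remember that the mixed partial $\varphi_{zw}$ contributes the cross-variation $\langle M,\overline{M}\rangle$ with coefficient $4M_t\overline{M_t}$ (not $2$), and keep in mind that the Wiener-integral isometry is bilinear rather than sesquilinear. The alternative argument makes the same identity nearly immediate: $(\mathrm{Re}\,M_T,\mathrm{Im}\,M_T)$ is a centered Gaussian vector, so Isserlis' theorem gives $\Ex[M_T^2\overline{M_T}^2] = \Ex[M_T^2]\,\Ex[\overline{M_T}^2] + 2\,\Ex[M_T\overline{M_T}]^2$, and one then evaluates the three second moments exactly as above via Proposition \ref{prop:genItoIso}.
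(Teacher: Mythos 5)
Your proposal is correct. Note, however, that the paper does not actually supply a proof to compare against: it states that Propositions \ref{prop:GenFourthMoment} and \ref{prop:MostGenFourthMoment} follow from ``standard techniques from stochastic calculus'' and omits the argument, observing only that Proposition \ref{prop:GenFourthMoment} is a special case of Proposition \ref{prop:MostGenFourthMoment}. Your It\^o computation is a legitimate instance of those standard techniques: the covariation bookkeeping $d\langle M\rangle_t=\sigma^2 f^2\,dt$, $d\langle M,\overline{M}\rangle_t=\sigma^2|f|^2\,dt$, the coefficient $4M_t\overline{M_t}$ on the cross term, the vanishing of the martingale part (justified since $f$ is bounded and $M_t$ has all moments), and the Fubini symmetrization of the iterated integrals are all handled correctly, and you are right that the isometry of Proposition \ref{prop:genItoIso} must be used in its bilinear form to evaluate $\Ex[M_t^2]$. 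Your alternative via Isserlis' theorem is arguably the better fit for the paper's intent: since $(\mathrm{Re}\,M_T,\mathrm{Im}\,M_T)$ is centered Gaussian, multilinearity gives $\Ex[M_T^2\overline{M_T}^2]=\Ex[M_T^2]\,\Ex[\overline{M_T}^2]+2\,\Ex[M_T\overline{M_T}]^2$ in one line, and the same pairing identity applied to $X\overline{X}Y\overline{Y}$ immediately yields the more general Proposition \ref{prop:MostGenFourthMoment}, whereas the It\^o route would require redoing the computation with two coupled martingales.
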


\begin{corollary}
	When $f(t)$ is real, the above reduces to:
	\[ \Ex \left[\left(\int_0^T f(t)\ dB_t\right)^4\right]= 3\sigma^4\left(\int_0^Tf(t)^2\ dt\right)^2 \, .\] 	
\end{corollary}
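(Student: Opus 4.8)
The plan is to obtain this corollary as an immediate specialization of Proposition \ref{prop:GenFourthMoment} to the case of a real-valued integrand. First I would note that a bounded, continuous, real-valued function $f$ on $[0,T]$ is in particular a bounded, continuous complex deterministic function, so the hypotheses of Proposition \ref{prop:GenFourthMoment} are met verbatim and we may invoke
\[
\Ex \left[\left(\int_0^T f(t)\, dB_t\right)^2\left(\int_0^T \overline{f(t)}\, dB_t\right)^2\right]= 2\sigma^4\left(\int_0^T|f(t)|^2\, dt\right)^2
+ \sigma^4\left(\int_0^T f(t)^2\, dt\right)\left(\int_0^T\overline{f(t)}^2\, dt\right).
\]

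Next I would use that $f$ real means $\overline{f(t)} = f(t)$ for every $t$, so the left-hand side collapses to $\Ex\big[(\int_0^T f(t)\, dB_t)^4\big]$, while on the right-hand side $|f(t)|^2 = f(t)^2$ and $\int_0^T \overline{f(t)}^2\, dt = \int_0^T f(t)^2\, dt$. Writing $v := \int_0^T f(t)^2\, dt$, the right-hand side becomes $2\sigma^4 v^2 + \sigma^4 v\cdot v = 3\sigma^4 v^2$, which is precisely the asserted identity. There is no genuine obstacle: once Proposition \ref{prop:GenFourthMoment} is in hand, the remaining content is purely algebraic bookkeeping, and the only point requiring (trivial) care is that the hypotheses transfer cleanly from the complex to the real setting.

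For completeness I would mention, as a remark rather than as the proof, a self-contained alternative that bypasses Proposition \ref{prop:GenFourthMoment}: for real $f$ with $\int_0^T f^2 < \infty$ the stochastic integral $X := \int_0^T f(t)\, dB_t$ is a centered Gaussian random variable with variance $\sigma^2 \int_0^T f(t)^2\, dt$ by the It\^o isometry (the real-valued case of Proposition \ref{prop:genItoIso}), and the fourth moment of a centered Gaussian of variance $v$ is $3v^2$; substituting $v = \sigma^2\int_0^T f(t)^2\, dt$ again gives the claim. Since the corollary is positioned as a consequence of Proposition \ref{prop:GenFourthMoment}, I would present the specialization argument as the actual proof.
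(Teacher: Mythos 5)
Your proof is correct and matches the paper's intent exactly: the corollary is stated as an immediate specialization of Proposition \ref{prop:GenFourthMoment} to real $f$, where $\overline{f}=f$ collapses the three right-hand terms to $3\sigma^4\bigl(\int_0^T f(t)^2\,dt\bigr)^2$, which is precisely the algebraic bookkeeping you carry out. The Gaussian fourth-moment remark is a fine sanity check but, as you note, not needed.
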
	

\begin{proposition}
	\label{prop:MostGenFourthMoment}
	Let $f(t), g(t)$ be bounded and continuous complex deterministic functions on $[0,T]$, and let $B_t$ be a Brownian motion with variance $\sigma^2$. Then for a fixed nonrandom time $T$, we have:
	\begin{small}
		\begin{align*} &\Ex\left[ \left(\int_0^Tf(t)\ dB_t\right)\left(\int_0^T\overline{f(t)}\ dB_t\right)\left(\int_0^Tg(t)\ dB_t\right)\left(\int_0^T\overline{g(t)}\ dB_t\right) \right] \\
		\qquad&=\sigma^4\left[\left(\int_0^Tf(t)g(t)\ dt\right)\left(\int_0^T\overline{f(t)}\overline{g(t)}\ dt\right)+\left(\int_0^Tf(t)\overline{g(t)}\ dt\right)\left(\int_0^T\overline{f(t)}g(t)\ dt\right) \right. \\
		&\qquad\left.+\left(\int_0^T|f(t)|^2\ dt\right)\left(\int_0^T|g(t)|^2\ dt\right)\right] \, .
		\end{align*}
	\end{small}
\end{proposition}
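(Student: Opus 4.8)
The plan is to read the right-hand side as an instance of Wick's theorem (the Isserlis formula) for mean-zero Gaussian families, thereby reducing the whole computation to pairwise second moments, each of which is supplied by the bilinear form of Proposition \ref{prop:genItoIso}. Set $X_1 = \int_0^T f(t)\, dB_t$, $X_2 = \int_0^T \overline{f(t)}\, dB_t$, $X_3 = \int_0^T g(t)\, dB_t$, $X_4 = \int_0^T \overline{g(t)}\, dB_t$. Since each real and imaginary part $\mathrm{Re}\,X_i$, $\mathrm{Im}\,X_i$ is a Wiener integral of a real deterministic $\Lb^2$ integrand against the real Brownian motion $B$, the vector $(\mathrm{Re}\,X_1,\mathrm{Im}\,X_1,\ldots,\mathrm{Re}\,X_4,\mathrm{Im}\,X_4)$ is jointly Gaussian with mean zero. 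Expanding the product $X_1X_2X_3X_4$ into its real and imaginary components, applying the real Isserlis identity to each of the resulting monomials (each a product of four components, with repetition, of this Gaussian vector), and recombining by multilinearity yields $\Ex[X_1X_2X_3X_4] = \Ex[X_1X_2]\,\Ex[X_3X_4] + \Ex[X_1X_3]\,\Ex[X_2X_4] + \Ex[X_1X_4]\,\Ex[X_2X_3]$, i.e. the sum over the three perfect matchings of $\{1,2,3,4\}$, where the relevant ``covariance'' is $\Ex[X_aX_b]$ (not $\Ex[X_a\overline{X_b}]$).

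Next I compute the six covariances. The proof of Proposition \ref{prop:genItoIso} --- the It\^o isometry applied separately to real and imaginary parts --- in fact gives $\Ex\big[(\int_0^T \phi\, dB_t)(\int_0^T \psi\, dB_t)\big] = \sigma^2\int_0^T \phi(t)\psi(t)\, dt$ for any bounded continuous deterministic $\phi,\psi$. Hence $\Ex[X_1X_2] = \sigma^2\int_0^T |f|^2$, $\Ex[X_3X_4] = \sigma^2\int_0^T |g|^2$, $\Ex[X_1X_3] = \sigma^2\int_0^T fg$, $\Ex[X_2X_4] = \sigma^2\int_0^T \overline{f}\,\overline{g}$, $\Ex[X_1X_4] = \sigma^2\int_0^T f\overline{g}$, and $\Ex[X_2X_3] = \sigma^2\int_0^T \overline{f}g$. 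Substituting these into the three-matching sum produces exactly the claimed expression, with the common prefactor $\sigma^4$. An alternative, self-contained route that avoids Isserlis is to approximate $f$ and $g$ uniformly on $[0,T]$ by simple functions on a common refining partition $0 = s_0 < \cdots < s_m = T$, so each Wiener integral becomes a finite sum $\sum_i f_i\,\Delta B_i$ with independent increments $\Delta B_i \sim \mathcal{N}(0,\sigma^2\Delta s_i)$; expanding the product of four such sums, one uses that $\Ex[\Delta B_i\Delta B_j\Delta B_k\Delta B_l]$ vanishes unless the indices pair up, equalling $3\sigma^4(\Delta s_i)^2$ when all four coincide and $\sigma^4\Delta s_a\Delta s_b$ for each of the three ways to split into two coincident pairs with distinct labels $a \neq b$. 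The diagonal term is $O(\mathrm{mesh})$ and drops out, while the three off-diagonal sums converge to the three Riemann integrals in the statement.

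The main obstacle is the limiting argument in this elementary route: one must justify that the expectation of the product of the simple-function approximations converges to the expectation of the product of the exact integrals. This requires $\Lb^4$-convergence of the approximating stochastic integrals, equivalently uniform integrability of the quadruple products, which follows from the uniform boundedness of $f,g$ together with standard $\Lb^4$ bounds for stochastic integrals of bounded integrands (Burkholder--Davis--Gundy, or a direct moment estimate of the same flavor as Proposition \ref{prop:GenFourthMoment}); Riemann-sum convergence is then routine from the continuity of $f$ and $g$. The Wick-theorem route trades this analytic bookkeeping for the need to invoke Isserlis' formula and carefully extend it from real Gaussian vectors to complex linear combinations thereof --- again routine, but requiring a little care with the multilinearity. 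Either way, no idea beyond standard stochastic calculus is needed, which is why the paper omits the details.
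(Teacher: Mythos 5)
Your proposal is correct. The paper deliberately omits the proof of Proposition \ref{prop:MostGenFourthMoment}, stating only that it follows from ``standard techniques from stochastic calculus,'' and your argument is exactly such a proof: the Wiener integrals $X_1,\ldots,X_4$ belong to a mean-zero (complex) Gaussian family, Isserlis/Wick extends by multilinearity to complex linear combinations with the non-conjugated covariance $\Ex[X_aX_b]$, and the six pairwise covariances follow from the bilinear (polarized) form of the It\^o isometry underlying Proposition \ref{prop:genItoIso}; summing over the three pairings reproduces the stated identity, and the same computation with $g=f$ recovers Proposition \ref{prop:GenFourthMoment} as a consistency check. Your alternative simple-function route, with the $\Lb^4$ uniform-integrability argument you flag, is also sound, but the Isserlis route is the cleaner of the two and needs no further justification.
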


\bibliographystyle{unsrt}
\bibliography{MainBib.bib}

\begin{thebibliography}{10}

\bibitem{theobald2012optimal}
Douglas~L Theobald and Phillip~A Steindel.
\newblock Optimal simultaneous superpositioning of multiple structures with
  missing data.
\newblock {\em Bioinformatics}, 28(15):1972--1979, 2012.

\bibitem{diamond1992multiple}
Robert Diamond.
\newblock On the multiple simultaneous superposition of molecular structures by
  rigid body transformations.
\newblock {\em Protein Science}, 1(10):1279--1287, 1992.

\bibitem{scheres2005maximum}
Sjors~HW Scheres, Mikel Valle, Rafael Nu{\~n}ez, Carlos~OS Sorzano, Roberto
  Marabini, Gabor~T Herman, and Jose-Maria Carazo.
\newblock Maximum-likelihood multi-reference refinement for electron microscopy
  images.
\newblock {\em Journal of molecular biology}, 348(1):139--149, 2005.

\bibitem{sadler1992shift}
Brian~M Sadler and Georgios~B Giannakis.
\newblock Shift-and rotation-invariant object reconstruction using the
  bispectrum.
\newblock {\em JOSA A}, 9(1):57--69, 1992.

\bibitem{park2011stochastic}
Wooram Park, Charles~R Midgett, Dean~R Madden, and Gregory~S Chirikjian.
\newblock A stochastic kinematic model of class averaging in single-particle
  electron microscopy.
\newblock {\em The International journal of robotics research}, 30(6):730--754,
  2011.

\bibitem{park2014assembly}
Wooram Park and Gregory~S Chirikjian.
\newblock An assembly automation approach to alignment of noncircular
  projections in electron microscopy.
\newblock {\em IEEE Transactions on Automation Science and Engineering},
  11(3):668--679, 2014.

\bibitem{leggett2015nanook}
Richard~M Leggett, Darren Heavens, Mario Caccamo, Matthew~D Clark, and Robert~P
  Davey.
\newblock Nanook: multi-reference alignment analysis of nanopore sequencing
  data, quality and error profiles.
\newblock {\em Bioinformatics}, 32(1):142--144, 2015.

\bibitem{zwart2003fast}
J~Portegies Zwart, Ren{\'e} van~der Heiden, Sjoerd Gelsema, and Frans Groen.
\newblock Fast translation invariant classification of hrr range profiles in a
  zero phase representation.
\newblock {\em IEE Proceedings-Radar, Sonar and Navigation}, 150(6):411--418,
  2003.

\bibitem{gil2005using}
Roberto Gil-Pita, Manuel Rosa-Zurera, P~Jarabo-Amores, and Francisco
  L{\'o}pez-Ferreras.
\newblock Using multilayer perceptrons to align high range resolution radar
  signals.
\newblock In {\em International Conference on Artificial Neural Networks},
  pages 911--916. Springer, 2005.

\bibitem{sonday2013noisy}
Benjamin Sonday, Amit Singer, and Ioannis~G Kevrekidis.
\newblock Noisy dynamic simulations in the presence of symmetry: Data alignment
  and model reduction.
\newblock {\em Computers \& Mathematics with Applications}, 65(10):1535--1557,
  2013.

\bibitem{foroosh2002extension}
Hassan Foroosh, Josiane~B Zerubia, and Marc Berthod.
\newblock Extension of phase correlation to subpixel registration.
\newblock {\em IEEE transactions on image processing}, 11(3):188--200, 2002.

\bibitem{brown1992survey}
Lisa~Gottesfeld Brown.
\newblock A survey of image registration techniques.
\newblock {\em ACM computing surveys (CSUR)}, 24(4):325--376, 1992.

\bibitem{robinson2007optimal}
Dirk Robinson, Sina Farsiu, and Peyman Milanfar.
\newblock Optimal registration of aliased images using variable projection with
  applications to super-resolution.
\newblock {\em The Computer Journal}, 52(1):31--42, 2007.

\bibitem{bartesaghi20152}
Alberto Bartesaghi, Alan Merk, Soojay Banerjee, Doreen Matthies, Xiongwu Wu,
  Jacqueline~LS Milne, and Sriram Subramaniam.
\newblock 2.2 {\aa} resolution cryo-em structure of $\beta$-galactosidase in
  complex with a cell-permeant inhibitor.
\newblock {\em Science}, 348(6239):1147--1151, 2015.

\bibitem{sirohi20163}
Devika Sirohi, Zhenguo Chen, Lei Sun, Thomas Klose, Theodore~C Pierson,
  Michael~G Rossmann, and Richard~J Kuhn.
\newblock The 3.8 {\aa} resolution cryo-em structure of zika virus.
\newblock {\em Science}, 352(6284):467--470, 2016.

\bibitem{bendory2017bispectrum}
Tamir Bendory, Nicolas Boumal, Chao Ma, Zhizhen Zhao, and Amit Singer.
\newblock Bispectrum inversion with application to multireference alignment.
\newblock {\em IEEE Transactions on Signal Processing}, 66(4):1037--1050, 2017.

\bibitem{frank2006three}
Joachim Frank.
\newblock {\em Three-dimensional electron microscopy of macromolecular
  assemblies: visualization of biological molecules in their native state}.
\newblock Oxford University Press, 2006.

\bibitem{OptConvRates_MRA}
Afonso Bandeira, Philippe Rigollet, and Jonathan Weed.
\newblock Optimal rates of estimation for multi-reference alignment.
\newblock {\em arXiv preprint at arXiv:1702.08546}, 2017.

\bibitem{perry2017sample}
Amelia Perry, Jonathan Weed, Afonso Bandeira, Philippe Rigollet, and Amit
  Singer.
\newblock The sample complexity of multi-reference alignment.
\newblock {\em SIAM Journal on Mathematics of Data Science}, 1(3):497--517,
  2017.

\bibitem{bandeira2017estimation}
Afonso~S Bandeira, Ben Blum-Smith, Joe Kileel, Amelia Perry, Jonathan Weed, and
  Alexander~S Wein.
\newblock Estimation under group actions: recovering orbits from invariants.
\newblock {\em arXiv preprint arXiv:1712.10163}, 2017.

\bibitem{wein2018statistical}
Alexander~Spence Wein.
\newblock {\em Statistical estimation in the presence of group actions}.
\newblock PhD thesis, Massachusetts Institute of Technology, 2018.

\bibitem{abbe2018multireference}
Emmanuel Abbe, Tamir Bendory, William Leeb, Jo{\~a}o~M Pereira, Nir Sharon, and
  Amit Singer.
\newblock Multireference alignment is easier with an aperiodic translation
  distribution.
\newblock {\em IEEE Transactions on Information Theory}, 65(6):3565--3584,
  2018.

\bibitem{sharon2019method}
Nir Sharon, Joe Kileel, Yuehaw Khoo, Boris Landa, and Amit Singer.
\newblock Method of moments for {3-D} single particle ab initio modeling with
  non-uniform distribution of viewing angles.
\newblock {\em Inverse Problems}, 36(4):044003, 2020.

\bibitem{sorzano2010clustering}
COS Sorzano, JR~Bilbao-Castro, Y~Shkolnisky, M~Alcorlo, R~Melero,
  G~Caffarena-Fern{\'a}ndez, M~Li, G~Xu, R~Marabini, and JM~Carazo.
\newblock A clustering approach to multireference alignment of single-particle
  projections in electron microscopy.
\newblock {\em Journal of structural biology}, 171(2):197--206, 2010.

\bibitem{ma2019heterogeneous}
Chao Ma, Tamir Bendory, Nicolas Boumal, Fred Sigworth, and Amit Singer.
\newblock Heterogeneous multireference alignment for images with application to
  2d classification in single particle reconstruction.
\newblock {\em IEEE Transactions on Image Processing}, 29:1699--1710, 2019.

\bibitem{boumal2018heterogeneous}
Nicolas Boumal, Tamir Bendory, Roy~R Lederman, and Amit Singer.
\newblock Heterogeneous multireference alignment: A single pass approach.
\newblock In {\em 2018 52nd Annual Conference on Information Sciences and
  Systems (CISS)}, pages 1--6. IEEE, 2018.

\bibitem{landa2019multi}
Boris Landa and Yoel Shkolnisky.
\newblock Multi-reference factor analysis: low-rank covariance estimation under
  unknown translations.
\newblock {\em arXiv preprint arXiv:1906.00211}, 2019.

\bibitem{singer2011angular}
Amit Singer.
\newblock Angular synchronization by eigenvectors and semidefinite programming.
\newblock {\em Applied and computational harmonic analysis}, 30(1):20--36,
  2011.

\bibitem{boumal2016nonconvex}
Nicolas Boumal.
\newblock Nonconvex phase synchronization.
\newblock {\em SIAM Journal on Optimization}, 26(4):2355--2377, 2016.

\bibitem{perry2018message}
Amelia Perry, Alexander~S Wein, Afonso~S Bandeira, and Ankur Moitra.
\newblock Message-passing algorithms for synchronization problems over compact
  groups.
\newblock {\em Communications on Pure and Applied Mathematics},
  71(11):2275--2322, 2018.

\bibitem{chen2018projected}
Yuxin Chen and Emmanuel~J Cand{\`e}s.
\newblock The projected power method: An efficient algorithm for joint
  alignment from pairwise differences.
\newblock {\em Communications on Pure and Applied Mathematics},
  71(8):1648--1714, 2018.

\bibitem{bandeira2017tightness}
Afonso~S Bandeira, Nicolas Boumal, and Amit Singer.
\newblock Tightness of the maximum likelihood semidefinite relaxation for
  angular synchronization.
\newblock {\em Mathematical Programming}, 163(1-2):145--167, 2017.

\bibitem{zhong2018near}
Yiqiao Zhong and Nicolas Boumal.
\newblock Near-optimal bounds for phase synchronization.
\newblock {\em SIAM Journal on Optimization}, 28(2):989--1016, 2018.

\bibitem{bandeira201518}
Afonso~S Bandeira.
\newblock Synchronization problems and alignment.
\newblock {\em Topics in Mathematics of Data Science Lecture Notes,
  Massachusetts Institute of Technology}, 2015.

\bibitem{bandeira2020non}
Afonso Bandeira, Yutong Chen, Roy~R Lederman, and Amit Singer.
\newblock Non-unique games over compact groups and orientation estimation in
  cryo-em.
\newblock {\em Inverse Problems}, 2020.

\bibitem{bandeira2014multireference}
Afonso~S Bandeira, Moses Charikar, Amit Singer, and Andy Zhu.
\newblock Multireference alignment using semidefinite programming.
\newblock In {\em Proceedings of the 5th conference on Innovations in
  theoretical computer science}, pages 459--470. ACM, 2014.

\bibitem{chen2014near}
Yuxin Chen, Leonidas~J Guibas, and Qi-Xing Huang.
\newblock Near-optimal joint object matching via convex relaxation.
\newblock In {\em Proceedings of the 31st International Conference on Machine
  Learning}, volume~32 of {\em Proceedings of Machine Learning Research}, pages
  100--108, 2014.

\bibitem{bandeira2016low}
Afonso~S Bandeira, Nicolas Boumal, and Vladislav Voroninski.
\newblock On the low-rank approach for semidefinite programs arising in
  synchronization and community detection.
\newblock In {\em Conference on learning theory}, pages 361--382, 2016.

\bibitem{hansen1982large}
Lars~Peter Hansen.
\newblock Large sample properties of generalized method of moments estimators.
\newblock {\em Econometrica: Journal of the Econometric Society}, pages
  1029--1054, 1982.

\bibitem{kam1980reconstruction}
Zvi Kam.
\newblock The reconstruction of structure from electron micrographs of randomly
  oriented particles.
\newblock In {\em Electron Microscopy at Molecular Dimensions}, pages 270--277.
  Springer, 1980.

\bibitem{dempster1977maximum}
Arthur~P Dempster, Nan~M Laird, and Donald~B Rubin.
\newblock Maximum likelihood from incomplete data via the em algorithm.
\newblock {\em Journal of the Royal Statistical Society: Series B
  (Methodological)}, 39(1):1--22, 1977.

\bibitem{dvornek2015subspaceem}
Nicha~C Dvornek, Fred~J Sigworth, and Hemant~D Tagare.
\newblock Subspaceem: A fast maximum-a-posteriori algorithm for cryo-em single
  particle reconstruction.
\newblock {\em Journal of structural biology}, 190(2):200--214, 2015.

\bibitem{punjani2017cryosparc}
Ali Punjani, John~L Rubinstein, David~J Fleet, and Marcus~A Brubaker.
\newblock cryosparc: algorithms for rapid unsupervised cryo-em structure
  determination.
\newblock {\em Nature methods}, 14(3):290, 2017.

\bibitem{collis1998higher}
WB~Collis, PR~White, and JK~Hammond.
\newblock Higher-order spectra: the bispectrum and trispectrum.
\newblock {\em Mechanical systems and signal processing}, 12(3):375--394, 1998.

\bibitem{palamini2016identifying}
Martina Palamini, Anselmo Canciani, and Federico Forneris.
\newblock Identifying and visualizing macromolecular flexibility in structural
  biology.
\newblock {\em Frontiers in molecular biosciences}, 3:47, 2016.

\bibitem{lim2002modular}
Wendell~A Lim.
\newblock The modular logic of signaling proteins: building allosteric switches
  from simple binding domains.
\newblock {\em Current opinion in structural biology}, 12(1):61--68, 2002.

\bibitem{ekman2005multi}
Diana Ekman, {\AA}sa~K Bj{\"o}rklund, Johannes Frey-Sk{\"o}tt, and Arne
  Elofsson.
\newblock Multi-domain proteins in the three kingdoms of life: orphan domains
  and other unassigned regions.
\newblock {\em Journal of molecular biology}, 348(1):231--243, 2005.

\bibitem{levitt2009nature}
Michael Levitt.
\newblock Nature of the protein universe.
\newblock {\em Proceedings of the National Academy of Sciences},
  106(27):11079--11084, 2009.

\bibitem{forneris2012modular}
Federico Forneris, Jin Wu, and Piet Gros.
\newblock The modular serine proteases of the complement cascade.
\newblock {\em Current opinion in structural biology}, 22(3):333--341, 2012.

\bibitem{bowman2015post}
Gregory~D Bowman and Michael~G Poirier.
\newblock Post-translational modifications of histones that influence
  nucleosome dynamics.
\newblock {\em Chemical reviews}, 115(6):2274--2295, 2015.

\bibitem{desjarlais2016role}
Renee DesJarlais and Peter~J Tummino.
\newblock Role of histone-modifying enzymes and their complexes in regulation
  of chromatin biology.
\newblock {\em Biochemistry}, 55(11):1584--1599, 2016.

\bibitem{mcginty2016recognition}
Robert~K McGinty and Song Tan.
\newblock Recognition of the nucleosome by chromatin factors and enzymes.
\newblock {\em Current Opinion in Structural Biology}, 37:54--61, 2016.

\bibitem{villarreal2014cryoem}
Seth~A Villarreal and Phoebe~L Stewart.
\newblock Cryoem and image sorting for flexible protein/dna complexes.
\newblock {\em Journal of structural biology}, 187(1):76--83, 2014.

\bibitem{fischer2015structure}
Niels Fischer, Piotr Neumann, Andrey~L Konevega, Lars~V Bock, Ralf Ficner,
  Marina~V Rodnina, and Holger Stark.
\newblock Structure of the e. coli ribosome--ef-tu complex at< 3 {\aa}
  resolution by c s-corrected cryo-em.
\newblock {\em Nature}, 520(7548):567--570, 2015.

\bibitem{bai2015sampling}
Xiao-chen Bai, Eeson Rajendra, Guanghui Yang, Yigong Shi, and Sjors~HW Scheres.
\newblock Sampling the conformational space of the catalytic subunit of human
  $\gamma$-secretase.
\newblock {\em eLife}, 4:e11182, 2015.

\bibitem{fernandez2015cryo}
Rafael Fernandez-Leiro, Julian Conrad, Sjors~HW Scheres, and Meindert~H Lamers.
\newblock cryo-em structures of the e. coli replicative dna polymerase reveal
  its dynamic interactions with the dna sliding clamp, exonuclease and $\tau$.
\newblock {\em eLife}, 4:e11134, 2015.

\bibitem{merk2016breaking}
Alan Merk, Alberto Bartesaghi, Soojay Banerjee, Veronica Falconieri, Prashant
  Rao, Mindy~I Davis, Rajan Pragani, Matthew~B Boxer, Lesley~A Earl,
  Jacqueline~LS Milne, et~al.
\newblock Breaking cryo-em resolution barriers to facilitate drug discovery.
\newblock {\em Cell}, 165(7):1698--1707, 2016.

\bibitem{omer2013estimation}
Harold Omer and Bruno Torr{\'e}sani.
\newblock Estimation of frequency modulations on wideband signals; applications
  to audio signal analysis.
\newblock {\em arXiv preprint arXiv:1305.3095}, 2013.

\bibitem{omer2017time}
Harold Omer and Bruno Torr{\'e}sani.
\newblock Time-frequency and time-scale analysis of deformed stationary
  processes, with application to non-stationary sound modeling.
\newblock {\em Applied and Computational Harmonic Analysis}, 43(1):1--22, 2017.

\bibitem{meynard2018spectral}
Adrien Meynard and Bruno Torr{\'e}sani.
\newblock Spectral analysis for nonstationary audio.
\newblock {\em IEEE/ACM Transactions on Audio, Speech, and Language
  Processing}, 26(12):2371--2380, 2018.

\bibitem{clerc2002texture}
Maureen Clerc and St{\'e}phane Mallat.
\newblock The texture gradient equation for recovering shape from texture.
\newblock {\em IEEE Transactions on Pattern Analysis and Machine Intelligence},
  24(4):536--549, 2002.

\bibitem{clerc2003estimating}
Maureen Clerc and St{\'e}phane Mallat.
\newblock Estimating deformations of stationary processes.
\newblock {\em The Annals of Statistics}, 31(6):1772--1821, 2003.

\bibitem{chandran1992position}
Vinod Chandran and Stephen~L Elgar.
\newblock Position, rotation, and scale invariant recognition of images using
  higher-order spectra.
\newblock In {\em ICASSP'92: IEEE International Conference on Acoustics,
  Speech, and Signal Processing}, volume~5, pages 213--216. IEEE, 1992.

\bibitem{capodiferro1987correlation}
L~Capodiferro, R~Cusani, G~Jacovitti, and M~Vascotto.
\newblock A correlation based technique for shift, scale, and rotation
  independent object identification.
\newblock In {\em ICASSP'87: IEEE International Conference on Acoustics,
  Speech, and Signal Processing}, volume~12, pages 221--224. IEEE, 1987.

\bibitem{tsatsanis1990translation}
Michail~K Tsatsanis and Georgios~B Giannakis.
\newblock Translation, rotation, and scaling invariant object and texture
  classification using polyspectra.
\newblock In {\em Advanced Signal Processing Algorithms, Architectures, and
  Implementations}, volume 1348, pages 103--115. International Society for
  Optics and Photonics, 1990.

\bibitem{hotta2001scale}
Kazuhiro Hotta, Taketoshi Mishima, and Takio Kurita.
\newblock Scale invariant face detection and classification method using shift
  invariant features extracted from log-polar image.
\newblock {\em IEICE Transactions on Information and Systems}, 84(7):867--878,
  2001.

\bibitem{martinec2007robust}
Daniel Martinec and Tomas Pajdla.
\newblock Robust rotation and translation estimation in multiview
  reconstruction.
\newblock In {\em 2007 IEEE Conference on Computer Vision and Pattern
  Recognition}, pages 1--8. IEEE, 2007.

\bibitem{mallat:scattering2012}
St{\'e}phane Mallat.
\newblock Group invariant scattering.
\newblock {\em Communications on Pure and Applied Mathematics},
  65(10):1331--1398, October 2012.

\bibitem{eickenberg:3DSolidHarmonicScat2017}
Michael Eickenberg, Georgios Exarchakis, Matthew Hirn, and St\'{e}phane Mallat.
\newblock Solid harmonic wavelet scattering: Predicting quantum molecular
  energy from invariant descriptors of {3D} electronic densities.
\newblock In {\em Advances in Neural Information Processing Systems 30 (NIPS
  2017)}, pages 6540--6549, 2017.

\bibitem{eickenberg:scatMoleculesJCP2018}
Michael Eickenberg, Georgios Exarchakis, Matthew Hirn, St\'{e}phane Mallat, and
  Louis Thiry.
\newblock Solid harmonic wavelet scattering for predictions of molecule
  properties.
\newblock {\em Journal of Chemical Physics}, 148:241732, 2018.

\bibitem{hirn:waveletScatQuantum2016}
Matthew Hirn, St\'{e}phane Mallat, and Nicolas Poilvert.
\newblock Wavelet scattering regression of quantum chemical energies.
\newblock {\em Multiscale Modeling and Simulation}, 15(2):827--863, 2017.
\newblock arXiv:1605.04654.

\bibitem{bruna:multiscaleMicrocanonical2018}
Joan Bruna and St\'{e}phane Mallat.
\newblock Multiscale sparse microcanonical models.
\newblock {\em Mathematical Statistics and Learning}, 1(3/4):257--315, 2018.

\bibitem{gao:graphScat2018}
Feng Gao, Guy Wolf, and Matthew Hirn.
\newblock Geometric scattering for graph data analysis.
\newblock In {\em Proceedings of the 36th International Conference on Machine
  Learning, PMLR}, volume~97, pages 2122--2131, 2019.

\bibitem{bendory2019multi}
Tamir Bendory, Nicolas Boumal, William Leeb, Eitan Levin, and Amit Singer.
\newblock Multi-target detection with application to cryo-electron microscopy.
\newblock {\em Inverse Problems}, 2019.

\bibitem{singer2018mathematics}
Amit Singer.
\newblock Mathematics for cryo-electron microscopy.
\newblock In {\em Proceedings of the International Congress of Mathematicians},
  volume~4, pages 4013--4032, Rio de Janeiro, 2018.

\bibitem{hudson1993correlation}
Scott Hudson and Demetri Psaltis.
\newblock Correlation filters for aircraft identification from radar range
  profiles.
\newblock {\em IEEE Transactions on Aerospace and Electronic systems},
  29(3):741--748, 1993.

\bibitem{aizenbud2019rank}
Yariv Aizenbud, Boris Landa, and Yoel Shkolnisky.
\newblock Rank-one multi-reference factor analysis.
\newblock {\em arXiv preprint arXiv:1905.12442}, 2019.

\bibitem{sun2017phaseless}
Wenchang Sun.
\newblock Phaseless sampling and linear reconstruction of functions in spline
  spaces.
\newblock {\em arXiv preprint arXiv:1709.04779}, 2017.

\bibitem{cheng2017phaseless}
Cheng Cheng, Junzheng Jiang, and Qiyu Sun.
\newblock Phaseless sampling and reconstruction of real-valued signals in
  shift-invariant spaces.
\newblock {\em Journal of Fourier Analysis and Applications}, pages 1--34,
  2017.

\bibitem{buescu2007eigenvalue}
Jorge Buescu and AC~Paix{\~a}o.
\newblock Eigenvalue distribution of positive definite kernels on unbounded
  domains.
\newblock {\em Integral Equations and Operator Theory}, 57(1):19--41, 2007.

\bibitem{bendory2019single}
Tamir Bendory, Alberto Bartesaghi, and Amit Singer.
\newblock Single-particle cryo-electron microscopy: Mathematical theory,
  computational challenges, and opportunities.
\newblock {\em IEEE Signal Processing Magazine}, pages 58--76, March 2020.

\bibitem{Mallat:2008:WTS:1525499}
St\'{e}phane Mallat.
\newblock {\em A Wavelet Tour of Signal Processing, Third Edition: The Sparse
  Way}.
\newblock Academic Press, 3rd edition, 2008.

\bibitem{winkler2002uncertainty}
Joab Winkler and Mahesan Niranjan.
\newblock {\em Uncertainty in Geometric Computations}.
\newblock Number 704. Springer Science \& Business Media, 2002.

\bibitem{buescu2004positive}
Jorge Buescu, AC~Paixao, F~Garcia, and I~Lourtie.
\newblock Positive-definiteness, integral equations and fourier transforms.
\newblock {\em The Journal of Integral Equations and Applications}, pages
  33--52, 2004.

\bibitem{klebaner2012introduction}
Fima~C Klebaner.
\newblock {\em Introduction to stochastic calculus with applications}.
\newblock World Scientific Publishing Company, 2012.

\end{thebibliography}

\end{document}